\newcommand{\mc}{\mathcal}
\newcommand{\Oh}{\mc{O}}
\newcommand{\Cc}{\mc{C}}
\newcommand{\Gg}{\mc{G}}
\newcommand{\Qq}{\mc{Q}}
\newcommand{\YYY}{\mc{Y}}
\newcommand{\FFF}{\mc{F}}
\newcommand{\N}{\mathbb{N}}
\newcommand{\R}{\mathbb{R}}
\newcommand{\projprof}{\widehat{\mu}}
\newcommand{\Oof}{\mc{O}}
\newcommand{\fwcol}{f_{\wcol}}
\newcommand{\fproj}{f_{\mathrm{proj}}}
\newcommand{\Opt}{\mathrm{OPT}}
\newcommand{\Cds}{\mathrm{CDS}}
\newcommand{\wcol}{\mathrm{wcol}}
\newcommand{\WReach}{\mathrm{WReach}}
\newtheorem*{rrule*}{Reduction Rule A1}
\newtheorem*{rrrule*}{Reduction Rule B1}
\newtheorem{claim}{Claim}[theorem]
\def\grad_#1{\nabla\!_{#1}}
\def\topgrad_#1{\widetilde \nabla\!_#1}
\newcommand{\ds}{\mathbf{ds}}
\newcommand{\cds}{\mathbf{cds}}
\newcommand{\stt}{\mathbf{st}}
\newcommand{\cl}{\text{cl}}
\newcommand{\pr}[3]{M_{#1}(#2,#3)}
\newcommand{\prg}[4]{M^{#4}_{#1}(#2,#3)}
\title{Lossy Kernels for Connected Dominating Set on
Sparse Graphs\thanks{
The work of Sebastian Siebertz is supported by the National Science Centre of
Poland via POLONEZ grant agreement UMO-2015/19/P/ST6/03998, which has received funding from the European Union's Horizon 2020 research and innovation programme (Marie Sk\l odowska-Curie grant agreement No. 665778). Eduard Eiben was supported by Pareto-Optimal Parameterized Algorithms (ERC Starting Grant 715744) and by the Austrian Science Fund (FWF, projects P26696 and W1255-N23).
}}
\author{Eduard Eiben$^\ddagger$\thanks{Algorithms and Complexity Group, TU Wien, Austria. 
\texttt{eiben@ac.tuwien.ac.at}}\and 
Mithilesh Kumar\thanks{Department of Informatics, University of Bergen, Norway.
\newline
	\texttt{\{mithilesh.kumar,a.mouawad,fahad.panolan\}@ii.uib.no}}\and
Amer E. Mouawad\footnotemark[3]
\and Fahad Panolan\footnotemark[3] \and
Sebastian Siebertz\thanks{Faculty of Mathematics, Informatics and Mechanics, University of Warsaw, Poland. 
\texttt{siebertz@mimuw.edu.pl}}}
\begin{document}
\maketitle

\begin{textblock}{5}(11.65, 12.3)
\includegraphics[width=45px]{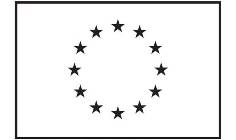}%
\end{textblock}

\begin{abstract}
For $\alpha > 1$, an $\alpha$-approximate (bi-)kernel is a 
polynomial-time algorithm that takes as input an instance $(I, k)$ of
a problem $\Qq$ and outputs an instance $(I',k')$ (of a problem $\Qq'$) 
of size bounded by a function of $k$ such that, for every $c\geq 1$, a
$c$-approximate solution for the new instance can be turned into a
$(c\cdot\alpha)$-approximate solution of the original instance in
polynomial time. This framework of \emph{lossy kernelization} was
recently introduced by Lokshtanov et al. 
We study {\sc Connected Dominating Set} (and its distance-$r$ variant)
parameterized by solution size on sparse graph classes like
biclique-free graphs, classes of bounded expansion, and nowhere dense
classes. We prove that for every $\alpha>1$, {\sc Connected Dominating
Set} admits a polynomial-size $\alpha$-approximate (bi-)kernel on all
the aforementioned classes.
Our results are in sharp contrast to the kernelization complexity of
{\sc Connected Dominating Set}, which is known to not admit a
polynomial kernel even on $2$-degenerate graphs and graphs of bounded
expansion, unless $\textsf{NP} \subseteq \textsf{coNP/poly}$.  We
complement our results by the following conditional lower bound.  We
show that if a class~$\Cc$ is somewhere dense and closed under taking
subgraphs, then for some value of $r\in\N$ there cannot exist an
$\alpha$-approximate bi-kernel for the {\sc (Connected) Distance-$r$
Dominating Set} problem on $\Cc$ for any $\alpha>1$ (assuming the Gap
Exponential Time Hypothesis).
\end{abstract}

\section{Introduction}\label{sec:intro}
%!TEX root =cds.tex

%\subparagraph*{Lossy kernelization}

A powerful method in parameterized complexity theory is to compute on
input $(I,k)$ a problem \emph{kernel} in a polynomial-time 
pre-processing step, that is, to reduce the input instance in
polynomial time to an equivalent instance $(I',k')$ of size $g(k)$ for
some function $g$ bounded in the parameter only. If the reduced
instance $(I',k')$ belongs to a different problem than $(I,k)$, we
speak of a \emph{bi-kernel}. It is well known that a problem is
fixed-parameter tractable if and only if it admits a kernel, however,
in general the function~$g$ can grow arbitrarily fast. For practical
applications we are mainly interested in linear or at worst polynomial
kernels. We refer to the
textbooks~\cite{cygan2015parameterized,downey2013fundamentals,
  downey1999parameterized} for extensive background on parameterized
complexity and kernelization.

One shortcoming of the above notion of kernelization is that it does
not combine well with approximation algorithms or heuristics. An
approximate solution on the reduced instance provides no insight
whatsoever about the original instance, the only statement we can
derive from the definition of a kernel is that the reduced instance
$(I',k')$ is a positive instance if and only if the original instance
$(I, k)$ is a positive instance. This issue was recently addressed by
Lokstanov et al.~\cite{LokshtanovPRS16}, who introduced the framework
of \emph{lossy kernelization}. Intuitively, the framework combines
notions from approximation and kernelization algorithms to allow for
approximation preserving kernels.

\pagebreak
Formally, a \emph{parameterized optimization (minimization or
  maximization) problem}~$\Pi$ over finite vocabulary $\Sigma$ is a
computable function
$\Pi\colon\Sigma^\star\times \N \times \Sigma^\star\rightarrow
\R\cup\{\pm \infty\}$.
A \emph{solution} for an instance $(I,k)\in \Sigma^\star\times\N$ is a
string $s\in \Sigma^\star$, such that $|s| \leq |I| + k$. The
\emph{value} of the solution $s$ is $\Pi(I,k,s)$. For a minimization
problem, the \emph{optimum value} of an instance $(I,k)$ is
$\Opt_\Pi(I,k)=\min_{s\in \Sigma^*, |s|\leq |I|+k}\Pi(I,k,s)$, for a
maximization problem it is
$\Opt_\Pi(I,k)=\max_{s\in \Sigma^*, |s|\leq |I|+k}\Pi(I,k,s)$. An
\emph{optimal solution} is a solution $s$ with
$\Pi(I,k,s)=\Opt_\Pi(I,k)$. If $\Pi$ is clear from the context, we
simply write $\Opt(I,k)$.

A vertex-subset graph problem~$\mathcal{Q}$ defines which subsets of 
the vertices of an input graph are feasible solutions. We consider the following
parameterized minimization problem associated with $\mathcal{Q}$:
\[\mathcal{Q}(G,k,S)=
\begin{cases}
  \infty & \text{if $S$ is not a valid solution for $G$}\\
  & \hspace{0.5cm}\text{as determined
    by $\mathcal{Q}$}\\\min\{|S|,k+1\} & \text{otherwise.}
\end{cases}\]
Note that this bounding of the objective function at $k+1$ does not 
make sense for approximation algorithms if one insists on $k$ being the 
unknown optimum solution of the input instance.
% of the instance $I$. 
%The parameterization 
%above is by the value of the solution that we want our algorithms
%to output.
Also, notice that in standard kernelization or fixed parameter tractable algorithms, when we parameterize by solution size $k$, 
for a decision version of a minimization problem, we actually do not care about solutions of size more than $k$. 
However, we always aim for efficient algorithms, where efficiency is measured in terms of $k$. 
Going by the same logic, we set ${\mathcal Q}(G,k,S)=k+1$, when $|S| \geq k+1$, so that all solutions 
of cardinality more than $k$ are ``equally bad'' or indistinguishable. 
The symbol $\infty$ is used to distinguish between actual solutions and other strings which are not solutions.  
For detailed discussion about capping the objective function at $k+1$, we refer to~\cite{LokshtanovPRS16}.    

\begin{definition}
  Let $\alpha>1$ and let $\Pi$ be a parameterized minimization
  problem. An \emph{$\alpha$-approximate polynomial time
    pre-processing algorithm} $\mathcal{A}$ for $\Pi$ is a pair of
  polynomial time algorithms.  The first algorithm is called the
  \emph{reduction algorithm}, and computes a map
  $R_\mathcal{A} \colon \Sigma^\star\times \N\rightarrow
  \Sigma^\star\times \N$.
  Given as input an instance $(I, k)$ of $\Pi$, the reduction
  algorithm outputs another instance $(I',k')=R_\mathcal{A}(I,k)$.
  The second algorithm is called the \emph{solution lifting
    algorithm}. It takes as input an instance
  $(I,k)\in \Sigma^\star\times \N$, the output instance $(I',k')$ of
  the reduction algorithm, and a solution $s'$ to the instance
  $(I',k')$.  The solution lifting algorithm works in time polynomial
  in $|I|,k, |I'|, k'$ and $s'$, and outputs a solution $s$ to
  $(I, k)$ such that
\begin{align*}
  \frac{\Pi(I,k,s)}{\Opt(I,k)}\leq \alpha\cdot \frac{\Pi(I',k',s')}{\Opt(I',k')}. 
\end{align*}
\end{definition}

\begin{definition}
  An \emph{$\alpha$-approximate kernelization algorithm} is an
  $\alpha$-approximate polynomial time pre-processing algorithm for
  which we can prove an upper bound on the size of the output
  instances in terms of the parameter of the instance to be
  pre-processed. We speak of a linear or polynomial kernel, if the
  size bound is linear or polynomial, respectively. If we allow the
  reduced instance to be an instance of another problem, we speak of
  an \emph{$\alpha$-approximate bi-kernel}.  A \emph{polynomial-size
    approximate kernelization scheme (PSAKS)} is a family of
  $\alpha$-approximate polynomial kernelization algorithms for each
  $\alpha > 1$, with size of each kernel bounded by
  $f(\alpha)k^{g(\alpha)}$, for some functions $f$ and $g$ independent
  of $|I|$ and $k$.
\end{definition}

We refer to the work of Lokshtanov et al.~\cite{LokshtanovPRS16} for
an extensive discussion of related work and examples of problems that
admit lossy kernels.

\subparagraph*{Sparse graphs and domination}

We consider finite, undirected and simple graphs and refer to the
textbook~\cite{diestel-book} for all undefined notation. We write
$K_{i,j}$ for the complete bipartite graph with partitions of size $i$
and $j$, respectively. We call a class $\Cc$ of graphs biclique-free
if there are $i,j\in\N$ such that $K_{i,j}$ is not a subgraph of $G$
for all $G\in\Cc$.

The notion of nowhere denseness was introduced by Ne\v set\v ril and
Ossona de Mendez~\cite{nevsetvril2010first,nevsetvril2011nowhere} as a
general model of \emph{uniform sparseness} of graphs. Many familiar
classes of sparse graphs, like planar graphs, graphs of bounded
tree-width, graphs of bounded degree, and all classes that exclude a
fixed (topological) minor, are nowhere dense. An important and related
concept is the notion of a graph class of \emph{bounded expansion},
which was also introduced by Ne\v set\v ril and Ossona de
Mendez~\cite{nevsetvril2008grad,nevsetvril2008gradb,nevsetvril2008gradc}.

\begin{definition}
  Let $H$ be a graph and let $r\in \N$. An \emph{$r$-subdivision} of
  $H$ is obtained by replacing all edges of $H$ by internally vertex
  disjoint paths of length at most $r$.
\end{definition}

\begin{definition}
  A class $\Cc$ of graphs is \emph{nowhere dense} if there exists a
  function $t\colon \N\rightarrow \N$ such that for all $q\in\N$ and
  for all $G\in \Cc$ we do not find an $q$-subdivision of the
  complete graph~$K_{t(q)}$ as a subgraph of $G$. Otherwise, $\Cc$ is
  called \emph{somewhere dense}.
\end{definition}

\begin{definition}
  A class $\Cc$ of graphs has \emph{bounded expansion} if there exists
  a function $d\colon \N\rightarrow \N$ such that for all $r\in\N$ and
  all graphs $H$, such that an $r$-subdivision of $H$ is a subgraph of
  $G$ for some $G\in\Cc$, satisfy $|E(H)|/|V(H)|\leq d(r)$.
\end{definition}

Every class of bounded expansion is nowhere dense, which in turn
excludes some biclique as a subgraph and hence is biclique-free.  For
extensive background on bounded expansion and nowhere dense graphs we
refer to the textbook of Ne\v{s}et\v{r}il and Ossona de
Mendez~\cite{sparsity}.

\begin{definition}
  In the parameterized \textsc{Dominating Set} (DS) problem we are
  given a graph~$G$ and an integer parameter $k$, and the objective is
  to determine the existence of a subset $D\subseteq V(G)$ of size at
  most $k$ such that every vertex $u$ of $G$ is \emph{dominated} by
  $D$, that is, either~$u$ belongs to~$D$ or has a neighbor in~$D$.
  More generally, for fixed $r\in \N$, in the \textsc{Distance-$r$
    Dominating Set} ($r$-DS) problem we are asked to determine the
  existence of a subset~$D\subseteq V(G)$ of size at most $k$ such
  that every vertex $u\in V(G)$ is within distance at most~$r$ from a
  vertex of~$D$. In the \textsc{Connected (Distance-$r$) Dominating
    Set} (CDS/$r$-CDS) problem we additionally demand that the
  (distance-$r$) dominating set shall be connected.
\end{definition}

The {\sc Dominating Set} problem plays a central role in the theory of
parameterized complexity, as it is a prime example of a
$\mathsf{W}[2]$-complete problem with the size of the optimal solution
as the parameter, thus considered intractable in full generality.  For
this reason, the {\sc (Connected) Dominating Set} problem and
{\sc (Connected) Distance-$r$ Dominating Set} problem have been extensively
studied on restricted graph classes.  A particularly fruitful line of
research in this area concerns kernelization algorithms for the aforementioned 
problems~\cite{AlberFN04,BodlaenderFLPST09,FominLST10,FominLST12,FominLST13,PhilipRS12}.

Philip et al.~\cite{PhilipRS12} obtained a kernel of size
$\Oh(k^{(d+1)^2})$ on $d$-degenerate graphs, for constant~$d$, and
more generally a kernel of size $\Oh(k^{\max(i^2,j^2)})$ on graphs
excluding the biclique $K_{i,j}$ as a subgraph. On the lower bounds
side, Cygan et al.~\cite{CyganGH13} have shown that the existence of a
size $\Oh(k^{(d-1)(d-3)-\epsilon})$ kernel, $\epsilon> 0$, for {\sc
  Dominating Set} on $d$-degenerate graphs would imply \textsf{NP}
$\subseteq$ \textsf{coNP/poly}.  For the \textsc{Connected Dominating
  Set} problem linear kernels are only known for
planar~\cite{LokshtanovMS11} and $H$-topological-minor-free
graphs~\cite{FominLST13}. Polynomial kernels are excluded already for
graphs of bounded degeneracy~\cite{CyganPPW12}, assuming \textsf{NP}
$\not\subseteq$ \textsf{coNP/poly}.

\pagebreak
For the more general {\sc Distance-$r$ Dominating Set} problem we know the 
following results. Dawar and Kreutzer~\cite{DawarK09} showed that for
every $r\in \N$ and every nowhere dense class~$\Cc$, the {\sc Distance-$r$ Dominating Set} problem is fixed-parameter tractable on $\Cc$. 
Drange et al.~\cite{DrangeDFKLPPRVS16} gave a linear bi-kernel for {\sc Distance-$r$ Dominating Set} on any graph class of bounded expansion
for every $r \in \N$, and a pseudo-linear kernel for {\sc Dominating Set} on 
any nowhere dense graph class; that is, a kernel of size
$f(\epsilon)\cdot k^{1+\epsilon}$, for some function $f$.  
Precisely, the kernelization algorithm of
Drange et al.~\cite{DrangeDFKLPPRVS16} outputs an instance of an
annotated problem where some vertices are not required to be
dominated; this will be the case in the present paper as
well (except for the case of biclique-free graphs). Kreutzer et al.~\cite{siebertz2016polynomial} provided a
polynomial bi-kernel for the {\sc Distance-$r$ Dominating Set} problem on
every nowhere dense class for every fixed $r\in \N$ and finally,
Eickmeyer et al.~\cite{eickmeyer2016neighborhood} could prove the
existence of pseudo-linear bi-kernels of size $f(r,\epsilon)\cdot k^{1+\epsilon}$, for some function $f$ and any $\epsilon > 0$. 

It is known that bounded expansion classes of graphs are the limit for
the existence of polynomial kernels for the {\sc Connected Dominating Set} 
problem. Drange et al.~\cite{DrangeDFKLPPRVS16} gave an example of a
subgraph-closed class of bounded expansion which does not admit a
polynomial kernel for {\sc Connected Dominating Set} unless
$\mathsf{NP}\subseteq \mathsf{coNP/Poly}$. They also showed that
nowhere dense classes are the limit for the fixed-parameter
tractability of the {\sc Distance-$r$ Dominating Set} problem if we assume
closure under taking subgraphs (classes which are closed under taking subgraphs will be called \emph{monotone classes}).

\medskip
\paragraph{Our results}
In this paper we prove the following results. 

\medskip
\begin{itemize}
\item For every $\alpha>1$, {\sc Connected Dominating Set} admits an
  $\alpha$-approximate kernel on $K_{d,d}$-free graphs of size
  $k^{\Oh(\frac{d^2}{\alpha-1})}$.
\item For every $\alpha>1$, {\sc Connected Dominating Set} admits an
  $\alpha$-approximate bi-kernel on graphs of bounded expansion of
  size $\Oh(f(\alpha) \cdot k)$ (i.e, linear in $k$), where
  $f(\alpha)$ is a function depending only on $\alpha$.
\item For every $\alpha>1$ and every $r\in\N$, {\sc Connected
    Distance-$r$ Dominating Set} admits an $\alpha$-approximate
  bi-kernel on classes of nowhere dense graphs of size polynomial in
  the parameter, where the degree of the polynomial depends only on
  $r$ and not on $\alpha$.
\end{itemize}

\smallskip
Observe that our results are in sharp contrast with the above 
mentioned lower bounds on kernel size of traditional kernels on 
these classes. 
%%Hence we provide another example of a problem
%%where the framework of lossy kernelization can provide an 
%%explanation of why polynomial time pre-processing combined with
%%brute force search performs much better than predicted by 
%%approximation and kernelization lower bounds.
%%
We complement our results by the following conditional lower bound. We
show that if a class~$\Cc$ is somewhere dense and closed under taking
subgraphs, then for some value of $r\in\N$ there cannot exist an
$\alpha$-approximate bi-kernel for the {\sc (Connected)
  \mbox{Distance-$r$} Dominating Set} problem on $\Cc$ for any
$\alpha > 1$ (assuming the Gap Exponential Time Hypothesis).  Observe
that a similar statement for subgraph-closed classes which contain
arbitrarily large bicliques also holds but is not too interesting, as
this contains as a subclass the class of all bipartite graphs on which
we can easily encode the {\sc Set Cover} problem.

\smallskip
\paragraph{Organization}

We explain our methods and provide a general framework for computing
lossy kernels for \textsc{Connected Dominating Set} in Section
\ref{sec:bicliquefree}.  Using this framework we can directly derive
the claimed bounds for $\alpha$-approximate kernels on biclique-free
graphs, while for bounded expansion and nowhere dense classes we
obtain $\alpha$-approximate bi-kernels for \textsc{Connected Distance-$r$
  Dominating Set} of polynomial size with the degree of the polynomial
depending both on $r$ and on $\alpha$. The rest of the paper is
devoted to the far more technical part of improving the bounds for
bounded expansion and nowhere dense classes of graphs.

\section{A general framework}\label{sec:general}
%!TEX root =cds.tex

Although the technical details for dealing with biclique-free graphs
and with bounded expansion and nowhere dense classes are quite
different, the high-level approach is identical. The kernelization
algorithms follow the same two-step strategy.  First, our goal is to
compute a ``small'' set of vertices whose domination is sufficient,
i.e.\ the set of \emph{dominatees} or the so-called domination core.

\begin{definition}[$k$-domination core]
  Let $G$ be a graph and $Z \subseteq V(G)$. We say that $Z$ is a
  \emph{$k$-domination core} if every set $D$ of size at most $k$ that
  dominates $Z$ also dominates $V(G)$.
\end{definition}

Having found a domination core $Z$ of appropriate size, the next step
is to reduce the number of \emph{dominators}, i.e. vertices whose role
is to dominate other vertices, and the number of \emph{connectors},
i.e. vertices whose role is to connect the solution.

\begin{definition}
  Let $G$ be a graph and let $D,Z \subseteq V(G)$. We say that $D$ is
  a \emph{$Z$-dominator} if~$D$ dominates $Z$ in $G$, i.e. every
  vertex $z \in Z \setminus D$ is at distance at most one from some
  vertex in $D$. We denote by $\ds(G,Z)$ ($\cds(G,Z)$) the size of a
  smallest (connected) $Z$-dominator in~$G$.  By $\ds(G)$ ($\cds(G)$)
  we mean $\ds(G,V(G))$ ($\cds(G,V(G))$).
\end{definition}

We classify all vertices outside the core according to their
domination properties.

\begin{definition}\label{def:sim}
  Let $G$ be a graph and $Z\subseteq V(G)$. We define an equivalence
  relation $\sim_Z$ on $V(G)\setminus Z$ by
  $u\sim_Z v \Leftrightarrow N(u)\cap Z=N(v)\cap Z.$
\end{definition}

Clearly, to find a kernel for \textsc{Dominating Set} it is now
sufficient to construct the graph~$G'$ which contains the
$k$-domination core $Z$ and one representative of each equivalence
class of $\sim_Z$. Then $G$ admits a dominating set of size at most
$k$ if and only if $G'$ contains a $Z$-dominator of size at most
$k$. This simple two-step approach of computing a small domination
core $Z$ and then bounding the number of equivalence classes of the
relation $\sim_Z$ forms the basis of the kernelization algorithms for
\textsc{Dominating Set} in~\cite{DrangeDFKLPPRVS16,
  eickmeyer2016neighborhood,siebertz2016polynomial}. 
To control the number of classes of $\sim_Z$, we give the following
definition. The \emph{index} of an equivalence relation is the number of 
equivalence classes. 

\begin{definition}
Let $G$ be a graph. We define the \emph{neighborhood 
complexity function} of~$G$  as the function 
$\mu\colon \N\rightarrow \N$ with $\mu(z)=\max_{Z\subseteq V(G), |Z|=z}\textit{index }(\sim_Z)$. 
\end{definition}

For example all classes of bounded VC-dimension have 
polynomially bounded neighborhood complexity functions~\cite{sauer1972density, shelah1972combinatorial}, 
while bounded expansion classes have linear and nowhere dense classes have 
almost linear neighborhood
complexity~\cite{DBLP:journals/jcss/GajarskyHOORRVS17}. 

\begin{proposition}
  Let $\Cc$ be a class of graphs such that the neighborhood complexity
  function for all $G\in\Cc$ is bounded by a fixed polynomial and 
  such that on
  input $(G,k)$, for $G\in\Cc$, we can decide in polynomial time whether
  $\ds(G)>k$ or otherwise compute a $k$-domination core
  $Z\subseteq V(G)$ of size polynomial in $k$. Then \textsc{Dominating
    Set} parameterized by~$k$ admits a polynomial-size bi-kernel on
  $\Cc$.
\end{proposition}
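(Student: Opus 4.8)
The plan is to combine the two ingredients assumed in the hypothesis — a polynomial-time procedure that either rejects or produces a polynomial-size $k$-domination core $Z$, and a polynomial bound $\projnum(z) \le p(z)$ on the neighborhood complexity — into a concrete reduction algorithm together with a trivial solution-lifting (equivalence) argument. First I would run the core-computation algorithm on $(G,k)$. If it reports $\ds(G) > k$, the reduction outputs a fixed trivial no-instance of constant size (for the decision version this is exactly what a kernel may do). Otherwise we have a core $Z$ with $|Z| \le q(k)$ for some fixed polynomial $q$. I then form the graph $G'$ whose vertex set is $Z$ together with exactly one representative vertex $v_C$ from each equivalence class $C$ of $\sim_Z$ on $V(G)\setminus Z$, keeping all edges of $G$ induced on this vertex set (in particular each $v_C$ keeps its neighborhood $N(v_C)\cap Z$, which is the common neighborhood defining $C$). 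Since $\projnum$ is bounded by the polynomial $p$, the number of classes is at most $p(|Z|) \le p(q(k))$, so $|V(G')| \le q(k) + p(q(k))$, a polynomial in $k$; I keep the same parameter $k' = k$. If one wants a true kernel rather than a bi-kernel, one annotates $G'$ so that only vertices of $Z$ must be dominated — hence the statement is phrased as a bi-kernel for the annotated \textsc{Dominating Set} problem.

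The correctness argument is the standard equivalence: I claim $(G,k)$ has a dominating set of size $\le k$ iff $G'$ has a $Z$-dominator of size $\le k$. For the forward direction, if $D$ dominates $G$ with $|D|\le k$, replace every vertex of $D\setminus (Z\cup \{\text{representatives}\})$ by the representative of its $\sim_Z$-class; the resulting set $D'\subseteq V(G')$ has $|D'|\le |D|\le k$ and still dominates $Z$, because a vertex and its $\sim_Z$-representative dominate the same subset of $Z$, and vertices of $D$ already in $Z$ dominate themselves and the same $Z$-neighbors in $G'$. For the backward direction, if $D'$ is a $Z$-dominator in $G'$ of size $\le k$, then viewing $D'\subseteq V(G)$ it dominates $Z$ in $G$ (all relevant edges are preserved), and since $Z$ is a $k$-domination core and $|D'|\le k$, $D'$ in fact dominates all of $V(G)$. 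The solution-lifting algorithm is likewise immediate: given a solution $s'$ for $(G',k')$, if $|s'|\le k$ map each vertex of $s'$ that is a representative $v_C$ back to itself (it lies in $V(G)$) to get $s$ with the same size that dominates $Z$, hence all of $G$; if $|s'|>k'$ output anything of size $>k$. In the capped-objective formalism this gives $\Pi(G,k,s) = \min\{|s|,k+1\} \le \min\{|s'|,k+1\} = \Pi(G',k',s')$ and $\Opt(G,k) = \Opt(G',k')$ whenever both are $\le k$ (and the bound is trivial otherwise), so the ratio inequality holds even with $\alpha = 1$; in particular it holds for every $\alpha > 1$.

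There is essentially no hard part here — the content has been front-loaded into the hypotheses. The one point that needs a little care is the size accounting when the core algorithm rejects: the reduction must still output a syntactically valid small instance, so I would fix once and for all a trivial no-instance (e.g.\ a single edge with $k'=0$, or an empty annotated graph) and route rejections to it, and have the solution-lifting algorithm in that branch output a dummy non-solution or an oversized set, which is consistent because the capped value and optimum of the trivial instance behave as required. A second minor subtlety is that the definition of $\sim_Z$ lives on $V(G)\setminus Z$, so representatives must be chosen from outside $Z$; vertices inside $Z$ are simply carried over verbatim, and one should note that $Z$ itself may be an independent-ish set or not — it does not matter, since we keep the full induced subgraph on $Z\cup\{\text{representatives}\}$. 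Everything else — computing $\sim_Z$ in polynomial time, bounding the number of classes via $p$, verifying that edges incident to representatives are exactly their $Z$-neighborhoods — is routine.
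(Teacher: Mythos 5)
Your proposal is correct and follows exactly the construction the paper itself sketches (the paper gives no formal proof, only the preceding paragraph describing $G'$ as $Z$ plus one representative per $\sim_Z$-class and asserting the equivalence with $Z$-domination in $G'$). Your forward/backward equivalence argument, the use of the $k$-domination-core property in the backward direction, and the handling of the rejection branch are all the intended details, so there is nothing to add.
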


The above proposition can be applied, e.g., to obtain a polynomial
kernel on biclique-free graphs (we will prove the existence of a
polynomial $k$-domination core below).  However, as the hardness
results even for degenerate graphs show, this approach does not extend
to connected dominating sets. We may have to include more vertices in
the kernel to ensure connectivity of the dominating sets. This
turns out to be a major problem for the construction of polynomial
size kernels for \textsc{Connected Dominating Set}.

\pagebreak

When reducing the number of vertices outside the domination core, we borrow approximation 
techniques that are closely related to the {\sc Steiner Tree} problem.

\begin{definition}
  Let $G$ be a graph and let $Y\subseteq V(G)$ be a set of
  \emph{terminals}.  A \emph{Steiner tree} for~$Y$ is a subtree of $G$
  spanning $Y$.  We write $\mathbf{st}_G(Y)$ for the order of (i.e.\
  the number of vertices of) the smallest Steiner tree for $Y$ in $G$
  (including the vertices of $Y$).  If $\YYY=\{V_1,\ldots, V_t\}$ is a
  family of vertex disjoint subsets of $G$, a \emph{group Steiner
    tree} for $\YYY$ is a subtree of $G$ that contains (at least) one
  vertex of each group~$V_i$. We write $\mathbf{st}_G(\YYY)$ for the
  order of the smallest group Steiner tree for $\YYY$.
\end{definition}

\noindent 
The {\sc Group Steiner Tree} problem on $t$ groups can be solved in
$\Oh(2^t \cdot n^{\Oh(1)})$-time~\cite{MisraPRSS10}.
The following definition and proposition form the key to our approach to
handle connectivity in the lossy kernelization framework.

\begin{definition}
  Let $D$ be a connected graph and $t\in {\mathbb N}$.  A
  \emph{$(D,t)$-covering family} is a family $\mathcal{F}(D,t)$ of
  connected subgraphs of $D$ such that $(i)$ for each
  $T \in {\mathcal{F}(D,t)}$, $\vert V(T) \vert \leq 2t$ and $(ii)$
  $\bigcup_{T \in \mathcal{F}(D,t)}{V(T)} = V(D)$.
\end{definition}

\begin{proposition}\label{prop-covering-family}
  Let $D$ be a connected graph and $t\in {\mathbb N}$.  Then there is
  a $(D,t)$-covering family $\mathcal{F}(D,t)$ such that
  $|\mathcal{F}(D,t)| \leq \frac{|V(D)|}{t} + 1$, and
  $\sum_{T \in \mathcal{F}(D,t)}{|V(T)|} \leq (1 + \frac{1}{t}) |V(D)|
  + 1$.
\end{proposition}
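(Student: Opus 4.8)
The plan is to decompose the connected graph $D$ by rooting a spanning tree at an arbitrary vertex and greedily carving off connected pieces of size roughly $t$, using a standard DFS-based argument. First I would fix a spanning tree $\mathcal{T}$ of $D$ and root it at some vertex $r$. For a vertex $v$, let $\mathcal{T}_v$ denote the subtree of $\mathcal{T}$ rooted at $v$ and let $s(v)=|V(\mathcal{T}_v)|$ be its size; these sizes can be computed in one bottom-up pass. The key subroutine is the following: while $D$ still has more than $t$ vertices left, find a vertex $v$ such that $s(v) \geq t$ but $s(w) < t$ for every child $w$ of $v$ (such a $v$ exists by walking down from the root along a child of maximum subtree size until the threshold is first undershot — note $s(r)=|V(D)|>t$). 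The subtree $\mathcal{T}_v$ then has size at most $t + \sum_{w \text{ child of } v}(s(w)) \leq t + (t-1) < 2t$, since the subtree sizes of the children sum to $s(v)-1 < $ ... wait, more carefully: $s(v) = 1 + \sum_w s(w)$, and if $s(v) \geq 2t$ then since each $s(w) < t$, removing children one at a time the partial sums cross every value, so we could instead have descended further — the clean bound is that we pick the \emph{deepest} such $v$, i.e. the first one encountered going down, which guarantees $s(v) \le$ (size of the child we came from) $+ 1 < t+1 \le 2t$. I would phrase it as: descend from $r$, always moving into the child subtree of largest size, and stop at the first vertex $v$ with $s(v) \le t$; its parent $p$ then has a subtree $\mathcal{T}_p$ of size between $t+1$ and $2t$ (size $>t$ since we did not stop at $p$, size $\le 2t$ since $s(p) \le 1 + \sum_{\text{children }w\text{ of }p} s(w)$ and the child we moved into has size $\le t$ while... ). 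The precise constant juggling is routine; the upshot is a connected subgraph $T_i := D[V(\mathcal{T}_p)]$ with $t < |V(T_i)| \le 2t$.

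The main loop is then: output $T_i$ as a member of $\mathcal{F}(D,t)$, delete $V(\mathcal{T}_p)$ from $D$ and from $\mathcal{T}$, and repeat. Deleting a whole rooted subtree keeps $\mathcal{T}$ (hence $D$) connected, so the invariant that $D$ is connected is maintained. Each iteration removes more than $t$ vertices and produces one tree of size at most $2t$, so after at most $\lfloor |V(D)|/t \rfloor$ iterations we are left with a connected remainder $R$ of size at most $t \le 2t$; we add $D[V(R)]$ as the final member of the family. This gives $|\mathcal{F}(D,t)| \le \lfloor |V(D)|/t \rfloor + 1 \le |V(D)|/t + 1$, and since every piece except possibly the last one has size at most $2t$ but — crucially — the pieces other than the last are pairwise \emph{disjoint} and cover all but the last $\le t$ vertices, we get $\sum_{T \in \mathcal{F}(D,t)} |V(T)| \le |V(D)| + |V(R)| \le |V(D)| + t$. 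Hmm — but the proposition claims the stronger bound $(1+\tfrac1t)|V(D)|+1$, not $|V(D)|+t$; so I would instead bound the overcount by noting there are at most $\lfloor |V(D)|/t\rfloor$ carved pieces, they are disjoint, and only the final remainder $R$ is "extra", with $|V(R)| \le t$. Actually $|V(D)| + t$ need not be $\le (1+1/t)|V(D)|+1$ when $|V(D)|$ is small, so the clean route is: make the pieces exactly cover $V(D)$ — the carved subtrees together with $R$ partition $V(D)$ — so $\sum |V(T)| = |V(D)|$ if the family were a partition. The stated slack $(1+\tfrac1t)|V(D)|+1$ presumably accounts for a variant where consecutive pieces are allowed to share one connector vertex to keep each $T_i$ attached; I would adopt exactly that: when carving $\mathcal{T}_p$, include also its parent (one extra vertex) so that the piece overlaps the rest in a single vertex and everything stays "connected through shared vertices." That adds at most one vertex per piece, i.e. at most $\lfloor |V(D)|/t\rfloor + 1$ extra vertices total, yielding $\sum_{T}|V(T)| \le |V(D)| + |V(D)|/t + 1 = (1+\tfrac1t)|V(D)|+1$, while each piece still has size at most $2t+1$ — so to respect the $\le 2t$ bound I would carve pieces of size in $(t-1, 2t-1]$ instead. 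These are the bookkeeping details to get the constants exactly as stated.

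I expect the only real obstacle to be precisely this constant-management: arranging the carving threshold so that simultaneously (a) each piece has at most $2t$ vertices, (b) the number of pieces is at most $|V(D)|/t + 1$, and (c) the total vertex count, accounting for the one shared connector per piece, is at most $(1+\tfrac1t)|V(D)|+1$. All three follow from the same greedy descent, but the exact threshold (carve when a subtree first has size in a window like $[t, 2t)$, or $(t,2t]$, versus descending to the first child below $t$ and taking its parent) needs to be chosen with care, and one edge case — when $|V(D)| \le 2t$ already, in which case $\mathcal{F}(D,t) = \{D\}$ works trivially and all three bounds hold — should be dispatched first. Everything runs in linear time given the spanning tree, which itself is computable in linear time, so the family can be produced efficiently, though only its existence is asserted in the statement.
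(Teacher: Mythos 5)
Your overall strategy --- root a spanning tree, compute subtree sizes bottom-up, and greedily carve off connected pieces of size roughly $t$ --- is the same as the paper's, but your carving subroutine has a genuine gap that is not just ``constant juggling.'' You claim that descending from the root into a maximum-size child until you first reach a vertex $v$ with $s(v)\le t$ yields a parent $p$ with $t< s(p)\le 2t$. The upper bound is false: when you stop, \emph{every} child of $p$ has subtree size at most $t$ (you descended into the largest one), but $p$ may have many children, so $s(p)=1+\sum_{w} s(w)$ can be arbitrarily large. The extreme case is a star with $N>2t$ leaves: every proper rooted subtree has size $1$ and the whole tree has size $N+1$, so no rooted subtree of size in $[t,2t]$ exists at all and your loop cannot make progress. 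You half-notice the issue (``removing children one at a time the partial sums cross every value'') but then abandon that observation, when in fact it \emph{is} the missing step.

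The paper's proof resolves exactly this case: if some vertex $v$ has weight greater than $2t$ while no vertex has weight in $[t,2t]$ (so all children of $v$ have weight at most $t-1$), it greedily selects a subset $S$ of children whose weights sum to a value in $(t,2t)$, outputs the piece consisting of $v$ together with the subtrees rooted at $S$, and deletes only those subtrees --- leaving $v$ behind to serve as the hub of later pieces. This is why the pieces are \emph{not} disjoint, and the reused hub vertices are precisely what the $(1+\frac{1}{t})|V(D)|+1$ bound pays for: each unit of extra multiplicity is charged to a deletion of at least $t$ vertices, giving at most $\frac{|V(D)|}{t}+1$ overcounted vertices. Your alternative explanation of the slack --- deliberately adding a shared ``connector'' parent to each otherwise-disjoint piece --- is a post-hoc rationalization that is neither needed for connectivity (each carved subtree is already connected) nor sufficient to repair the high-branching case. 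To fix your proof you need the subset-of-children grouping, and with it the multiplicity accounting; at that point you will have reproduced the paper's argument.
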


\begin{proof}
  Let $T_D$ be a spanning tree of $D$.  We create a $(D,t)$-covering
  family $\mathcal{F}(D,t) = \{T_1, T_2, \ldots, T_\ell\}$, which is a
  set of subtrees of $T_D$ constructed as follows. We root $T_D$ at an
  arbitrary vertex $r\in V(T_D)$. For any pair of vertices
  $u,v\in V(T_D)$, $u$ is called a child of~$v$ if $uv\in E(T_D)$ and
  $v$ lies on the path from $u$ to $r$. For each vertex
  $v \in V(T_D)$, we let
  $\text{weight}(v) = 1 + \sum_{u \in \text{child}(v)}
  \text{weight}(u)$,
  where $\text{child}(v)$ denotes the set of children of $v$ in $T_D$.
  In other words, $\text{weight}(v)$ is the number of vertices in the
  subtree rooted at $v$.  Leaves have weight one.  We use $T_v$ to
  denote the subtree rooted at $v$. We construct
  $\mathcal{F}(D,t) = \{T_1, T_2, \ldots\}$ from $T_D$ as follows:
  \medskip
  \begin{itemize}
  \item[(1)] If $T_D$ is empty, terminate. Otherwise, compute the
    weights of all vertices in $T_D$, then sort the vertices in
    increasing order of weight.
  \item[(2)] If there exists a vertex whose weight is between $t$ and
    $2t$ (inclusive), pick the vertex with the smallest such weight,
    add $T_v$ to $\mathcal{F}(D,t)$, delete $T_v$ from $T_D$, then go
    back to step (1).
  \item[(3)] If there exists a vertex whose weight is strictly greater
    than $2t$, pick the vertex with the smallest such weight, greedily
    compute a subset $S \subseteq \text{child}(v)$ such that
    $t < \sum_{u \in S} \text{weight}(u) < 2t$, let
    $R = \text{child}(v) \setminus S$, add
    $T_v - \bigcup_{w \in R} V(T_w)$ to $\mathcal{F}(D,t)$, delete
    $T_u$ from $T_D$, for every $u \in S$, then go back to step (1).
    Note that by our choice of $v$, all children of $v$ must have
    weight at most $t - 1$ as otherwise case (2) would apply.
  \item[(4)] Otherwise, every vertex in $T_D$ has weight strictly less
    than $t$ and hence $T_D$ has at most $t$ vertices (by the
    definition of the weight function). In this case, simply add $T_D$
    to $\mathcal{F}(D,t)$ and terminate.
\end{itemize}

\medskip
It is not hard to see that whenever we delete a subtree in the above
procedure we never disconnect the tree.  Moreover, the procedure
terminates only when the tree becomes empty and hence
$\bigcup_{T \in \mathcal{F}(D,t)}{V(T)} = V(T_D)$.  Whenever a tree is
added to $\mathcal{F}(D,t)$ in either step (2) or (3), the size of
that tree is at least equal to $t$ and at most equal to $2t$.  For
step (4), the size of the added tree is strictly less than
$t$. Combining those two facts, we know that $|\mathcal{F}(D,t)|$ is
at most $\frac{|V(D)|}{t} + 1$, as the size of the current tree $T_D$
is reduced by at least $t$ for all but one subtrees.  To prove the
last inequality, i.e.
$\sum_{T \in \mathcal{F}(D,t)}{|V(T)|} \leq (1 + \frac{1}{t}) |V(D)| +
1$,
we let $\text{mult}(v)$ denote the multiplicity of~$v$ minus one,
i.e. the number of subtrees $T \in \mathcal{F}(D,t)$ in which $v$
appears minus one.  So if a vertex occurs only once, its multiplicity
is zero. 

Observe that
$\sum_{T \in \mathcal{F}(D,t)}{|V(T)|} \leq |V(D)| + \sum_{v \in
  V(T_D)}{\text{mult}(v)}$.
However, whenever the multiplicity of a vertex increases by one, the
size of the running subtree decreases by at least~$t$ (step (3)) or
the procedure terminates (step (4)).  Hence,
$\sum_{v \in T_D}{\text{mult}(v)} \leq \frac{|V(D)|}{t} + 1$, as needed. 
\end{proof}

While the previous proposition allows us to ``take apart'' connected dominating sets, the next
proposition explains how to ``put them back together''.

\begin{proposition}
\label{approx-cds-by-ds}
Let $G$ be a graph, $X\subseteq V(G)$, such that $G[X]$ is connected,
and let $D$ be an $X$-dominator such that $G[D]$ has at most $p$
connected components.  Then a set $Q \subseteq X$ of size at most $2p$
such that $G[D \cup Q]$ is connected, can be computed in polynomial
time.
\end{proposition}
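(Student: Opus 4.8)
The plan is to build the connector set $Q$ by contracting each connected component of $G[D]$ to a single ``super-vertex'' and routing between these super-vertices through the connected graph $G[X]$, using a spanning-tree argument on an auxiliary graph. First I would let $C_1,\dots,C_q$ (with $q\le p$) be the connected components of $G[D]$. If $q=1$ there is nothing to do, so assume $q\ge 2$. Since $G[X]$ is connected, for any two components $C_i,C_j$ there is a path in $G$ from $C_i$ to $C_j$ whose internal vertices all lie in $X$ (because every vertex of $X$ is dominated by $D$, it has a neighbor in some component; and $G[X]$ being connected lets us walk from a neighbor of $C_i$ to a neighbor of $C_j$ inside $X$). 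The key point is to do this economically across all components simultaneously rather than pair-by-pair.

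Concretely, I would form an auxiliary multigraph $H$ on vertex set $\{C_1,\dots,C_q\}$ and consider, for each vertex $x\in X$, the set of components it is adjacent to (or belongs to, if $x\in D\cap X$). A cleaner route: contract each $C_i$ to a point in $G$, keep all of $X$, and observe the resulting graph is connected; take a spanning tree $T$ of it restricted so that its non-contracted vertices form a subtree of $G[X]$. The tree $T$ has $q$ ``component-nodes'' and some $X$-vertices; since each leaf can be taken to be a component-node (repeatedly delete $X$-leaves, which only shrinks the tree and preserves connectivity through the components), the number of $X$-vertices that are branching or path vertices in $T$ is bounded. Standard tree accounting — a tree with $q$ leaves has at most $q-1$ internal branching vertices, and the $X$-vertices lie on the $\le q-1$ ``paths'' joining consecutive component-nodes — gives that we can choose $Q\subseteq X$ with $|Q|\le 2(q-1)\le 2p$ such that $G[D\cup Q]$ is connected. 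The factor $2$ absorbs the possibility of one internal $X$-vertex per edge of a spanning tree on $q$ nodes (which has $q-1$ edges) plus slack. All of this is computable in polynomial time: finding components, building the contracted graph, computing a spanning tree, and pruning $X$-leaves are all polynomial-time operations.

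The main obstacle I anticipate is getting the bound to be exactly $2p$ rather than something larger, and handling the interaction between $D$ and $X$ correctly when $D\cap X\ne\emptyset$ (a connector in $X$ might already lie in $D$, or a component $C_i$ might already intersect $X$). I would handle this by noting that vertices already in $D$ cost nothing to ``add,'' so only genuinely new $X$-vertices count toward $|Q|$; and by being careful that the spanning-tree/pruning argument is run in the contracted graph so that each edge of the spanning tree of the $q$ super-vertices is realized by at most one intermediate $X$-vertex — if an edge of the contracted spanning tree corresponds to a direct adjacency between two components, it contributes $0$ to $Q$; if it corresponds to a path through $X$, we can always shortcut it to use at most one $X$-vertex, since any vertex $x\in X$ adjacent to component $C_i$ and lying on a path toward $C_j$ can be connected onward; but in the worst case a single $X$-vertex need not bridge two components directly, so we may need up to two $X$-vertices per tree edge, yielding $2(q-1)\le 2p$. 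This worst-case analysis, and verifying the shortcutting is always possible, is the delicate part; the rest is routine graph bookkeeping.
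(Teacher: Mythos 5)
The heart of the matter is the step you yourself flag as delicate, and it does not go through as stated. In your contracted spanning tree, an edge between two super-vertices $C_i$ and $C_j$ is realized by a path of $X$-vertices that can be arbitrarily long, and it is not true that two of its vertices suffice to join $C_i$ to $C_j$: the only thing the domination hypothesis gives you is that every vertex of the path sees \emph{some} component of $G[D]$, so the at-most-two vertices you can extract from the path (a single vertex seeing two distinct components, or an adjacent pair whose members see distinct components) merge \emph{some} pair of components --- in general not $C_i$ and $C_j$. Once that merge happens, the spanning tree on the $q$ super-vertices is no longer the right combinatorial object, so the static accounting ``two vertices per tree edge, $q-1$ tree edges'' cannot be carried out edge by edge. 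The bound $2p$ is not a consequence of tree accounting at all; it comes from an iterative potential argument, and that argument is what is missing from your write-up.

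The paper's proof is exactly that iterative argument, with no spanning tree: greedily, as long as $G[D\cup Q]$ is disconnected, either some vertex of $X\setminus(D\cup Q)$ is dominated by two distinct components of $G[D\cup Q]$ (add it), or some edge of $G[X]$ has its two endpoints dominated by distinct components (add both endpoints); each step adds at most two vertices and decreases the number of components by at least one, so at most $2(p-1)\leq 2p$ vertices are added in total. The termination/correctness half is the part you would still need to supply: if neither rule applies, then ``the component of $G[D\cup Q]$ dominating $x$'' is a well-defined single-valued function on $X$ that is constant along every edge of $G[X]$, hence constant on all of the connected graph $G[X]$, which forces $G[D\cup Q]$ to be connected. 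Your observation that every vertex of $X$ has a neighbour in some component of $D$ is the right ingredient, but it must be paired with the number of components as a decreasing potential rather than with a fixed tree structure.
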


\begin{proof}
If $D$ is connected, then we set $Q=\emptyset$ and the proposition follows. Hence, assume that $G[D]$ contains at least $2$ connected components.
Now, the algorithm for finding $Q$ does the following: 
\medskip
\begin{itemize}
\item[(1)] We start by setting $Q=\emptyset$. 
\item[(2)] If $G[D\cup Q]$ is connected, output $Q$.
\item[(3)] If there is a vertex $z$ in $X\setminus (D\cup Q)$ that is dominated from at least $2$ different components of $G[D\cup Q]$, we add $z$ into $Q$. 
\item[(4)] If there is an edge $uv$ with both end vertices in $X$ such that the number of components of $G[D \cup Q]$ that dominate at least one of $\{u,v\}$ is at least 2, we 
add $u$ and $v$ into $Q$. 
\end{itemize}
\medskip

Note that in both steps (3) and (4) we increase the size of $Q$ by at most two and decrease the number 
of connected components in $G[D\cup Q]$ by at least one. Hence after at most $p-1$ steps we end up with only one connected component. 
Now we show that if we cannot apply  step (3) nor step (4), then $G[D\cup Q]$ is connected. Since, we cannot apply step (3), each vertex of $X$ is dominated by 
exactly one connected component of $D\cup Q$. Moreover, since we cannot apply  step (4), if $uv$ is an edge 
in $X$, then both $u$ and $v$ are dominated by the same component of $G[D\cup Q]$. 
Finally, since $X$ is connected, there is a path between every pair of vertices of $X$ and all vertices 
on that path have to be dominated by the same distinct component of $G[D\cup Q]$, which implies that $G[D\cup Q]$ consists of a single connected component. 
\end{proof}

We have now collected all the tools required to control the number of
vertices which have to be added to ensure connectivity. 

\begin{theorem}\label{thm:generalkernel}
Let $\Cc$ be a class of graphs such that the neighborhood complexity
  function for all $G\in\Cc$ is bounded by a fixed polynomial of
  degree $d$ and 
   such that on input $(G,k)$, for $G\in\Cc$, we can decide in polynomial time
  whether $\cds(G)>k$ or otherwise compute a $k$-domination core
  $Z\subseteq V(G)$.  Then for every $\epsilon>0$, \textsc{Connected
    Dominating Set} parameterized by $k$ admits a
  $(1+\epsilon)$-approximate kernel with $|Z|^{\Oh(d/\epsilon)}$ vertices
  on $\Cc$.
\end{theorem}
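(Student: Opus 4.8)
The plan is to follow the two-step framework: first reduce the dominators, then add just enough connectors, while keeping careful track of how approximation error accumulates. Set $t = \lceil c/\epsilon \rceil$ for a suitable constant $c$ (to be pinned down at the end), and let $Z$ be the $k$-domination core computed by hypothesis, so $|Z|$ is polynomial in $k$. First I would build the reduced graph $G'$ by keeping $Z$, together with, for every $\sim_Z$-equivalence class, \emph{all} subsets of that class of size at most $2t$ — but this is too many; instead, for each equivalence class $C$ and each subset $W \subseteq N(C)\cap Z$ — no, the right move is: for each equivalence class and each pair of vertices, we do not need everything. Let me restate. The key realization (as in \cite{DrangeDFKLPPRVS16}) is that to certify domination we only need one representative per $\sim_Z$-class, which gives $\mu(|Z|) \le |Z|^{d}$ vertices; to certify \emph{connectivity} of short paths we additionally need to be able to realize, inside $G'$, any path of length at most $2t$ that the optimal connected dominating set uses. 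So $G'$ should contain $Z$, one representative per $\sim_Z$-class, and then we iterate: whenever two already-included vertices have a common neighbor in $G$, add one such common neighbor; repeat this ``closure under short connections'' for $2t$ rounds. Each round multiplies the vertex count by at most a polynomial factor in the current size, so after $2t$ rounds we get $|Z|^{\Oh(t d)} = |Z|^{\Oh(d/\epsilon)}$ vertices, as claimed. The solution-lifting map is the identity on vertices (since $V(G') \subseteq V(G)$), possibly composed with Proposition \ref{approx-cds-by-ds} to restore connectivity.

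Next I would prove the two directions of the approximation guarantee. For the ``forward'' direction, let $D^\star$ be an optimal connected $Z$-dominator in $G$ (equivalently a connected dominating set of $G$, since $Z$ is a $k$-domination core), of size $\cds(G)$. Apply Proposition \ref{prop-covering-family} with parameter $t$ to $D^\star$, obtaining a $(D^\star,t)$-covering family $\mathcal{F}$ with at most $\frac{|D^\star|}{t}+1$ trees, each on at most $2t$ vertices, whose union is $D^\star$. Each such tree $T$ has at most $2t$ vertices and is connected in $G$; I claim an isomorphic copy — more precisely a connected subgraph on a vertex set dominating the same portion of $Z$ — survives in $G'$ by the closure construction, because a tree on $\le 2t$ vertices is built up by $\le 2t$ ``short connection'' steps from its $\sim_Z$-representatives. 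Taking the union of these surviving trees over $\mathcal{F}$, together with the connector vertices from the covering family's overlaps, yields a $Z$-dominator $D'$ in $G'$ with $G'[D']$ having at most $|\mathcal{F}| \le \frac{|D^\star|}{t}+1$ connected components, and $|D'| \le \sum_{T\in\mathcal F}|V(T)| \le (1+\tfrac1t)|D^\star| + 1$. Since $Z$ lies in $G'$ and (by the closure) contains a connected subgraph $X$ through which these components can be stitched, Proposition \ref{approx-cds-by-ds} adds at most $2(\frac{|D^\star|}{t}+1)$ further vertices of $X$ to make it connected. Hence $\cds(G') \le (1+\tfrac1t)|D^\star| + \tfrac{2}{t}|D^\star| + \Oh(1) \le (1+\epsilon)\cds(G) + \Oh(1)$, absorbing the additive constant into the multiplicative error using the cap at $k+1$ as in \cite{LokshtanovPRS16}.

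For the ``backward'' direction and the solution-lifting algorithm: given a $c$-approximate connected $Z$-dominator $S'$ in $G'$, note $S' \subseteq V(G') \subseteq V(G)$ and $S'$ dominates $Z$, so since $Z$ is a $k$-domination core, $S'$ dominates all of $V(G)$; moreover $G'[S']$ connected implies $G[S']$ connected (more edges). Thus $S := S'$ is a connected dominating set of $G$ with $|S| = |S'|$, and combining with the forward bound $\Opt(G',k')\le (1+\epsilon)\Opt(G,k)$ gives $\frac{\Cds(G,k,S)}{\Opt(G,k)} \le (1+\epsilon)\cdot\frac{\Cds(G',k',S')}{\Opt(G',k')}$ up to the cap, which is the required inequality after rescaling $\epsilon$. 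The main obstacle I anticipate is the forward direction: one must argue that the closure construction actually preserves, for every small connected subgraph of $G$ on $\le 2t$ vertices, a connected subgraph in $G'$ dominating the same vertices of $Z$ — this requires that when we add a ``common neighbor'' we add one adjacent (in $G$) to the right previously-added pair, and that iterating $2t$ times suffices to realize any tree on $2t$ vertices; getting the bookkeeping of which $\sim_Z$-class representatives we start from, and ensuring the surviving trees still dominate the needed part of $Z$ rather than merely connect, is the delicate point. A secondary subtlety is handling the additive $\Oh(1)$ terms from Propositions \ref{prop-covering-family} and \ref{approx-cds-by-ds} cleanly within the lossy-kernel inequality, which is where the $k{+}1$ capping of the objective does the work.
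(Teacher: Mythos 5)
Your high-level skeleton matches the paper's: cut an optimal solution $D^\star$ into trees of size at most $2t$ via Proposition~\ref{prop-covering-family}, replace each tree by a surrogate living in the kernel, reconnect with Proposition~\ref{approx-cds-by-ds} at a cost of $\Oh(|D^\star|/t)$ extra vertices, and choose $t=\Theta(1/\epsilon)$; the solution-lifting direction is also right, since a connected $Z$-dominator of size at most $k$ in an induced subgraph containing the core dominates and connects in $G$. The gap is in the construction of the kernel itself, precisely the step you flag as the main obstacle. Your ``closure under short connections'' (repeatedly adding a common neighbor of two already-included vertices) does not realize the trees you need. First, two vertices $u,v$ of a tree $T$ in $D^\star$ get replaced by representatives $u',v'$ of their $\sim_Z$-classes; these agree with $u,v$ only on their neighborhoods \emph{in $Z$}, so $u'v'$ need not be an edge, $u'$ and $v'$ need not share a common neighbor, and they may be far apart in $G$. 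Moreover, adding common neighbors of included pairs never inserts the interior of a path of length three or more between two included vertices, so iterating the rule does not repair this. Second, the size analysis fails: each round can add one vertex per pair of currently included vertices, so the count roughly squares each round, giving $|Z|^{\Oh(d\cdot 2^{2t})}$ rather than $|Z|^{\Oh(dt)}$. Third, your reconnection step assumes the kernel contains a connected set through which the surviving pieces can be stitched, which your construction does not supply.

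The paper closes this gap differently: it takes as groups the $\sim_Z$-classes together with the singletons $\{z\}$ for $z\in Z$, and for \emph{every} subset $\mathcal{Q}$ of at most $2t$ groups it solves the {\sc Group Steiner Tree} instance in the \emph{original} graph $G$ (polynomial time for constant $t$ by~\cite{MisraPRSS10}) and marks all vertices of the optimal tree whenever that tree has at most $2t$ vertices. Since each tree $T$ of the covering family is itself a feasible group Steiner tree for the set of groups it meets, the marked surrogate $T_{\mathcal{Q}}$ has at most $|V(T)|$ vertices and meets the same groups; this suffices for dominating $Z$ (all members of a group have the same neighborhood in $Z$) and hence, by the $k$-domination core property applied to a subset of size at most $|D^\star|\le k$, for dominating $G$. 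The total number of marked vertices is $2t\cdot|\mathcal{R}\cup\mathcal{Z}|^{\Oh(t)}=|Z|^{\Oh(d/\epsilon)}$, and the marked set is completed to a connected dominating set using Proposition~\ref{approx-cds-by-ds} with $X=V(G)$ before taking the induced subgraph, which also guarantees the output graph is connected. If you replace your closure by this precomputation of group Steiner trees, the rest of your argument goes through essentially as you wrote it.
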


The rest of the section is mainly devoted to prove Theorem~\ref{thm:generalkernel}. 

\medskip
\subparagraph{The reduction algorithm} Let $(G,k)$ be the input
instance, where $G\in\Cc$ is connected and~$k$ is a positive
integer. We first describe the reduction algorithm
${\cal R}_{\cal A}$.  As a first step we run the polynomial time
algorithm (which exists by assumption of the theorem) to decide
whether $\cds(G) > k$ and otherwise compute a $k$-domination core
$Z\subseteq V(G)$. In the first case, we output a trivial negative
instance $((\{v\},\emptyset),0)$. In the second case, we proceed as
follows.

We partition the graph into two sets $Z$ and $R = V(G) \setminus Z$.
We compute the equivalence relation $\sim_Z$ on $R$ (see
Definition~\ref{def:sim}), that is, we partition vertices in $R$
according to their neighborhoods in $Z$. This is clearly possible in
polynomial time.  Let $\mathcal{R}$ be the set of equivalence classes
defined by $\sim_Z$.  As a direct implication of
our assumption, we can bound the size of $\mathcal{R}$ by
$\Oh(|Z|^d)$.

\begin{proposition}\label{cl:num-classes}
  The  relation $\sim_Z$ has $\Oh(|Z|^d)$ classes,
  that is, $|\mathcal{R}| \in \Oh(|Z|^d)$.
\end{proposition}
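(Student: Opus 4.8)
The plan is to simply unwind the definitions, since Proposition~\ref{cl:num-classes} is essentially a restatement of the hypothesis of Theorem~\ref{thm:generalkernel} applied to the particular set $Z$ at hand. First I would recall that $\mathcal{R}$ is by construction the set of equivalence classes of $\sim_Z$ on $R = V(G)\setminus Z$, so $|\mathcal{R}| = \mathrm{index}(\sim_Z)$. Then I would invoke the definition of the neighborhood complexity function $\mu$ of $G$: since $\mu(z) = \max_{Z'\subseteq V(G),\, |Z'|=z} \mathrm{index}(\sim_{Z'})$, taking $z = |Z|$ gives $\mathrm{index}(\sim_Z) \leq \mu(|Z|)$.

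Next I would use the assumption of Theorem~\ref{thm:generalkernel} that the neighborhood complexity function of every $G\in\Cc$ is bounded by a fixed polynomial of degree $d$, i.e.\ there is a constant $c$, independent of $G$ and of the choice of $Z$, with $\mu(z) \leq c\cdot z^d$ for all $z$ (absorbing lower-order terms into $c$ for $z\geq 1$). Substituting $z = |Z|$ yields $|\mathcal{R}| = \mathrm{index}(\sim_Z) \leq \mu(|Z|) \leq c\cdot |Z|^d \in \Oh(|Z|^d)$, which is the claimed bound.

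I do not expect any real obstacle here; the argument is purely definitional. The one point worth a sentence of care is that $\sim_Z$ in Definition~\ref{def:sim} is an equivalence relation on $V(G)\setminus Z$, whereas one might momentarily worry that the neighborhood complexity function counts something slightly different — but $\mu$ is defined using exactly the same relation $\sim_Z$, so the two notions of ``number of classes'' coincide and nothing is lost. (Even if one additionally wished to treat each vertex of $Z$ as a singleton class, this only adds $|Z|$ to the bound, which is absorbed into $\Oh(|Z|^d)$ as soon as $d\geq 1$.)
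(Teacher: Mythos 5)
Your argument is correct and is exactly what the paper intends: the paper offers no separate proof of Proposition~\ref{cl:num-classes}, stating only that it is ``a direct implication of our assumption,'' namely that the neighborhood complexity function of every $G\in\Cc$ is bounded by a fixed polynomial of degree $d$. Your explicit unwinding of the definition of $\mu$ and the observation that $|\mathcal{R}|=\textit{index}(\sim_Z)\leq\mu(|Z|)\in\Oh(|Z|^d)$ is precisely the definitional step the paper leaves implicit.
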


As $Z$ is a $k$-domination core, to find a dominating set of size at
most $k$ it is enough to find a set which dominates $Z$.  Hence for
the purpose of domination, it is redundant to pick more than one
vertex from an equivalence class in $\mathcal{R}$.  The following
construction finds a small set of relevant vertices which
``approximately'' preserves the connectivity requirements.

Let $t \geq 1$ be a constant, which we fix later.  Let $\mathcal{Z}$
be the family of groups $\{\{z\}~|~z\in Z\}$ and let $\mathcal{R}$ be
the set of equivalence classes defined by $\sim_Z$.  The set
$\mathcal{R} \cup \mathcal{Z}$ forms a family of groups of vertices in
$V(G)$.  For every subset
$\mathcal{Q} = \{Q_1, \ldots, Q_{\ell}\} \subseteq \mathcal{R} \cup
\mathcal{Z}$
of size at most $2t$ of groups in $\mathcal{R} \cup \mathcal{Z}$,
construct a {\sc Group Steiner Tree} instance on the graph $G$ with
groups $Q_1, \dots, Q_{\ell}$.  Note that since $t$ is a constant each
instance can be solved in polynomial time using the algorithm of Misra
et al.~\cite{MisraPRSS10}. For each subset $\mathcal{Q}$ denote by
$T_{\mathcal{Q}}$ the corresponding solution.  For every instance that
we solve, if the size of $T_{\mathcal{Q}}$ is at most $2t$ then we
mark the vertices of $T_{\mathcal{Q}}$ in~$G$. We denote the set of
all marked vertices by $\bigcup T_{\mathcal{Q}}$.  If
$\bigcup T_{\mathcal{Q}}$ is not a dominating set in $G$, then we may
declare that $\cds(G) > k$.  Otherwise, since $G$ is assumed to be
connected, we can run the polynomial-time algorithm of
Proposition~\ref{approx-cds-by-ds} (with parameter $X=V(G)$) to obtain
a set $W \subseteq V(G)$ such that $\bigcup T_{\mathcal{Q}} \cup W$ is
a connected dominating set in~$G$ and
$|\bigcup T_{\mathcal{Q}} \cup W| \leq 3|\bigcup T_{\mathcal{Q}}|$.
Let $Y=\bigcup T_{\mathcal{Q}} \cup W$. We output the instance
$(G[Y], k)$.

\subparagraph*{Approximation guarantee} Now we prove that
$\text{OPT}(G[Y],k)\leq (1+\epsilon)\text{OPT}(G,k)$.  Let~$D^*$ be a
connected dominating set of $G$ of minimum cardinality.  If
$\vert D^*\vert >k$, then
$\text{OPT}(G[Y],k)\leq (1+\epsilon)\text{OPT}(G,k)$ holds trivially.
So assume that $\vert D^* \vert \leq k$.  We let
$\mathcal{F}(D^*,t) = \{T_1, T_2, \cdots, T_m\}$ denote a
$(D^*,t)$-covering family. Proposition~\ref{prop-covering-family}
implies that there exists such a family for which
$|\mathcal{F}(D^*,t)| \leq \frac{|V(D^*)|}{t} + 1$ and
$\sum_{T \in \mathcal{F}(D^*,t)}{|V(T)|} \leq (1 + \frac{1}{t})
|V(D^*)| + 1$.
Moreover, the size of each connected subgraph $T$ (in this case also
subtree) is at most $2t$.
We construct a new family $\mathcal{F'}$ from $\mathcal{F}(D^*,t)$ as
follows.  For each $T \in \mathcal{F}(D^*,t)$, we replace $T$
by~$T_{\mathcal{Q}}$, where ${\mathcal Q}$ is the set of groups from
${\mathcal R}\cup \mathcal{Z}$ such that $Q\in {\mathcal Q}$ if and
only if $V(T)\cap Q\neq \emptyset$ and $T_{\mathcal{Q}}$ is the set of
marked vertices in an optimal Steiner tree connecting vertices from
the groups in $\mathcal{Q}$.  Note that the fact that $T$ is of size
at most $2t$ guarantees the existence of $T_{\mathcal{Q}}$ (by
construction).  Moreover, the size of $T_{\mathcal{Q}}$ is at most the
size of $T$, since $T$ is also a solution for {\sc Group Steiner Tree}
for $\mathcal{Q}$.  Let $D_{\mathcal{F'}}$ denote the union of all
vertices in~$\mathcal{F}'$.

Let $D'$ be a subset of $D_{\mathcal{F'}}$, of cardinality at most
$\vert D^*\vert$, such that for any $w \in D^*$, there is a vertex
$w' \in D'$ with the property that
$\{w,w'\}\subseteq Q \in {\mathcal R} \cup \mathcal{Z}$ and
$w' \in D_{\mathcal{F'}}$.  That is, if $D^*$ has a vertex from a
group $Q$ in ${\mathcal R}\cup \mathcal{Z}$, then $D'$ also has a
vertex from group $Q$.  We claim that~$D'$ is a dominating set in
$G$. Notice that $Z\cap V(D) = Z \cap D'$ and if any vertex in~$Z$ is
adjacent to a vertex in a group $Q$, then it is adjacent to all
vertices in group $Q$. This implies that $D'$ also dominates $Z$ and
since $\vert D'\vert \leq \vert D \vert \leq k$, by the definition of
a $k$-domination core,~$D'$ is a dominating set in $G$.

This implies that $D_{\mathcal{F'}}\supseteq D'$ is also a dominating
set in $G$.  Applying Proposition~\ref{approx-cds-by-ds} in $G[Y]$
(with $D_{\mathcal{F'}}$ as dominator and since $G[Y]$ is connected),
we obtain a connected dominating set of size at most
$2|\mathcal{F}(D^*,t)|+|D_{\mathcal{F'}}|\le \frac{2 \vert V(D^*)
  \vert}{t} + 2 + (1 + \frac{1}{t})\vert V(D^*) \vert + 1 = (1 +
\frac{3}{t})\vert V(D^*)\vert + 3$.
Now we can fix the constant $t$ appropriately (as roughly
$\frac{3}{\epsilon}$) and we get that
$\text{OPT}(G[Y],k)\leq (1+\epsilon)\text{OPT}(G,k)$.

\medskip
\subparagraph*{Size of the kernel} Now we show that
$\vert Y \vert \in |Z|^{\Oh(d)/\epsilon}$. By
Proposition~\ref{cl:num-classes}, we have that
$\vert {\mathcal R}\cup {\mathcal Z}\vert = \Oh(Z^{d})$. From the
construction, it follows that
$\vert \bigcup T_{\mathcal{Q}}\vert= \Oh(2t\vert {\mathcal R}\cup
{\mathcal Z}\vert^{\Oh(t)}) = \vert Z\vert^{\Oh(d/\epsilon)}$.
Notice that $Y= \bigcup T_{\mathcal{Q}} \cup W$, where $W$ is obtained
by applying Proposition~\ref{approx-cds-by-ds} and hence we have that
$Y=|\bigcup T_{\mathcal{Q}} \cup W| \leq 3|\bigcup T_{\mathcal{Q}}|=
\vert Z\vert^{\Oh(d/\epsilon)}$.

\medskip
\subparagraph*{The solution lifting algorithm} The solution lifting
algorithm works as follows. Given a solution $D'$ to the reduced
instance $(G',k')$, if $D'$ is not a connected dominating set of $G'$,
then the solution lifting algorithm will output $\emptyset$.  If $D'$
is a connected dominating set, then the algorithm returns $D'$ if
$|D'|\le k$ and $V(G)$ otherwise.  Let $D$ be the output of the
solution lifting algorithm.

\medskip
\subparagraph*{The final step}

We prove that the above reduction algorithm together with the solution
lifting algorithm constitute a $(1+\epsilon)$-approximate kernel. Note
that if $D'$ is not a valid solution of $G'$, then $\emptyset$ is not
a valid solution for $G$ and $\Cds(G',k',D') = \Cds(G,k,D)=\infty$. Hence
we can restrict ourselves to the case when $D'$ is a connected
dominating set of~$G'$.  First, consider the case where the reduction
algorithm outputs $Y\subseteq V(G)$ and the reduced instance is hence
$(G',k')=(G[Y],k)$. From our above observation, we have that
$\Opt(G[Y],k)\leq (1+\epsilon)\Opt(G,k)$.  We show that in this case
$\Cds(G,k,D) = \Cds(G',k',D')$. If $|D'|>k$, then
$\Cds(G,k,D)=\Cds(G,k,V(G)) = k+1=\Cds(G',k',D')$.  So assume that
$|D'|\leq k$, which implies $D=D'$. Since $D'$ is a connected
dominating set of $G[Y]$ and $Y$ contains a $k$-domination core of
$G$, it follows that $D'$ dominates $G$ and
$\Cds(G,k,D) = \Cds(G',k',D')$.  Combining $\Cds(G,k,D) = \Cds(G',k',D')$
and $\Opt(G[Y],k)\leq (1+\epsilon)\Opt(G,k)$ we get
$\frac{\Cds(G,k,D)}{\Opt(G,k)}\leq
(1+\epsilon)\frac{\Cds(G',k',D')}{\Opt(G',k')}.$
When $(G',k')=((\{v\},\emptyset),0)$, we can easily verify that the
above mentioned approximation guarantee holds.  \hfill $\square$

\bigskip The remainder of the paper is concerned with proving the
existence of small domination cores for concrete sparse classes of
graphs. 
The most technical part will be to prove the existence of a
linear domination core for bounded expansion classes (the definition
of a domination core is slightly changed to obtain such good bounds,
see Section~\ref{sec:exp}). Most surprisingly, the general framework summarized 
in Theorem~\ref{thm:generalkernel} does not produce a
bi-kernel of size $\Oh(|Z|^{1/\epsilon})$ but rather of size
$f(\epsilon)\cdot |Z|$ for some function $f$ on bounded expansion
classes and of polynomial size on nowhere dense classes.

\section{Biclique-free graphs}\label{sec:bicliquefree}
In this section we fix a class $\Cc$ which excludes
some biclique $K_{d,d}$ for a fixed positive integer~$d$. 
All of our arguments can be easily extended to $K_{i,j}$-free graphs, 
for positive integers $i$ and $j$, but we use $K_{d,d}$-freeness 
for simplicity. 
We show that {\sc Connected Dominating Set}, parameterized 
by solution size, admits a PSAKS on $K_{d,d}$-free graphs. More precisely, we show that
for every $\epsilon>0$, \textsc{Connected Dominating Set} admits 
a $(1+\epsilon)$-approximate kernel on at most $k^{\Oh(d^2/\epsilon)}$  vertices. 
According to the framework presented in the previous section
we want to prove that $K_{d,d}$-free graphs admit small
domination cores and have polynomially-bounded neighborhood complexity. 
The latter is well known, as the next lemma shows.

\begin{lemma}[\cite{LokshtanovMPRS15}]\label{kdd}
Let $G$ be a $K_{d,d}$-free graph and let $Z\subseteq V(G)$. 
Then the relation $\sim_Z$ has at most $2d\cdot |Z|^d$ equivalence
classes. 
\end{lemma}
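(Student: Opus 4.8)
The plan is to bound directly the number of distinct \emph{traces} $N(v)\cap Z$ over $v\in V(G)\setminus Z$, since the index of $\sim_Z$ is by definition exactly $|\{\,N(v)\cap Z:v\in V(G)\setminus Z\,\}|$. Write $\mathcal S$ for this family of subsets of $Z$ and split it as $\mathcal S=\mathcal S_{<d}\cup\mathcal S_{\ge d}$ according to whether a trace has fewer than $d$ vertices or at least $d$ vertices. The small traces are counted crudely: $|\mathcal S_{<d}|\le\sum_{i=0}^{d-1}\binom{|Z|}{i}$.

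For the large traces I would run a double-counting argument in the spirit of K\H{o}v\'ari--S\'os--Tur\'an. Count pairs $(S,A)$ with $S\in\mathcal S_{\ge d}$ and $A$ a $d$-element subset of $S$. On one hand each $S\in\mathcal S_{\ge d}$ contributes $\binom{|S|}{d}\ge 1$ such pairs, so there are at least $|\mathcal S_{\ge d}|$ of them. On the other hand, fix a $d$-element set $A\subseteq Z$ and suppose $d$ \emph{distinct} traces $S_1,\dots,S_d\in\mathcal S_{\ge d}$ all contain $A$; pick for each $i$ a witness $v_i\in V(G)\setminus Z$ with $N(v_i)\cap Z=S_i$. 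The $v_i$ are pairwise distinct (their traces differ), they all lie outside $Z\supseteq A$, and each $v_i$ is adjacent to every vertex of $A$, so $\{v_1,\dots,v_d\}$ together with $A$ induces a $K_{d,d}$ subgraph of $G$ --- contradicting $K_{d,d}$-freeness. Hence every $d$-subset of $Z$ is contained in at most $d-1$ traces, the number of pairs is at most $(d-1)\binom{|Z|}{d}$, and therefore $|\mathcal S_{\ge d}|\le(d-1)\binom{|Z|}{d}$.

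Combining, the index of $\sim_Z$ is at most $\sum_{i=0}^{d-1}\binom{|Z|}{i}+(d-1)\binom{|Z|}{d}$, and it only remains to absorb this into the stated form: for $|Z|\ge 1$ each $\binom{|Z|}{i}$ with $i\le d-1$ is at most $|Z|^{d-1}$, so the first sum is at most $d|Z|^{d-1}\le d|Z|^{d}$, while $(d-1)\binom{|Z|}{d}\le\tfrac{d-1}{d!}|Z|^{d}\le|Z|^{d}$, giving the bound $(d+1)|Z|^d\le 2d|Z|^d$. There is no genuine obstacle here; the only point requiring care is that the per-$A$ bound of $d-1$ must be argued for distinct \emph{traces} (so that the witnesses are automatically distinct and disjoint from $A$), rather than for distinct witnesses, and that the final numerical estimates be kept loose enough to collapse neatly. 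This is precisely the argument of Lokshtanov et al.~\cite{LokshtanovMPRS15}, which I would cite for the remaining details.
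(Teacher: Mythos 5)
Your argument is correct: the trace-counting reduction, the K\H{o}v\'ari--S\'os--Tur\'an double count showing each $d$-subset of $Z$ lies in at most $d-1$ distinct traces (with the witnesses automatically distinct and disjoint from $A$, exactly as you note), and the final absorption into $2d\,|Z|^d$ all check out for $|Z|\ge 1$. The paper itself gives no proof of this lemma --- it is imported verbatim from Lokshtanov et al.\ --- so there is nothing to compare against line by line, but your proof is the standard argument behind that citation and can stand as a self-contained substitute.
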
	

Our algorithm to find a $k$-domination core heavily relies on 
several earlier results. We combine
several of the ideas introduced in~\cite{alon2009linear,
DawarK09,DrangeDFKLPPRVS16,
philip2009solving,PhilipRS12,telle2012fpt} and, as a byproduct, are
able to simplify the proofs of some of the results. 
Note that $V(G)$ is a $k$-domination core. 
We initialize $Z = V(G)$ and repeatedly 
reduce the size of $Z$, vertex by vertex, 
maintaining a $k$-domination core throughout the process. 

\begin{lemma}\label{cdcore}
There exists a polynomial-time algorithm that, given a $K_{d,d}$-free graph $G$ and a $k$-domination core $Z \subseteq V(G)$ with $|Z| > (2d + 1)k^{d + 1}$,  
either correctly concludes that $\ds(G) > k$ (and hence $\cds(G) > k$) or finds a 
vertex $z \in Z$ such that $Z \setminus \{z\}$ is a $k$-domination core. 
\end{lemma}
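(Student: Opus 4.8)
\medskip
\noindent\emph{Proof idea.}
Since $Z$ is already a $k$-domination core, to prove that $Z\setminus\{z\}$ is one it suffices to show that every $D$ with $|D|\le k$ which dominates $Z\setminus\{z\}$ also dominates $z$; indeed, then $D$ dominates all of $Z$ and hence, by the core property, all of $V(G)$. Such a $D$ fails to dominate $z$ exactly when $D\cap N[z]=\emptyset$. So $Z\setminus\{z\}$ is a $k$-domination core if and only if there is \emph{no} set $W\subseteq V(G)\setminus N[z]$ with $|W|\le k$ and $Z\setminus\{z\}\subseteq N[W]$; call a vertex $z$ with this property \emph{redundant}. The plan is to show that when $|Z|>(2d+1)k^{d+1}$, either we can cheaply certify $\ds(G,Z)>k$ (hence $\ds(G)>k$, since $\ds(G,Z)\le\ds(G)$ always), or a redundant vertex provably exists and can be located in polynomial time.

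\smallskip
First I would dispose of the easy branch. If $\ds(G,Z)\le k$, then averaging over a $k$-element $Z$-dominator shows that some vertex $v$ satisfies $|N[v]\cap Z|\ge|Z|/k>(2d+1)k^{d}$. Therefore, if \emph{no} vertex $v$ has $|N[v]\cap Z|>(2d+1)k^{d}$, the algorithm may safely report $\ds(G)>k$. Otherwise fix such a $v^{*}$ and set $A=N[v^{*}]\cap Z$, so $|A|>(2d+1)k^{d}$. Observe that $v^{*}\in N[z]$ for every $z\in A$, so $v^{*}$ can never belong to a set witnessing non-redundancy of a vertex of $A$; the remaining goal is to exhibit a redundant vertex \emph{inside} $A$.

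\smallskip
Suppose, towards a contradiction, that no vertex of $A$ is redundant: then for each $z\in A$ there is a witness $W_z\subseteq V(G)\setminus N[z]$ with $|W_z|\le k$ and $A\setminus\{z\}\subseteq N[W_z]$, and $v^{*}\notin W_z$. The heart of the proof is to play this family of more than $(2d+1)k^{d}$ witnesses against $K_{d,d}$-freeness. Three facts are available to exploit. (i) Because $A$ lies inside the single closed neighbourhood $N[v^{*}]$, the system of traces $\{N[u]\cap A:u\in V(G)\}$ is far more restricted than plain $K_{d,d}$-freeness gives: roughly speaking any $d-1$ vertices have only $\Oh(d)$ common neighbours inside $A$, so the vertices of each $W_z$ that dominate fewer than $2d$ vertices of $A$ are collectively negligible, and some vertex of $W_z$ must dominate roughly a $1/k$-fraction of $A$. (ii) By the standard $K_{d,d}$-counting (double counting the pairs consisting of a $d$-subset of $A$ and a vertex whose neighbourhood contains it, each such subset lying in at most $d-1$ neighbourhoods), the number of vertices dominating at least $|A|/k$ vertices of $A$ is only $\Oh_d(k^{d})$. (iii) If such a ``heavy'' vertex $h$ occurs in $W_z$ then $z\notin N[h]$, so after fixing one heavy vertex shared by many witnesses (pigeonhole) one can restrict attention to the part of $A$ lying outside $N[h]$, simultaneously lowering the budget, and iterate. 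The main obstacle is precisely this last step: organising the peeling/pigeonholing so that the bookkeeping yields exactly the bound $(2d+1)k^{d+1}$, and---more delicately---so that the argument actually \emph{names} a specific vertex of $A$ that is provably redundant rather than only proving that one exists. This is where I would adapt the counting and branching arguments of Telle and Villanger and of Philip, Raman and Sikdar, together with the domination-core technique of Drange et al.\ cited above. All remaining steps (the degree test, the choice of $v^{*}$, and the peeling procedure) are clearly polynomial time, so outputting the identified redundant vertex $z$ completes the proof.
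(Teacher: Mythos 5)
Your reduction of the problem (a vertex $z$ can be deleted from the core iff no set of size at most $k$ avoiding $N[z]$ dominates $Z\setminus\{z\}$), your degree test for the $\ds(G)>k$ branch, and your instinct to exploit $K_{d,d}$-freeness via heavy vertices are all aligned with the paper. But the heart of the argument is exactly the step you flag as an obstacle, and it is not completed; moreover, the peeling direction you sketch points the wrong way. You propose to fix one heavy vertex $h$ occurring in many witnesses and then ``restrict attention to the part of $A$ lying \emph{outside} $N[h]$.'' The paper does the opposite: it builds a nested chain $Z=X_0\supseteq X_1\supseteq\cdots\supseteq X_\ell$ by repeatedly choosing a vertex $v_i$ with $|N(v_i)\cap X_{i-1}|\ge\lceil|X_{i-1}|/k\rceil$ and setting $X_i=X_{i-1}\cap N[v_i]$, i.e.\ it intersects with the closed neighbourhood of each heavy vertex. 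This is what makes the whole argument work: every $v_i\in S=\{v_1,\dots,v_\ell\}$ is then adjacent to \emph{every} vertex of $X_\ell$, so for the output vertex $z\in X_\ell\setminus S$ it suffices to show that any $k$-element dominator $D$ of $Z\setminus\{z\}$ meets $S$. That in turn follows from the termination condition: when the process stops, no vertex outside $S$ covers a $1/k$-fraction of $X_\ell$, so $k$ such vertices cannot dominate $X_\ell\setminus\{z\}$, which still has at least $(2d+1)k^{d+1-\ell}$ vertices. $K_{d,d}$-freeness enters only to bound the depth: if $\ell\ge d$ then $\{v_1,\dots,v_d\}$ and $X_d\setminus\{v_1,\dots,v_d\}$ (of size at least $2d+1-d\ge d$) form a complete bipartite subgraph containing $K_{d,d}$.

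Concretely, two things are missing from your proposal. First, the mechanism that ``names'' the redundant vertex: in the paper it is any vertex of $X_\ell\setminus S$, and its redundancy is certified directly (every small dominator of $Z\setminus\{z\}$ must contain some $v_i$, and every $v_i$ dominates $z$) rather than by contradiction against a family of witnesses $W_z$. Second, your counting steps (i) and (ii), while individually plausible, do not obviously combine into a terminating induction with the stated bound $(2d+1)k^{d+1}$; in particular, passing to $A\setminus N[h]$ destroys the property that previously chosen heavy vertices dominate the surviving set, which is precisely the invariant the final argument relies on. So the proposal identifies the right ingredients but does not contain a proof.
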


\begin{proof}
We design such an algorithm as follows.  
If there is no vertex $v \in V(G)$ such that the cardinality of the neighborhood of $v$ in $Z$ is at least $\lceil \frac{\vert Z\vert}{k} \rceil$, 
then the algorithm terminates and declares that $\ds(G) > k$.  
Otherwise to find $z \in Z$, the algorithm constructs a sequence of sets $Z=X_0 \supseteq X_1 \supseteq \dots \supseteq X_{\ell}$ and a set 
$S = \{v_1, \ldots, v_{\ell}\} \subseteq V(G)$ such that $X_i \subseteq N[v_i]$. 
We construct the sets $Z= X_0 \supseteq X_1 \supseteq \dots \supseteq X_{\ell}$ and the set $S$ using an iterative procedure. 
Initially, we set $S := \emptyset$ and $X_0:=Z$. 
At step $i$, if there is a vertex $v_i \in V(G)\setminus S$ whose neighborhood in $X_{i-1}$ has at least $\lceil \frac{|X_{i-1}|}{k} \rceil$ vertices, 
then add $v_i$ to~$S$ and set $X_i:=X_{i-1}\cap N[v_i]$. 
If no such vertex $v_i$ exists, then the algorithm outputs
an arbitrary vertex $z\in X_{i-1}\setminus S$ and stops.

The above procedure will construct a sequence  $Z=X_0 \supseteq X_1 \supseteq \dots \supseteq X_{\ell}$ and a set 
$S=\{v_1,\ldots,v_{\ell}\}$. We first claim that $\ell<d$. Suppose $\ell\geq d$. Then 
there is a complete bipartite subgraph $H$ of $G$ with bipartition $\{v_1,\ldots,v_d\}$ and $X_{d}\setminus \{v_1,\ldots,v_d\}$. 
Since $\vert X_{d}\vert \geq \frac{\vert Z\vert}{k^d} \geq \frac{(2d + 1)k^{d + 1}}{k^d} \geq 2d+1$, $H$ contains $K_{d,d}$ as a subgraph, which is a contradiction 
to the fact that~$G$ is $K_{d,d}$-free.  Hence $\ell<d$. Moreover, since $\vert X_{\ell}\vert \geq \frac{\vert Z\vert }{(k)^\ell} \geq 
(2d+1)k^2$, there always exists a vertex $z\in X_{\ell}\setminus S$ that the algorithm can select to output.

Now we prove the correctness of the algorithm. Clearly, if   
there is no vertex $v\in V(G)$ such that the cardinality of the neighborhood of $v$ in $Z$ is at least $\lceil \frac{\vert Z\vert}{k} \rceil$, 
then $\ds(G) > k$ and the algorithm declares it correctly. 
Otherwise let $z\in X_{\ell}\setminus S$ be the output and let $Z'=Z\setminus\{z\}$.
We need to prove that $Z'$ is still a $k$-domination core. 
Let $D$ be a set of size at most $k$ that dominates $Z'$. 
To prove that $D$ is a dominating set in $G$, it is enough to show that $D$ dominates 
$Z$ (because $Z$ is a $k$-domination core).  Since $D$ already dominates $Z'=Z\setminus \{z\}$, it suffices to show that $D$ dominates $z$. 
Notice that every vertex in $S$ is adjacent to $z$. 
Hence, to show that $D$ also dominates $Z$, it is enough to show that 
$D \cap S \neq \emptyset$. We will show that if $D \cap S = \emptyset$, then $D$ cannot dominate all of $X_\ell\setminus\{z\}\subseteq Z'$.
Since our algorithm stops at step $\ell+1$, we know that every vertex outside $S$ 
can dominate strictly less than $(2d + 1)k^{d-\ell}$ vertices from $X_{\ell}$. 
%Also, since $\vert D \vert $
As $|X_{\ell} \setminus \{z\}| \geq (2d + 1)k^{d+1-\ell}$, 
%$z \in X_{\ell} \setminus S$, 
$\vert D\vert \leq k$
and every 
vertex outside $S$ can dominate at most $(2d + 1)k^{d - \ell} - 1$ vertices of $X_{\ell}$, we have that $D \cap S \neq \emptyset$. 
This completes the proof of the lemma. 
\end{proof}	

\medskip
The next theorem follows immediately from Theorem~\ref{thm:generalkernel} and Lemmata~\ref{kdd} and~\ref{cdcore}. 

\medskip
\begin{theorem}\label{thkddfree}
For every $\epsilon > 0$, {\sc Connected Dominating Set}, parameterized by solution size,  
admits a $(1 + \epsilon)$-approximate kernel with $k^{\Oh(d^2/\epsilon)}$ vertices on $K_{d,d}$-free graphs. 
\end{theorem}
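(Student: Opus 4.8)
The plan is to simply assemble the pieces already in hand. Theorem~\ref{thkddfree} asserts a $(1+\epsilon)$-approximate kernel with $k^{\Oh(d^2/\epsilon)}$ vertices for {\sc Connected Dominating Set} on $K_{d,d}$-free graphs, and all the ingredients have been established: Theorem~\ref{thm:generalkernel} gives a general recipe that, for a class whose neighborhood complexity function is bounded by a polynomial of degree $d'$ and on which a polynomial $k$-domination core can be computed in polynomial time, produces a $(1+\epsilon)$-approximate kernel with $|Z|^{\Oh(d'/\epsilon)}$ vertices, where $|Z|$ is the size of the computed core.

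First I would verify the hypotheses of Theorem~\ref{thm:generalkernel} for the class $\Cc$ of $K_{d,d}$-free graphs. The neighborhood complexity bound is exactly Lemma~\ref{kdd}: for any $Z\subseteq V(G)$, the relation $\sim_Z$ has at most $2d\cdot|Z|^d$ classes, so the neighborhood complexity function is bounded by a polynomial of degree~$d$. For the domination core, I would combine the trivial fact that $V(G)$ is itself a $k$-domination core with Lemma~\ref{cdcore}: starting from $Z=V(G)$ and repeatedly invoking the lemma, we either conclude at some point that $\ds(G)>k$ (hence $\cds(G)>k$), in which case we output a trivial negative instance, or we shrink $Z$ vertex by vertex until $|Z|\le (2d+1)k^{d+1}$. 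Since each invocation runs in polynomial time and reduces $|Z|$ by one, and $|V(G)|$ is polynomial in the input size, this whole procedure runs in polynomial time and either certifies $\cds(G)>k$ or outputs a $k$-domination core $Z$ with $|Z|=\Oh(d\cdot k^{d+1})$.

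Next I would feed these facts into Theorem~\ref{thm:generalkernel} with $d'=d$. The theorem then yields, for every $\epsilon>0$, a $(1+\epsilon)$-approximate kernel with $|Z|^{\Oh(d/\epsilon)}$ vertices. Substituting $|Z|=\Oh(d\cdot k^{d+1})$ gives $|Z|^{\Oh(d/\epsilon)} = \bigl(\Oh(d\cdot k^{d+1})\bigr)^{\Oh(d/\epsilon)} = k^{\Oh(d^2/\epsilon)}$, where the factors depending only on $d$ are absorbed into the $\Oh$ in the exponent (more precisely they contribute a $k$-independent multiplicative constant, but since the statement is phrased in terms of vertex count as a function with a $k^{\Oh(d^2/\epsilon)}$ bound, this is exactly what is claimed, treating $d$ and $\epsilon$ as constants). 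This establishes the theorem.

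There is essentially no obstacle here: all the real work has been done in Theorem~\ref{thm:generalkernel}, Lemma~\ref{kdd} and Lemma~\ref{cdcore}. The only point requiring a moment's care is the exponent bookkeeping in the last step — checking that raising a degree-$(d+1)$ polynomial in $k$ to the power $\Oh(d/\epsilon)$ yields $k^{\Oh(d^2/\epsilon)}$ rather than something larger — but this is routine since $(d+1)\cdot\Oh(d/\epsilon)=\Oh(d^2/\epsilon)$. A secondary detail is making sure the iterated application of Lemma~\ref{cdcore} is genuinely polynomial-time overall, which follows because there are at most $|V(G)|$ iterations, each polynomial-time, and the output of one iteration is a valid input to the next. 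Hence the theorem follows immediately from Theorem~\ref{thm:generalkernel} together with Lemmata~\ref{kdd} and~\ref{cdcore}.
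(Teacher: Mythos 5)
Your proposal is correct and follows exactly the route the paper takes: the paper states that Theorem~\ref{thkddfree} ``follows immediately from Theorem~\ref{thm:generalkernel} and Lemmata~\ref{kdd} and~\ref{cdcore}'', and your write-up simply makes explicit the same assembly, including the iteration of Lemma~\ref{cdcore} to shrink the core to size $\Oh(d\cdot k^{d+1})$ and the exponent bookkeeping $(d+1)\cdot\Oh(d/\epsilon)=\Oh(d^2/\epsilon)$. Nothing further is needed.
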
  

\vfill\pagebreak

\section{Bounded expansion graphs}\label{sec:exp}

In this section we show that {\sc Connected Dominating Set}, parameterized by solution size, admits a 
$(1+\epsilon)$-approximate bi-kernel on at most $\Oh(f(\epsilon) \cdot k)$ vertices. 
The reduced instance will be an instance of {\sc Subset Connected Dominating Set} (SCDS), defined as follows: 

\[\text{SCDS}((G,S),k,D) = \left\{
  \begin{array}{rl}
    \infty & \text{if $D$ is not a connected}\\
    & \hspace{0.5cm} \text{$S$-dominator in $G$}\\
    \text{min}\{|D|, k + 1\} & \text{otherwise}
  \end{array}
\right.
\]

The first phase of our algorithm, i.e. finding a domination core, closely follows the work of Drange et al.~\cite{DrangeDFKLPPRVS16} but requires subtle changes. 
We fix a graph class $\mathcal{G}$ that has bounded expansion and let $(G,k)$ be the input 
instance of CDS, where $G\in \mathcal{G}$ and $G$ is connected. 

\subsection{Preliminaries}  
We note here that many well-known sparse graph classes such as planar graphs, 
graph classes with bounded genus, treewidth, or
degree as well as graph classes characterizable by finite set of 
forbidden minors have all bounded expansion.  
We refer the reader to~\cite{CLASSES,NO10} for more details. 

\begin{definition}[Shallow minors]
A graph $M$ is an \emph{r-shallow minor} of $G$, for some $r\in \N$, if there exists a family of 
disjoint subsets $V_1, \ldots, V_{|M|}$ of $V(G)$ such that: 
\medskip
\begin{itemize}
\item[(1)] each graph $G[V_i]$ is connected and has radius at most $r$, and 
\item[(2)] there is a bijection $\omega \colon V(M) \rightarrow \{V_1, \ldots, V_{|M|}\}$ such that 
for every edge $uv \in E(M)$ there is an edge in $G$ with one endpoint in $\omega(u)$ and another in $\omega(v)$. 
\end{itemize}
\end{definition}

The set of all $r$-shallow minors of a graph $G$ is denoted by $G \triangledown r$. 
The set of all $r$-shallow minors of all members of a graph class $\mathcal{G}$ is denoted by 
$\mathcal{G} \triangledown r = \bigcup_{G \in \mathcal{G}}(G \triangledown r)$. 
In the introduction we defined bounded expansion classes by bounding the
edge density of shallow topological minors. It will be convenient to work with the following
equivalent definition of bounded expansion classes. 

\begin{definition}[Grad and bounded expansion~\cite{nevsetvril2008grad}]
For a graph $G$ and an integer $r \geq 0$, the \emph{greatest reduced average density (grad) at depth r} 
is, 
$
\nabla_r(G) = \text{max}_{M \in G \triangledown r} \text{density}(M) = \text{max}_{M \in G \triangledown r} |E(M)|/|V(M)|.
$
For a graph class $\mathcal{G}$, we let $\nabla_r(\mathcal{G}) = \text{sup}_{G \in \mathcal{G}} \nabla_r(G)$. A graph class 
$\mathcal{G}$ has \emph{bounded expansion} if there is a function $f \colon \mathbb{N} \rightarrow \mathbb{R}$ such that 
for all $r$ we have $\nabla_r(\mathcal{G}) \leq f(r)$. 
\end{definition}

\noindent For ease of presentation, and since we deal with several constants, we shall use the following convention. 
We assume that a graph class of bounded expansion $\mathcal{G}$ is fixed, and hence so 
are the values of $\nabla_i(\mathcal{G})$, for all non-negative integers $i$. 
This assumption is not strictly required but it significantly simplifies the analysis. 
Next, we state some useful properties of bounded expansion graphs that will be used later on. 

\begin{lemma}[\cite{DrangeDFKLPPRVS16}]\label{lem-uvertex}
Let $G$ be a graph and let $G'$ be the graph obtained by adding a universal vertex to $G$, 
i.e. a vertex adjacent to all vertices in $G$. Then, for any non-negative integer $r$, $\nabla_r(G') \leq \nabla_r(G) + 1$.
\end{lemma}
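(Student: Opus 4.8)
The plan is to unravel the definition of $\nabla_r(G')$ directly. Fix an arbitrary $r$-shallow minor $M \in G' \triangledown r$, witnessed by a family of pairwise disjoint branch sets $V_1, \ldots, V_{|M|} \subseteq V(G')$, each inducing a connected subgraph of $G'$ of radius at most $r$, together with a bijection $\omega$ realizing the edges of $M$. It suffices to show that $|E(M)| \le (\nabla_r(G)+1)\,|V(M)|$; maximizing over all such $M$ then gives $\nabla_r(G') \le \nabla_r(G)+1$.

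Let $u$ be the universal vertex added to $G$. The key point is that $u$ lies in \emph{at most one} branch set. If $u$ lies in none of them, then all branch sets are subsets of $V(G)$ and every edge of $G'$ witnessing an edge of $M$ is already an edge of $G$, so $M \in G \triangledown r$ and $|E(M)| \le \nabla_r(G)\,|V(M)| \le (\nabla_r(G)+1)\,|V(M)|$. Otherwise, say $u \in V_1$; let $x = \omega^{-1}(V_1)$ be the corresponding vertex of $M$ and put $M' = M - x$. I claim $M' \in G \triangledown r$, witnessed by the family $V_2, \ldots, V_{|M|}$: these are disjoint subsets of $V(G') \setminus \{u\} = V(G)$, for each $i \ge 2$ we have $G'[V_i] = G[V_i]$ (no edge incident to $u$ lies inside $V_i$), so the connectivity and the radius bound are inherited, and for every edge $yz \in E(M')$ the witnessing edge of $G'$ has both endpoints in $V_i \cup V_j$ for some $i,j \ge 2$, hence avoids $u$ and is an edge of $G$. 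Therefore $|E(M')| \le \nabla_r(G)\,|V(M')|$.

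Finally I would compare $M$ with $M'$: we have $|V(M)| = |V(M')| + 1$ and $\deg_M(x) \le |V(M)| - 1 = |V(M')|$, hence
$|E(M)| \le |E(M')| + \deg_M(x) \le \nabla_r(G)\,|V(M')| + |V(M')| = (\nabla_r(G)+1)\,|V(M')| \le (\nabla_r(G)+1)\,|V(M)|$,
which is exactly what we needed.

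The whole argument is essentially bookkeeping, so there is no serious obstacle; the only step that deserves a moment's care is verifying that deleting the branch set containing $u$ (equivalently, passing from $G'$ to $G$ by removing $u$) really leaves a valid $r$-shallow-minor model, i.e.\ that the radii of the remaining branch sets and the set of realized edges are unaffected. That is precisely the place where \emph{universality} of $u$ is never invoked and only the fact that $u$ occupies a single branch set is used.
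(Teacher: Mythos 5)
Your argument is correct. Note that the paper itself gives no proof of this lemma --- it is imported verbatim from Drange et al.\ \cite{DrangeDFKLPPRVS16} --- so there is no in-paper argument to compare against; your direct verification from the definition of $\nabla_r$ is sound. The case analysis on whether the universal vertex $u$ occupies a branch set, the observation that $G'[V_i]=G[V_i]$ for branch sets avoiding $u$ (so radius and connectivity are inherited and all witnessing edges survive in $G$), and the final estimate $|E(M)|\leq |E(M')|+\deg_M(x)\leq (\nabla_r(G)+1)|V(M')|$ are all valid, including the degenerate case $|V(M)|=1$, which the arithmetic handles automatically.
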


\noindent Given two graphs $G$ and $H$, the \emph{lexicographic product} $G \odot H$ is defined 
as the graph on the vertex set $V(G) \times V(H)$ where vertices $(u,a)$ and $(v,b)$ 
are adjacent if $uv \in E(G)$ or if $u = v$ and $ab \in E(H)$. 

\begin{lemma}[\cite{Har-PeledQuanrud15ar,Har-PeledQuanrud15}]\label{lem-lex} %DrangeDFKLPPRVS16
For a graph $G$ and non-negative integers $t \geq 1$ and $r$ we have 
$\nabla_r(G \odot K_t) \leq 5t^2(r+1)^2 \cdot \nabla_r(G)$. 
%$\nabla_r(G \odot K_t) \leq 4(8t(r + t) \cdot \nabla_r(G) + 4t)^{(r + 1)^2}$. 
\end{lemma}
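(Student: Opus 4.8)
The statement to prove is that $\nabla_r(G \odot K_t) \leq 5t^2(r+1)^2 \cdot \nabla_r(G)$ for any graph $G$ and integers $t \geq 1$, $r \geq 0$. The plan is to take an arbitrary $r$-shallow minor $M$ of $G \odot K_t$ of maximum density and exhibit it (or a subgraph of comparable density) as an $r$-shallow minor of some small-blow-up of $G$, then account for the density loss incurred by the blow-up. Concretely, fix $M \in (G \odot K_t) \triangledown r$ witnessed by disjoint branch sets $V_1, \dots, V_{|M|} \subseteq V(G \odot K_t)$, each inducing a connected subgraph of radius at most $r$. Each vertex of $G \odot K_t$ is a pair $(u,a)$ with $u \in V(G)$, $a \in V(K_t)$; let $\pi\colon V(G \odot K_t) \to V(G)$ be the projection onto the first coordinate. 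The key geometric observation is that if $G[W]$ is connected of radius $\le r$ inside $G \odot K_t$ with branch vertex $(u_0,a_0)$, then its projection $\pi(W)$ is connected in $G$ and has radius at most $r$ with centre $u_0$ — because each coordinate-step in $G \odot K_t$ either fixes the $G$-coordinate or moves it along a $G$-edge, so a walk of length $\le r$ in $G \odot K_t$ projects to a walk of length $\le r$ in $G$.

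First I would handle the fact that the projected branch sets $\pi(V_i)$ need not be pairwise disjoint: two branch sets $V_i, V_j$ can project onto overlapping vertex sets of $G$ (they were separated only by the $K_t$-coordinate). To fix this I would pass to the graph $G \odot K_{s}$ for a suitable $s$, or, more cleanly, argue that each vertex $u \in V(G)$ lies in the projection of at most $t$ distinct branch sets $V_i$ (since $|V_i \cap (\{u\}\times V(K_t))| \ge 1$ for each such $i$ and the $V_i$ are disjoint with only $t$ choices of second coordinate). Then I would invoke the standard ``defect'' version of the argument: colour the branch sets $V_i$ with colours $1,\dots,t^2(r+1)^2$ or so, such that branch sets receiving the same colour have disjoint projections, using that the ``conflict graph'' on the $V_i$ (sharing a projected vertex) has bounded chromatic number — here one uses that $\pi(V_i)$ has radius $\le r$ hence $|\pi(V_i)| $ is small only in sparse graphs, which is circular, so instead one uses that each $V_i$ meets at most $t$ fibres over each of at most... — actually the clean route is: build from $M$ an $r$-shallow minor of $G$ directly after discarding, by a greedy/LP argument, all but a $\tfrac{1}{5t^2(r+1)^2}$ fraction of the edges, keeping the branch sets that survive pairwise-projection-disjoint; the surviving structure is an honest member of $G \triangledown r$, so its density is $\le \nabla_r(G)$, and multiplying back gives the bound.

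The main obstacle — and where I expect the factor $5t^2(r+1)^2$ to actually come from — is the disjointness repair: controlling how many branch sets of $M$ can project onto a common region of $G$ and bounding the resulting edge loss. The honest argument (as in Har-Peled–Quanrud) is probably: consider the auxiliary graph $H$ on vertex set $\{V_1,\dots,V_{|M|}\}$ with $V_i \sim V_j$ whenever $\pi(V_i) \cap \pi(V_j) \neq \emptyset$; show $H$ has small average degree using that the $\pi(V_i)$ are radius-$\le r$ connected sets in $G$ and each $u \in V(G)$ is covered $\le t$ times, which bounds $|E(H)|$ in terms of $\nabla_r(G)$ and $t$; then a largest ``$\pi$-independent'' set of branch sets, or a proper colouring of $H$ with $O(t^2(r+1)^2)$ colours after one more shallow-minor argument, yields a colour class carrying a $\Omega(1/(t^2(r+1)^2))$-fraction of $E(M)$ and inducing a genuine element of $G\triangledown r$. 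Tracking the constants through the colouring bound (one uses $\nabla_r(G)\ge$ something, and $(r+1)^2$ enters because radius-$r$ sets can have up to quadratically many... pairwise interactions at each scale) is the delicate bookkeeping; the topological content is the easy projection lemma stated above. I would present the projection lemma and the ``cover multiplicity $\le t$'' observation in full, then cite the colouring/defect argument at the level of detail appropriate to a ``useful properties'' subsection, since it is a known tool.
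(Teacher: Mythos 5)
First, note that the paper does not prove Lemma~\ref{lem-lex} at all: it is imported verbatim from Har-Peled and Quanrud, so there is no in-paper argument to match your proposal against. Judged as a standalone proof attempt, what you establish is correct but is only the easy half of the lemma. The projection observation (an edge of $G\odot K_t$ projects to an edge of $G$ or to a stall, so a connected radius-$\le r$ branch set projects to a connected radius-$\le r$ set in $G$) and the multiplicity bound (each $u\in V(G)$ lies in at most $t$ projected branch sets, since the fibre $\{u\}\times V(K_t)$ has $t$ vertices and the branch sets are disjoint) are both fine. Everything after that is the actual content of the lemma, and it is not supplied: you offer three candidate strategies, flag the first as circular yourself, leave the ``greedy/LP discard'' entirely unspecified, and for the conflict-graph route you assert without argument that $H$ has small average degree or chromatic number --- a claim that is essentially of the same difficulty as the lemma being proved, since the intersection graph of a multiplicity-$t$ family of radius-$r$ connected sets is itself (up to the reduction you are trying to carry out) an $r$-shallow minor of $G\odot K_t$.

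There is also a concrete step that would fail as stated. Even granting a proper colouring of the conflict graph $H$ with $C=O(t^2(r+1)^2)$ colours so that same-coloured branch sets have pairwise disjoint projections, it does not follow that some colour class ``carries a $\Omega(1/C)$-fraction of $E(M)$'': the monochromatic subgraphs of $M$ only see edges of $M$ whose two endpoints received the same colour, and these can be almost none of $E(M)$ (take $M$ bipartite with the bipartition refined by the colouring). To recover density one has to account for the \emph{bichromatic} pairs of classes, i.e., handle two interleaved disjoint systems at once, which is exactly the multiplicity problem you are trying to eliminate. The published proofs resolve this by a more careful per-edge charging in which the factors $t^2$ (pairs of fibre positions) and $(r+1)^2$ (pairs of distances to the branch centres) index a partition of $E(M)$ rather than of $V(M)$; without that bookkeeping the bound $5t^2(r+1)^2\cdot\nabla_r(G)$ is not reached. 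Since the lemma is a black-box citation in this paper, deferring to \cite{Har-PeledQuanrud15ar,Har-PeledQuanrud15} is legitimate editorially, but the proposal as written does not constitute a proof.
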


Let $G$ be a graph and $X$ be a subset of its vertices. For $u \in V(G) \setminus X$, we define the \emph{$r$-projection} 
of $u$ onto $X$ as follows: $M_r^G(u,X)$ is the set of all vertices $w \in X$ for which there exists a path in $G$ 
that starts in $u$, ends in $w$, has length at most $r$, and whose internal vertices do not belong to $X$. 
Note that $M_1^G(u,X)=N_X(u)$. 
We omit the superscript when the graph is clear from context. 

\begin{lemma}[\cite{DrangeDFKLPPRVS16}]\label{lem-closure}
Let $\mathcal{G}$ be a class of graphs of bounded expansion. 
There exists a polynomial-time algorithm that, given a graph $G \in \mathcal{G}$, $X \subseteq V(G)$, and 
an integer $r \geq 1$, computes the \emph{$r$-closure} of $X$, denoted by $\cl_r(X)$, with the following properties: 
\begin{itemize}
\item[(1)] $X \subseteq \cl_r(X) \subseteq V(G)$, 
\item[(2)] $|\cl_r(X)| \leq C_{\textbf{cl1}} \cdot |X|$, and 
\item[(3)] $|M_r^G(u, \cl_r(X))| \leq C_{\textbf{cl2}}$ for each $u \in V(G) \setminus \cl_r(X)$, where $C_{\textbf{cl1}}$ and $C_{\textbf{cl2}}$ 
are constants depending only on $r$ and a fixed (finite) number of grads of $\mathcal{G}$.  
\end{itemize}
\end{lemma}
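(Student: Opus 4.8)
The plan is to build $\cl_r(X)$ by a simple saturation process and then to control its size through a density argument that invokes bounded expansion. I fix a constant $\thresh=\thresh(r,\mathcal{G})$, chosen at the end as a function of $r$ and of the finitely many grads $\nabla_0(\mathcal{G}),\dots,\nabla_r(\mathcal{G})$. Starting from $X_0:=X$, I grow an increasing chain $X_0\subsetneq X_1\subsetneq\dots\subsetneq X_\ell$: at step $i\ge 1$, if there is a vertex $v_i\in V(G)\setminus X_{i-1}$ with $|M_r^G(v_i,X_{i-1})|>\thresh$, I set $X_i:=X_{i-1}\cup\{v_i\}$; otherwise I stop and output $\cl_r(X):=X_{i-1}$. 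Each projection $M_r^G(v,X_{i-1})$ is computed by a depth-$r$ breadth-first search from $v$ that is forbidden to traverse vertices of $X_{i-1}$, so one step runs in polynomial time; since the chain strictly increases we have $\ell\le|V(G)|$ and the whole algorithm is polynomial. Property $(1)$ is immediate since $X=X_0\subseteq X_\ell=\cl_r(X)$, and property $(3)$ holds with $C_{\textbf{cl2}}:=\thresh$ because the process terminated. So everything reduces to bounding $\ell$ (hence $|\cl_r(X)|=|X|+\ell$) linearly in $|X|$.

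I would first treat $r=1$, which already exhibits the mechanism. Here $M_1^G(v_i,X_{i-1})=N_{X_{i-1}}(v_i)$, so $\sum_{i=1}^{\ell}|N_{X_{i-1}}(v_i)|>\thresh\cdot\ell$. Each summand counts edges of $G[X_\ell]$ incident to $v_i$, and since the $v_i$ are processed in increasing order no edge is double counted (an edge $v_iv_j$ with $i>j$ is counted only while processing $v_i$, because $v_i\notin X_{j-1}$); hence $|E(G[X_\ell])|>\thresh\cdot\ell$. But $G[X_\ell]$ is a subgraph of $G$, so $|E(G[X_\ell])|\le\nabla_0(\mathcal{G})\cdot|X_\ell|=\nabla_0(\mathcal{G})(|X|+\ell)$. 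Choosing $\thresh:=2\nabla_0(\mathcal{G})+1$ then forces $\ell<|X|$, so $|\cl_1(X)|<2|X|$ and $C_{\textbf{cl1}}:=2$ works.

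For general $r$ I would run the same argument, but now the objects witnessing $|M_r^G(v_i,X_{i-1})|>\thresh$ are paths of length at most $r$ with interiors avoiding $X_{i-1}$, and to turn the count $\sum_i|M_r^G(v_i,X_{i-1})|>\thresh\cdot\ell$ into an edge-density bound I must realise these paths as edges of a \emph{shallow minor} of $G$: keep every vertex of $X_\ell$ as its own singleton branch set, fix for each $v_i$ a family of $\thresh$ witnessing paths to distinct targets in $X_{i-1}$, and, processing $i$ in increasing order, grow the branch set of $v_i$ along the unclaimed prefixes of its witnessing paths (a path that runs into an already-built branch set is truncated there). Each witnessing path then yields an edge of a $j$-shallow minor of $G$ on the branch set $X_\ell$ for some fixed $j\le r$, so $|E|\gtrsim\thresh\cdot\ell$ against $|V|=|X|+\ell$, and taking $\thresh$ above a suitable constant multiple of $\max_{0\le j\le r}\nabla_j(\mathcal{G})$ forces $\ell=\Oh(|X|)$, with $C_{\textbf{cl1}}$ and $C_{\textbf{cl2}}=\thresh$ depending only on $r$ and those grads. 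The main obstacle will be exactly the bookkeeping hidden behind ``unclaimed prefixes'': the witnessing paths of distinct $v_i$'s, and even those of a single $v_i$, may share interior vertices, which threatens disjointness of the branch sets and, more dangerously, can collapse many witnessing paths of one $v_i$ onto a single minor edge and thereby destroy the density estimate. Handling this is the technical heart of the proof — one has to show that such collisions are themselves a manifestation of density at shallow depth and so cannot be too frequent; equivalently, one must prove the self-contained statement that in a bounded-expansion graph, any chain $X=X_0\subsetneq X_1\subsetneq\dots$ produced by repeatedly adjoining a vertex whose $r$-projection onto the current set exceeds a large enough constant has length $\Oh_{r,\mathcal{G}}(|X|)$, most naturally by induction on $r$, replacing length-$\le r$ connections by length-$1$ connections in an auxiliary graph that is still a shallow minor of $G$.
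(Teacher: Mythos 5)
The paper does not prove this lemma at all --- it is imported verbatim from Drange et al.~\cite{DrangeDFKLPPRVS16} --- so there is no in-paper proof to compare against and your attempt has to stand on its own. Its skeleton is the right one (greedy saturation until every outside vertex has a small projection, so that properties (1) and (3) come for free, followed by a density argument bounding the number of iterations), and your $r=1$ case is complete and correct. For $r\ge 2$, however, the proof has a genuine gap, which you yourself flag and then do not close. The minor you construct does not certify the edge count $|E|\gtrsim\thresh\cdot\ell$: all $\thresh$ witnessing paths of a single $v_i$ may funnel through one vertex lying in an already-built branch set $B_j$, and your truncation rule then collapses all of them onto the single minor edge $B_iB_j$, so the density inequality fails no matter how large $\thresh$ is chosen. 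The claim that such collisions ``are themselves a manifestation of density'' is not substantiated, and at this point in the argument you have no tool to bound how many witnessing paths an interior vertex can carry --- such a multiplicity bound is essentially what property (3) of the closure would provide, which is circular. (Contrast this with the paper's proof of Lemma~\ref{lem-tree-closure}, which bounds the multiplicity of each path-interior vertex by $C_{\textbf{cl2}}^2$ precisely because it is allowed to assume the closure property of $X_0$ before invoking Lemma~\ref{lem-lex}.) The proposed induction on $r$ is only named, not carried out, and it is exactly where all the work lies.

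Two standard ways to actually close this. First, one can change the procedure so that when $v_i$ is selected, the \emph{entire vertex sets} of $\thresh+1$ witnessing paths are added to $X_i$, not just $v_i$; then the interiors of all later witnessing paths avoid everything built so far (they avoid $X_{i-1}$ by definition), which restores disjointness of the contracted bundles, and the remaining issue --- several of the $\thresh+1$ targets landing in the same earlier bundle --- is handled with the $G\odot K_\tau$ device of Lemma~\ref{lem-lex} together with a bound on bundle sizes. Second, one can bypass the iteration entirely: for a bounded expansion class one computes in polynomial time an order $L$ with $\wcol_{s}(G)\le c_s$ for $s=\Oh(r)$ and sets $\cl_r(X):=\WReach_{s}[G,L,X]$; the size bound $c_s|X|$ is immediate and property (3) follows from the separator property of weak reachability (Lemma~\ref{lem:wcol-sep}). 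Either route supplies the missing multiplicity/disjointness control; without one of them, your argument establishes the lemma only for $r=1$.
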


\begin{lemma}[\cite{DrangeDFKLPPRVS16,DBLP:journals/jcss/GajarskyHOORRVS17,DBLP:journals/talg/0002LPRRSS16}]\label{lem-twin-expansion}
Let $\mathcal{G}$ be a class of graphs of bounded expansion and $r\in {\mathbb N}$. Let $G \in \mathcal{G}$ 
be a graph and let $X \subseteq V(G)$. Then 
\[|\{Y \mid Y = M_r(u,X)~\text{for $u \in V(G) \setminus X$}\}| \leq C_{\textbf{ex}} \cdot |X|,\] where 
$C_{\textbf{ex}}$ is a constant depending only on $r$ and a fixed (finite) number of grads of $\mathcal{G}$.  
\end{lemma}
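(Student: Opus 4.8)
The plan is to reduce the statement to its $r=1$ instance — which, since $M_1^G(u,X)=N(u)\cap X$, is exactly the theorem that bounded expansion classes have \emph{linear neighborhood complexity} (the number of distinct traces $N(u)\cap X$ is $\Oh(|X|)$), established in the works cited in the statement — and to carry out this reduction with the help of the $r$-closure (Lemma~\ref{lem-closure}) and the lexicographic-product estimate (Lemma~\ref{lem-lex}).

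The reduction proceeds in three steps. \textbf{(1)} Replace $X$ by its $r$-closure $\hat X:=\cl_r(X)$; by Lemma~\ref{lem-closure} we have $|\hat X|\le C_{\textbf{cl1}}|X|$ and $|M_r^G(u,\hat X)|\le C_{\textbf{cl2}}$ for every $u\in V(G)\setminus\hat X$. \textbf{(2)} Form the bipartite \emph{projection graph} $B$ on $(V(G)\setminus\hat X)\uplus\hat X$ with $u\sim_B w$ iff $w\in M_r^G(u,\hat X)$; every left vertex has degree at most $C_{\textbf{cl2}}$ in $B$. The crucial point is that $B$ lies in a bounded expansion class $\mathcal G'$ with grads bounded by a function of $r$ and finitely many grads of $\mathcal G$: each edge of $B$ is witnessed by a path of length $\le r$ in $G$ internally avoiding $\hat X$, and although these witnessing paths overlap, one can realize $B$ as a shallow minor of bounded depth of $G\odot K_t$ for a constant $t=t(r,\mathcal G)$ (giving each vertex of $G$ enough copies to carry the boundedly-many-per-endpoint witnessing paths through it, the overlaps being controllable thanks to the bounded expansion of $G$), and then apply Lemma~\ref{lem-lex}. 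Applying the $r=1$ neighborhood complexity bound to $B$ and $\hat X$ gives
\[
\bigl|\{M_r^G(u,\hat X):u\in V(G)\setminus\hat X\}\bigr|=\bigl|\{N_B(u)\cap\hat X:u\in V(G)\setminus\hat X\}\bigr|\le c(\mathcal G')\cdot|\hat X|\le c(\mathcal G')\,C_{\textbf{cl1}}\cdot|X|.
\]

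\textbf{(3)} Descend from $\hat X$ to $X$. For $u\in V(G)\setminus\hat X$ let $\rho_r^G(u,\hat X)\colon\hat X\to\{1,\dots,r,\infty\}$ be the \emph{projection profile}, recording for each $w\in\hat X$ the length (capped at $r$) of a shortest $\hat X$-internally-avoiding $u$--$w$ path; it is supported on $M_r^G(u,\hat X)$, a set of size $\le C_{\textbf{cl2}}$, so there are at most $r^{C_{\textbf{cl2}}}$ profiles per value of $M_r^G(u,\hat X)$. Splitting a shortest $X$-internally-avoiding $u$--$w$ path at its first vertex $w'$ in $\hat X$ yields the identity
\[
\rho_r^G(u,X)(w)=\min\Bigl(\rho_r^G(u,\hat X)(w),\ \min\nolimits_{w'\in\hat X\setminus X}\bigl(\rho_r^G(u,\hat X)(w')+\mathrm{dist}^{X}_G(w',w)\bigr)\Bigr)
\]
(capped at $r$; values exceeding $r$ set to $\infty$), where $\mathrm{dist}^X_G(w',w)$, the shortest $X$-internally-avoiding distance in $G$ between two vertices of $\hat X$, does not depend on $u$. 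Hence $\rho_r^G(u,X)$, and in particular $M_r^G(u,X)$, is a function of $\rho_r^G(u,\hat X)$ alone, so the number of distinct sets $M_r^G(u,X)$ over $u\in V(G)\setminus\hat X$ is at most $r^{C_{\textbf{cl2}}}c(\mathcal G')C_{\textbf{cl1}}\cdot|X|$. Adding the at most $|\hat X\setminus X|\le C_{\textbf{cl1}}|X|$ further sets coming from $u\in\hat X\setminus X$ gives the claimed linear bound, with $C_{\textbf{ex}}$ depending only on $r$ and finitely many grads of $\mathcal G$.

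I expect the main obstacle to be the combination of Step (2) with the $r=1$ case it invokes: the \emph{linear} (as opposed to merely polynomial) neighborhood complexity bound is a genuine theorem with a non-trivial proof (going through generalized coloring numbers), and showing that the projection graph $B$ lies in a bounded expansion class requires the careful lexicographic-product construction sketched above rather than a direct shallow-minor embedding of $B$ into $G$ (the witnessing paths really do overlap, so one cannot contract them simultaneously inside $G$ itself). Steps (1) and (3) are essentially bookkeeping — Step (1) is Lemma~\ref{lem-closure} applied verbatim, and the identity in Step (3) is an elementary path-splitting — but one should be careful with the edge case in Step (3) where the first entry point $w'$ into $\hat X$ lies in $X$, which forces $w'=w$ and is exactly why the inner minimum ranges only over $w'\in\hat X\setminus X$.
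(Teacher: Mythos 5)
First, a remark on the comparison itself: the paper does not prove this lemma --- it is imported as a black box from the three cited references --- so there is no in-paper proof to measure you against, and I am evaluating your argument on its own terms. Your overall architecture (pass to the $r$-closure $\hat X=\cl_r(X)$, encode $r$-projections onto $\hat X$ as ordinary neighborhoods in an auxiliary bipartite graph $B$ lying in a bounded expansion class, invoke linear $r{=}1$ neighborhood complexity there, then recover $M_r^G(u,X)$ from the projection profile on $\hat X$) is exactly the reduction used in the cited literature, and Steps (1) and (3) are correct as written, including the path-splitting identity for $\rho_r^G(u,X)$ and the edge case where the first entry point into $\hat X$ lies in $X$.

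The one genuine gap is the multiplicity control in Step (2), and your own phrasing shows where it sits: ``boundedly-many-per-endpoint witnessing paths through it'' does not bound how many witnessing paths pass \emph{through} a fixed internal vertex $v$ --- a single $v$ can lie on witnessing paths for unboundedly many edges $uw$ of $B$, since unboundedly many left endpoints $u$ may route through $v$ towards the same few vertices of $\hat X$, so a symmetric splitting of each path between its two endpoints does not give a constant-multiplicity embedding. What \emph{is} bounded is the number of distinct right endpoints these paths lead to: if $v\notin\hat X$ is internal to a witnessing path for $uw$, the suffix from $v$ to $w$ shows $w\in M_r^G(v,\hat X)$, so $v$ serves at most $C_{\textbf{cl2}}$ right endpoints by property (3) of Lemma~\ref{lem-closure}. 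The fix is to assign the \emph{entire interior} of each witnessing path $P_{uw}$ to the branch set of its endpoint $w\in\hat X$, keeping each left vertex as a singleton branch set: then each $B_w$ is connected of radius at most $r-1$ around $w$ (witnessing paths internally avoid $\hat X$, so no $w'\in\hat X$ is swallowed by a foreign branch set), every vertex of $G$ lies in at most $C_{\textbf{cl2}}+1$ branch sets, and $B$ is realized as an $(r-1)$-shallow minor of $G\odot K_{C_{\textbf{cl2}}+1}$; Lemma~\ref{lem-lex}, applied at every depth via composition of shallow minors, then yields the uniform grad bounds on $B$ that your Step (2) needs before the $r{=}1$ neighborhood complexity theorem can be invoked. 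With this assignment made explicit, your blueprint closes into a complete proof.
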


\begin{lemma}[\cite{DrangeDFKLPPRVS16}]\label{lem-charging}
Let $G$ be a bipartite graph with bipartition $(X,Y)$ that belongs to some graph class $\mathcal{G}$ such 
that $\nabla_1(\mathcal{G}) \geq 1$. Moreover, suppose that for every $u \in Y$ we have that $N(u) \neq \emptyset$, 
and that for every distinct $u_1,u_2 \in Y$ we have $N(u_1) \neq N(u_2)$, i.e. $Y$ is twin-free. 
Then there exists a mapping $\phi \colon Y \rightarrow X$ with the 
following properties:
\begin{itemize}
\item[(1)] $u\phi(u) \in E(G)$ for each $u \in Y$ and 
\item[(2)] $\phi^{-1}(v) \leq C_{\textbf{ch}}$ for each $v \in X$, 
where $C_{\textbf{ch}}$ is a constant depending only on a fixed (finite) number of grads of $\mathcal{G}$.
\end{itemize}
\end{lemma}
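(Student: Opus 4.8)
The statement to prove is Lemma~\ref{lem-charging}: for a bipartite graph $G$ with parts $(X,Y)$ lying in a bounded-expansion class $\mathcal G$ with $\nabla_1(\mathcal G)\geq 1$, where $Y$ is twin-free and every $u\in Y$ has a nonempty neighbourhood, there is a map $\phi\colon Y\to X$ with $u\phi(u)\in E(G)$ and every fibre $\phi^{-1}(v)$ of bounded size. The plan is to run a greedy "defect"/"charging" argument: repeatedly peel off vertices of $X$ that would otherwise be overloaded, using the sparsity bound $\nabla_1$ to argue that at each stage some vertex of $X$ has only few twin-free neighbours available, so we can assign many $Y$-vertices to it and remove them.

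**The argument via iterated low-degree vertices.** First I would observe that since $G$ is bipartite with parts $(X,Y)$, it is itself a member of $\mathcal G$ (no need to pass to a minor), so $|E(G)|\leq \nabla_1(\mathcal G)\cdot(|X|+|Y|)$. The key point is that twin-freeness of $Y$ forces $|Y|\leq 2^{|X|}$ in the worst case, which is not directly useful, so instead I would use sparsity on subgraphs. The cleaner route: consider the auxiliary bipartite graph and note that because $Y$ is twin-free, for any subset $X'\subseteq X$ the vertices of $Y$ with distinct traces on $X'$ are again twin-free with respect to $X'$. Now process $X$ greedily. While $Y$ is nonempty, by the edge-density bound the average degree of a vertex in the current bipartite graph (restricted to the surviving $X$- and $Y$-vertices) is at most $2\nabla_1(\mathcal G)$ counting from the $X$ side appropriately; more precisely, since $|E|\le \nabla_1(\mathcal G)(|X|+|Y|)$ and we are about to show $|Y|=O(|X|)$ anyway, one shows there must exist a vertex $v\in X$ whose degree into the current $Y$ is "large" relative to how many $Y$-vertices it can actually be the unique distinguisher for. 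I would instead follow the standard trick: iteratively delete from $X$ a vertex of minimum degree in the current graph; because each deleted vertex had degree at most $2\nabla_1(\mathcal G)$ in the graph at the time of its deletion (as minimum degree is at most twice the average, which is at most $2\nabla_1$), we get a linear ordering $v_1,\dots,v_{|X|}$ of $X$ such that each $v_i$ has at most $2\nabla_1(\mathcal G)$ neighbours among $\{v_{i+1},\dots\}\cup Y_i$ where $Y_i$ are the $Y$-vertices not yet assigned — but this needs the $Y$ side to also be bounded, which is exactly where twin-freeness enters to give $|Y|\le C\cdot|X|$ first.

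**The two-step structure.** So I would actually split into two lemmas. Step 1: show $|Y|\leq C'|X|$ for a constant $C'$ depending on $\nabla_1(\mathcal G)$; this is the neighbourhood-complexity-style bound (it follows from the same kind of argument as Lemma~\ref{lem-twin-expansion} specialised to $r=1$, using that a large twin-free $Y$ would let us extract a dense $1$-shallow minor — contract each $y\in Y$ with a private distinguishing neighbour, or more simply bound directly since the bipartite graph with no two $Y$-vertices having the same neighbourhood and density $\le\nabla_1(\mathcal G)$ on $X\cup Y$ forces, via a counting/Bollobás-type or direct greedy argument, that $|Y|=O(|X|)$). Step 2: with $|X\cup Y|=O(|X|)$ established, the bipartite graph $G$ has $|E(G)|\le\nabla_1(\mathcal G)(|X|+|Y|)=O(|X|)$ edges, hence bounded average degree; now do a degeneracy-style peeling where at each step we pick $v\in X$ of current degree at most $2|E|/|X|=O(1)$ into the unassigned part of $Y$, assign all those unassigned $Y$-neighbours of $v$ to $v$ (this is $O(1)$ many, giving the fibre bound $C_{\textbf{ch}}$), delete $v$ and those $Y$-vertices, and recurse. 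Every $y\in Y$ gets assigned because it has a neighbour and that neighbour is eventually removed, at which point $y$ is still unassigned iff it gets assigned there. One must check the degree-at-the-time-of-removal bound: since after removing $y$-vertices the number of edges only drops, and $|X|$ also drops, I would phrase the peeling on the $X$-side using the invariant that the current graph still has average degree (from the $X$ side, counting edges to surviving $Y$) bounded by $2\nabla_1(\mathcal G)C'$, so a minimum-degree $X$-vertex has degree at most twice that — set $C_{\textbf{ch}}=4\nabla_1(\mathcal G)C'$ or similar.

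**The main obstacle.** The delicate point is Step 1, getting the linear bound $|Y|=O(|X|)$ from twin-freeness plus $\nabla_1$; the naive density bound on $G$ itself does not yield it because $G$ could have isolated-ish structure. The right fix is to note that twin-freeness lets one build, from any would-be-large $Y$, either many $Y$-vertices of degree $1$ all with distinct neighbours (contributing $\le|X|$ of them) plus a recursion on $Y$-vertices of degree $\ge 2$, where the degree-$\ge2$ part contributes $\ge|Y'|$ edges to a subgraph of density $\le\nabla_1(\mathcal G)$ on $\le|X|+|Y'|$ vertices, giving $2|Y'|\le\nabla_1(\mathcal G)(|X|+|Y'|)$ and hence $|Y'|=O(|X|)$ once $\nabla_1(\mathcal G)<2$ — and in general one bootstraps: delete degree-$\le 2\nabla_1(\mathcal G)$ vertices from $Y$ repeatedly, each such deletion is "charged" to $\le 2\nabla_1(\mathcal G)$ vertices of $X$ in a degeneracy ordering, and twin-freeness ensures no $Y$-vertex survives to the end with empty neighbourhood in the leftover $X$; this is precisely the content of Lemma~\ref{lem-twin-expansion}'s proof at depth $1$, so I would just cite that machinery. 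Once Step 1 is in hand, Step 2 is routine degeneracy peeling and the rest of the lemma follows by bookkeeping the constants.
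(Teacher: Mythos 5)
The paper states this lemma as an import from Drange et al.~\cite{DrangeDFKLPPRVS16} and gives no proof of it, so there is no in-paper argument to compare against; I assess your proposal on its own. Your two-step plan is the right one, and Step~1 is fine and needs none of the hedging you give it: applying Lemma~\ref{lem-twin-expansion} with $r=1$ to $G$ and the set $X$ immediately yields $|Y|\leq C_{\textbf{ex}}\cdot|X|$, since in a bipartite graph $M_1(u,X)=N(u)$ for every $u\in Y$ and twin-freeness makes all these projections distinct.

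The gap is in Step~2, in your justification of the degree-at-the-time-of-removal bound. You argue the average degree from the surviving $X$-side stays bounded ``since after removing $y$-vertices the number of edges only drops, and $|X|$ also drops''; but two quantities both decreasing says nothing about their ratio, and a priori the peeling could reach a stage with few surviving $X$-vertices yet many unassigned $Y$-vertices, at which point no low-degree vertex of $X$ need exist (the density bound $e_i\leq\nabla_1(\mathcal G)(|X_i|+|Y_i|)$ is vacuous for controlling $|Y_i|$ once $\nabla_1(\mathcal G)\geq 1$). The invariant you want is true, but for a different reason that you should state explicitly: because your rule assigns every still-unassigned neighbour of $v$ to $v$ at the moment $v$ is deleted, every unassigned $y$ retains its \emph{entire} neighbourhood inside the surviving set $X_i$. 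Hence the residual instance again satisfies the hypotheses of the lemma --- $Y_i$ is twin-free with respect to $X_i$ and has no empty neighbourhoods --- so Step~1 re-applies at every stage and gives $|Y_i|\leq C_{\textbf{ex}}\cdot|X_i|$, whence $|E(G[X_i\cup Y_i])|\leq \nabla_1(\mathcal G)(1+C_{\textbf{ex}})\cdot|X_i|$ and a minimum-degree vertex of $X_i$ has at most $\nabla_1(\mathcal G)(1+C_{\textbf{ex}})$ unassigned neighbours; take $C_{\textbf{ch}}$ to be that constant. With this observation inserted your proof is complete. A variant that avoids the peeling entirely: the same neighbourhood-complexity bound applied with the set $N(S)$ in place of $X$ shows $|N(S)|\geq|S|/C_{\textbf{ex}}$ for every $S\subseteq Y$, and the defect form of Hall's theorem then directly yields $\phi$ with all fibres of size at most $C_{\textbf{ex}}$.
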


\subsection{Finding the domination core}

To obtain a linear domination core for classes of bounded expansion
we have to invest a considerable amount of work. The following
construction shows that we cannot work with $k$-domination cores. 

\begin{lemma}
There exists a class $\Cc$ of bounded expansion such that for all 
$k\in \N$ there is $G\in\Cc$ such that every $k$-domination 
core for $G$ has $\Omega(k^2)$ vertices. 
\end{lemma}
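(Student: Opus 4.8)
The plan is to exhibit, for each $k$, a concrete bounded-expansion graph $G_k$ in which one can pinpoint a set of $\Omega(k^2)$ vertices, each of which is genuinely needed in any $k$-domination core. The natural construction is a bipartite-like gadget built on a ground set of $k$ ``hubs'' $h_1,\dots,h_k$ together with, for every pair $\{i,j\}$, a private vertex $z_{ij}$ adjacent exactly to $h_i$ and $h_j$; this already gives $\binom{k}{2}=\Omega(k^2)$ special vertices. To make the $z_{ij}$'s dominated only in an ``expensive'' way while keeping $\ds(G_k)\le k$, I would add a small dominating substructure — e.g. attach to each $h_i$ a pendant vertex (or make the $h_i$ an independent dominating set by a light clique/apex trick) so that $\{h_1,\dots,h_k\}$ is a dominating set of size exactly $k$, hence $\ds(G_k)\le k$. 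The key point is that any size-$k$ dominating set $D$ of $G_k$, when restricted to how it dominates the $z_{ij}$'s, must in effect ``hit'' each pair $\{i,j\}$, and (after a short argument) essentially only $D=\{h_1,\dots,h_k\}$ works; so removing any single $z_{ij}$ from a purported core $Z$ frees up a size-$k$ set dominating $Z\setminus\{z_{ij}\}$ but not $z_{ij}$, witnessing that $z_{ij}$ cannot be removed.

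The steps, in order, are: (1) define $G_k$ precisely, including whatever pendant/apex gadgetry makes $\ds(G_k)\le k$; (2) show $\ds(G_k)\le k$ and, more importantly, classify (or sufficiently constrain) the size-$\le k$ dominating sets — the cleanest route is to argue that to dominate all $\binom{k}{2}$ vertices $z_{ij}$ with only $k$ vertices, a counting/pigeonhole argument forces $D$ to be exactly $\{h_1,\dots,h_k\}$ up to irrelevant swaps, because each chosen vertex dominates $z_{ij}$'s for at most $O(k)$ pairs and the total demand is $\Theta(k^2)$; (3) conclude that every $k$-domination core must contain every $z_{ij}$, since for a fixed pair $\{a,b\}$ the set $D'$ obtained by picking $h_i$ for $i\ne a$ and then one extra vertex dominating the now-uncovered $z_{aj}$'s and $z_{bj}$'s is a size-$k$ set dominating $V(G_k)\setminus\{z_{ab}\}$ but not $z_{ab}$ (this requires the gadget to be arranged so that such a $D'$ exists — e.g. by making one of the hubs ``double up'' via an adjacency to $z$-vertices of a neighbouring pair), so $Z\setminus\{z_{ab}\}$ fails to be a core; (4) verify $\{G_k : k\in\N\}$ has bounded expansion, using Lemma~\ref{lem-uvertex} and Lemma~\ref{lem-lex} — indeed $G_k$ is (a bounded-order blow-up of) a graph with very few edges per vertex plus a constant number of apex vertices, so all grads $\nabla_r$ are bounded by absolute constants.

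The main obstacle is step (3): arranging the gadget so that for every single $z_{ab}$ there is a \emph{legitimate} size-$k$ dominating set of $G_k - \{z_{ab}\}$ that is \emph{not} a dominating set of $G_k$. With the naive ``two-hub'' construction this fails, because $z_{ab}$ is the unique non-dominated vertex only if we can reallocate the budget, and the budget is tight at exactly $k$. The fix I anticipate is to give the construction a little slack asymmetrically — for instance, let each $z_{ij}$ have three hub-neighbours $h_i,h_j,h_\ell$ chosen so that the hypergraph of triples is a Steiner-type design, or simpler: keep the pairs but add one ``spare'' vertex $h_0$ adjacent to nothing essential, so that dropping $z_{ab}$ lets us replace $h_a$ by a vertex covering $\{z_{aj}\}_{j\ne a}$ in one shot while $h_b$ still covers $\{z_{bj}\}_{j\ne a}$, leaving only $z_{ab}$ uncovered. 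Getting the incidence structure exactly right so that (2) and (3) hold simultaneously — the dominating set is forced to be the hubs, yet dropping any one $z_{ab}$ creates exactly one escape route — is the delicate combinatorial core of the argument; everything else is routine.
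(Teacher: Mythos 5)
There is a fatal gap at step (4): your core gadget is not in any class of bounded expansion. The graph on hubs $h_1,\dots,h_k$ together with a vertex $z_{ij}$ adjacent exactly to $h_i$ and $h_j$ for every pair $\{i,j\}$ is precisely the $1$-subdivision of the complete graph $K_k$ (each $z_{ij}$ is the subdivision vertex of the edge $ij$). Taking the branch sets $V_i=\{h_i\}\cup\{z_{ij}: j>i\}$, each $G[V_i]$ is a star of radius $1$ and every pair $V_i,V_j$ is joined by an edge, so $K_k$ is a $1$-shallow minor of $G_k$ and $\nabla_1(G_k)\geq (k-1)/2\to\infty$. Equivalently, in the paper's definition via shallow subdivisions, $G_k$ contains a subdivision of $K_k$ with all paths of length $2$, so no function $d(r)$ can bound the density at depth $2$. "Few edges per vertex" (low average degree) is not enough for bounded expansion; one must bound the density of \emph{all} shallow minors, and contracting the stars around the hubs recovers a clique at depth $1$. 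This cannot be patched within your framework: any construction with $\Omega(k^2)$ "pair vertices," each privately witnessing a distinct pair of $k$ hubs at distance $\leq 2$, realizes $\Omega(k^2)$ edges of $K_k$ as a $1$-shallow minor and is therefore somewhere dense. Your proposed fixes (Steiner-type triples, a spare hub) do not touch this obstruction. A secondary, smaller issue: in your step (2) the dominating sets are \emph{not} forced to be the hubs --- since every $z_{ij}$ has two hub neighbours, already $\{h_i : i\neq a\}$ dominates all the $z$'s --- but this is moot given the density problem.

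For contrast, the paper avoids exactly this trap by arranging the $\Omega(k^2)$ core vertices not as one private vertex per \emph{pair} of dominators, but as $k$ groups of about $k$ vertices each (the rows of a $k\times m$ grid), where each group has a single cheap "global" dominator (an apex over that row). The class of such graphs is genuinely of bounded expansion because each apex sees only one path of the grid. The lower bound is then a counting argument rather than a vertex-by-vertex necessity argument: if a purported core $Z$ has fewer than $k^2$ vertices, some row meets $Z$ in fewer than $k$ vertices, and one can dominate $Z$ with the $k-1$ other apexes plus those few row vertices --- a set of size at most $2k-2$ --- without dominating the whole (long) row. If you want to salvage your write-up, you would need to replace the pair-incidence gadget with a structure of this "groups with cheap global dominators" type; the pair-based design is unworkable for this lemma.
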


\begin{proof}
Let $G_{k,m}$ be the graph which consists of a $k\times m$ grid
where additionally each row is dominated by an apex vertex. Formally, 
$V(G_{k,m})=\{v_{ij}~|~1\leq i\leq k,1\leq j\leq m\}\cup \{w_i~|~1\leq i\leq k\}$ and $v_{ij}v_{i'j'}\in E(G)$ if $|i-i'|+|j-j'|=1$ and 
$w_iv_{i,j}\in E(G)$ for all $i,j$. We claim that any 
$(2k-2)$-domination core of $G_{k,m}$ must contain at least $k^2$
vertices, where $m\geq 2k$. Assume towards a contradiction that $Z$ is a domination core
with less than $k^2$ vertices. This implies that one row, say row $i$, 
of the grid
contains less than $k$ vertices. Denote these vertices by
$z_1,\ldots, z_\ell$ for $\ell<k$. We show that there exists a 
$Z$-dominator of size $2k-2$ which does not dominate the whole
graph, contradicting the assumption that $Z$ is a $(2k-2)$-domination
core. Let $D$ be the set $\{w_j~|~j\neq i\}\cup \{z_1,\ldots,z_\ell\}$,
which clearly dominates $Z$ and has size at most $2k-2$, but 
does not dominate $G$. It is also not difficult to see that the
class $\{G_{k,m}~|~k,m\in\N\}$ has bounded expansion. 
\end{proof}

For this reason, we work with the following notion of a $c$-exchange domination core, which is different from the definition 
used in the previous section and from the one considered in~\cite{DrangeDFKLPPRVS16}. Here, $c$ is a fixed constant which we set later. 

\begin{definition}[$c$-exchange domination core]\label{def-dom-core-expansion}
Let $G$ be a graph and $Z$ be a subset of vertices of $G$. 
We say that $Z$ is a \emph{$c$-exchange domination core} if for every set $X$ that 
dominates~$Z$ one of the following conditions holds: (1.) $X$ dominates $G$, or (2.) there exist $A \subseteq X$ and $B \subseteq V(G)$ such that $|B| < |A| \le c$ and 
$(X \setminus A) \cup B$ is a set that dominates~$Z$. 
Moreover the number of connected components of $(X \setminus A) \cup B$ is at most the number of connected components of $X$. 
In particular, if $X$ is a connected set then $(X \setminus A) \cup B$ is also connected.
% \begin{itemize}
% \item[1.] $X$ dominates $G$, or 
% \item[2.] there exist $A \subseteq X$ and $B \subseteq V(G)$ such that $|B| < |A| \le c$ and 
% $(X \setminus A) \cup B$ is a set that dominates $Z$. 
% % Moreover, if $X$ is a connected set then $(X \setminus A) \cup B$ is also a connected set. 
% Moreover the number of connected components of $(X \setminus A) \cup B$ is at most the number of connected components of $X$. 
% In particular, if $X$ is a connected set then $(X \setminus A) \cup B$ is also a connected set.
% \end{itemize}
\end{definition}

\begin{proposition}\label{prop-dom-core}
Let $G$ be a graph, $c$ be a constant, and $Z$ be a $c$-exchange domination core of $G$, and $X \subseteq V(G)$ be a $Z$-dominator.  
Then, there is a set $Y$ such that 
\smallskip

\begin{itemize}
\item[(1)] $|Y|\le|X|$, 
\item[(2)] $Y$ dominates $G$,  
\item[(3)] $Y$ can be computed from $X$ in polynomial time, and 
\item[(4)] the number of connected components of $Y$ is at most the number of connected components of $X$.
\end{itemize} 
\end{proposition}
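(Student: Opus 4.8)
The plan is to turn $X$ into $Y$ by iteratively applying the exchange property guaranteed by the definition of a $c$-exchange domination core. Start with $Y_0 := X$, which is a $Z$-dominator by hypothesis. If $Y_i$ already dominates all of $G$, stop and output $Y := Y_i$. Otherwise, since $Y_i$ dominates $Z$ but not $G$, condition (2) of Definition \ref{def-dom-core-expansion} applies: there exist $A \subseteq Y_i$ and $B \subseteq V(G)$ with $|B| < |A| \le c$ such that $Y_{i+1} := (Y_i \setminus A) \cup B$ still dominates $Z$, and the number of connected components of $Y_{i+1}$ does not exceed that of $Y_i$. The crucial point is that $|Y_{i+1}| \le |Y_i| - 1$, so this process terminates after at most $|X|$ steps, at which point the current set dominates $G$; this set is the desired $Y$. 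Properties (1), (2), and (4) then follow by transitivity: $|Y| \le |X|$, $Y$ dominates $G$, and the component count only ever stays the same or decreases along the sequence, so the number of components of $Y$ is at most that of $X$.

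The remaining issue is property (3), computability in polynomial time, since Definition \ref{def-dom-core-expansion} only asserts the \emph{existence} of the exchange pair $(A,B)$. Here I would use that $c$ is a fixed constant: at step $i$, one can find a valid pair $(A,B)$ by brute force. First check in polynomial time whether $Y_i$ dominates $G$; if so we are done. If not, enumerate all subsets $A \subseteq Y_i$ with $|A| \le c$ (there are $O(|V(G)|^c)$ of them) and all subsets $B \subseteq V(G)$ with $|B| < |A|$ (again $O(|V(G)|^{c-1})$ of them), and for each candidate pair test in polynomial time whether $(Y_i \setminus A)\cup B$ dominates $Z$ and whether its number of connected components is at most that of $Y_i$. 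The definition guarantees at least one such pair exists whenever $Y_i$ fails to dominate $G$, so the search always succeeds. Since $c$ is constant, each step runs in polynomial time, and there are at most $|X| \le |V(G)|$ steps, so the whole procedure is polynomial.

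I do not anticipate a genuine obstacle here; the proposition is essentially an unwinding of the definition. The only mild subtlety worth stating explicitly is the monotonicity of the component count: one must note that "at most the number of connected components of $X$" is preserved not just in one step but along the whole chain $X = Y_0, Y_1, \dots, Y$, which is immediate since the relation "$\le$ components" composes. It is also worth remarking that $Y$ need not be connected even if $X$ is — the proposition only bounds the number of components — but in the application where $X$ is connected, Definition \ref{def-dom-core-expansion} ensures each $Y_{i+1}$ is connected as well, so $Y$ is connected; this can be noted as an immediate corollary if needed later.
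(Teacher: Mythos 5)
Your proposal is correct and follows essentially the same argument as the paper: iterate the exchange guaranteed by condition (2) of Definition~\ref{def-dom-core-expansion}, note that $|B|<|A|$ forces the set to shrink by at least one each step so the process terminates, and find each exchange pair by brute force over the $\Oh(|V(G)|^{2c-1})$ candidates, which is polynomial for fixed $c$. The paper's proof is a more compressed version of the same reasoning, so there is nothing further to add.
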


\begin{proof}
By Definition~\ref{def-dom-core-expansion}, if $X$ dominates $G$ then we are done. 
Otherwise, there exist $A \subseteq X$ and $B \subseteq V(G)$ such that $|B| < |A| \le c$, 
$X' = (X \setminus A) \cup B$ is a set that dominates $Z$, and the number of connected components of $X'$ is at most the number of connected components of $X$. 
Moreover, we can find such sets $A$ and $B$ by going through all possibilities in time $\Oh(|V(G)|^{2c-1})$
(i.e., in polynomial time for fixed $c$). By applying this argument iteratively on $X'$, 
we will eventually find a set $Y$ which dominates $G$. Furthermore, 
by Definition~\ref{def-dom-core-expansion}, $|B|< |A|$, and hence the size of the $Z$-dominator drops by one in each step. 
Therefore, the required set $Y$ can be found in time $\Oh(|V(G)|^{2c})$. 
\end{proof}

Clearly, $V(G)$ is a $c$-exchange domination core, for any $c$, but we look for a 
$c$-exchange domination core that is linear in $k$. Hence, we start with $Z = V(G)$ and
gradually reduce $|Z|$ by removing one vertex at a time, while
maintaining the invariant that $Z$ is a $c$-exchange domination core. 
To this end, we need to prove Lemma~\ref{lem:reduce-corce}. Note that we only remove 
vertices from $Z$ at this stage (no vertex deletions), and hence the graph remains intact. 

\begin{lemma}\label{lem:reduce-corce}
There exists a constant $C_{\textbf{core}} > 0$ depending only on a fixed (finite) number of grads of $\mathcal{G}$ and a 
polynomial-time algorithm that, given a graph $G \in \mathcal{G}$ and a 
$c$-exchange domination core $Z \subseteq V(G)$ with $|Z| > C_{\textbf{core}} \cdot k$, either correctly concludes that $\cds(G) > k$ or finds 
a vertex $z \in Z$ such that $Z \setminus \{z\}$ is still a $c$-exchange domination core.
\end{lemma}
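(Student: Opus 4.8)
The plan is to reduce the core in the same spirit as Lemma~\ref{cdcore}, but using the structural tools for bounded expansion classes (projections, closures, charging) rather than biclique-freeness. First I would run a preliminary test: compute a maximal set $S$ of vertices each of which dominates many vertices of $Z$ — concretely, iteratively pick $v_i \notin S$ whose neighbourhood in the current residual set has size at least a constant fraction of it, intersecting down as in Lemma~\ref{cdcore}; if we never even get started (no vertex dominates $\lceil |Z|/k\rceil$ vertices of $Z$) then $\ds(G) > k$ and hence $\cds(G) > k$, which we report. Otherwise we have obtained a ``hub'' structure which will supply the $A, B$ needed for the $c$-exchange property. The key point is that, because $|Z| > C_{\textbf{core}}\cdot k$ for a suitably large constant, and because bounded-expansion graphs have bounded projection profiles, a counting argument produces many vertices of $Z$ that are ``redundant'' in a strong, exchange-robust sense.

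The heart of the argument is the following. Apply Lemma~\ref{lem-closure} with $r$ a constant of our choosing (large enough, roughly $2c+O(1)$, so that it controls all the small detours created when we exchange up to $c$ vertices) to obtain $\cl_r(Z)$ with $|\cl_r(Z)| \le C_{\textbf{cl1}}|Z|$ and bounded $r$-projections onto it. Now classify every vertex $v \in Z$ by the pair consisting of its $r$-projection $M_r(v, \cl_r(Z) \setminus \{v\})$ together with enough local data to certify the exchange condition (e.g.\ the projection onto the hub set $S$ from the preliminary phase, which has only constantly many subsets of bounded size that matter). By Lemma~\ref{lem-twin-expansion} and Lemma~\ref{lem-charging}, the number of distinct such ``types'' is $O(|Z|)$, so if $|Z|$ exceeds a large enough constant times $k$ — and the bound has to beat $k$ by a factor absorbing $C_{\textbf{cl1}}, C_{\textbf{ex}}, C_{\textbf{ch}}$ and the hub size — then some type is represented by far more vertices than $k$, and in particular contains a vertex $z$ that is not among the (at most $k$) vertices any size-$\le k$ dominator can use. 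I would then pick this $z$ as the vertex to remove.

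The remaining task is to verify that $Z' = Z \setminus \{z\}$ is still a $c$-exchange domination core, i.e.\ that for every $X$ dominating $Z'$, either $X$ already dominates $G$, or we can exchange $A \subseteq X$, $|A| \le c$, for a strictly smaller $B$ so that $(X\setminus A)\cup B$ dominates $Z' \cup \{z\} = Z$ without increasing the number of connected components. If $X$ dominates $z$ we are done with $A = B = \emptyset$. If not, then $z$ is the unique undominated core vertex, and here we use that $z$ shares its type with many other core vertices $z_1, z_2, \dots$, all dominated by $X$: by the choice of $r$-closure, a dominator $w$ of some $z_i$ together with a short path through $\cl_r(Z)$ can be ``rerouted'' to also dominate $z$, allowing us to delete a redundant vertex of $X$ (there must be one, by a pigeonhole-type argument since $|X|$ is small but $X$ dominates many same-type vertices) and add at most one vertex, net decreasing the size while preserving connectivity via Proposition~\ref{approx-cds-by-ds}-style reasoning inside the already-connected part of $X$. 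I expect this last step — getting the exchange to simultaneously (a) shrink $X$, (b) fix domination of $z$, and (c) not increase the component count — to be the main obstacle, since it is exactly where the non-local nature of connectivity fights against the purely local redundancy argument; the resolution is to choose $c$ and $r$ jointly large enough that every rerouting stays within a bounded ball already anchored to a connected component of $X$.
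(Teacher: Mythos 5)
Your proposal has the right outline (classify core vertices by projection types, use the linear bound on the number of types to pigeonhole a large class, remove one member, then verify the exchange property), but the verification step contains two genuine gaps, and both trace back to a single missing structural ingredient. First, the $c$-exchange domination core of Definition~\ref{def-dom-core-expansion} quantifies over \emph{every} $Z$-dominator $X$, with no bound on $|X|$ — unlike the $k$-domination core used for biclique-free graphs. Your argument for finding ``a redundant vertex of $X$ to delete'' explicitly relies on $X$ having at most $k$ vertices, which is simply not available here. Second, your exchange deletes one vertex and adds ``at most one,'' which cannot certify the strict inequality $|B|<|A|$ that the definition demands; you need to delete strictly more than you add, and nothing in your rerouting argument produces several deletable vertices at once.

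The missing idea is a $2$-independent set. The paper's proof does not reuse the iterative intersection of Lemma~\ref{cdcore} (whose termination argument is genuinely $K_{d,d}$-specific); instead it invokes a decomposition of Drange et al.\ (Lemma~\ref{lem-decomp}, upgraded in Corollary~\ref{cor-decomp} via Proposition~\ref{prop-dom-core} and the $3$-closure) yielding a dominating set $X$ of $G$ of size $\Oh(k)$ with bounded $3$-projections \emph{together with} a large set $S\subseteq Z\setminus X$ that is $2$-independent in $G-X$. Vertices of $S$ are classified by their $1$-, $2$-, and $3$-projections onto $X$, giving a class $\kappa$ with $|\kappa|>4C_{\textbf{eq}}+1$. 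Now if $D$ dominates $\kappa\setminus\{z\}$ but not $z$, each $s\in\kappa\setminus\{z\}$ has a dominator $v(s)\in D$ that cannot lie in $X$ (since $M_1(s,X)=M_1(z,X)$ would force it to dominate $z$ too), and the $2$-independence of $S$ forces all these $v(s)$ to be pairwise distinct. This is what manufactures $c=4C_{\textbf{eq}}+1$ deletable vertices $W$ inside an \emph{arbitrary} dominator $D$; against them one adds only $M_3(z,X)$ plus paths of length at most $3$ from each of its vertices to $z$, in total at most $4C_{\textbf{eq}}<|W|$ vertices. These paths, all meeting at $z$, also merge every component of $D$ that loses a vertex of $W$ into a single component, which settles the component-count condition directly — Proposition~\ref{approx-cds-by-ds} would add up to $2p$ further vertices and break the $|B|<|A|$ budget. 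Without the $2$-independent set your exchange has no source of multiple deletions, and the argument does not close.
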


The rest of the subsection is dedicated to proving Lemma~\ref{lem:reduce-corce}.
The algorithm of Lemma~\ref{lem:reduce-corce} consists of building a structural
decomposition of the graph $G$. More precisely, we identify a small set $X$ that dominates $G$, %$Z$, 
so that if $X$ was deleted from the graph, 
$Z$ would contain a large subset $S$, which is $2$-independent in the remaining graph.  
Given such a structure, we can argue that in any optimal $Z$-dominator, vertices of $X$ serve as dominators 
for almost all the vertices of $S$. This is because any vertex of $V(G)\setminus X$ can dominate at most one vertex from $S$. 
Since $S$ will be large compared to $X$, some vertices
of~$S$ will be indistinguishable from the point of view of domination via $X$, and these will be precisely the vertices that can be removed
from the domination core. The following lemma, which was proved by Drange et al.~\cite{DrangeDFKLPPRVS16} and builds on 
work by Dvorak~\cite{DBLP:journals/ejc/Dvorak13}, gives us such a decomposition. 

\pagebreak
\begin{lemma}\label{lem-decomp}
There exists a constant $C_{\textbf{Z}}' > 0$ depending only on a fixed (finite) number of grads of $\mathcal{G}$ and a polynomial-time 
algorithm that, given a graph $G \in \mathcal{G}$, an integer $k$, a constant $C_{\textbf{S}}' > 0$, and a set 
$Z' \subseteq V(G)$ with $|Z'| > C_{\textbf{Z}}' \cdot k$, either correctly concludes that $\ds(G) > k$ or finds 
a pair $(X',S')$ with the following properties:
\begin{itemize}
\item[(1)] $|X'| \leq C_{\textbf{X}}' \cdot k$, 
\item[(2)] $X'$ is a $Z'$-dominator in $G$, 
\item[(3)] for each $u \in V(G) \setminus X'$ we have $|M_3^G(u, X')| \leq C_{\textbf{M}}'$, and
\item[(4)] $S' \subseteq Z' \setminus X'$ is $2$-independent in $G - X'$, and $|S'| \geq C_{\textbf{S}}' \cdot |X|$, 
where $C_{\textbf{X}}'$ and $C_{\textbf{M}}'$ are constants depending only on a fixed number of grads of $\mathcal{G}$. 
\end{itemize}
\end{lemma}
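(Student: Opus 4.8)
The plan is to follow the route of Drange et al.~\cite{DrangeDFKLPPRVS16}, which rests in turn on Dvo\v{r}\'ak's approximation technique~\cite{DBLP:journals/ejc/Dvorak13}, and to build the pair $(X',S')$ in three stages: produce a small $Z'$-dominator, stabilize it under the closure operation of Lemma~\ref{lem-closure}, and then carve out the $2$-independent set.

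First I would run the known constant-factor approximation for (subset) dominating set on bounded expansion classes to obtain in polynomial time a $Z'$-dominator $X_0$ with $|X_0|\le c\cdot\ds(G,Z')$ for a class constant $c$; if it returns more than $c\cdot k$ vertices then $\ds(G)\ge\ds(G,Z')>k$ and I report this, so assume $|X_0|\le ck$. I then set $X':=\cl_3(X_0)$ using Lemma~\ref{lem-closure}, which immediately gives properties (1)--(3): $|X'|\le C_{\textbf{cl1}}|X_0|\le C_{\textbf{cl1}}ck$, so we may put $C_{\textbf{X}}':=C_{\textbf{cl1}}c$; $X'\supseteq X_0$ still dominates $Z'$; and $|M_3^G(u,X')|\le C_{\textbf{cl2}}=:C_{\textbf{M}}'$ for every $u\notin X'$ by Lemma~\ref{lem-closure}(3).

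The substance of the argument is the third stage. Writing $W:=Z'\setminus X'$ we have $|W|\ge(C_{\textbf{Z}}'-C_{\textbf{X}}')k$, and $C_{\textbf{Z}}'$ is taken large enough (depending on $C_{\textbf{S}}'$ and the grads of $\mathcal G$) that it suffices to produce $S'\subseteq W$ which is $2$-independent in $G-X'$ with $|S'|\ge|W|/L$ for a class constant $L$; then $|S'|\ge(C_{\textbf{Z}}'-C_{\textbf{X}}')k/L\ge C_{\textbf{S}}'|X'|$. The goal is to reach a state in which the auxiliary graph $H$ on $W$ --- with an edge joining two vertices whenever they lie at distance at most $2$ in $G-X'$ --- has bounded average degree, since then a Tur\'an-type bound gives an independent set of $H$ on a $1/L$-fraction of $W$, which is the $S'$ we want. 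The edges of $H$ coming from $G[W]$ are already $O(|W|)$ by sparsity; the dangerous ones are those created by a vertex $v\notin X'$ with many neighbours in the current $W$, which injects a large clique into $H$. I would remove these by an iterative cleaning step: while such a $v$ has at least $\vartheta$ neighbours in $W$ --- where the threshold $\vartheta$ must be chosen carefully, large enough not to trigger on harmless high-degree vertices --- move $v$ into $X'$, re-apply the closure, update $W$, and repeat; once the process stabilizes, a calculation using $\nabla_0(\mathcal G)$, the bound $C_{\textbf{M}}'$, and the $K_{t,t}$-freeness of $\mathcal G$ shows that $H$ has bounded average degree.

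The hard part --- and what makes this a genuinely technical lemma rather than a routine one --- is bounding the \emph{total} number of vertices pushed into $X'$ during cleaning by $O(k)$, so that property (1) is preserved; the naive count only bounds it by $O(|Z'|)$, which is useless because $|Z'|$ may be as large as $|V(G)|$. Controlling it requires using sparsity in a non-obvious way: each pushed vertex dominates many vertices of $Z'$ that the radius-$3$ closure forces to be ``distinct'' with respect to their short-distance projections onto $X'$, and combining the charging lemma (Lemma~\ref{lem-charging}) and the neighbourhood-count bound (Lemma~\ref{lem-twin-expansion}) with the standing assumption $\ds(G,Z')\le k$ one shows that only $O(k)$ pushes are possible before either $H$ is sparse or $\ds(G)>k$ can be certified. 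For the details I would follow Drange et al.~\cite{DrangeDFKLPPRVS16}, whose analysis in turn follows Dvo\v{r}\'ak's treatment~\cite{DBLP:journals/ejc/Dvorak13} of locally optimal dominating sets.
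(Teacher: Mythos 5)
The first thing to say is that the paper contains no proof of Lemma~\ref{lem-decomp} at all: it is imported verbatim from Drange et al.~\cite{DrangeDFKLPPRVS16}, with a pointer to Dvo\v{r}\'ak's technique~\cite{DBLP:journals/ejc/Dvorak13}, so there is no in-paper argument to measure your write-up against. Your three-stage outline --- a constant-factor approximate $Z'$-dominator $X_0$ (rejecting if it exceeds $ck$), the $3$-closure of Lemma~\ref{lem-closure} to secure properties (1)--(3), and then the extraction of a $2$-independent set from $Z'\setminus X'$ --- does match the shape of the cited proof, and you correctly locate the crux: the vertices promoted into $X'$ while making the distance-$2$ auxiliary graph sparse must be bounded by $O(k)$, whereas the naive degree count only gives $O(|Z'|)$. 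Note also that adding vertices to $X'$ only shrinks $G-X'$, so $2$-independence established at the end survives, which your sketch implicitly relies on and is fine.

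The caveat is that the one paragraph in which you commit to a mechanism for the crux is the one I do not believe as written. The standard way Dvo\v{r}\'ak-style arguments avoid the $O(|Z'|)$ blow-up is not a charging of pushed vertices against distinct projection profiles via Lemmata~\ref{lem-charging} and~\ref{lem-twin-expansion}; it is that the construction grows a $2$-independent \emph{certificate} set $A\subseteq Z'$ alongside the dominator, maintaining the invariant that the dominator has size $O(|A|)$. Since no vertex can dominate two vertices at pairwise distance greater than $2$, $|A|$ is a lower bound on $\ds(G,Z')\leq\ds(G)$, so the algorithm rejects as soon as $|A|>k$ and otherwise everything is automatically capped at $O(k)$. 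Your sketch never produces such a lower-bound witness, and without one I do not see how the assumption $\ds(G,Z')\le k$ enters your counting at all --- it is the only quantity in the problem that could cap the number of promotions at $O(k)$. Since you explicitly defer to \cite{DrangeDFKLPPRVS16} for exactly this step, and the paper defers to the same source for the entire lemma, your submission is acceptable as a citation; as a self-contained proof, the key idea is missing, and the replacement you gesture at would need to be substantiated or discarded.
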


\begin{corollary}\label{cor-decomp}
There exists a constant $C_{\textbf{Z}} > 0$ depending only on a fixed (finite) number of grads of $\mathcal{G}$ and 
a polynomial-time algorithm that, given a graph $G \in \mathcal{G}$, an integer~$k$, 
a constant $C_{\textbf{S}} > 0$, and a $c$-exchange domination core $Z \subseteq V(G)$ with $|Z| > C_{\textbf{Z}} \cdot k$, either correctly 
concludes that $\ds(G) > k$ or finds 
a pair $(X,S)$ with the following properties:

\begin{itemize}
\item[(1)] $|X| \leq C_{\textbf{X}} \cdot k$, 
\item[(2)] $X$ dominates $G$, 
\item[(3)] for each $u \in V(G) \setminus X$ we have $|M_3^G(u, X)| \leq C_{\textbf{M}}$, and
\item[(4)] $S \subseteq Z \setminus X$ is $2$-independent in $G - X$, and $|S| \geq C_{\textbf{S}} \cdot |X|$, 
where $C_{\textbf{X}}$ and $C_{\textbf{M}}$ are constants depending only on a fixed number of grads of $\mathcal{G}$. 
\end{itemize}
\end{corollary}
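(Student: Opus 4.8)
The plan is to derive Corollary~\ref{cor-decomp} from Lemma~\ref{lem-decomp} essentially by a two-step argument: first apply Lemma~\ref{lem-decomp}, and then use the $c$-exchange property of the domination core (via Proposition~\ref{prop-dom-core}) to upgrade a $Z$-dominator into a dominator of all of $G$ without spoiling any of the other properties. Concretely, given $(G,k,C_{\textbf{S}},Z)$ with $|Z|>C_{\textbf{Z}}\cdot k$, I would first set the hidden constant $C_{\textbf{S}}'$ (the one fed to Lemma~\ref{lem-decomp}) to a suitably inflated value --- roughly $C_{\textbf{S}}$ multiplied by the blow-up factor coming from the size reduction performed by Proposition~\ref{prop-dom-core} --- and likewise set $C_{\textbf{Z}}:=C_{\textbf{Z}}'$, where $C_{\textbf{Z}}'$ is the constant from Lemma~\ref{lem-decomp} for that choice of $C_{\textbf{S}}'$. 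Running the algorithm of Lemma~\ref{lem-decomp} on $(G,k,C_{\textbf{S}}',Z)$ either reports $\ds(G)>k$ (in which case we report the same and stop), or returns a pair $(X',S')$ satisfying items (1)--(4) of that lemma.

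Next, since $X'$ is a $Z$-dominator and $Z$ is a $c$-exchange domination core, I apply Proposition~\ref{prop-dom-core} to $X'$ to obtain in polynomial time a set $X$ with $|X|\le|X'|$, such that $X$ dominates all of $G$, and with the number of connected components of $X$ bounded by that of $X'$ (the last point is irrelevant here but costs nothing). Property~(1) of the corollary then follows immediately with $C_{\textbf{X}}:=C_{\textbf{X}}'$, and property~(2) holds by construction. For properties~(3) and (4) the key observation is that $X$ may differ from $X'$, so I cannot simply reuse $S'$; instead I take $S:=S'\setminus X$ (equivalently, $S':=S'\cap (V(G)\setminus X)$). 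Since $S'$ is $2$-independent in $G-X'$ and $S\subseteq S'$, the set $S$ is also $2$-independent in $G-X$ --- here one uses that a $2$-independent set stays $2$-independent after deleting additional vertices, because deleting vertices only destroys short paths --- so (4)'s independence requirement is met. The size bound in (4) is handled by the padding built into $C_{\textbf{S}}'$: from $|S'|\ge C_{\textbf{S}}'\cdot|X'|$ and $|X'|\ge|X|$, and since $|S'\setminus X|\ge |S'|-|X|\ge |S'|-|X'|\ge (C_{\textbf{S}}'-1)|X'|\ge (C_{\textbf{S}}'-1)|X|$, choosing $C_{\textbf{S}}'\ge C_{\textbf{S}}+1$ gives $|S|\ge C_{\textbf{S}}\cdot|X|$.

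Finally, property~(3): for $u\in V(G)\setminus X$ we need $|M_3^G(u,X)|\le C_{\textbf{M}}$. This does not follow verbatim from $|M_3^G(u,X')|\le C_{\textbf{M}}'$ because $X\ne X'$, and $X$ was produced by the exchange argument, which adds at most a bounded number (at most $c-1$ per exchange step, over at most $|X'|$ steps) of new vertices. I expect this to be the main technical point of the proof. The cleanest route is to recall from Proposition~\ref{prop-dom-core} and Definition~\ref{def-dom-core-expansion} that each exchange step replaces a set $A$ of size at most $c$ by a strictly smaller set $B$, so $X\subseteq X'\cup B_{\mathrm{total}}$ where $B_{\mathrm{total}}$ is the (disjoint over steps, by the drop-by-one argument) union of the $B$'s, of total size at most $|X'|\le C_{\textbf{X}}'\cdot k$; a $3$-projection onto the larger set $X'\cup B_{\mathrm{total}}$ can only be larger than onto $X$, but bounding it still requires care because $B_{\mathrm{total}}$ is not bounded in size. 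The honest fix is to not claim (3) for an arbitrary dominator but to run Lemma~\ref{lem-closure} once more: after obtaining $X$, replace it by $\cl_3(X)$, which by Lemma~\ref{lem-closure} has size $O(|X|)=O(k)$, still dominates $G$, and satisfies $|M_3^G(u,\cl_3(X))|\le C_{\textbf{cl2}}$ for all $u\notin\cl_3(X)$; then redefine $S:=S'\setminus \cl_3(X)$ and re-run the size bookkeeping of the previous paragraph with $\cl_3(X)$ in place of $X$ (inflating $C_{\textbf{S}}'$ by the additional $C_{\textbf{cl1}}$ factor and adjusting $C_{\textbf{X}}:=C_{\textbf{cl1}}\cdot C_{\textbf{X}}'$, $C_{\textbf{M}}:=C_{\textbf{cl2}}$). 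Since all of $C_{\textbf{cl1}},C_{\textbf{cl2}},C_{\textbf{X}}',C_{\textbf{M}}'$ and $c$ depend only on finitely many grads of $\mathcal{G}$, so do $C_{\textbf{Z}},C_{\textbf{X}},C_{\textbf{M}}$, and every step is polynomial time, completing the proof.
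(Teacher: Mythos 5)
Your overall strategy matches the paper's: run Lemma~\ref{lem-decomp} with an inflated constant, use Proposition~\ref{prop-dom-core} to upgrade the $Z$-dominator $X'$ to a dominator of all of $G$, and restore property (3) by taking a $3$-closure via Lemma~\ref{lem-closure}. However, there is a genuine gap in your treatment of property (4). You argue that $S := S'\setminus X$ remains $2$-independent in $G-X$ because ``deleting vertices only destroys short paths,'' but $G-X$ is \emph{not} obtained from $G-X'$ by deleting vertices: the exchange procedure of Proposition~\ref{prop-dom-core} may remove vertices of $X'$ (each step replaces a set $A\subseteq X'$ by some other set $B\subseteq V(G)$), so $X'\setminus X$ can be nonempty, and those vertices reappear in $G-X$. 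Since $X'$ is a $Z$-dominator and $S'\subseteq Z\setminus X'$, it is entirely typical that a single vertex $v\in X'$ dominates two vertices $s_1,s_2\in S'$; if the exchange discards $v$, then $s_1$ and $s_2$ are at distance $2$ in $G-X$ even though they were $2$-independent in $G-X'$. Your final $\cl_3(X)$ step does not repair this, because $\cl_3(X)$ need not contain $X'$ either.

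The paper closes exactly this hole by defining $X:=\cl_3(Y\cup X')$ (where $Y$ is the output of Proposition~\ref{prop-dom-core}) rather than a closure of $Y$ alone: keeping $X'$ inside the final set makes $G-X$ an induced subgraph of $G-X'$, so distances can only increase and $2$-independence of $S=S'\setminus X$ is inherited for free. This costs only a factor of $2$ in the size bound, $|X|\le C_{\textbf{cl1}}\cdot|Y\cup X'|\le 2C_{\textbf{cl1}}\cdot|X'|$, which is absorbed into $C_{\textbf{X}}=2C_{\textbf{cl1}}C_{\textbf{X}}'$ and into the choice $C_{\textbf{S}}'=2C_{\textbf{cl1}}(C_{\textbf{S}}+2C_{\textbf{cl1}})$. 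The rest of your bookkeeping (inflating $C_{\textbf{S}}'$, setting $C_{\textbf{M}}:=C_{\textbf{cl2}}$, and noting that all constants depend only on finitely many grads of $\mathcal{G}$) is correct and coincides with the paper's.
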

\begin{proof}
First we run the algorithm of Lemma~\ref{lem-decomp} for $G$, $k$, $Z' = Z$, $C_{\textbf{Z}}' = C_{\textbf{Z}}$, and constant  
$C_{\textbf{S}}' \ge C_{\textbf{S}}$, which we fix later in the proof. 
If the algorithm of Lemma~\ref{lem-decomp} concludes that $\ds(G) > k$, we correctly conclude that $\ds(G) > k$.
Otherwise, let ($X', S')$ be the output of the algorithm of Lemma~\ref{lem-decomp}. 
%We distinquish two cases. If $X'$ dominates also the whole $G$ we output $(X',S')$. 
Since $Z$ is a $c$-exchange domination core, we find using Proposition~\ref{prop-dom-core}, a set $Y$, such 
that $|Y|\le |X'|$ and $Y$ dominates $G$. We let $X = \cl_3(Y\cup X')$ and $S = S'\setminus X$. Since $X$ is a superset of $Y$, $X$ satisfies 
property (2). Note that by Lemma~\ref{lem-closure}, $X$ can be computed in polynomial time and it satisfies property (3). 
Moreover,  $|X| = \cl_3(Y\cup X') \le C_{\textbf{cl1}} \cdot |Y \cup X'| \le 2 C_{\textbf{cl1}} \cdot |X'| \le 2 C_{\textbf{cl1}} \cdot C_{\textbf{X}}' \cdot k$ 
(Lemma~\ref{lem-closure}) and property (1) follows with $C_{\textbf{X}} = 2 C_{\textbf{cl1}} \cdot C_{\textbf{X}}'$. 
Hence $|S| \ge |S'|-|X| \ge C_{\textbf{S}}' \cdot |X'| - 2 C_{\textbf{cl1}} \cdot |X'| \ge 
(C_{\textbf{S}}' - 2 C_{\textbf{cl1}}) \cdot |X'| \ge \frac{(C_{\textbf{S}}' - 2 C_{\textbf{cl1}})}{2 C_{\textbf{cl1}}}|X|$. 
%%% {(C_{\textbf{S}}' - 2 C_{\textbf{cl1}}) \over 2 C_{\textbf{cl1}}}
Therefore, if we set the constant 
$C_{\textbf{S}}' = 2 C_{\textbf{cl1}} (C_{\textbf{S}} + 2 C_{\textbf{cl1}})$, we satisfy all properties.
\end{proof}
%%%%%%%%%%
Note that when $\ds(G) > k$ we can also conclude that $\cds(G) > k$. Hence, in the rest of the section we assume 
that we are given $G$, $Z$, and the constructed sets $X$ and $S$. We let $R = V(G) \setminus X$.
% be the set of vertices outside $X$. 
Using this notation, $S$ is $2$-independent in the graph $G[R]$. 
Define the following equivalence relation $\simeq$ on $S$: for $u,v\in S$, 
$
u\simeq v\quad \Leftrightarrow \quad \pr{i}{u}{X} = \pr{i}{v}{X}\textrm{ for each $1\leq i\leq 3$.}
$  Recall that for any vertex $u \in R$, we have $|\pr{3}{u}{X}|\leq C_{\textbf{M}}$. The following lemma follows directly 
from the proof of Lemma 3.8 by Drange et al.~\cite{DrangeDFKLPPRVS16}. 

\begin{lemma}\label{lem:num-classes1}
There exists a constant $C_{\textbf{eq}} > 0$ depending only on a fixed (finite) number of grads of $\mathcal{G}$ 
such that equivalence relation $\simeq$ has at most $C_{\textbf{eq}} \cdot 3^{C_{\textbf{eq}}} \cdot |X|$ classes.
\end{lemma}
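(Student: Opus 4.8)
The plan is to bound the number of equivalence classes of $\simeq$ by counting the distinct triples $(\pr{1}{u}{X},\pr{2}{u}{X},\pr{3}{u}{X})$ that can occur for $u\in S$, exploiting that each such triple is nested and that its largest member is small. First I would observe that the three projections are nested: a path of length at most $i$ witnessing $w\in\pr{i}{u}{X}$ is also a path of length at most $i+1$ with internal vertices outside $X$, so $\pr{1}{u}{X}\subseteq\pr{2}{u}{X}\subseteq\pr{3}{u}{X}$ for every $u\in V(G)\setminus X$. Consequently the $\simeq$-class of $u$ is determined by the single set $M:=\pr{3}{u}{X}$ together with, for each $w\in M$, the least index $i\in\{1,2,3\}$ with $w\in\pr{i}{u}{X}$; equivalently, by $M$ together with an assignment of one of three ``levels'' to each element of $M$.

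Next I would count these two pieces of data separately. Since $S\subseteq Z\setminus X\subseteq V(G)\setminus X$, Lemma~\ref{lem-twin-expansion} applied with $r=3$ shows that the number of distinct sets of the form $\pr{3}{u}{X}$ for $u\in V(G)\setminus X$ is at most $C_{\textbf{ex}}\cdot|X|$, which caps the number of possibilities for $M$. For a fixed $M$, the nested pair $(\pr{1}{u}{X},\pr{2}{u}{X})$ inside $M$ is specified by labelling each element of $M$ as lying in $\pr{1}{u}{X}$, in $\pr{2}{u}{X}\setminus\pr{1}{u}{X}$, or in $M\setminus\pr{2}{u}{X}$, so there are at most $3^{|M|}$ choices; and by property~(3) of Corollary~\ref{cor-decomp} (recall $S\subseteq R=V(G)\setminus X$) we have $|M|=|\pr{3}{u}{X}|\le C_{\textbf{M}}$, hence at most $3^{C_{\textbf{M}}}$ choices. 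Multiplying, $\simeq$ has at most $C_{\textbf{ex}}\cdot 3^{C_{\textbf{M}}}\cdot|X|$ classes, and taking $C_{\textbf{eq}}:=\max\{C_{\textbf{ex}},C_{\textbf{M}}\}$ --- a constant depending only on $r=3$ and finitely many grads of $\mathcal{G}$ --- yields the stated bound $C_{\textbf{eq}}\cdot 3^{C_{\textbf{eq}}}\cdot|X|$.

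This argument has no genuinely hard step: it is a counting argument whose two nontrivial inputs, the linear bound on the number of distinct $3$-projections (Lemma~\ref{lem-twin-expansion}) and the constant bound on the size of an individual $3$-projection (Corollary~\ref{cor-decomp}), are already established, which is why the statement can be said to follow directly from the corresponding argument of Drange et al. The only points requiring a little care are verifying the nestedness of the $\pr{i}{u}{X}$ (so that a class is encoded by $M$ plus a per-element three-way label, giving the factor $3^{|M|}$ rather than, say, $2^{|M|}$ or $3^{|M|^{2}}$) and checking that $S\subseteq V(G)\setminus X$ so that both cited results apply to its elements; the $2$-independence of $S$ plays no role here and is used only elsewhere in the proof of Lemma~\ref{lem:reduce-corce}.
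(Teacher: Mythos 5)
Your proof is correct and follows essentially the same route as the paper's: both bound the number of classes by the product of the number of possible $3$-projections (at most $C_{\textbf{ex}}\cdot|X|$ by Lemma~\ref{lem-twin-expansion}) and the number of ways to refine a fixed $3$-projection of size at most $C_{\textbf{M}}$ into the nested chain $\pr{1}{u}{X}\subseteq\pr{2}{u}{X}\subseteq\pr{3}{u}{X}$ via a smallest-index label per element, giving the factor $3^{C_{\textbf{M}}}$. The only cosmetic difference is your choice $C_{\textbf{eq}}=\max\{C_{\textbf{ex}},C_{\textbf{M}}\}$ versus the paper's $C_{\textbf{eq}}=C_{\textbf{ex}}+C_{\textbf{M}}$, both of which yield the stated bound.
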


\begin{proof}
From Lemma~\ref{lem-twin-expansion}, 
we know that the number 
of different $3$-projections in $X$ of vertices of $R$ is bounded by $C_{\textbf{ex}} \cdot |X|$. 
Observe that for each $u \in S$, we have $\pr{1}{u}{X}\subseteq \pr{2}{u}{X}\subseteq \pr{3}{u}{X}$ and the number of choices 
for $\pr{3}{u}{X}$ is again at most $C_{\textbf{ex}} \cdot |X|$ (since $S \subseteq R$). Moreover, since 
$u \in R$, we have that $|\pr{3}{u}{X}| \leq C_{\textbf{M}}$ (property (3) of Corollary~\ref{cor-decomp}). 
Hence, to define sets $\pr{i}{u}{X}$ for $1 \leq i < 3$ it suffices, for 
every $w \in \pr{3}{u}{X}$, to choose the smallest index $j$, $1 \leq j \leq 3$, such that $w\in \pr{j}{u}{X}$. The number of 
such choices is at most $3^{C_{\textbf{M}}}$, and hence the claim follows by setting $C_{\textbf{eq}} = C_{\textbf{ex}} + C_{\textbf{M}}$.
\end{proof}

We can now set the constant $C_{\textbf{S}}$ that is required in Corollary~\ref{cor-decomp} and the constant $c$ of the exchange domination core. 
We let $C_{\textbf{S}} = (4C_{\textbf{eq}} + 1) \cdot C_{\textbf{eq}} \cdot 3^{C_{\textbf{eq}}}$ and $c = (4C_{\textbf{eq}} + 1)$. 
Since we have that $|S|> C_{\textbf{S}} \cdot |X|$, from Lemma~\ref{lem:num-classes1} and the pigeonhole principle we 
infer that there is a class $\kappa$ of relation $\simeq$ with $|\kappa| > 4C_{\textbf{eq}} + 1$. We can find such 
class $\kappa$ in polynomial time by computing the classes of $\simeq$ directly from the definition and examining their sizes. 
We are ready to prove the final lemma of this section: any vertex of $\kappa$ can be removed from the 
$c$-exchange domination core $Z$ (recall that $S\subseteq Z$), which concludes the proof of Lemma~\ref{lem:reduce-corce}.

\begin{lemma}\label{lem:irrelevant}
Let $z \in \kappa$. Then $Z \setminus \{z\}$ is a $c$-exchange domination core, where $c = (4C_{\textbf{eq}} + 1)$.
\end{lemma}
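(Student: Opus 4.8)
The plan is to show that $Z' = Z \setminus \{z\}$ satisfies Definition~\ref{def-dom-core-expansion}, i.e.\ that every set $X^\dagger$ which dominates $Z'$ either dominates $G$ or admits a small ``exchange'' $(A,B)$ with $|B| < |A| \le c$ such that $(X^\dagger \setminus A)\cup B$ still dominates $Z'$ and has no more connected components. Fix such an $X^\dagger$. If $X^\dagger$ already dominates $z$, then $X^\dagger$ dominates all of $Z$, and since $Z$ is a $c$-exchange domination core we immediately get the desired conclusion (the exchange set for $Z$ works verbatim for $Z' \subseteq Z$). So the only interesting case is when $X^\dagger$ dominates $Z \setminus \{z\}$ but \emph{not} $z$ itself. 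The goal in this case is to \emph{repair} $X^\dagger$ into a dominator $X^\dagger{}'$ of all of $Z$ by a bounded-size exchange, after which we again invoke the $c$-exchange property of $Z$; the two exchanges must be composed so that the total has size at most $c = 4C_{\textbf{eq}}+1$ and does not increase the component count.

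First I would set up the structure around $\kappa$. Recall $S \subseteq Z$ is $2$-independent in $G - X$, and $\kappa \subseteq S$ is an equivalence class of $\simeq$ with $|\kappa| > 4C_{\textbf{eq}}+1$, so all vertices of $\kappa$ have the same $i$-projections onto $X$ for $i=1,2,3$. The key combinatorial fact to establish is that, since $X^\dagger$ dominates the large $2$-independent set $\kappa \setminus \{z\}$ but misses $z$, there must be a small ``core'' of dominators doing this work that can be re-routed to cover $z$ as well. Concretely: every vertex of $R = V(G)\setminus X$ dominates at most one vertex of $S$ (by $2$-independence), so the vertices of $\kappa\setminus\{z\}$ that are dominated by $X^\dagger \cap R$ are in bijection with a subset of $X^\dagger \cap R$ of the same size; since $|\kappa\setminus\{z\}|$ is large while $X^\dagger$ can only afford to devote few of its vertices here if it is to be competitive, one argues (as in the intuition preceding the lemma) that the domination of $\kappa$ is effectively realized ``through $X$'', and in particular there is a vertex $w \in X \cap \pr{1}{z'}{X}$ for $z' \in \kappa$ — and since all of $\kappa$ shares the same $1$-projection, $w$ dominates every vertex of $\kappa$ including $z$. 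One then forms $X^\dagger{}' = (X^\dagger \setminus A_0)\cup B_0$ by adding $w$ (and possibly removing a bounded number of now-redundant vertices used only to dominate $\kappa \setminus \{z\}$), controlling sizes so $|B_0| \le |A_0|$ and the component count does not grow — here one may need to also throw in $\Oh(1)$ vertices along short paths through $X$, using the bound $|M_3^G(u,X)| \le C_{\textbf{M}}$, to preserve connectivity exactly as in the connectivity-repair arguments (Proposition~\ref{approx-cds-by-ds} style) that underlie the $c$-exchange definition.

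Once $X^\dagger{}'$ dominates all of $Z$, apply the $c$-exchange property of $Z$: either $X^\dagger{}'$ dominates $G$ — in which case, comparing with the original $X^\dagger$, the composed modification $X^\dagger \to X^\dagger{}'$ is itself an exchange of size $\le c$ (this is where the slack in $c = 4C_{\textbf{eq}}+1$ versus the $\le 4C_{\textbf{eq}}$ vertices touched in the repair is used) witnessing condition (2) for $Z'$, or we get an exchange $(A_1,B_1)$ for $X^\dagger{}'$ with $|B_1| < |A_1| \le c$; composing it with the repair step and simplifying (removing $w$ back out if it is not needed, etc.) yields the required exchange for $Z'$ with the component-count condition inherited from both steps. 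The main obstacle is the bookkeeping in this composition: one must simultaneously (a) keep the net number of added vertices strictly below the net number of removed ones, (b) not increase the number of connected components, and (c) stay within the budget $c$. The delicate point is that $\kappa$ was deliberately chosen with $|\kappa| > 4C_{\textbf{eq}}+1$ and $c$ set to exactly $4C_{\textbf{eq}}+1$ precisely so that the number of $\simeq$-classes ($\le C_{\textbf{eq}}\cdot 3^{C_{\textbf{eq}}}$, Lemma~\ref{lem:num-classes1}) and the projection bound $C_{\textbf{M}}$ leave just enough room; verifying that these constants line up in every sub-case is the crux of the argument, and I expect the proof to proceed by a careful case analysis on how $X^\dagger$ interacts with $X$, $\kappa$, and the $3$-projections, applying Lemmata~\ref{lem-closure} and \ref{lem-twin-expansion} to keep all auxiliary sets of size $\Oh(|X|)$ while the local exchange stays of size $\Oh(1)$.
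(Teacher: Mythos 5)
There is a genuine gap, and it lies exactly where you locate "the crux": the hard case where $X^\dagger$ dominates $Z\setminus\{z\}$ but not $z$. Your "key combinatorial fact" does not hold. The definition of a $c$-exchange domination core quantifies over \emph{every} set dominating $Z'$, with no bound on its size, so the appeal to $X^\dagger$ being "competitive" and only able to "devote few of its vertices" to $\kappa$ is vacuous: $X^\dagger$ may dominate each of the $> 4C_{\textbf{eq}}+1$ vertices of $\kappa\setminus\{z\}$ by a separate private vertex. Worse, your conclusion is backwards. If any dominator $v(s)$ of some $s\in\kappa\setminus\{z\}$ lay in $X$, then since $M_1(s,X)=M_1(z,X)$ that vertex would already dominate $z$ and we would be in the easy case; so in the hard case \emph{no} vertex of $\kappa\setminus\{z\}$ is dominated from $X$, and the claimed $w\in X\cap M_1(z',X)\cap X^\dagger$ dominating all of $\kappa$ cannot exist. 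The actual argument runs the other way: by $2$-independence of $S$ in $G-X$, the dominators $v(s)\notin X$ are pairwise distinct, so $X^\dagger$ contains a set $W$ of $c=4C_{\textbf{eq}}+1$ such vertices. One then removes all of $W$ and adds $M_3(z,X)$ together with short connecting paths $P$ to $z$; one checks that everything $W$ dominated is still dominated (its closed neighborhood is dominated by $X$ at distance $\le 3$ from $\kappa$, hence by $M_3(z,X)$, using that all of $\kappa$ shares the same $3$-projection), that the component count does not grow ($M_3(z,X)\cup P$ is one connected piece through $z$), and that $|M_3(z,X)\cup P|\le 4C_{\textbf{eq}} < |W|$. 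This is a direct down-exchange witnessing condition (2) for $Z'$, contradicting the assumption that none exists. The largeness of $\kappa$ is used to manufacture the \emph{large removal set} $W$, not to force domination "through $X$".

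Your proposed architecture --- first repair $X^\dagger$ into a dominator of all of $Z$ by \emph{adding} a vertex, then invoke the exchange property of $Z$, then compose --- also cannot be made to work as stated. Condition (2) demands a strict net decrease, $|B|<|A|$, and your repair step is a pure addition ($|B_0|\ge|A_0|$ is the natural outcome, not the exception); nothing in the subsequent application of the $c$-exchange property of $Z$ guarantees that the composed modification removes strictly more than it adds, nor that the composition stays within the budget $c$. The paper never composes two exchanges: it produces a single exchange in one shot, and the constants $c=4C_{\textbf{eq}}+1$ versus $C_{\textbf{eq}}+3C_{\textbf{eq}}$ are chosen precisely so that this single swap is strictly shrinking.
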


\begin{proof}
Let $D \subseteq V(G)$ be a  set that dominates $Z' = Z \setminus \{z\}$ such that $D$ does not satisfy 
condition (2) of Definition~\ref{def-dom-core-expansion} with respect to $Z'$ (as otherwise we are done).  
In particular, this implies that $D$ cannot satisfy condition (2) with respect to $Z$ either. In other words, 
there exist no sets $A \subseteq D$ and $B \subseteq V(G)$ such that $|B| < |A| \le c$,  $(D \setminus A) \cup B$ has at most as 
many connected components as $D$,  and $(D \setminus A) \cup B$ dominates $Z$. 
Therefore, if $D$ dominates all of~$Z$, then $D$ has to satisfy condition (1) and dominate $G$. 
Hence, $z$ is not dominated by~$D$. We prove that this leads to a contradiction. 
 
Every vertex $s \in \kappa \setminus \{z\}$ is dominated by $D$ (since $\kappa \setminus \{z\} \subseteq Z'$). 
For each such $s$, let $v(s)$ be an arbitrarily chosen vertex of $D$ that dominates $s$. If $v(s) \in X$, then $v(s)$ also dominates~$z$, 
since $M_1(s,X) = M_1(z,X)$. Consequently, $v(s) \not\in X$ (because $D$ does not dominate $Z$) and the vertices $v(s)$ 
are pairwise different for all $s\in \kappa\setminus\{z\}$, as $S$ (and $\kappa$) is a $2$-independent set.
Let $W'= \{v(s) : s \in \kappa \setminus \{z\}\}$. Since $|\kappa| > 4C_{\textbf{eq}} + 1$, we have $|W'| \ge c = 4C_{\textbf{eq}} + 1$. 
Let $W$ be a set of arbitrarily chosen $c$ vertices of $W'$. 
Define $D' = (D \setminus W) \cup M_3(z,X)$. 

We first show that $D'$ dominates $Z'$. Towards that, it is enough to show that every vertex in $N[W]$ is dominated by $M_3(z,X) \subseteq D'$. 
By property (2) of Corollary~\ref{cor-decomp}, $X$ is a dominating set in $G$.
% Since $Z$ is a $c$-exchange domination core and $X$ is a $Z$-dominator, by Proposition~\ref{prop-dom-core}, $X$ is a dominating set in $G$. 
Hence, every vertex in $N[W] = N(W) \cup W$ either belongs to $X$ or has a neighbor in $X$. Let $Y \subseteq X$ be the set of vertices 
in $X$ dominating $N[W]$. Since every vertex in $N[W]$ is at distance at most two from some vertex in $\kappa$ 
(because each vertex in $W$ is adjacent to a vertex in $\kappa$), all vertices in $Y$ 
are at distance at most three from some vertex in $\kappa$. This implies that $Y \subseteq M_3(z,X)$ and therefore $D'$ dominates $Z'$. 

Note that $|M_3(z,X)| < |W| \leq c$ and therefore $D'$, which still dominates $Z'$, violates condition (2) of Definition~\ref{def-dom-core-expansion}. 
However, the set $D'$ can have more connected components than $D$ has. Therefore, we add additional vertices to $D'$ to ensure connectivity and still violate condition (2). 
For every vertex $u \in M_3(z,X)$ there is a path $P_u$ of length at most three between $u$ and $z$. 
Let $P = \bigcup_{u\in M_3(z,X)} P_u$ and consider the set $D'' = D'\cup P$. 
Note that $\vert P \vert \leq 3 \vert M_3(z,X)\vert \leq 3C_{\textbf{eq}}$. 

It remains to show that $D''$ has at most as many connected components as $D$.
If a component $C$ in $D$ contains a vertex from $W$, then each connected component $C'$ of $C\setminus W$ contains a 
vertex $v_{C'}\in N(W)$ and as mentioned before, $v_{C'}$ is adjacent to a vertex in $M_3(z,X)$. But in $D''$ all vertices of $M_3(z,X)$ are in one 
component due to the vertices of~$P$. Hence all such components of $D$ contribute to only one component of $D''$. On the other hand, if $C$ does not 
contain a vertex in $W$, then clearly all vertices of $C$ are also in $D''$ and hence $C$ is connected in $D''$ as well. 
% 
% If $D$ is connected then every connected component $C$ in $D \setminus W$ contains a vertex $v_C$ such that 
% $v_C \in N(W)$. We claim that $v_C \in M_3(z,X)$. To see why, it is enough to observe that any such vertex $v_C$ is at 
% a distance at most two from some vertex in $\kappa$. Note that $G[P \cup M_3(z,X)]$ is connected. 
It follows that  $D''$ has at most as many connected components as~$D$, $D''$ dominates $Z'$, and  
$|M_3(z,X)\cup P| \leq C_{\textbf{eq}} + 3C_{\textbf{eq}} < |W| = c = 4C_{\textbf{eq}} + 1$, contradicting the fact that $D$ did 
not satisfy condition (2) of Definition~\ref{def-dom-core-expansion}. 
\end{proof}

\subsection{Reducing connectors and dominators}
Armed with a $c$-exchange domination core $Z$ whose size is linear in $k$, our next goal is 
to reduce the number of connectors and dominators (the number of vertices in $V(G) \setminus Z$). 
To that end, we need the following lemma which is a generalized version of Lemma 2.11 in~\cite{DrangeDFKLPPRVS16}. 

\begin{lemma}[Trees closure lemma]\label{lem-tree-closure}
Let $\mathcal{G}$ be a class of bounded expansion and let $q$ and~$r$ be positive integers. 
Let $G \in \mathcal{G}$ be a graph and $X \subseteq V(G)$. Then a superset of vertices 
$X' \supseteq X$ can be computed in polynomial time, with the following properties: 
(1) For every $Y \subseteq X$ of size at most $q$, if $\stt_G(Y) \leq rq$ then $\stt_{G[X']}(Y) = \textbf{st}_G(Y)$. 
(2) $|X'| \leq C_{\textbf{tc}} \cdot |X|$, where $C_{\textbf{tc}}$ is a constant depending only on $r$, $q$, and a finite number of grads of $\mathcal{G}$. 
\end{lemma}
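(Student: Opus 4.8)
The plan is to follow the strategy of the $r$-closure lemma (Lemma~\ref{lem-closure}), iterating a ``bounded projection'' argument but now applied to Steiner trees instead of paths. First I would observe that what we need to preserve is, for each small terminal set $Y \subseteq X$ of size at most $q$, the \emph{existence} of a small Steiner tree; such a tree, if it has at most $rq$ vertices, uses at most $rq$ vertices outside $X$. So the natural object to protect is, for each $u \in V(G) \setminus X$, the set of ``ways $u$ can be reached cheaply from $X$'', which generalizes the notion of $r$-projection $M_r^G(u,X)$. Concretely, for a vertex $u$ and a subset $X$, I would work with the projections $M_t^G(u,X)$ for $t$ up to $rq$ (or just $rq$), whose sizes are controlled by Lemma~\ref{lem-closure}(3) once we pass to an $rq$-closure of $X$.

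The key steps, in order, are as follows. (i) Replace $X$ by $X_0 := \cl_{rq}(X)$ using Lemma~\ref{lem-closure}; this already blows up $|X|$ only by a constant factor and guarantees that every $u \notin X_0$ has $|M_{rq}^G(u,X_0)| \le C_{\mathbf{cl2}}$. (ii) Now for every $u \in V(G)\setminus X_0$ and every pair $w_1,w_2 \in M_{rq}^G(u,X_0)$, select a shortest $X_0$-avoiding path from $u$ to $w_1$ and to $w_2$ (more generally, for every subset of $M_{rq}^G(u,X_0)$ of bounded size, a cheapest ``star'' through $u$ connecting those projection vertices); since $|M_{rq}^G(u,X_0)| \le C_{\mathbf{cl2}}$ is a constant, there are only a constant number $g(r,q)$ of such stars per vertex $u$, each of constant size. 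Let $X'$ be $X_0$ together with all vertices appearing on these selected stars. Then $|X'| \le |X_0| + g(r,q)\cdot |V(G)\setminus X_0|$ — which is the wrong bound, so instead I would not take all of $V(G)$ but rather iterate the closure/selection a bounded number of times, exactly as in the proof of Lemma~\ref{lem-closure}: after adding the selected stars, re-close and repeat; each round multiplies $|X|$ by a constant and after a bounded number of rounds (depending on $r,q$) the projection profiles stabilize. This yields $|X'| \le C_{\mathbf{tc}}\cdot|X|$. (iii) Verify property (1): given $Y \subseteq X$ with $\stt_G(Y) \le rq$, take an optimal Steiner tree $T$ in $G$; decompose $T$ into the subpaths between consecutive ``branch/terminal'' vertices. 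Each such subpath has length at most $rq$, and its two endpoints lie in $X$ (they are terminals or branch vertices — here one has to be a little careful and argue that branch vertices can be assumed to lie in $X'$, which is where the star-selection, not just path-selection, is needed). Each subpath can then be re-routed inside $X'$ through the selected star of the appropriate internal vertex, without increasing the total number of vertices, because we selected \emph{shortest} such connections. Reassembling gives a Steiner tree for $Y$ inside $G[X']$ of the same order, so $\stt_{G[X']}(Y) \le \stt_G(Y)$; the reverse inequality $\stt_{G[X']}(Y) \ge \stt_G(Y)$ is trivial since $G[X']$ is a subgraph of $G$.

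The main obstacle I expect is step (iii), specifically handling the \emph{branch vertices} of the optimal Steiner tree $T$: a shortest $u$-to-$X_0$ path argument suffices to re-route a single pendant path, but a Steiner tree can branch at a vertex $u \notin X$, and to re-route the whole local picture around $u$ inside $X'$ one must have preselected a connection realizing the cheapest way for $u$ to simultaneously reach several of its projection vertices — this is why the construction selects bounded-size ``stars'' (group-Steiner-like gadgets among $M_{rq}^G(u,X_0)$) rather than just shortest paths, and why the bound $C_{\mathbf{tc}}$ depends on $q$ and not just $r$. A secondary subtlety is bounding the total weight blow-up over the bounded number of closure rounds; this is routine given Lemma~\ref{lem-closure} but must be stated. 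Once the star-selection is set up correctly, the decomposition-and-reassembly argument for (iii) goes through cleanly, and (2) follows from the iterated-closure count.
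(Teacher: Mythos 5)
There is a genuine gap, and it sits exactly where the lemma is hardest: the size bound. Your construction attaches a bounded-size ``star'' gadget to \emph{every} vertex $u \in V(G)\setminus X_0$, which, as you note yourself, yields $|X'| \le |X_0| + g(r,q)\cdot|V(G)\setminus X_0| = \Omega(n)$ rather than $O(|X|)$. The proposed fix --- iterate the closure/selection until the projection profiles ``stabilize'' --- does not repair this: you never specify which vertices receive gadgets in each round, and no notion of stabilization can shrink a set that is already linear in $n$ after the first round. The paper's proof avoids per-vertex gadgets entirely. After computing $X_0=\cl_{rq}(X)$ it adds, for each subset $Y\subseteq X_0$ of size at most $q$, one optimal Steiner tree $T_Y$ whose edges avoid $G[X_0]$, keeping only those of size at most $rq$; the total number of added vertices is then bounded by building an auxiliary graph $H$ on $X_0$ whose edges record pairs covered by some kept tree, showing that $H$ is an $(rq-1)$-shallow minor of $G\odot K_\tau$ with $\tau\le C_{\textbf{cl2}}^2$ (this is where property (3) of Lemma~\ref{lem-closure} is actually used), and invoking the fact that graphs from a bounded-expansion class have linearly many cliques. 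This counting argument is the heart of the lemma and has no counterpart in your proposal.

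A secondary problem is the correctness of your re-routing step. A component of $T - E(G[X_0])$ of an optimal Steiner tree can contain several branch vertices outside $X_0$ joined by $X_0$-avoiding paths; to replace it inside $X'$ you need a single connected gadget spanning all of that component's attachment points in $X_0$, not a union of independently chosen stars centered at individual vertices $u$ --- those stars need not fit together into a tree of the same order, and nothing guarantees their union connects the attachment points as cheaply as the original component did. Indexing the gadgets by small terminal subsets of $X_0$ (as the paper does) rather than by vertices outside $X_0$ resolves both this and the counting issue simultaneously.
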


\begin{proof}
First, using Lemma~\ref{lem-closure} we compute $X_0 = \cl_{rq}(X)$. Then, $|X_0| \leq C_{\textbf{cl1}} \cdot |X|$ and for 
each vertex $u \notin X_0$ we have $|\prg{rq}{u}{X_0}{G}| \leq C_{\textbf{cl2}}$. 
Now, for each set $Y \subseteq X_0$ of at most $q$ vertices, compute an optimal Steiner tree $T_Y$ whose edges 
do not belong to $G[X_0]$; in case there is no such tree, set $T_Y = \emptyset$. 
Note that $T_Y$ can be computed in polynomial time for any fixed $q$~\cite{Bjorklund07}. 
Define $X'$ to be $X_0$ plus the vertex sets of all trees $T_Y$ that have size at most $rq$. 

\begin{claim}\label{cl:blowup2}
$|X'|\leq C_{\textbf{tc}} \cdot |X_0|$, where $C_{\textbf{tc}}$ is a constant depending only on $r$, $q$, and a finite number of grads of $\mathcal{G}$. 
\end{claim}

\begin{proof}[Proof of the Claim]
Let $H$ be a graph on vertex set $X_0$, where $uv \in E(H)$ if and only if there exists $Y$ such that $\{u,v\} \subseteq Y$, 
$T_Y \neq \emptyset$ and has size at most $rq$, and hence its vertex set was added to $X$. Note that we do not add multiedges.  
For every such set $Y$, $H[Y]$ induces a clique in $H$. Let $\omega(H)$ denote the number of cliques in $H$. 
Clearly $|X'| \leq |X_0| + rq \cdot \omega(H)$, so it suffices to prove an upper bound on $\omega(H)$. 

Consider an edge $uv \in E(H)$. The existence of this edge implies that $u$ and $v$ appear together in some tree $T_Y$ of size at most $rq$. 
Since $T_Y$ does not contain any edges from $G[X_0]$ (by construction), there must exist a path $P_{u,v}$ of length at most $rq$ 
connecting $u$ and~$v$. The internal vertices of $P_{u,v}$ do not belong to $X_0$. 
Take any $w \in X' \setminus X_0$, and consider for how many pairs $\{u,v\} \subseteq X_0$ it can hold that $w \in P_{u,v}$. If $\{u,v\}$ is 
such a pair, then in particular $u,v \in \prg{rq}{w}{X_0}{G}$. But we know that $|\prg{rq}{w}{X_0}{G}|\leq C_{\textbf{cl2}}$, so the 
number of such pairs is at most $\tau \leq (C_{\textbf{cl2}})^{2}$. Consequently, we observe that graph $H$ is 
an $(rq - 1)$-shallow minor of $G \odot K_\tau$: when each vertex $w \in X' \setminus X_0$ is 
replaced with $\tau$ copies, then we can realize all the paths between $u$ and $v$, in $G \odot K_\tau$, so that they are internally vertex-disjoint. 
From Lemma~\ref{lem-lex}, we know that $\grad_{rq-1}(G \odot K_\tau)$ is bounded polynomially in $\grad_{rq-1}(G)$ and $\tau$, which in turn is 
also bounded polynomially in $\grad_{rq-1}(\mc G)$. Hence $\grad_{rq-1}(G \odot K_\tau)$ is bounded polynomially 
in $\grad_{rq-1}(\mc G)$. The number of cliques in graph of bounded expansion is linear in the number of vertices~\cite{CLASSES}.
Combining the fact that $H$ has bounded expansion with $|X'| \leq |X_0| + rq \cdot \omega(H)$, the claim follows.
\end{proof}

\begin{claim}\label{cl:correctness2}
If $Y \subseteq X_0$ has size at most $q$ and $\stt_G(Y) \leq rq$, then $\stt_{G[X']}(Y) = \textbf{st}_G(Y)$.
\end{claim}

\begin{proof}[Proof of the Claim]
Let $T_Y$ be an optimal Steiner tree for $Y$ in $G$, and let $T_1, T_2, \ldots, T_p$ be the subtrees of size greater than one obtained after deleting 
all edges of $T_Y$ for which both endpoints are in $X_0$. Note that deleting such edges can only create either singleton vertices or subtrees 
of size greater than one. Moreover, let $Y_i$, $1 \leq i \leq p$, denote the set $Y \cap V(T_i)$. 
The existence of $T_i$ certifies that some tree of size at most $|T_i|$ was added when constructing $X'$ from $X_0$, and 
hence $\stt_{G[X']}(Y_i) \leq |T_i|$. Consequently, we infer that
$$\stt_{G[X']}(Y) \leq \sum_{i=1}^{p} \stt_{G[X']}(Y_i) + |Y \setminus \bigcup_{i = 1}^{p}{Y_i}| \leq 
\sum_{i=1}^{p}|T_i| + |Y \setminus \bigcup_{i = 1}^{p}{Y_i}| \leq |T_Y| = \stt_G(Y).$$
The opposite inequality $\stt_{G[X']}(Y) \geq \stt_G(Y)$ follows directly from the fact 
that $G[X']$ is an induced subgraph of $G$. 
\end{proof}
Claim~\ref{cl:blowup2} and the fact that $|X_0| \leq C_{\textbf{cl1}} |X|$ prove property (2). 
Claim~\ref{cl:correctness2} and the fact that $X \subseteq X_0$ prove property (1).
\end{proof}

Now let $\dot{Z}$ be a superset of $Z$, which we will fix later 
in Lemma~\ref{lem:expansionconn}, such that $|\dot{Z}|=\Oh(k)$. We compute $Z' = \cl_1(\dot{Z})$ using Lemma~\ref{lem-closure}; then we have that $|Z'| = \Oh(|\dot{Z}|) = \Oh(k)$. 
%Observe that in any graph, any superset of a $c$-exchange domination core is also a $c$-exchange domination core; this follows easily from the definition. 
%Consequently, $Z'$ is a $c$-exchange domination core in $G$. 
Partition $V(G) \setminus Z'$ into equivalence classes with respect to the following relation $\simeq$: 
For $u,v\in V(G)\setminus Z'$, set:
$
u\simeq v\quad \Leftrightarrow \quad \pr{1}{u}{Z'} = \pr{1}{v}{Z'}.
$
From Lemma~\ref{lem-closure} we know that for each $u \in V(G) \setminus Z'$, it holds 
that $|\pr{1}{u}{Z'}| = \Oh(1)$. Moreover, Lemma~\ref{lem-twin-expansion} implies that the number of possible 
different projections $\pr{1}{u}{Z'}$ for $u \in V(G)\setminus Z'$ is at most $\Oh(|Z'|)$. 
Hence, using the same reasoning as in the proof of Lemma~\ref{lem:num-classes1} we obtain the following.

\begin{lemma}\label{cl:num-classes2}
The equivalence relation $\simeq$ has at most $\Oh(|Z'|)$ classes.
\end{lemma}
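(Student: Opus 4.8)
The plan is to bound the number of classes of $\simeq$ by combining the two structural facts stated just before Lemma~\ref{cl:num-classes2}, namely that each $1$-projection $\pr{1}{u}{Z'}$ has size $\Oh(1)$ (by property (3) of Lemma~\ref{lem-closure}, applied with $r=1$ to $Z' = \cl_1(\dot Z)$), and that the number of distinct sets arising as $\pr{1}{u}{Z'}$ for $u \in V(G)\setminus Z'$ is $\Oh(|Z'|)$ (by Lemma~\ref{lem-twin-expansion} with $r=1$). Since two vertices are $\simeq$-equivalent precisely when their $1$-projections onto $Z'$ coincide, the number of equivalence classes equals the number of distinct $1$-projections, which is exactly the quantity bounded by Lemma~\ref{lem-twin-expansion}.

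In more detail, first I would observe that by definition the map sending an equivalence class $\kappa$ of $\simeq$ to the common value $\pr{1}{u}{Z'}$ (for any $u \in \kappa$) is well-defined and injective: if two classes had the same $1$-projection onto $Z'$, their representatives would be $\simeq$-equivalent, so the classes coincide. Hence $\textit{index}(\simeq) = |\{\pr{1}{u}{Z'} : u \in V(G)\setminus Z'\}|$. Then I would invoke Lemma~\ref{lem-twin-expansion} with $r = 1$ and $X = Z'$, which bounds this set of distinct projections by $C_{\textbf{ex}} \cdot |Z'|$, where $C_{\textbf{ex}}$ depends only on a finite number of grads of $\mathcal{G}$ (for $r=1$). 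Since $|Z'| = \Oh(k) = \Oh(|\dot Z|)$ by the preceding discussion, this gives $\textit{index}(\simeq) = \Oh(|Z'|)$ as claimed.

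This is essentially the degenerate case of the argument in Lemma~\ref{lem:num-classes1}: here the relation only constrains the $1$-projection rather than all projections up to depth $3$, so the extra factor of $3^{C_{\textbf{M}}}$ that appeared there (counting, for each element of the deepest projection, the smallest radius at which it is reached) is absent and one is left with just the count of distinct $1$-projections. I do not anticipate a genuine obstacle here; the only point requiring a little care is that we are applying Lemma~\ref{lem-twin-expansion} to $Z'$ (which is a $1$-closure of a linear-size superset of $Z$) rather than to $Z$ itself, but since $Z' \subseteq V(G)$ and $|Z'| = \Oh(k)$, all the cited lemmas apply verbatim with $X := Z'$, and the final $\Oh(\cdot)$ absorbs the constant $C_{\textbf{ex}}$.
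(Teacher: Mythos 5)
Your proof is correct and follows the same route the paper takes: the classes of $\simeq$ are in bijection with the distinct $1$-projections $\pr{1}{u}{Z'}$, and Lemma~\ref{lem-twin-expansion} (with $r=1$, $X=Z'$) bounds their number by $C_{\textbf{ex}}\cdot|Z'|=\Oh(|Z'|)$. Your remark that the $3^{C_{\textbf{M}}}$ factor from Lemma~\ref{lem:num-classes1} is not needed here is also accurate, since the relation only compares projections at a single radius.
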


Construct a graph $\ddot{G}$ as follows. Start with $G$ and, for each equivalence class $\kappa$ of relation~$\simeq$, 
add a new vertex $u_{\kappa}$ which is connected to all vertices in $\kappa$. Let $U = \bigcup_{\kappa \in \simeq} \{u_{\kappa}\}$. 
Our next step is to apply Lemma~\ref{lem-tree-closure} to graph $\ddot{G}$ with $X = Z' \cup U$, $r = 2$, and $q = 2t$ (we fix the value of $t$ later). 
Before we do so, we need to show that $\ddot{G}$ still has bounded grads. 

\begin{lemma}\label{lem:Gdotbndedexp}
There is a function $f \colon \mathbb{N} \rightarrow \mathbb{R}$ such that 
for all $r$ we have $\nabla_r(\ddot{G}) \leq f(r)$. %The graph $\ddot{G}$ has bounded expansion. 
\end{lemma}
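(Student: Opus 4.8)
The plan is to show that $\ddot{G}$ --- up to one extra vertex of large but harmless degree --- is isomorphic to a \emph{subgraph} of $G\odot K_N$ for a constant $N$ depending only on a finite number of grads of $\mathcal{G}$; the lemma then follows from Lemma~\ref{lem-lex}, since $\nabla_r$ is monotone under taking subgraphs. Write $Z'=\cl_1(\dot{Z})$, let $\kappa$ range over the classes of $\simeq$ on $V(G)\setminus Z'$, let $N_\kappa\subseteq Z'$ denote the common $1$-projection $M_1(v,Z')=N_{Z'}(v)$ of all $v\in\kappa$, and let $u_\kappa$ be the hub added for $\kappa$. At most one class $\kappa_\emptyset$ can have $N_{\kappa_\emptyset}=\emptyset$; set $\ddot{G}^- = \ddot{G}-u_{\kappa_\emptyset}$ if such a class exists, and $\ddot{G}^-=\ddot{G}$ otherwise. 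I will embed $\ddot{G}^-$ into $G\odot K_N$; since $\ddot{G}$ is then obtained from $\ddot{G}^-$ by adding a single vertex whose neighbourhood is contained in that of a universal vertex, Lemma~\ref{lem-uvertex} will give $\nabla_r(\ddot{G})\le\nabla_r(\ddot{G}^-)+1$.

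Writing $v^{(1)},\dots,v^{(N)}$ for the copies of $v\in V(G)$ in $G\odot K_N$, I would define the embedding of $\ddot{G}^-$ by $v\mapsto v^{(1)}$ for $v\in V(G)$ and $u_\kappa\mapsto z_\kappa^{(c(\kappa))}$ for each hub, where $z_\kappa\in N_\kappa$ is a chosen representative of the projection and $c(\kappa)\in\{2,\dots,N\}$ is an auxiliary copy index fixed below. Every edge $vw$ of $G$ is sent to the edge $v^{(1)}w^{(1)}$, and every hub edge $u_\kappa v$ with $v\in\kappa$ is sent to $z_\kappa^{(c(\kappa))}v^{(1)}$, which is an edge of $G\odot K_N$ because $z_\kappa\in N_\kappa=N_{Z'}(v)\subseteq N_G(v)$. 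Since $\ddot{G}^-$ has no other edges, this is a subgraph embedding as soon as the vertex map is injective; injectivity within $\{v^{(1)}\}$ and between that set and the hub images is immediate (the copy indices differ), so the only thing left to guarantee is that the vertices $z_\kappa^{(c(\kappa))}$ are pairwise distinct over all hubs.

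This last point is the one genuine obstacle: a fixed vertex $z\in Z'$ can lie in $N_\kappa$ for unboundedly many classes $\kappa$, so a naive choice of representatives would force $N$ to be unbounded. I would resolve it with the charging Lemma~\ref{lem-charging} applied to the bipartite incidence graph $B^\star$ on parts $\{u_\kappa\}$ and $Z'$ with $u_\kappa\sim z \iff z\in N_\kappa$. On one hand, $B^\star$ embeds as a subgraph of $G\odot K_2$ via $u_\kappa\mapsto\psi(\kappa)^{(2)}$ --- where $\psi$ is an injection sending each class into itself, which exists since the classes are nonempty and pairwise disjoint --- and $z\mapsto z^{(1)}$, the edge $u_\kappa z$ going to $\psi(\kappa)^{(2)}z^{(1)}$, which is present because $z\in N_\kappa\subseteq N_G(\psi(\kappa))$; hence $B^\star$ lies in a class of bounded expansion (and we may assume $\nabla_1\ge 1$ there). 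On the other hand, the $\{u_\kappa\}$-side of $B^\star$ has no isolated vertex (we removed $\kappa_\emptyset$) and is twin-free, as distinct classes of $\simeq$ have by definition distinct projections. Lemma~\ref{lem-charging} then yields a map $\phi$ with $z_\kappa:=\phi(u_\kappa)\in N_\kappa$ and $|\phi^{-1}(z)|\le C_{\textbf{ch}}$ for every $z\in Z'$. Taking $N:=C_{\textbf{ch}}+1$ and, for each $z$, assigning the at most $C_{\textbf{ch}}$ hubs in $\phi^{-1}(z)$ pairwise distinct copy indices from $\{2,\dots,N\}$, the vertices $z_\kappa^{(c(\kappa))}$ become pairwise distinct and the embedding goes through. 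Consequently $\nabla_r(\ddot{G}^-)\le\nabla_r(G\odot K_N)\le 5N^2(r+1)^2\,\nabla_r(\mathcal{G})$ by Lemma~\ref{lem-lex}, and $\nabla_r(\ddot{G})\le\nabla_r(\ddot{G}^-)+1=:f(r)$, a function of $r$ and finitely many grads of $\mathcal{G}$ only, as required.
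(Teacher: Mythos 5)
Your proposal is correct and follows essentially the same route as the paper: both build the bipartite incidence graph between the equivalence classes and their projections onto $Z'$, invoke Lemma~\ref{lem-charging} to charge each hub to a vertex of its projection with multiplicity at most $C_{\textbf{ch}}$, embed $\ddot{G}$ into a lexicographic product $G\odot K_N$ augmented by a universal vertex for the projection-free class, and conclude via Lemmata~\ref{lem-lex} and~\ref{lem-uvertex}. Your write-up is in fact slightly more careful on two minor points (realizing the incidence graph inside a bounded-expansion class via explicit class representatives, and reserving copy index $1$ for the original vertices so that $N=C_{\textbf{ch}}+1$ copies suffice), but these do not constitute a different argument.
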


\begin{proof}
We first construct a bipartite graph $H$ with bipartition $(A,B)$ as follows. We add a vertex $x_z \in A$ for each vertex $z \in Z'$ and we 
add one vertex $y_\kappa \in B$ for each equivalence class $\kappa$ of relation $\simeq$. We add an edge $x_zy_\kappa \in E(H)$ whenever 
$z$ is in $\pr{1}{w}{Z'}$, for some $w \in \kappa$. Finally, delete any isolated vertices in $H$ (applies if some equivalence class has no neighbors in $Z'$). 
It is not hard to see that $H$ is a subgraph of $G$ and, consequently, has bounded expansion. We can therefore apply Lemma~\ref{lem-charging} 
to obtain a mapping $\phi \colon B \rightarrow A$ with $y\phi(y) \in E(H)$, for each $y \in B$, and 
$\phi^{-1}(x_z) \leq C_{\textbf{ch}}$, for each $x_z \in A$. 

We now consider the graph $\dddot{G}$ which is obtained by adding a universal vertex to $G \odot K_{C_{\textbf{ch}}}$. 
From Lemma~\ref{lem-lex}, we know that $G \odot K_{C_{\textbf{ch}}}$ has bounded expansion. Applying Lemma~\ref{lem-uvertex} to $\dddot{G}$, we know that 
$\dddot{G}$ also has bounded expansion. Hence, it remains to show that $\ddot{G}$ is a subgraph of $\dddot{G}$. 
From the mapping $\phi$, we can associate each (except at most one) equivalence class with some vertex in $Z'$. 
In addition, every vertex in $Z'$ is associated with at most $C_{\textbf{ch}}$ classes. In other words, 
when each vertex in $Z'$ is replaced by $C_{\textbf{ch}}$ copies, each equivalence obtains a distinct universal vertex. 
The extra universal vertex added to $G \odot K_{C_{\textbf{ch}}}$ guarantees that the equivalence class with no neighbors in $Z'$ 
is also covered.  This completes the proof. 
\end{proof}

\begin{lemma}
\label{lem:expansionconn}
Let $\mathcal{G}$ be a class of bounded expansion and $(G,k)$ be an instance of {\sc Connected Dominating Set}, where $G\in \mathcal{G}$. 
Let $C_{\textbf{S}} = (4C_{\textbf{eq}} + 1) \cdot C_{\textbf{eq}} \cdot 3^{C_{\textbf{eq}}}$ and $c = (4C_{\textbf{eq}} + 1)$. 
Then, for any fixed $\epsilon > 0$, there is a polynomial-time algorithm that  
either concludes that $\cds(G)>k$ or outputs a set $Y\subseteq V(G)$ of cardinality 
$\Oh(f(\epsilon) \cdot k)$, for some function~$f$, and a set $Z \subseteq Y$, such that $(i)$ $Z$ is a $c$-exchange domination core in $G$ and 
$(ii)$ $\text{OPT}_{\text{SCDS}}((G[Y],Z),k)\leq (1+\epsilon)\text{OPT}_{\text{CDS}}(G,k)$.  %\todo{we don't have this actually, we have only bikernel in this case .. -E}
\end{lemma}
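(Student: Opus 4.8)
The plan is to assemble the pieces developed in this section into a single reduction algorithm and then verify the size bound, the domination-core property, and the approximation guarantee. First I would run the algorithm of Lemma~\ref{lem:reduce-corce} iteratively, starting from $Z = V(G)$: either we conclude at some point that $\cds(G) > k$ (and we stop and report this), or after polynomially many rounds we obtain a $c$-exchange domination core $Z$ with $|Z| \le C_{\textbf{core}} \cdot k$, where $c = 4C_{\textbf{eq}}+1$ as fixed above. Next I would define $\dot Z$ as the set that the construction forces us to add for connectivity reasons; concretely, I would include in $\dot Z$ the vertices of a near-optimal (say $2$-approximate, or any $O(1)$-approximate) connected dominating set of $G$ of size $O(k)$, which also lets us certify $\cds(G) \le O(k)$ or else report $\cds(G)>k$. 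Then set $Z' = \cl_1(\dot Z)$ via Lemma~\ref{lem-closure}, so $|Z'| = O(k)$.

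With $Z'$ in hand, I would form the equivalence relation $\simeq$ on $V(G)\setminus Z'$ from $1$-projections onto $Z'$, which by Lemma~\ref{cl:num-classes2} has $O(|Z'|) = O(k)$ classes, build the graph $\ddot G$ by adding one apex vertex $u_\kappa$ per class, and invoke Lemma~\ref{lem:Gdotbndedexp} to see that $\ddot G$ still has bounded expansion. I would then apply the Trees closure lemma (Lemma~\ref{lem-tree-closure}) to $\ddot G$ with $X = Z' \cup U$, $r = 2$, and $q = 2t$ for a constant $t \approx 3/\epsilon$ to be fixed; this yields a superset $X' \supseteq Z' \cup U$ with $|X'| \le C_{\textbf{tc}}\cdot|Z' \cup U| = O(k)$ such that every group Steiner tree on at most $2t$ groups that is small (size $\le 2rt = 4t$) is preserved. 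Intersecting $X'$ with $V(G)$ and throwing back the apex vertices gives a set $Y_0 \subseteq V(G)$ of size $O(k)$; finally I would add, via Proposition~\ref{approx-cds-by-ds} applied inside $G[Y_0]$ (using that $Z'$ contains a connected dominating core of $G$ and is connected, or first connecting $Y_0$ by a constant blow-up), a set $W$ so that $Y = Y_0 \cup W$ induces a connected subgraph, keeping $|Y| = O(f(\epsilon)\cdot k)$. Output $(G[Y], Z)$.

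For part $(i)$, $Z$ remains a $c$-exchange domination core of $G$ because we only removed vertices from the core and never modified $G$, so the property established by the iterated Lemma~\ref{lem:reduce-corce} is untouched. For part $(ii)$, the argument mirrors the approximation analysis of Theorem~\ref{thm:generalkernel}: take an optimal connected dominating set $D^*$ of $G$ with $|D^*| \le k$ (else the inequality is trivial), apply Proposition~\ref{prop-covering-family} to obtain a $(D^*,t)$-covering family $\mathcal F(D^*,t)$ with $|\mathcal F| \le |D^*|/t + 1$ and total size $\le (1+1/t)|D^*|+1$, replace each small subtree $T$ by the corresponding group Steiner tree over the groups it meets (these are preserved in $G[Y]$ by the Trees closure lemma since each $T$ has $\le 2t$ vertices hence meets $\le 2t$ groups and has size $\le 2t \le 4t = 2rq/\ldots$), then invoke the $c$-exchange property of $Z$ (through Proposition~\ref{prop-dom-core}) to ``repair'' the resulting union into an $S$-dominator — here really a $Z$-dominator, which suffices since $Z$ is the subset we dominate in SCDS — without increasing the number of components, and finally re-connect with Proposition~\ref{approx-cds-by-ds} inside $G[Y]$ at an additive cost of $2|\mathcal F|$. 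Summing up gives a connected $Z$-dominator in $G[Y]$ of size at most $(1 + 3/t)|D^*| + O(1)$, and choosing $t$ of order $3/\epsilon$ yields $\text{OPT}_{\text{SCDS}}((G[Y],Z),k) \le (1+\epsilon)\text{OPT}_{\text{CDS}}(G,k)$.

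The main obstacle I anticipate is the interface between the $c$-exchange domination core and the group Steiner tree replacement step: unlike in Theorem~\ref{thm:generalkernel}, where $\sim_Z$-classes are literal twins and any representative dominates exactly what the original vertex did, here the relevant equivalence is via $1$-projections onto $Z'$ (not onto $Z$), so I must be careful that (a) replacing a solution vertex by another vertex in the same $\simeq$-class preserves domination of $Z$, which is why we closed $Z'$ under $\cl_1$ and why $Z \subseteq \dot Z \subseteq Z'$, and (b) the bookkeeping of connected components through both the $c$-exchange repair (Proposition~\ref{prop-dom-core} guarantees the component count does not increase) and the final reconnection stays within the claimed additive slack. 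Getting the constant $t$ and the various closure constants to compose into a single function $f(\epsilon)$ with the degree of $k$ equal to one is the delicate accounting, but it is exactly analogous to — and in fact cleaner than — the biclique-free case, since here every blow-up is by a constant rather than by $k^{O(t)}$.
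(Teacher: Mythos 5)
Your proposal follows the paper's proof essentially step for step: iterate Lemma~\ref{lem:reduce-corce} to get the linear-size $c$-exchange core $Z$, fold a connected dominating set $\ddot{O}$ (obtained exactly as you suggest, via Proposition~\ref{prop-dom-core} applied to $Z$ as a $Z$-dominator followed by Proposition~\ref{approx-cds-by-ds}) into $\dot{Z}$, take $Z'=\cl_1(\dot{Z})$, build the apex graph $\ddot{G}$, apply the tree-closure lemma with $r=2$, $q=2t$, and run the covering-family/group-Steiner replacement for the approximation bound; your resolution of the key subtlety (same $\simeq$-class implies the same neighborhood in $Z'\supseteq Z$, which is why $Z\cup\ddot{O}$ is closed into $Z'$) is exactly the paper's. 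One step should be dropped: the invocation of Proposition~\ref{prop-dom-core} to ``repair'' the replaced union $D''$ is both unnecessary (the class-preserving replacement already yields a $Z'$-dominator verbatim) and, if executed literally, unsafe, since the exchange sets $B$ live in $V(G)$ and could leave $Y$; the paper instead argues directly that $D''$ dominates $Z'\supseteq Z\cup\ddot{O}$ and reconnects inside $G[Y]$ via Proposition~\ref{approx-cds-by-ds} with the connected set $Z\cup\ddot{O}$ playing the role of $X$ (also making your extra set $W$ redundant, as $G[Y]$ is already connected because it contains $\ddot{O}$).
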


\begin{proof}
We start by designing an algorithm ${\cal A}$ with the desired properties.  
Starting with Lemma~\ref{lem:reduce-corce},
%~\ref{lem-decomp}, and~\ref{lem:irrelevant}, 
Algorithm ${\cal A}$ either 
concludes that $\cds(G) > k$ or finds (in polynomial time) a $c$-exchange domination core $Z$ of size at most $\Oh(k)$. 
Since $Z$ is a $Z$-dominator itself, using Proposition~\ref{prop-dom-core}, we find a set $O$, such that $|O|\le |Z|$ and $O$ dominates $G$. 
Moreover, by Proposition~\ref{approx-cds-by-ds}, we find a connected superset $\ddot{O}$ of $O$, such that $|\ddot{O}|\le 3|O|$.
Next, we let $\dot{Z}=Z\cup \ddot{O}$ and compute $Z' = \cl_1(\dot{Z})$ using Lemma~\ref{lem-closure}; then we have that $|Z'| = \Oh(k)$. 
We partition $V(G) \setminus Z'$ into equivalence classes with respect to $\simeq$ (i.e. $u\simeq v\quad \Leftrightarrow \quad \pr{1}{u}{Z'} = \pr{1}{v}{Z'}$). 
From Lemma~\ref{cl:num-classes2}, we know that the equivalence relation $\simeq$ has at most $\Oh(k)$ classes. 
Let $X'$ be the set of size $\Oh(k)$ obtained by applying Lemma~\ref{lem-tree-closure} to graph 
$\ddot{G}$ with $X = Z' \cup U$, $r = 2$, and $q = 2t$ (we fix the value of $t$ later and $U$ is the set defined earlier). 

% Note that since $\ddot{O}\subseteq Z'$, it follows that $Z'$ is a connected dominating set of $G$.
%Since $Z'$ dominates the $c$-exchange domination core $Z$, from Proposition~\ref{prop-dom-core} it follows that we can 
%find a set $\ddot{Z}$, such that $|\ddot{Z}|\le |Z'|$ and $\ddot{Z}$ dominates $G$. 
%Hence, we can apply Proposition~\ref{approx-cds-by-ds} to 
%obtain a set $W$ of size at most $4|Z'|$ such that $Z' \cup \ddot{Z}\cup W$ is a connected dominating set in $G$. 
% Since $Z'$ dominates $Z' \supseteq Z$, we know that some subset of $Z'$ (and therefore $Z'$ itself) 
% is a dominating set for $G$ (by the definition of a $c$-exchange domination core). Hence, we can apply Proposition~\ref{approx-cds-by-ds} to 
% obtain a set $W$ of size at most $2|Z'|$ such that $Z' \cup W$ is a connected dominating set in $G$. 
Algorithm ${\cal A}$ outputs instance $((G',Z),k)=((G[Y],Z),k)$ which asks to dominate $Z$, 
where  $Y= Z'\cup  (X' \setminus U)$.  
Note that, since $G[Y]$ contains a connected dominating set $\ddot{O}$ of~$G$, it is connected as well.
% induced subgraph of the original graph $G$. 
Moreover, the total number of vertices 
in $G[Y]$ is at most $\Oh(k)$. 
%The correctness of the algorithm follows from the correctness of each step. %(and clearly $Y$ contains a $c$-exchange domination core).  

We now show that $\text{OPT}_{\text{SCDS}}((G[Y],Z),k)\leq (1+\epsilon)\text{OPT}_{\text{CDS}}(G,k)$. 
Consider the graph $D^*$ induced by an optimal connected dominating set of $G$. 
If  $\vert V(D^*) \vert > k$, then $\text{OPT}_{\text{SCDS}}((G[Y],Z),k)\leq (1+\epsilon)\text{OPT}_{\text{CDS}}(G,k)$ holds 
trivially. So we assume that $\vert V(D^*) \vert \leq k$. 
We let $\mathcal{F}(D^*,t) = \{T_1, T_2, \cdots,T_\ell\}$ denote a $(D^*,t)$-covering family. 
Proposition~\ref{prop-covering-family} implies that %$|\mathcal{F}(D^*,t)| 
$\ell \leq \frac{|V(D^*)|}{t} + 1$ 
and $\sum_{T \in \mathcal{F}(D^*,t)}{|V(T)|} \leq (1 + \frac{1}{t}) |V(D^*)| + 1$. 
Moreover, the size of each subtree $T$ is at most $2t$. We define groups on $V(G)$ as follows. 
Each vertex $v \in Z'$ belongs to its own unique group $q_v$ and each vertex in $V(G) \setminus Z'$ belongs to group $q_\kappa$, i.e. 
vertices in the same equivalence class in $\simeq$ belong to the same group. 

\begin{claim} \label{claim:expansion1}
For any $T \in \mathcal{F}(D^*,t)$ there exists a tree $T'$ in $G'$ of size at most $|V(T)|$ which contains 
at least one vertex from each group appearing in $T$. 
\end{claim}

\begin{proof}
Recall that, when constructing the graph $\ddot{G}$, we added one universal vertex for each equivalence class in $V(G) \setminus Z'$. 
Hence, after applying Lemma~\ref{lem-tree-closure}, we know that for any $Y \subseteq Z' \cup U$ of size at 
most $2t$ (which is exactly a subset of the groups) if $\stt_G(Y) \leq 4t$ then $\stt_{\ddot{G}}(Y) = \textbf{st}_G(Y)$. Every vertex in $Z'$ belongs to 
a distinct group and every vertex $u_{\kappa} \in U$ is connected to all vertices in $\kappa$. 
Hence, any tree of size greater than one containing a vertex $u_{\kappa}$ must also contain a neighbor of $u_{\kappa}$ (from the same group). 
The existence of $T$ implies that there exists a tree of size at most $2t$ connecting all groups appearing in $T$. 
Hence, a tree $T'$ certifying this fact exists in~$G'$.   
\end{proof}

We now construct a new family $\mathcal{F}'$ which consists of replacing each $T \in \mathcal{F}(D^*,t)$ 
by a set $T'$ in $G'$.  Let $D'' = \bigcup_{T' \in \mathcal{F}'}V(T')$. By the previous claim and the fact 
that $\sum_{T \in \mathcal{F}(D^*,t)}{|V(T)|} \leq (1 + \frac{1}{t}) |V(D^*)| + 1$, we know that $|D''| \leq (1 + \frac{1}{t}) |V(D^*)| + 1$ 
and $D''$ dominates $Z'$ (since we never reduce the number of groups in $D''$). 
Moreover, $D''$ consists of at most $\frac{|V(D^*)|}{t} + 1$ components. Note that $\ddot{O}$ is a connected $Z$-dominator and therefore the set $\ddot{O}\cup Z$ is also connected. 
Now, since $Z\cup \ddot{O}\subseteq{Z'}$, $D''$ dominates $Z\cup \ddot{O}$ and applying Proposition~\ref{approx-cds-by-ds}, we obtain 
a connected $(Z\cup \ddot{O})$-dominator, and hence also $Z$-dominator, $D'$ of size at most 
$\frac{2|V(D^*)|}{t} + 2 + (1 + \frac{1}{t}) |V(D^*)| + 1 = \frac{2|V(D^*)|}{t} + 3 + |V(D^*)| + \frac{|V(D^*)|}{t} = (1 + \frac{3}{t})|V(D^*)| + 3$.  
Setting $t = \frac{6}{\epsilon}$ implies that $D'$ is a $(1 + \epsilon)$-approximate solution, for $\text{OPT}_{\text{CDS}}(G,k) \geq \frac{6}{\epsilon}$ (note that 
$3 \leq \frac{\epsilon}{2} \cdot \text{OPT}_{\text{CDS}}(G,k)$). 
We can assume $\text{OPT}_{\text{CDS}}(G,k) \geq \frac{6}{\epsilon}$ without loss of generality, as otherwise a simple brute-force algorithm runs in polynomial time. 
\end{proof}

\pagebreak
Using the same arguments as in the proof of Theorem~\ref{thkddfree} and replacing Lemma~\ref{cdcore} 
by Lemma~\ref{lem:expansionconn}, we obtain the following theorem. Note however that the solution-lifting algorithm 
might have to apply Proposition~\ref{prop-dom-core} to guarantee domination of the input graph.  

\begin{theorem}
For every $\epsilon > 0$, CDS 
%{\sc Connected Dominating Set} 
admits a $(1 + \epsilon)$-approximate bi-kernel with $\Oh(f(\epsilon) \cdot k)$ vertices on graphs of bounded expansion, 
where $f$ is some computable function.  
\end{theorem}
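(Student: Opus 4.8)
The plan is to instantiate the two-step template already used for Theorem~\ref{thkddfree} (and, at the abstract level, Theorem~\ref{thm:generalkernel}), replacing the biclique-free core-shrinking lemma by its bounded-expansion counterpart. Concretely, on input $(G,k)$ with $G\in\mathcal{G}$ (if $G$ is disconnected we immediately output a trivial negative instance, since then $\text{OPT}_{\text{CDS}}(G,k)=k+1$), the \emph{reduction algorithm} runs the polynomial-time procedure of Lemma~\ref{lem:expansionconn} with parameter $\epsilon$. If that procedure reports $\cds(G)>k$, we output a fixed trivial negative \textsc{SCDS}-instance. Otherwise it returns a vertex set $Y$ with $|Y|=\Oh(f(\epsilon)\cdot k)$ together with a subset $Z\subseteq Y$ which is a $c$-exchange domination core of $G$ for the constant $c=4C_{\textbf{eq}}+1$, and which moreover satisfies $\text{OPT}_{\text{SCDS}}((G[Y],Z),k)\le(1+\epsilon)\,\text{OPT}_{\text{CDS}}(G,k)$; we output the \textsc{SCDS}-instance $((G[Y],Z),k)$. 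The bound $|Y|=\Oh(f(\epsilon)\cdot k)$ is exactly the asserted linear bi-kernel size, and since the output problem \textsc{SCDS} differs from \textsc{CDS} this is a bi-kernel rather than a kernel.

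Next I would design the \emph{solution lifting algorithm}. Given a solution $D'$ for $((G[Y],Z),k)$: if $D'$ is not a connected $Z$-dominator of $G[Y]$, output $\emptyset$; if $|D'|>k$, output $V(G)$. In the remaining case $D'$ is a connected $Z$-dominator of $G[Y]$ with $|D'|\le k$, and because $G[Y]$ is an induced subgraph of $G$, $D'$ is also a connected $Z$-dominator of $G$. Here is the one genuinely new point, and the step I expect to be the main obstacle: a $Z$-dominator in the kernel need not dominate all of $G$, so we cannot simply return $D'$. Since $Z$ is a $c$-exchange domination core and $c$ is a fixed constant, we invoke Proposition~\ref{prop-dom-core} on $G$ with the $Z$-dominator $D'$ to obtain, in time $\Oh(|V(G)|^{2c})$, a set $D$ with $|D|\le|D'|\le k$, with $D$ dominating all of $G$, and with at most as many connected components as $D'$ --- hence $D$ is connected, i.e.\ a connected dominating set of $G$ of size at most $k$. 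This is precisely the use of Proposition~\ref{prop-dom-core} flagged in the theorem's introductory remark, and verifying that it preserves size, connectivity, and polynomial running time (for constant $c$) is the crux of the argument; everything else is bookkeeping.

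Finally I would check the approximation inequality $\Cds(G,k,D)/\text{OPT}_{\text{CDS}}(G,k)\le(1+\epsilon)\,\text{SCDS}((G[Y],Z),k,D')/\text{OPT}_{\text{SCDS}}((G[Y],Z),k)$ by cases, exactly as in the proof of Theorem~\ref{thkddfree}. If $D'$ is invalid, both sides are $\infty$. If $|D'|>k$, then $\text{SCDS}((G[Y],Z),k,D')=k+1=\Cds(G,k,V(G))=\Cds(G,k,D)$, and the inequality follows from $\text{OPT}_{\text{SCDS}}((G[Y],Z),k)\le(1+\epsilon)\,\text{OPT}_{\text{CDS}}(G,k)$. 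If $|D'|\le k$ and valid, then $\Cds(G,k,D)=|D|\le|D'|=\text{SCDS}((G[Y],Z),k,D')$, so $\Cds(G,k,D)/\text{OPT}_{\text{CDS}}(G,k)\le|D'|/\text{OPT}_{\text{CDS}}(G,k)\le(1+\epsilon)\,|D'|/\text{OPT}_{\text{SCDS}}((G[Y],Z),k)$, as required. The trivial-instance case is checked directly, using $\text{OPT}_{\text{CDS}}(G,k)=k+1$ there. Combining the size bound, the polynomial running times of both algorithms, and this inequality yields the $(1+\epsilon)$-approximate bi-kernel with $\Oh(f(\epsilon)\cdot k)$ vertices.
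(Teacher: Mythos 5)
Your proposal is correct and follows essentially the same route as the paper, which proves this theorem by rerunning the argument of Theorem~\ref{thkddfree} with Lemma~\ref{cdcore} replaced by Lemma~\ref{lem:expansionconn} and with the solution-lifting algorithm invoking Proposition~\ref{prop-dom-core} to upgrade the lifted connected $Z$-dominator to a connected dominating set of all of $G$. You correctly identify that last step as the one genuinely new ingredient, and your case analysis of the approximation inequality matches the paper's.
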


\section{Nowhere dense graphs}\label{sec:nd}
In the following, we fix a nowhere dense class $\Cc$ of 
graphs, $k,r\in \N$ and $\alpha>1$. Furthermore, let
$t= \frac{\alpha-1}{4r+2}$. As we deal with the {\sc Connected Distance-$r$ Dominating Set} problem we may assume
that all graphs in $\Cc$ are connected. 

\begin{definition}
Let $G$ be a graph. A set $Z\subseteq V(G)$ is a \emph{$(k,r)$-domination core} for $G$ if every set $D$ of size at most $k$ that $r$-dominates $Z$ also $r$-dominates $G$ 
\end{definition}

Domination cores of polynomial size exist for nowhere dense
classes, as the following lemma shows. 

\begin{lemma}[Kreutzer et al.~\cite{siebertz2016polynomial}]
\label{lem:findcore1}
There exists a polynomial $q$ (of degree depending only on~$r$) and a polynomial time algorithm that, given a graph $G\in\Cc$ and $k\in\N$
either correctly concludes that $G$ cannot be $r$-dominated by a 
set of at most $k$ vertices, or finds a $(k,r)$-domination core $Z\subseteq V(G)$ of $G$ of size at most $q(k)$. 
\end{lemma}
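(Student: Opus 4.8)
The plan is to shrink $Z$ one vertex at a time, exactly as in the proof of Lemma~\ref{cdcore} for $K_{d,d}$-free graphs, except that the combinatorial engine is no longer biclique-freeness but the \emph{uniform quasi-wideness} of nowhere dense classes. We start from the trivial $(k,r)$-domination core $Z=V(G)$ and maintain the invariant that $Z$ is a $(k,r)$-domination core. I claim that as long as $|Z|$ exceeds a polynomial $q(k)$ (whose degree depends only on $r$), one can in polynomial time either certify that no set of at most $k$ vertices $r$-dominates $G$ — hence $\cds_r(G)>k$ as well — or find a vertex $z\in Z$ with $Z\setminus\{z\}$ still a $(k,r)$-domination core. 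Iterating the latter until $|Z|\le q(k)$ yields the lemma, and since each step runs in polynomial time and decreases $|Z|$ by one, the whole procedure is polynomial.

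So fix the current core $Z$ with $|Z|>q(k)$. The one external ingredient is the \emph{polynomial} form of uniform quasi-wideness for nowhere dense classes due to Kreutzer et al.~\cite{siebertz2016polynomial}: for the radius $2r$ there are a constant $s=s(r)$ and a polynomial $N=N_r$ such that every set $W$ with $|W|\ge N_r(m)$ contains, after removing a set $S\subseteq V(G)$ with $|S|\le s$, an $m$-element subset $A\subseteq W\setminus S$ that is $2r$-independent in $G-S$ (pairwise distance more than $2r$ in $G-S$), and such that $S$ and $A$ are computable in polynomial time. Apply this to $W=Z$ with $m:=(k+2)\cdot(r+2)^{s}$, which is legitimate once $q(k)\ge N_r(m)$. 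For $a\in A$ let its \emph{profile} be $\pi(a)=\big(\min\{\mathrm{dist}_G(a,x),\,r+1\}\big)_{x\in S}\in\{0,\dots,r+1\}^{S}$; since $|S|\le s$ there are at most $(r+2)^{s}$ distinct profiles, so by the pigeonhole principle some profile class $\kappa\subseteq A$ has $|\kappa|\ge k+2$. All distances involved are distances from the constantly many vertices of $S$, computable by breadth-first search, so $\kappa$ is found in polynomial time.

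Output an arbitrary $z\in\kappa$. First suppose the common profile of $\kappa$ is $(r+1,\dots,r+1)$, i.e.\ every $a\in\kappa$ has $\mathrm{dist}_G(a,x)>r$ for all $x\in S$. Then I report that no $k$ vertices $r$-dominate $G$: any hypothetical $D$ with $|D|\le k$ that $r$-dominates $\kappa\setminus\{z\}\subseteq V(G)$ yields, for each $a$, a witness $d\in D$ at distance $\le r$; a shortest $d$–$a$ path cannot meet $S$ (that would force $\mathrm{dist}_G(x,a)\le r$ for some $x\in S$), hence $\mathrm{dist}_{G-S}(d,a)\le r$, and by $2r$-independence of $\kappa$ in $G-S$ no $d\in D$ can play this role for two elements of $\kappa$, so $|\kappa\setminus\{z\}|\le|D|\le k$, contradicting $|\kappa\setminus\{z\}|\ge k+1$. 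Otherwise I claim $Z\setminus\{z\}$ is still a $(k,r)$-domination core. Let $D$ with $|D|\le k$ $r$-dominate $Z\setminus\{z\}$, hence $\kappa\setminus\{z\}$, a set of at least $k+1>|D|$ vertices. As above, at most $|D|\le k$ of these vertices can be $r$-dominated by $D$ along a path avoiding $S$ (one per element of $D$, by $2r$-independence), so some $a\in\kappa\setminus\{z\}$ is $r$-dominated only through $S$: there are $d\in D$ and a shortest $d$–$a$ path meeting $S$ in a vertex $x$, whence $\mathrm{dist}_G(d,x)+\mathrm{dist}_G(x,a)\le r$. Since $z$ and $a$ share the same profile, $\mathrm{dist}_G(x,z)=\mathrm{dist}_G(x,a)\le r$, so $\mathrm{dist}_G(d,z)\le r$ and $D$ $r$-dominates $z$ as well. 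Thus $D$ $r$-dominates all of $Z$, and since $Z$ is a $(k,r)$-domination core, $D$ $r$-dominates $G$. Taking $q(k):=N_r\big((k+2)(r+2)^{s(r)}\big)+s(r)$ — a polynomial in $k$ whose degree equals that of $N_r$ and hence depends only on $r$ — finishes the argument.

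The step I expect to be the main obstacle, and the reason I would invoke \cite{siebertz2016polynomial} as a black box rather than reprove it, is the \emph{polynomiality} of the quasi-wideness function $N_r$ on nowhere dense classes, together with the accompanying polynomial-time extraction of $S$ and $A$: the qualitative uniform quasi-wideness underlying the fixed-parameter tractability result of Dawar and Kreutzer~\cite{DawarK09} only asserts that \emph{some} bound $N_r(m)$ exists, which would leave the core of superpolynomial size. Everything else above — pigeonholing over the constantly many profiles and the ``path through $S$'' exchange argument, which simultaneously delivers the redundancy of $z$ and the negative certificate — is elementary and follows the template of Lemma~\ref{cdcore}.
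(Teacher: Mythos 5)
Your argument is correct, but note that the paper itself offers no proof of Lemma~\ref{lem:findcore1} --- it is imported verbatim from Kreutzer et al.~\cite{siebertz2016polynomial}, and what you have written is essentially a reconstruction of that reference's argument: iteratively shrink $Z=V(G)$ using polynomial uniform quasi-wideness to extract a small deletion set $S$ and a large $2r$-independent set, pigeonhole on distance profiles to $S$, and use the ``one dominator per independent vertex unless the path passes through $S$'' dichotomy to either certify $\ds_r(G)>k$ or discard a vertex. You correctly identify the polynomial bound on the quasi-wideness function (together with its constructive extraction) as the genuinely hard ingredient that must be cited rather than reproved, and the remaining combinatorics (the profile-equality step $\mathrm{dist}_G(x,z)=\mathrm{dist}_G(x,a)\le r$ forcing $z$ to be dominated, and the counting via $2r$-independence in $G-S$) checks out.
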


We remark that the non-constructive
polynomial bounds that follow from~\cite{siebertz2016polynomial}
can be replaced by much improved constructive bounds~\cite{pilipczuk2017wideness}. 

\medskip
We will work with the following parameterized
minimization variant of the {\sc Connected Distance-$r$ Dominating Set} problem. 

\[\textsc{$r$-CDS}(G,k,D)=
\begin{cases}
\infty & \text{if $D$ is not a connected distance-$r$}\\
&\quad \text{dominating set of $G$}\\\min\{|D|,k+1\} & \text{otherwise.}
\end{cases}\]

As indicated earlier, we compute only a bi-kernel and reduce
to the following annotated version of $r$-CDS. 

\[\textsc{$r$-SCDS}((G,Z),k,D)=
\begin{cases}
\infty & \text{if $D$ is not a connected distance-$r$}\\
&\quad \text{dominating set of $Z$ in $G$}\\\min\{|D|,k+1\} & \text{otherwise.}
\end{cases}\]

\medskip

%We now come to methods that are required specifically for the connected
%dominating set problem. 
The following lemma is proved just as Proposition~\ref{approx-cds-by-ds}.

\begin{lemma}\label{lem:ds-cds}
Let $G$ be a graph, $Z\subseteq V(G)$ a connected set in $G$ and 
let $D$ be a distance-$r$ dominating set for $Z$ such that $G[D]$ has at most 
$p$ connected components. Then we can compute in polynomial time 
a set $Q$ of size at most
$2rp$ such that $G[D \cup Q]$ is connected.
\end{lemma}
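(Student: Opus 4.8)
The plan is to imitate the proof of Proposition~\ref{approx-cds-by-ds} almost verbatim, replacing ``domination'' (distance-$1$) by ``distance-$r$ domination'' and paying for the larger radius by a factor of roughly~$r$ in the bound on $|Q|$. So first I would set $Q=\emptyset$ and repeatedly apply one of two merge steps as long as $G[D\cup Q]$ is disconnected. The distance-$r$ analogue of step~(3) is: if there is a vertex $z\in Z\setminus(D\cup Q)$ that is $r$-dominated by at least two different connected components of $G[D\cup Q]$, then pick for each of these two components a witnessing path of length at most~$r$ from $z$ to that component and add the (at most $2r$) internal vertices of those two paths to $Q$; this connects the two components into one, so the number of components of $G[D\cup Q]$ drops by at least one while $|Q|$ grows by at most $2r$. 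The distance-$r$ analogue of step~(4) is: if there is an edge $uv$ with $u,v\in Z$ such that at least two components of $G[D\cup Q]$ each $r$-dominate one of $\{u,v\}$, add the internal vertices of a path of length $\le r$ from $u$ to its component and of a path of length $\le r$ from $v$ to its component, together with $u$ and $v$ themselves — at most $2(r+1)$ new vertices — which again merges at least two components into one.

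Next I would argue termination and the size bound. Since $G[D]$ has at most $p$ components, after at most $p-1$ successful merge steps $G[D\cup Q]$ is connected; each step adds at most $2(r+1)$ vertices to $Q$, so $|Q|\le 2(r+1)(p-1)$, which is within the claimed $O(rp)$ bound (one can also be slightly more careful and shave constants, e.g.\ noting the first of the two paths in a step may reuse vertices, but $2rp$ suffices as stated). The remaining point is correctness: I must show that if neither merge step applies, then $G[D\cup Q]$ is already connected. This is the exact analogue of the final paragraph of the proof of Proposition~\ref{approx-cds-by-ds}. Since step~(3) does not apply, every vertex of $Z$ is $r$-dominated by exactly one component of $G[D\cup Q]$, so there is a well-defined map from $Z$ to the set of components. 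Since step~(4) does not apply, for every edge $uv$ of $G[Z]$ the endpoints $u$ and $v$ map to the same component — otherwise two components would $r$-dominate $\{u,v\}$ and step~(4) would fire. Because $G[Z]$ is connected, this map from $Z$ to components is therefore constant, i.e.\ all of $Z$ is $r$-dominated by a single component; but $D$ (hence $D\cup Q$) $r$-dominates $Z$, and every vertex of $D\cup Q$ lies on a path of length $\le r$ witnessing $r$-domination of some vertex of $Z$ — here I should observe that each vertex we ever put into $Q$ lies on such a path by construction, and every vertex of $D$ either $r$-dominates some $z\in Z$ directly or, if $D$ were to contain a vertex in a component disjoint from the rest, that component would fail to $r$-dominate any vertex of $Z$, which is impossible once we also use that $Z\neq\emptyset$ and $G[D]$ $r$-dominates $Z$. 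Hence $G[D\cup Q]$ has a single component.

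The main obstacle, such as it is, is the last bookkeeping point: unlike the distance-$1$ case, a component of $G[D]$ need not contain a vertex of $Z$ at all, and I must rule out a ``useless'' component of $D$ that $r$-dominates nothing in $Z$. The clean way around this is to note that if $G[D]$ had a component $C$ with no vertex within distance $r$ of any vertex of $Z$, then $D\setminus C$ would still be a distance-$r$ dominator of $Z$ with fewer components, so we may assume from the outset (after a trivial polynomial-time preprocessing that discards such components) that every component of $G[D]$ $r$-dominates at least one vertex of $Z$; then the argument above goes through unchanged, and discarding components only helps the bound on $|Q|$. Everything else is a direct transcription of Proposition~\ref{approx-cds-by-ds} with $r$ in place of $1$, so I would simply say ``the proof is identical to that of Proposition~\ref{approx-cds-by-ds}, replacing each single edge on a connecting path by a path of length at most~$r$,'' and record the resulting bound $|Q|\le 2rp$.
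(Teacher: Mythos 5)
Your proposal is correct and is exactly the paper's approach: the paper's entire proof of Lemma~\ref{lem:ds-cds} is the sentence ``The following lemma is proved just as Proposition~\ref{approx-cds-by-ds}'', and your transcription with paths of length at most $r$ in place of single edges is the intended argument (each merge in fact costs only $2r$ new vertices --- each of the two connecting paths contributes its endpoint in $Z$ plus at most $r-1$ internal vertices --- so $p-1$ merges give $2r(p-1)\le 2rp$, tightening your count of $2(r+1)(p-1)$, which can slightly exceed $2rp$). The ``useless component'' issue you raise is a genuine subtlety that the paper glosses over: a component of $G[D]$ that $r$-dominates no vertex of $Z$ really can make the stated bound fail (take $G$ a long path, $Z$ its first vertex, and $D$ its two endpoints), so some hypothesis or preprocessing like yours is needed --- though note that discarding such components yields connectivity of $G[D'\cup Q]$ for a subset $D'\subseteq D$ rather than of $G[D\cup Q]$ literally, which is what the paper's applications (where $Z=V(G)$ or every component of $D$ meets $N_r[Z]$) actually require.
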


\medskip
The lemma implies that we may assume that our
domination cores are connected. 

\begin{corollary}\label{lem:findcore}
There exists a polynomial $q$ (of degree depending
only on $r$) and a polynomial time algorithm
that, given a graph $G\in\Cc$ and $k\in\N$
either correctly concludes that~$G$ cannot be $r$-dominated by a 
set of at most $k$ vertices, or finds a 
$(k,r)$-domination core $Z\subseteq V(G)$ of~$G$ of size at most $q(k)$ such that $G[Z]$ is connected. 
\end{corollary}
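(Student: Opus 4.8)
The plan is to enlarge the (possibly disconnected) domination core produced by Lemma~\ref{lem:findcore1} to a connected set while keeping its size polynomial in $k$. This is legitimate because any superset of a $(k,r)$-domination core is again a $(k,r)$-domination core: if $|D|\le k$ and $D$ $r$-dominates a set $Z\supseteq Z_0$, then $D$ $r$-dominates $Z_0$, hence $G$. The subtlety is that one cannot connect $Z_0$ ``blindly'' — on a long path, joining two vertices costs $\Omega(|V(G)|)$ extra vertices — so the enlargement must go through the domination structure. First I would run the algorithm of Lemma~\ref{lem:findcore1}; if it reports that $G$ cannot be $r$-dominated by at most $k$ vertices we are done, and otherwise we obtain a $(k,r)$-domination core $Z_0$ with $|Z_0|\le q_0(k)$ for a polynomial $q_0$ whose degree depends only on $r$.

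Next I would run the greedy set-cover heuristic for the instance ``cover $Z_0$ by the sets $\{z\in Z_0 : \mathrm{dist}_G(v,z)\le r\}$, $v\in V(G)$'', obtaining a set $D$ that $r$-dominates $Z_0$. If its size exceeds $k(1+\ln|Z_0|)$, then more than $k$ vertices are needed to $r$-dominate $Z_0$, hence also to $r$-dominate $G$, and we may correctly report infeasibility; otherwise $|D|\le k(1+\ln q_0(k))=O(k\log k)$. Covering only $Z_0$ rather than all of $V(G)$ is what keeps the harmonic overhead $O(\log k)$ instead of $O(\log|V(G)|)$, and thus polynomial in $k$.

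It remains to connect $D$ and attach $Z_0$. Let $D_1,\dots,D_p$ be the connected components of $G[D]$ (so $p\le|D|$), pick $d_i\in D_i$ for each $i$, and compute a $2$-approximate minimum Steiner tree $\mathcal{T}$ in $G$ for the terminals $\{d_1,\dots,d_p\}$. With $B:=(1+2r)k+rp$, if $|V(\mathcal{T})|>2B$ we report infeasibility: were there an $r$-dominating set $D^\star$ of $G$ with $|D^\star|\le k$, then Lemma~\ref{lem:ds-cds} applied with the connected set $V(G)$ would produce a connected $r$-dominating set $\tilde D\supseteq D^\star$ of $G$ with $|\tilde D|\le(1+2r)k$, and appending to $\tilde D$ a shortest path of length at most $r$ from each $d_i$ to $\tilde D$ (such paths exist since $\tilde D$ $r$-dominates $G$) would yield a connected subgraph of order at most $(1+2r)k+rp=B$ spanning all $d_i$, contradicting $\mathbf{st}_G(\{d_1,\dots,d_p\})>B$. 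Otherwise $|V(\mathcal{T})|\le 2B=O(rk\log k)$. Finally I would output $Z:=Z_0\cup D\cup V(\mathcal{T})\cup\bigcup_{z\in Z_0}V(P_z)$, where $P_z$ is a shortest $G$-path from $z$ to $D$ (of length at most $r$, since $D$ $r$-dominates $Z_0$). Then $G[Z]$ is connected — $\mathcal{T}$ is connected and meets each component $D_i$ in $d_i$, so $D\cup V(\mathcal{T})$ is connected, and every $z\in Z_0$ together with the internal vertices of $P_z$ is attached to $D$ — the set $Z$ contains $Z_0$ and is therefore a $(k,r)$-domination core, and $|Z|\le q_0(k)+O(k\log k)+2B+(r+1)q_0(k)$ is bounded by a polynomial in $k$ of degree depending only on $r$. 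All steps run in polynomial time.

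The main obstacle is exactly this connectivity step. The crucial points are that the core cannot be connected without exploiting domination; that covering only the small core keeps the greedy blow-up polylogarithmic in $k$ rather than in $|V(G)|$; and that the greedily found dominator $D$ can be reconnected using the constant-factor approximability of \textsc{Steiner Tree}, where the a priori bound $B$ on the optimal Steiner tree — itself a consequence of Lemma~\ref{lem:ds-cds} in the feasible case — simultaneously makes this cheap and certifies infeasibility whenever it fails.
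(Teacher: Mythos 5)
Your proof is correct, but it takes a genuinely different and more heavyweight route than the paper. The paper's own argument is: every superset of a $(k,r)$-domination core is again a core; then it checks whether some vertex $v$ lies at distance greater than $2r$ from the core $Y$ returned by Lemma~\ref{lem:findcore1} --- if so, one may immediately report infeasibility, because any hypothetical $r$-dominator $A$ of $Y$ with $|A|\le k$ could be truncated to $B=N_r[Y]\cap A$, which still $r$-dominates $Y$ but cannot reach $v$, contradicting the core property of $Y$. In the remaining case $Y$ itself is a distance-$2r$ dominating set of the (connected) graph $G$, so Lemma~\ref{lem:ds-cds} applies \emph{directly} with $D=Y$ and $Z=V(G)$ and connects the core with $O(r\cdot|Y|)$ extra vertices --- no auxiliary dominator is ever built. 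You instead construct a fresh $r$-dominator $D$ of $Z_0$ by the greedy \textsc{Set Cover} heuristic (correctly noting that covering only $Z_0$ keeps the overhead at $O(\log k)$), reconnect $D$ with a $2$-approximate \textsc{Steiner Tree} guarded by the budget $B=(1+2r)k+rp$ (whose feasibility certificate is itself an application of Lemma~\ref{lem:ds-cds}), and finally attach $Z_0$ by short paths. All of these steps are sound, the infeasibility reports are justified, and the final bound is polynomial of degree depending only on $r$, so the proof stands; what the paper's route buys is the observation that the core already $2r$-dominates the whole graph in the feasible case, which eliminates the set-cover and Steiner-tree machinery and the attendant $\log k$ factor, giving a cleaner $(O(r)\cdot q(k))$-size connected core.
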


\begin{proof}
Assume that when applying Lemma~\ref{lem:findcore1}, a 
$(k,r)$-domination core $Y$ is returned, otherwise we
return that no distance-$r$ dominating set of size at most
$k$ exists. 

First observe that every superset $X\supseteq Y$
is also a $(k,r)$-domination core of $G$ (every set of size at most $k$ which $r$-dominates $X$ in particular $r$-dominates $Y$, and
hence all of $G$). 

Assume there is a vertex $v\in V(G)$ with distance greater 
than $2r$ from $Y$. Since $Y$ is a $(k,r)$-domination core, 
every set of size at most $k$ that $r$-dominates $Y$ also $r$-dominates~$G$. If there exists a distance-$r$ dominator $A$ of $Y$ 
of size at most $k$, also $B=N_r[Y]\cap A$ 
(the intersection of~$A$ with the
closed $r$-neighborhood of $Y$) is a distance-$r$ dominator of 
$Y$ of size at most $k$. However, as $v$ has distance 
greater than $2r$ from $Y$, $B$ cannot be a distance-$r$
dominating set of $G$. Hence, if there is $v\in V(G)$ with
distance greater than $2r$ from $Y,$ we may return 
that~$G$ cannot be $r$-dominated by a set of at most $k$
vertices. Otherwise, it follows that $Y$ is a distance-$2r$ 
dominating set of $G$. We can hence apply
Lemma~\ref{lem:ds-cds} with parameters $Z=V(G)$ (we assume
that all graphs $G\in\Cc$ are connected) and $D=Y$
to find a connected set $X\supseteq Y$ of size at most
$(2r+1)\cdot q(k)$ which is a connected $(k,r)$-domination core. 
\end{proof}

We will need a precise description of how vertices interact with the domination core.

%\begin{definition}
%Let $G$ be a graph and let $A\subseteq V(G)$ be a subset of vertices. For vertices $v\in A$ and $u\in V(G)\setminus A$, a path $P$ connecting $u$ and $v$ is called {\em{$A$-avoiding}}
%if all its vertices apart from $v$ do not belong to $A$. 
%\end{definition}

\begin{definition}
%The {\em{$r$-projection}} of a vertex $u\in V(G)\setminus A$ onto~$A$, denoted $M^G_r(u,A)$ is the set of all vertices $v\in A$ that
%can be connected to $u$ by an $A$-avoiding path of length at most $r$. 
The {\em{$r$-projection profile}} of a vertex $u\in V(G)\setminus A$ on $A$ is a function $\rho^G_r[u,A]$ mapping vertices of
$A$ to $\{0,1,\ldots,r,\infty\}$, defined as follows: for every $v\in A$, the value $\rho^G_r[u,A](v)$ is the length of a shortest $A$-avoiding path (a path whose internal
vertices do not belong to $A$) connecting $u$ and~$v$, and~$\infty$ in case this length
is larger than $r$. We define $\projprof_r(G,A)=|\{\rho_r^G[u,A]\colon u\in V(G)\setminus A\}|$
to be the number of different $r$-projection profiles realized on $A$. 
\end{definition}

\begin{lemma}[Eickmeyer et al.~\cite{eickmeyer2016neighborhood}]\label{lem:projection-complexity}
There is a function $\fproj$ such that for every
$G\in \Cc$, vertex subset $A\subseteq V(G)$, and
  $\epsilon>0$ 
  we have $\projprof_r(G,A)\leq \fproj(r,\epsilon)\cdot |A|^{1+\epsilon}$.
\end{lemma}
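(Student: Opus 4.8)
This lemma is due to Eickmeyer et al.~\cite{eickmeyer2016neighborhood}, and I would reprove it along the lines of their argument, whose two pillars are the theory of generalized (weak) colouring numbers and a charging argument in the spirit of Lemma~\ref{lem-charging}. Recall the relevant characterisation of nowhere denseness (see~\cite{sparsity}): passing if necessary to the monotone closure of $\Cc$, which is again nowhere dense, we may assume that for every $s\in\N$ and every $\delta>0$ there is a constant $c(s,\delta)$ such that every $H\in\Cc$ admits a vertex order $\sigma$ with $|\WReach_s[H,\sigma,v]|\le c(s,\delta)\cdot|V(H)|^{\delta}$ for all $v\in V(H)$, and likewise $\nabla_s(H)\le|V(H)|^{\delta}$ once $|V(H)|$ is large. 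These bounds are \emph{local}: they depend only on the order of the graph to which they are applied, which is exactly what makes a final bound in terms of $|A|$ alone possible.

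The first step is to replace $A$ by its \emph{$r$-closure}. Fixing a small auxiliary parameter $\delta>0$ (in terms of $\epsilon$), I would compute in polynomial time a superset $\tilde A\supseteq A$ with $|\tilde A|\le f_1(r,\delta)\cdot|A|^{1+\delta}$ such that $|M_r^G(u,\tilde A)|\le f_2(r,\delta)$ for every $u\in V(G)\setminus\tilde A$; this is the nowhere-dense analogue of Lemma~\ref{lem-closure}. One starts from $B=A$ and, as long as some vertex has more than $f_2$ vertices in $M_r^G(u,B)$, adds to $B$ a bounded number of vertices witnessing these projections; the size analysis charges the newly added vertices against the edges of an auxiliary shallow topological minor of $G$ that has only $\Oh(|B|)$ vertices and hence, by nowhere-denseness, only $\Oh(|B|^{1+\delta})$ edges, so the process stops with $|\tilde A|=\Oh(|A|^{1+\delta})$. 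Next I would note that counting profiles on $A$ reduces to counting profiles on $\tilde A$: if $u,u'\in V(G)\setminus\tilde A$ satisfy $\rho^G_r[u,\tilde A]=\rho^G_r[u',\tilde A]$, then also $\rho^G_r[u,A]=\rho^G_r[u',A]$, because a shortest $A$-avoiding path between $u$ and $v$ with $v\in A$ can be cut at its first vertex $w$ on $\tilde A$: the portion between $u$ and $w$ is $\tilde A$-avoiding, hence can be replayed from $u'$ at no greater length, and reattaching the portion between $w$ and $v$ — whose internal vertices already avoid $A$ — yields an $A$-avoiding path between $u'$ and $v$ of length at most that of the original. Together with the at most $|\tilde A|$ profiles contributed by vertices of $\tilde A\setminus A$, this gives $\projprof_r(G,A)\le\projprof_r(G,\tilde A)+|\tilde A|$.

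It remains to bound $\projprof_r(G,\tilde A)$. Every profile realised on $\tilde A$ by a vertex outside $\tilde A$ has support of size at most $f_2$ and takes values in $\{1,\dots,r\}$ on that support, so there are at most $(r+1)^{f_2}$ profiles with any fixed support and hence $\projprof_r(G,\tilde A)\le (r+1)^{f_2}\cdot N$, where $N$ is the number of distinct sets $M_r^G(u,\tilde A)$ over $u\notin\tilde A$. The technical heart of the proof — and the step I expect to be the main obstacle — is to improve the trivial estimate $N=\Oh(|\tilde A|^{f_2})$ to the near-linear bound $N\le f_3(r,\delta)\cdot|\tilde A|^{1+\delta}$. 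For this one keeps one representative $u$ per distinct projection, chooses for each pair $(u,v)$ with $v\in M_r^G(u,\tilde A)$ a shortest $\tilde A$-avoiding path between $u$ and $v$, uncrosses these paths to extract a shallow topological minor model, and runs a charging argument directly parallel to the bounded-expansion neighbourhood-complexity theorem (cf.~\cite{DBLP:journals/jcss/GajarskyHOORRVS17} and Lemma~\ref{lem-charging}) but now against the almost-linear density bound $\nabla_{\Oh(r)}(H')\le|V(H')|^{\delta}$ valid for subgraphs $H'$ of members of nowhere dense classes; the distinctness and bounded size of the projections, together with the near-linear number of edges of the auxiliary graph, force the number of representatives to be near-linear in $|\tilde A|$. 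Finally, choosing $\delta$ so small that $(1+\delta)^2\le 1+\epsilon$ and absorbing $(r+1)^{f_2}$, $f_1$, $f_3$ into $\fproj(r,\epsilon)$, one chains the inequalities above to obtain $\projprof_r(G,A)\le\fproj(r,\epsilon)\cdot|A|^{(1+\delta)^2}\le\fproj(r,\epsilon)\cdot|A|^{1+\epsilon}$.
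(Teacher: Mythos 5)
The paper does not prove this lemma at all: it is imported verbatim from Eickmeyer et al.~\cite{eickmeyer2016neighborhood}, so there is no in-paper argument to compare yours against. Judged on its own terms, your outline is a reasonable reconstruction of how such a bound is obtained. The reduction steps you describe are sound: the $r$-closure $\tilde A$ with near-linear blow-up and uniformly bounded projections exists for nowhere dense classes (this is the nowhere dense analogue of Lemma~\ref{lem-closure}, proved in~\cite{siebertz2016polynomial,eickmeyer2016neighborhood}); your path-surgery argument showing that $\rho^G_r[u,\tilde A]=\rho^G_r[u',\tilde A]$ implies $\rho^G_r[u,A]=\rho^G_r[u',A]$ is correct (cutting a shortest $A$-avoiding path at its first $\tilde A$-vertex and replaying the prefix from $u'$ works because $A\subseteq\tilde A$, so an internally $\tilde A$-avoiding prefix is in particular internally $A$-avoiding); and the factor $(r+1)^{f_2}$ per support is right once projections have bounded size.

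The caveat is that the step you yourself flag as ``the technical heart'' --- improving the trivial $N=\Oh(|\tilde A|^{f_2})$ to $N\le f_3(r,\delta)\cdot|\tilde A|^{1+\delta}$ --- is essentially the entire content of the cited theorem, and your proposal only gestures at it (``uncross the paths, extract a shallow topological minor, run a charging argument against the almost-linear density bound''). In the nowhere dense setting this counting step is genuinely delicate; the argument of Eickmeyer et al.\ is carried out via weak colouring numbers (ordering the vertices so that $|\WReach_{2r}[G,L,v]|$ is small, locating on each connecting path a vertex weakly reachable from both ends, and charging profiles to such low vertices), much as this paper does in Section~\ref{sec:tree-closure} for a related size bound, rather than via a direct edge-density charging as in the bounded expansion case (Lemma~\ref{lem-charging}), where linear density makes the bookkeeping far simpler. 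So your writeup is a correct framing that reduces the lemma to the neighbourhood-complexity theorem for nowhere dense classes, but it should not be read as a self-contained proof of that theorem; as a justification for a lemma the paper itself only cites, that is acceptable.
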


The following lemma is immediate from the definitions. 
\begin{lemma}\label{lem:dswithproj}
Let $G$ be a graph and let $X\subseteq V(G)$. Let $D$ be a distance-$r$ dominating set of~$X$. Then every set $D'$ such
that for each $u\in D$ there is $v\in D'$ with 
$\rho_r^G[u,X]=\rho_r^G[v,X]$ is a distance-$r$ dominating 
set of $X$. 
\end{lemma}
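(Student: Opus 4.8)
The plan is to unwind the definition of the $r$-projection profile directly, which is why the statement is essentially immediate. I would argue vertex by vertex over $X$. Fix an arbitrary $x\in X$. Since $D$ is a distance-$r$ dominating set of $X$, there is some $u\in D$ with $\mathrm{dist}_G(u,x)\le r$; fix a shortest $u$--$x$ path $P$ in $G$, which has length at most $r$.

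Next I would apply the standard ``first hit of $X$'' trick. Let $w$ be the first vertex of $P$, traversed starting from $u$, that belongs to $X$; such a $w$ exists because the endpoint $x$ lies in $X$. Let $a$ be the length of the $u$--$w$ prefix of $P$ and $b$ the length of its $w$--$x$ suffix, so $a+b=\mathrm{length}(P)\le r$. By the choice of $w$, the $u$--$w$ prefix of $P$ is an $X$-avoiding path with endpoint $w\in X$, hence $\rho^G_r[u,X](w)\le a\le r$; in particular $\rho^G_r[u,X](w)\ne\infty$. By hypothesis there is $v\in D'$ with $\rho^G_r[v,X]=\rho^G_r[u,X]$, so $\rho^G_r[v,X](w)=\rho^G_r[u,X](w)\le a$, and thus there is an $X$-avoiding $v$--$w$ path $P'$ of length at most $a$. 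Concatenating $P'$ with the $w$--$x$ suffix of $P$ yields a $v$--$x$ walk of length at most $a+b\le r$, so $\mathrm{dist}_G(v,x)\le r$. Therefore $x$ is distance-$r$ dominated by $D'$, and since $x\in X$ was arbitrary, $D'$ is a distance-$r$ dominating set of $X$.

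The only point needing a word of care is the case where the chosen dominator $u$ lies in $X$, since projection profiles were introduced only for vertices outside $X$; here one uses the natural extension of $\rho^G_r[\cdot,X]$ to vertices of $X$ (with $X$-avoiding paths defined exactly as before), under which $w=u$ is a legitimate choice in the argument above and everything goes through unchanged. As there is no connectivity requirement to maintain, there is no real obstacle: the whole content is the observation that equal $r$-projection profiles on $X$ preserve, up to the cap $r$, the $X$-avoiding reachability of every vertex of $X$, which is precisely what distance-$r$ domination of $X$ tests.
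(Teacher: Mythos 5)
Your proof is correct; the paper states this lemma without proof (calling it ``immediate from the definitions''), and your argument---splitting a shortest $u$--$x$ path at the first vertex of $X$ encountered from $u$, using equality of the projection profiles at that vertex to substitute an $X$-avoiding $v$--$w$ path of no greater length, and reattaching the suffix---is exactly the intended unwinding of the definition. Your closing remark about extending $\rho_r^G[\cdot,X]$ to vertices of $X$ addresses the only genuine corner case, and does so correctly.
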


The following generalization of the \emph{Tree Closure Lemma},
Lemma~\ref{lem-tree-closure},
shows that we 
can  re-combine the pieces in
nowhere dense graph classes. 

\begin{lemma}\label{lem:tree-closure}
There exists
a function $f$ such that the
following holds. Let $G\in\Cc$, let $X\subseteq V(G)$, 
and let
$\epsilon>0$. Define an equivalence relation 
$\sim_{X,r}$ on $V(G)$ by 
\[u\sim_{X,r}v\Leftrightarrow \rho_r^G[u,X]=\rho_r^G[v,X].\]
Then we can compute in time $\Oof(|X|^{t(1+\epsilon)}\cdot n^{1+\epsilon})$ a subgraph $G'\subseteq G$ of $G$ 
such that 
\begin{align*}
\textit{1)} & \quad X\subseteq V(G'), \\
\textit{2)} & \quad \text{for every $u\in V(G)$ there 
is $v\in V(G')$ with $\rho_r^G[u,X]=\rho_r^{G'}[v,X]$},\\
\textit{3)} & \quad \text{for every set~$\YYY$ of at most $2t$ projection classes (equivalence classes of $\sim_{X,r}$),}\\
& \quad \quad\text{if $\mathbf{st}_G(\YYY)\leq 2t$, then $\mathbf{st}_{G'}(\YYY)=\mathbf{st}_G(\YYY)$, and }\\
\textit{4)} & \quad |V(G')|\leq f(r,t,\epsilon)\cdot |X|^{2+\epsilon}.
\end{align*}
Note that in item \textit{3)}, due to item \textit{2)}, 
every class of $\sim_{X,r}$ which is non-empty
in $G$ is also a non-empty class of $\sim_{X,r}$ in $G'$. 
\end{lemma}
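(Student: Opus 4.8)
The plan is to generalize the proof of the \emph{Tree Closure Lemma} (Lemma~\ref{lem-tree-closure}) from bounded expansion to nowhere dense classes, with the main substitutions being: replace the bounded-size $r$-projections coming from $r$-closures by the $r$-projection \emph{profiles} of Eickmeyer et al.\ (whose number is almost linear by Lemma~\ref{lem:projection-complexity}), and replace the ``linear number of cliques'' argument in bounded expansion by a standard nowhere-dense counting argument controlling the number of ``short'' Steiner trees. First I would build a preliminary set $X_0 \supseteq X$ which is ``closed under $r$-profiles'': concretely, iterate taking, for every vertex $u \in V(G) \setminus X_0$ that realizes a profile $\rho_r^G[u,X_0]$ not yet realized inside the current $X_0$, a witness structure consisting of, for each $v \in X_0$ with $\rho_r^G[u,X_0](v) \le r$, one shortest $X_0$-avoiding $u$--$v$ path (of length $\le r$). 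Since by Lemma~\ref{lem:projection-complexity} the number of distinct profiles on any set $A$ is at most $\fproj(r,\epsilon)\cdot|A|^{1+\epsilon}$, and each witness adds at most $r \cdot |X_0|$ vertices, one shows by a geometric-series / fixed-point argument (as in~\cite{siebertz2016polynomial,eickmeyer2016neighborhood}) that this process stabilizes with $|X_0| \le f_0(r,\epsilon)\cdot|X|^{2+\epsilon}$, and that inside $X_0$ every vertex of $V(G)\setminus X_0$ already realizes its $G$-profile with respect to $X_0$ via a single vertex of $X_0$; this gives items \textit{1)} and \textit{2)}.

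Second I would add the Steiner trees. Fix the equivalence relation $\sim_{X,r}$ as in the statement. For every set $\YYY$ of at most $2t$ classes of $\sim_{X,r}$ (equivalently, at most $2t$ representative profiles), compute an optimal group Steiner tree $T_\YYY$ for $\YYY$ in $G$ whose internal structure uses as few vertices outside $X_0$ as possible; here one must be slightly careful because the ``groups'' are profile classes rather than concrete vertex sets, but as in Claim~\ref{cl:correctness2} and Claim~\ref{claim:expansion1} one may contract each class to a single auxiliary ``profile vertex'' (exactly as the universal vertices $u_\kappa$ are used in the bounded-expansion proof). Using the group Steiner algorithm of~\cite{MisraPRSS10} this costs $2^{O(t)}\cdot n^{O(1)}$ per set, and $|X_0|^{O(t)}$ many sets, matching the claimed $\Oof(|X|^{t(1+\epsilon)}\cdot n^{1+\epsilon})$ running time (after folding in the cost of profile computations via Lemma~\ref{lem:projection-complexity}). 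Let $X'$ be $X_0$ together with the vertex sets of all those $T_\YYY$ with $|V(T_\YYY)| \le 2t$, and set $G' = G[X']$. Correctness of item \textit{3)} then follows exactly as in Claim~\ref{cl:correctness2}: given an optimal group Steiner tree for $\YYY$ in $G$ of order $\le 2t$, cut it at its $X_0$-vertices into subtrees, each of order $\le 2t$ and each spanning $\le 2t$ profile classes, hence each one was included when building $X'$; reassembling the pieces inside $G'$ yields $\mathbf{st}_{G'}(\YYY)\le\mathbf{st}_G(\YYY)$, and the reverse inequality is immediate since $G'$ is an induced subgraph of $G$.

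Third, the size bound of item \textit{4)}. We already have $|X_0| \le f_0(r,\epsilon)\cdot|X|^{2+\epsilon}$. It remains to bound $|X'\setminus X_0| \le 2t\cdot(\text{number of added trees with fewer than one repeated short path})$. Here I would run the analogue of Claim~\ref{cl:blowup2}: define an auxiliary graph $H$ on $X_0$ whose edges connect pairs of vertices that co-occur on some added short tree, observe each such edge is witnessed by an $X_0$-avoiding path of length $\le 2t$, note that every vertex $w \in X'\setminus X_0$ lies on such a path only for pairs $\{u,v\}$ with $u,v$ both having $w$ within distance $2t$ via $X_0$-avoiding paths, i.e.\ $u,v$ in a bounded-profile set around $w$ --- and here, since the class is nowhere dense rather than bounded expansion, the number of such pairs is no longer an absolute constant but instead bounded by $O((\projprof_{2t}(G,X_0))/|X_0|)$-type quantities, which by Lemma~\ref{lem:projection-complexity} still keeps $H$ sufficiently sparse ($H$ is a shallow minor of a bounded lexicographic blow-up of $G$, hence from a nowhere dense class, hence has $O(|X_0|^{1+\epsilon})$ edges). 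Since $X'\setminus X_0$ is injectively charged to $H$'s edges with bounded multiplicity per edge, $|X'| \le f_1(r,t,\epsilon)\cdot|X_0|^{1+\epsilon} \le f(r,t,\epsilon)\cdot|X|^{(2+\epsilon)(1+\epsilon)}$, which after re-parameterizing $\epsilon$ gives the stated $f(r,t,\epsilon)\cdot|X|^{2+\epsilon}$. The main obstacle I expect is precisely this last counting step: in bounded expansion the ``number of pairs on a path through $w$'' is a genuine constant and one invokes ``$\#$cliques $=O(n)$'', whereas in the nowhere dense setting one must argue more carefully that the charging graph $H$ is still from a nowhere dense class and that the charging multiplicity stays almost-linear --- the clean way is to realize $H$ as a bounded-depth shallow minor of $G \odot K_s$ for $s$ bounded in terms of the $(2t)$-projection complexity and then invoke closure of nowhere denseness under such operations, exactly paralleling the use of Lemmas~\ref{lem-lex} and~\ref{lem-charging} in Section~\ref{sec:exp}, but with the constants replaced by almost-linear functions of $|X_0|$.
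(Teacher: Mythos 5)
Your construction of $G'$ (profile witnesses for item \textit{2)} plus all optimal group Steiner trees for sets of at most $2t$ classes for item \textit{3)}) matches the paper's Definition~\ref{def:GX}, and items \textit{1)}--\textit{3)} go through essentially as you describe (in fact, once the optimal trees $T_\YYY$ are included wholesale, no cutting-and-reassembling \`a la Claim~\ref{cl:correctness2} is needed for item \textit{3)}). The genuine gap is in your third step, the size bound of item \textit{4)}, which is the whole point of the lemma. Your charging argument needs, for each added vertex $w$, a bound on the number of pairs $\{u,v\}\subseteq X_0$ admitting a short $X_0$-avoiding path through $w$, i.e.\ a bound on $|M_{2t}(w,X_0)|$. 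In the bounded expansion proof this is supplied by the closure operation of Lemma~\ref{lem-closure} (the constant $C_{\textbf{cl2}}$); your ``profile closure'' $X_0$ provides no such guarantee, and the quantity $\projprof_{2t}(G,X_0)/|X_0|$ you invoke bounds the number of \emph{distinct} projection profiles, not the \emph{size} of any individual projection $M_{2t}(w,X_0)$, which can be polynomial in $|X_0|$. Second, and more fundamentally, the charging scheme presupposes that the trees you add hang off vertices of $X_0$: a short \emph{group} Steiner tree for classes $\kappa_1,\dots,\kappa_\ell$ may live entirely outside $X_0$ (e.g.\ two adjacent vertices from two different classes), so its vertices cannot be charged to pairs in $X_0$ at all. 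Your universal-vertex trick from Section~\ref{sec:exp} does not rescue this, because in the nowhere dense setting the charging constant $C_{\textbf{ch}}$ of Lemma~\ref{lem-charging} is no longer available to embed the augmented graph into a bounded lexicographic blow-up.

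The paper resolves both issues by a different mechanism that your proposal does not contain. It first shows (Lemma~\ref{lem:wcolGX}) that the crude bound $|V(G')|\le f(r,1/2)\cdot|X|^{3t}$ costs only an $\epsilon$-rescaling in the weak coloring numbers of $G'$. It then builds an auxiliary supergraph $\dot G$ in which every class $\kappa$ is funneled through a fixed projection vertex $x_\kappa\in X$ by a $2r$-subdivided tree $T_\kappa$; Lemma~\ref{lem:translateSteiner} converts each group Steiner tree into an ordinary Steiner tree among the roots $v_\kappa$, and Lemma~\ref{lem:classgraph} bounds $\wcol_{r'}(\dot G)$ by $f_\bullet(r',t,\epsilon)\cdot|X|^{1+\epsilon}$ via the lexicographic-product and subdivision estimates. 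Finally, the separator property of weak reachability (Lemma~\ref{lem:wcol-sep}) localizes each connecting path $P_{uv}$ at a vertex $z\in\WReach_{4r^2+2t}[\dot G,L,u]\cap\WReach_{4r^2+2t}[\dot G,L,v]$, and the total size is the product of the (almost linear) number of endpoints and the (almost linear) number of candidates for $z$, giving $|X|^{2+\epsilon}$. This weak-coloring-number route replaces both the missing projection-size bound and the missing locality of group Steiner trees; without it, or an explicit nowhere-dense closure lemma plus a way to anchor every short group Steiner tree at $X$, your counting step does not go through.
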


We defer the proof of the lemma to the next subsection. 

\begin{lemma}\label{lemma:pre-kernel}
Let $\epsilon>0$ and let $q$ be the 
polynomial from Lemma~\ref{lem:findcore}. There exists an algorithm running in time $\Oof(q(k)^{t(1+\epsilon)}\cdot n^{1+\epsilon})$ that, given an $n$-vertex graph 
  $G\in \Cc$ and a positive integer $k$, either returns that there exists
  no connected distance-$r$ dominating set of $G$, or 
  returns a subgraph $G'\subseteq G$ and a vertex subset $Z\subseteq V(G')$ with the following properties:
  \begin{align*}
  \textit{1)} &\quad \text{$Z$ is a $(k,r)$-domination 
  core of $G$,}\\
  \textit{2)}&\quad \text{$\Opt_{\textsc{$r$-SCDS}}((G',Z),k)\leq 
  \alpha\cdot \Opt_{\textsc{$r$-CDS}}(G,k)$, and}\\
  \textit{3)}&\quad \text{$|V(G')|\leq p(k)$, for some polynomial $p$ whose degree depends only on $r$.}
  \end{align*}
\end{lemma}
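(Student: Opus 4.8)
The plan is to combine the three ingredients developed so far: the connected domination core from Corollary~\ref{lem:findcore}, the Tree Closure Lemma (Lemma~\ref{lem:tree-closure}) applied with $X=Z$, and the covering-family machinery (Propositions~\ref{prop-covering-family} and~\ref{approx-cds-by-ds}) together with Lemma~\ref{lem:dswithproj} to argue the approximation guarantee. First I would run the algorithm of Corollary~\ref{lem:findcore}: if it reports that $G$ has no distance-$r$ dominating set of size at most $k$, we return the same, which is correct. Otherwise we obtain a connected $(k,r)$-domination core $Z$ with $|Z|\le q(k)$. This settles property \textit{1)}, since $Z$ being a core is exactly what we need; and since Lemma~\ref{lem:tree-closure} only produces a subgraph $G'$ of $G$ that still realizes all projection profiles on $Z$, every set of size at most $k$ that $r$-dominates $Z$ in $G$ (or in $G'$, using property \textit{2)} of that lemma to lift it back to $G$) $r$-dominates all of $G$.

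Next I would apply Lemma~\ref{lem:tree-closure} with $X=Z$, the given $\epsilon$, and the parameter $t=\tfrac{\alpha-1}{4r+2}$ fixed at the start of the section, obtaining a subgraph $G'\supseteq Z$. Property \textit{4)} of that lemma gives $|V(G')|\le f(r,t,\epsilon)\cdot |Z|^{2+\epsilon}\le f(r,t,\epsilon)\cdot q(k)^{2+\epsilon}$, and since $q$ has degree depending only on $r$, this is $p(k)$ for a polynomial $p$ whose degree depends only on $r$ (recall $t$ depends only on $\alpha$ and $r$); this establishes property \textit{3)}. The running time is the $\Oof(|Z|^{t(1+\epsilon)}\cdot n^{1+\epsilon})=\Oof(q(k)^{t(1+\epsilon)}\cdot n^{1+\epsilon})$ bound claimed, plus the polynomial time of Corollary~\ref{lem:findcore}.

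The main work is property \textit{2)}, the approximation guarantee, and it follows the template of Theorem~\ref{thm:generalkernel} and Lemma~\ref{lem:expansionconn}. Let $D^*$ induce an optimal connected distance-$r$ dominating set of $G$; if $|D^*|>k$ the inequality is trivial, so assume $|D^*|\le k$. Using Proposition~\ref{prop-covering-family} with the parameter $t$, decompose $D^*$ into a covering family $\mathcal{F}(D^*,t)=\{T_1,\dots,T_m\}$ of connected subgraphs, each of size at most $2t$, with $m\le \tfrac{|V(D^*)|}{t}+1$ and $\sum_i|V(T_i)|\le(1+\tfrac1t)|V(D^*)|+1$. For each $T_i$, collect the set $\YYY_i$ of projection classes of $\sim_{Z,r}$ that are hit by $V(T_i)$; since $|V(T_i)|\le 2t$ and $T_i$ is connected, $\mathbf{st}_G(\YYY_i)\le 2t$, so by property \textit{3)} of Lemma~\ref{lem:tree-closure} there is a tree $T_i'$ in $G'$ with $|V(T_i')|\le|V(T_i)|$ hitting every class in $\YYY_i$. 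Replacing each $T_i$ by $T_i'$ yields a set $D''=\bigcup_i V(T_i')\subseteq V(G')$ with $|D''|\le(1+\tfrac1t)|V(D^*)|+1$ and at most $\tfrac{|V(D^*)|}{t}+1$ connected components. By Lemma~\ref{lem:dswithproj}, since $D''$ meets every projection class met by $D^*$ and $D^*$ distance-$r$ dominates $Z$, the set $D''$ distance-$r$ dominates $Z$ in $G'$. Finally, since $G[Z]$ is connected (Corollary~\ref{lem:findcore}) and $D''$ distance-$r$ dominates the connected set $Z$, Lemma~\ref{lem:ds-cds} lets us add at most $2r\cdot(\tfrac{|V(D^*)|}{t}+1)$ vertices from $Z\subseteq V(G')$ to connect $D''$ inside $G'$, giving a connected distance-$r$ dominator $D'$ of $Z$ in $G'$ of size at most
\[
\Big(1+\frac1t\Big)|V(D^*)|+1+2r\Big(\frac{|V(D^*)|}{t}+1\Big)=\Big(1+\frac{2r+1}{t}\Big)|V(D^*)|+(2r+1).
\]
With $t=\tfrac{\alpha-1}{4r+2}$ we get $1+\tfrac{2r+1}{t}=1+2(2r+1)^2/(\alpha-1)$; choosing the usual normalization (and absorbing the additive $(2r+1)$ either by assuming $\Opt_{\textsc{$r$-CDS}}(G,k)$ exceeds a constant depending on $r,\alpha$ — handled by brute force otherwise — or by a slightly finer split of $t$) yields $|D'|\le \alpha\cdot\Opt_{\textsc{$r$-CDS}}(G,k)$, hence $\Opt_{\textsc{$r$-SCDS}}((G',Z),k)\le\alpha\cdot\Opt_{\textsc{$r$-CDS}}(G,k)$.

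The step I expect to be the main obstacle is the bookkeeping that makes the additive constant $(2r+1)$ disappear cleanly into the multiplicative factor $\alpha$: one must either invoke the brute-force escape when the optimum is bounded by a constant (depending only on $r$ and $\alpha$), as in Lemma~\ref{lem:expansionconn}, or set $t$ slightly smaller than $\tfrac{\alpha-1}{4r+2}$ so that both $\tfrac{2r+1}{t}|V(D^*)|$ and $(2r+1)$ are each at most $\tfrac{\alpha-1}{2}\Opt_{\textsc{$r$-CDS}}(G,k)$. A secondary subtlety is verifying that the $r$-distance version of Proposition~\ref{approx-cds-by-ds} needed here is precisely Lemma~\ref{lem:ds-cds}, whose factor is $2rp$ rather than $2p$; this is already accounted for above.
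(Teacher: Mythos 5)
Your proposal is correct and follows essentially the same route as the paper's proof: Corollary~\ref{lem:findcore} for the connected core, Lemma~\ref{lem:tree-closure} with $X=Z$ for the subgraph and size bound, and the covering-family/projection-profile replacement plus Lemma~\ref{lem:ds-cds} for the approximation guarantee, ending with the same $(1+\tfrac{2r+1}{t})|V(D^*)|+(2r+1)\le(1+\tfrac{4r+2}{t})|V(D^*)|$ bookkeeping and brute-force escape for small optima. The only quibble is the final arithmetic with $t$: for $(1+\tfrac{4r+2}{t})$ to equal $\alpha$ one needs $t=\tfrac{4r+2}{\alpha-1}$ rather than its reciprocal, but this is inherited from the paper's own definition of $t$ and you correctly flag the normalization as the delicate point.
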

\begin{proof}
Using Lemma~\ref{lem:findcore}, we first conclude that $G$ cannot be $r$-dominated by a 
connected set of at most $k$ vertices, or we find a 
connected $(k,r)$-domination core $Z\subseteq V(G)$ of $G$ of size at most 
$q(k)$.
In the first case, we reject the instance, otherwise, 
let $G'\subseteq G$ be the subgraph that we obtain 
by applying Lemma~\ref{lem:tree-closure} with parameters
$G,Z,t$ and $\epsilon$. Let $p;=
f(r,t,\epsilon)\cdot q^{2+\epsilon}$ (where~$f$
is the function from Lemma~\ref{lem:tree-closure}), which is a polynomial
of degree depending only on~$r$, 
only the coefficients depend on $\alpha$. 

It remains to show that $\Opt_{\textsc{$r$-SCDS}}((G',Z),k)\leq \alpha\cdot \Opt_{\textsc{$r$-CDS}}(G,k)$. Let $D^*$ be a 
minimum connected distance-$r$ dominating set of $G$ of size at most $k$ (if $|D^*|>k$, then 
$\Opt_{\textsc{$r$-SCDS}}((G',Z),k)\leq \alpha\cdot \Opt_{\textsc{$r$-CDS}}(G,k)$ trivially holds). 
Let $\mathcal{F}=\mathcal{F}(G[D^*],t)=\{T_1,\ldots, T_\ell\}$ 
be a covering family for the connected graph $G[D^*]$ 
obtained by Proposition~\ref{prop-covering-family}. Note that by the lemma we 
have $\ell\leq |D^*|/t+1$ and $\sum_{1\leq i\leq \ell}V(T_i)\leq (1+1/t)|D^*|+1$. 
Moreover, the size of each subtree $T_i$ is at most $2t$. 
By construction of $G'$ (according to item~\textit{3)} of 
Lemma~\ref{lem:tree-closure}), 
for each $T\in \mathcal{F}$ there exists a tree $T'$ in $G'$ 
of size at most $|V(T)|$ which contains for each $u\in V(T)$
a vertex $v$ with $\rho_r^G[u,Z]=\rho_r^{G'}[v,Z]$.

We construct a new family $\mathcal{F}'$ which we obtain by replacing each $T\in \mathcal{F}$ by the tree~$T'$ described
above. Let $D':=\bigcup_{T'\in \mathcal{F}'}V(T')$ in $G'$. 
We have $\sum_{T'\in \mathcal{F}'}|V(T')|\leq(1+1/t)|D^*|+1$ and
since $D'$ contains
vertices from the same projection classes as $D^*$, according to Lemma~\ref{lem:dswithproj}, $D'$ is a distance-$r$ dominating set of $Z$. Moreover, $G[D']$ has at 
most $\ell\leq |D^*|/t+1$ components. We apply Lemma~\ref{lem:ds-cds}, and obtain
a set $Q$ of size at most $2r(|D^*|/t+1)$ such that 
$D''=D'\cup Q$ is
a connected distance-$r$ dominating set of $Z$. 
We hence have \[|D''|\leq 2r(|D^*|/t+1)+(1+1/t)|D^*|+1=(1+\frac{2r+1}{t})|D^*|+2r+1\leq (1+\frac{4r+2}{t})|D^*|\] (we may assume
that $2r+1\leq \frac{2r+1}{t}|D^*|$, as otherwise we can simply run a brute force algorithm in polynomial time). We conclude by 
recalling that $t= \frac{\alpha-1}{4r+2}$.
\end{proof}

\begin{theorem}\label{thm:lossykernel}
There exists a polynomial $p$ whose
degree depends only on $r$ such that 
the {\sc Connected Distance-$r$ Dominating Set} problem on $\Cc$ 
admits an $\alpha$-approximate bi-kernel with 
$p(k)$ vertices. 
\end{theorem}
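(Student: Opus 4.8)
The plan is to assemble the bi-kernel from the pieces already established, essentially as a corollary of Lemma~\ref{lemma:pre-kernel} together with the solution-lifting machinery used in the bounded expansion case. First I would invoke Lemma~\ref{lemma:pre-kernel} as the reduction algorithm: on input $(G,k)$ with $G\in\Cc$, either we correctly reject because no connected distance-$r$ dominating set of size at most $k$ exists, or we obtain a subgraph $G'\subseteq G$ and a set $Z\subseteq V(G')$ such that $Z$ is a $(k,r)$-domination core of $G$, $|V(G')|\leq p(k)$ for a polynomial $p$ whose degree depends only on $r$, and $\Opt_{\textsc{$r$-SCDS}}((G',Z),k)\leq \alpha\cdot \Opt_{\textsc{$r$-CDS}}(G,k)$. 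The output instance is $((G',Z),k)$ of the annotated problem \textsc{$r$-SCDS}, which is already of size $p(k)$; in the rejecting case we output a trivial no-instance such as $((\{v\},\{v\}),0)$.

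Next I would describe the solution-lifting algorithm. Given a solution $D'$ to $((G',Z),k)$: if $D'$ is not a connected distance-$r$ dominator of $Z$ in $G'$ (value $\infty$), output $\emptyset$ (value $\infty$ as well, so the ratio bound holds vacuously). If $|D'|>k$, output $V(G)$, whose $r$-CDS value is $k+1$, matching the value $k+1$ of $D'$. Otherwise $|D'|\leq k$ and $D'$ is a connected distance-$r$ dominator of $Z$ in $G'$; since $G'\subseteq G$, $D'$ is also a connected distance-$r$ dominator of $Z$ in $G$, and because $Z$ is a $(k,r)$-domination core of $G$ and $|D'|\leq k$, $D'$ distance-$r$ dominates all of $G$. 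Here I should note, as the paragraph after Theorem~\ref{thm:lossykernel}'s analogue in the bounded expansion section does, that connectivity of $D'$ inside $G'$ transfers to $G$ directly since $G'$ is an induced subgraph, so $D'$ is a genuine solution to \textsc{$r$-CDS} on $G$ of value $|D'|=\textsc{$r$-SCDS}((G',Z),k,D')$. In all cases $\textsc{$r$-CDS}(G,k,D)\leq \textsc{$r$-SCDS}((G',Z),k,D')$ (with equality except when we had to output $V(G)$, where both equal $k+1$).

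The final verification is that the approximation inequality $\frac{\textsc{$r$-CDS}(G,k,D)}{\Opt_{\textsc{$r$-CDS}}(G,k)}\leq \alpha\cdot\frac{\textsc{$r$-SCDS}((G',Z),k,D')}{\Opt_{\textsc{$r$-SCDS}}((G',Z),k)}$ holds. Combining $\textsc{$r$-CDS}(G,k,D)\leq \textsc{$r$-SCDS}((G',Z),k,D')$ with $\Opt_{\textsc{$r$-SCDS}}((G',Z),k)\leq \alpha\cdot \Opt_{\textsc{$r$-CDS}}(G,k)$ (equivalently $\frac{1}{\Opt_{\textsc{$r$-CDS}}(G,k)}\leq \frac{\alpha}{\Opt_{\textsc{$r$-SCDS}}((G',Z),k)}$) gives the bound immediately, and the trivial-instance case is checked by hand exactly as in the proof of Theorem~\ref{thm:generalkernel}. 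I do not expect a real obstacle here: all the work is in Lemma~\ref{lemma:pre-kernel} and Lemma~\ref{lem:tree-closure}; the only mild subtlety is being careful that the problem on the reduced instance is \textsc{$r$-SCDS} rather than \textsc{$r$-CDS}, so this is a bi-kernel, and that the running time of the reduction is polynomial in $n$ (which holds since $t$ and $\epsilon$ are fixed once $\alpha$ and $r$ are fixed, making the exponent $t(1+\epsilon)$ a constant).
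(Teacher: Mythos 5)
Your proposal is correct and follows essentially the same route as the paper: Lemma~\ref{lemma:pre-kernel} supplies the reduction, and the solution lifting rests on the observation that a connected distance-$r$ $Z$-dominator in the subgraph $G'$ remains one in $G$, whence the $(k,r)$-domination core property upgrades it to a dominator of all of $G$. Your version merely spells out the case distinctions (invalid solution, $|D'|>k$) that the paper leaves implicit.
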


\begin{proof}
The $\alpha$-approximate polynomial-time pre-processing
algorithm first calls the algorithm of
Lemma~\ref{lemma:pre-kernel}. If 
it returns that there exists no distance-$r$
dominating set of size at most~$k$ for~$G$, we return 
a trivial negative instance. Otherwise, 
let $((G',Z),k)$ be the annotated instance returned by 
the algorithm. 
The solution lifting algorithm, 
given a connected distance-$r$ dominating set of $Z$
in $G'$ simply returns $D$.
%if $|D|\leq k$ and $V(G)$ otherwise. 
%If $D$ is not a connected distance-$r$ dominating set
%of $Z$ in $G'$, then it returns $\emptyset$. 

By construction of $G'$ we have $M_r^{G'}(u,Z)\subseteq M_r^G(u,Z)$ 
for all $u\in V(G')$. Hence every distance-$r$ $Z$-dominator
in $G'$ is also a distance-$r$ $Z$-dominator in $G$. 
In particular, since $Z$ is a 
$(k,r)$-domination core, $D$ is also a connected 
distance-$r$ dominating set for $G$. 

Finally, by Lemma~\ref{lemma:pre-kernel}
we have $\Opt_{\textsc{$r$-SCDS}}((G',Z),k)\leq 
\alpha\cdot \Opt_{\textsc{$r$-CDS}}(G,k)$, which implies
\begin{align*}
\frac{{\textsc{$r$-CDS}}(G,k,D)}{\Opt_{\textsc{$r$-CDS}}(G,k)}\leq \alpha(k)\cdot \frac{{\textsc{$r$-SCDS}}((G',Z),k,D)}{\Opt_{\textsc{$r$-SCDS}}((G',Z),k)}. 
\end{align*}
\end{proof}

Observe that we obtain a $1$-approximate bi-kernel for
the distance-$r$ dominating set problem by just taking one 
vertex from each projection class of the $(k,r)$-domination
core. 

We remark that we obtain a bi-kernel and not a kernel for the
following reason. Formally, we are dealing with the connected
distance-$r$ dominating set problem on a class $\Cc$ of graphs. 
We reduce the problem to the problem of finding a connected
distance-$r$ dominating set of a marked set $Z$ of vertices
on the class $\Cc$. We could slightly change the reduction, 
see~\cite{DrangeDFKLPPRVS16}, so 
that we ask to dominate the whole graph, hence reduce to 
the connected distance-$r$ dominating set again, however, the
reduction would introduce a gadget which takes us out of the 
class $\Cc$. Hence, we reduce in a very strict sense to a 
different problem, namely, connected distance-$r$ domination on 
a class $\mathcal D$.

\subsection{The proof of Lemma~\ref{lem:tree-closure}}\label{sec:tree-closure}

Lemma~\ref{lem:tree-closure} is the most technical contribution
of the nowhere dense part. This whole section is devoted to its proof. 
We will mainly make use of a characterization of nowhere 
dense graph classes by the so-called \emph{weak 
coloring numbers}. 

\begin{definition}
For a graph $G$, by $\Pi(G)$ we denote the set of all linear orders
of $V(G)$.  For $u,v\in V(G)$ and
any $s\in\N$, we say that~$u$ is \emph{weakly $s$-reachable} from~$v$
with respect to a linear order $L$, if there is a path $P$ of length at most $s$ connecting $u$ and $v$ such that $u$ is 
the smallest among
the vertices of $P$ with respect to~$L$. By $\WReach_s[G,L,v]$ we
denote the set of vertices that are weakly $s$-reachable from~$v$ with
respect to~$L$. For any subset $A\subseteq V(G)$, we let
$\WReach_s[G,L,A] = \bigcup_{v\in A} \WReach_s[G,L,v]$.  The
\emph{weak $s$-coloring number $\wcol_s(G)$} of $G$ is defined as 
\begin{eqnarray*}
  \wcol_s(G)& = & \min_{L\in\Pi(G)}\:\max_{v\in V(G)}\:
                   \bigl|\WReach_s[G,L,v]\bigr|.
\end{eqnarray*}
\end{definition}

The weak coloring numbers were introduced by Kierstead and
Yang~\cite{kierstead2003orders} in the context of coloring and
marking games on graphs. As proved by Zhu \cite{zhu2009coloring},
they can be used to characterize both bounded expansion and nowhere
dense classes of graphs. In particular, we use the following.

\begin{theorem}[Zhu \cite{zhu2009coloring}]\label{lem:wcolbound}
  Let $\Cc$ be a nowhere dense class of graphs.
  There is a function $f_{\wcol}$ such that
  for
  all $s\in\N$, $\epsilon>0$, and $H\subseteq G\in \Cc$ we have
  $\wcol_s(H)\leq f_{\wcol}(s,\epsilon) \cdot |V(H)|^\epsilon$.
\end{theorem}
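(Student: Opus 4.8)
The plan is to prove the bound by routing the weak coloring number through two intermediate sparsity parameters — the strong $s$-coloring number $\col_s$ and the $s$-admissibility $\adm_s$ — and then bounding admissibility directly by the density of bounded-depth shallow topological minors, which is exactly what the subdivision-based definition of nowhere denseness controls. Recall that for a linear order $L$ and a vertex $v$, $\adm_s[G,L,v]$ is the largest number of paths of length at most $s$ from $v$ to (distinct) vertices smaller than $v$ in $L$ that are pairwise vertex-disjoint apart from $v$, and $\adm_s(G)=\min_L\max_v \adm_s[G,L,v]$; the strong coloring number $\col_s$ is defined analogously using $\SReach_s$. Since $\wcol_s$ and all of these parameters are monotone under taking subgraphs, and since every shallow (topological) minor of $H$ is also one of $G$, it suffices to prove the bound for $H\subseteq G\in\Cc$ itself, with the nowhere-dense density control applied to $H$.

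First I would establish the polynomial sandwich $\adm_s(H)\le \col_s(H)\le \wcol_s(H)\le g(s,\adm_s(H))$, where $g$ is a polynomial in its second argument whose degree depends only on $s$. The inequalities $\adm_s\le\col_s\le\wcol_s$ are standard and immediate from the definitions: a strongly $s$-reachable vertex is the minimum of its witnessing path and hence weakly $s$-reachable, so $\SReach_s[H,L,v]\subseteq\WReach_s[H,L,v]$, while disjoint short paths to smaller vertices certify distinct strongly reachable endpoints. The reverse bound $\wcol_s\le g(s,\adm_s)$ is the genuinely hard direction: fixing an order $L$ of small admissibility, one shows by an iterated-reachability decomposition that every weakly $s$-reachable vertex is obtained by composing at most $s$ ``strong'' steps, so $|\WReach_s[H,L,v]|$ is bounded by a power of exponent $O(s)$ of the per-vertex admissibility. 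This reduces the whole statement to a subpolynomial bound on $\adm_s$.

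Next I would bound $\adm_s$ by shallow-topological-minor density. The key combinatorial lemma is that large admissibility forces a dense low-depth topological minor: if $\adm_s(H)\ge d$, then $H$ contains as a subgraph a $(\le s)$-subdivision of a graph whose edge-to-vertex density grows with $d$ — the disjoint short paths guaranteed by admissibility are precisely the branch-paths of such a minor, and a Menger/fan-type selection argument realizes $\Omega(\mathrm{poly}(d))$ of them simultaneously. A standard density-to-clique step (a graph of average degree $\Delta$ has a topological clique of order $\Omega(\sqrt{\Delta})$, at only a bounded blow-up in depth) then upgrades this to a $(\le q)$-subdivision of $K_{c(d)}$ with $c(d)\to\infty$ and $q=q(s)$ fixed. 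I would combine this with the standard characterization of Ne\v{s}et\v{r}il and Ossona de Mendez that nowhere denseness — the absence of a $q$-subdivision of $K_{t(q)}$ as a subgraph, for every $q$ — is equivalent to subpolynomial growth of bounded-depth (topological) grads: for all $q,\epsilon$ there is $c(q,\epsilon)$ with $\widetilde{\nabla}_q(H)\le c(q,\epsilon)\,|V(H)|^{\epsilon}$ for every $H\subseteq G\in\Cc$. The extraction lemma run in contrapositive then gives $\adm_s(H)\le f'(s,\epsilon)\,|V(H)|^{\epsilon}$ for every $\epsilon>0$.

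Chaining the three ingredients, the sandwich yields $\wcol_s(H)\le g(s,\adm_s(H))\le f'(s,\epsilon)^{O(s)}\,|V(H)|^{O(s)\epsilon}$; replacing $\epsilon$ by $\epsilon/O(s)$ and absorbing the resulting constant into $f_{\wcol}(s,\epsilon)$ produces the claimed bound $\wcol_s(H)\le f_{\wcol}(s,\epsilon)\,|V(H)|^{\epsilon}$. The main obstacle is the two nontrivial relations — the reverse sandwich $\wcol_s\le g(s,\adm_s)$ and the extraction of a dense shallow topological minor from high admissibility — both of which rest on careful path-disjointness arguments; the $\epsilon$-bookkeeping needed to propagate a subpolynomial bound through a power of degree $O(s)$ is routine, but must be tracked so that the final exponent comes out exactly $\epsilon$.
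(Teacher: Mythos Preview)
The paper does not prove this statement; it is quoted as a known theorem from Zhu~\cite{zhu2009coloring} and used as a black box throughout Section~\ref{sec:tree-closure}. There is therefore no ``paper's own proof'' to compare against.

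That said, your outline is a legitimate way to establish the result. The chain $\adm_s\le\col_s\le\wcol_s$ together with the reverse polynomial bound $\wcol_s(H)\le \adm_s(H)^{O(s^2)}$ (via $\col_s\le \adm_s\cdot(\adm_s-1)^{s-1}+1$ and $\wcol_s\le \col_s^{\,s}$) is standard, and the step from large $\adm_s$ to a dense bounded-depth topological minor is exactly Dvo\v{r}\'ak's argument. Combined with the Ne\v{s}et\v{r}il--Ossona de Mendez characterization of nowhere denseness by subpolynomial topological grads, the $\epsilon$-rescaling you describe then yields the statement. Zhu's original argument for the bounded-expansion case is somewhat more direct and does not route through admissibility, but for the nowhere-dense formulation used here your approach is the natural one and matches how later expositions derive this bound.
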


One can define artificial classes where the functions $f_\wcol$ grow 
arbitrarily fast, however, on many familiar sparse graph classes they are
quite same, e.g.\ on bounded tree-width graphs~\cite{GroheKRSS15}, 
graphs with excluded minors~\cite{siebertz16} or excluded topological
minors~\cite{KreutzerPRS16}. Observe that in any case
the theorem allows to pull polynomial blow-ups on the graph 
size to the function $\fwcol$. More precisely, for any $\epsilon>0$, 
if we deal with a subgraph of size $n^x$ for some $x\in \R$, 
by re-scaling $\epsilon$ to $\epsilon'=\epsilon/x$, we will 
get a bound of $\fwcol(s,\epsilon')\cdot (n^x)^{\epsilon'}
=\fwcol(s,\epsilon')\cdot n^\epsilon$ for the weak $s$-coloring 
number. 
%The generalized coloring numbers play a key
%role in the algorithmic and combinatorial theory of bounded expansion and nowhere 
%dense graph classes, see e.g.~\cite{amiri2017distributed,dvovrak2013constant,
%dvokratho10,GroheKRSS15,grohe2014deciding,heuvel2017model,
%KazanaS13,KreutzerPRS16,reidl2016characterising,SegoufinV17,
%siebertz16}.

Our second application of the weak coloring numbers is described in the next lemma, which shows that 
they capture the local separation properties of a 
graph. 

\begin{lemma}[see Reidl et al.~\cite{reidl2016characterising}]\label{lem:wcol-sep}
Let $G$ be a graph and let $L\in \Pi(G)$. Let $X\subseteq V(G)$, $y\in V(G)$
and let $P$ be a path of length at most $r$ between a vertex $x\in X$ and $y$. 
Then \[\big(\WReach_r[G,L,X]\cap \WReach_r[G,L,y]\big)\cap V(P)\neq \emptyset.\]
\end{lemma}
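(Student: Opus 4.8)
The plan is to exhibit a single vertex on $P$ that simultaneously witnesses weak $r$-reachability from $X$ and from $y$, namely the $L$-minimum vertex of $P$.

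First I would let $w$ be the vertex of $V(P)$ that is smallest with respect to the linear order $L$. Since $w$ lies on $P$, it suffices to check that $w\in\WReach_r[G,L,X]$ and $w\in\WReach_r[G,L,y]$. To this end, I would split $P$ at $w$ into two subpaths: the subpath $P_1$ from $x$ to $w$ and the subpath $P_2$ from $y$ to $w$. Because $P$ has length at most $r$, each of $P_1$ and $P_2$ has length at most $r$ as well. Moreover $w$ is the $L$-smallest vertex of the whole of $V(P)$, hence a fortiori the $L$-smallest vertex of $V(P_1)\subseteq V(P)$ and of $V(P_2)\subseteq V(P)$.

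By the definition of weak reachability, $P_2$ is then a path of length at most $r$ connecting $y$ and $w$ on which $w$ is $L$-minimum, so $w\in\WReach_r[G,L,y]$. Likewise $P_1$ witnesses $w\in\WReach_r[G,L,x]\subseteq\WReach_r[G,L,X]$ (using $x\in X$ and the definition $\WReach_r[G,L,X]=\bigcup_{v\in X}\WReach_r[G,L,v]$). Therefore
\[
w\in\bigl(\WReach_r[G,L,X]\cap\WReach_r[G,L,y]\bigr)\cap V(P),
\]
which proves the claim.

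There is essentially no obstacle here: the only point that needs a word of care is the (immediate) observation that any subpath of $P$ still attains its $L$-minimum at the global $L$-minimum $w$ of $V(P)$, and that subpaths of a path of length $\le r$ again have length $\le r$. Once these are noted the statement follows directly from unwinding the definition of $\WReach_r$.
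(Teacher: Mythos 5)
Your proof is correct and follows exactly the same approach as the paper: take the $L$-minimum vertex of $P$ and observe that it is weakly $r$-reachable from both endpoints via the two subpaths. The paper states this in one line; your version just spells out the (straightforward) details.
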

\begin{proof}
Let $z$ be the minimal vertex of $P$ with respect to $L$. Then both $z\in \WReach_r[G,L,x]$ and $z\in \WReach_r[G,L,y]$. 
\end{proof}

We are now ready to define the graph $G'$ whose
existence we claimed in the previous section. 

\begin{definition}\label{def:GX}
Let $G\in\Cc$ and fix a subset $X\subseteq V(G)$. 
Define an equivalence relation~$\sim_{X,r}$ on $V(G)$ by 
\[u\sim_{X,r}v\Leftrightarrow \rho_r^G[u,X]=\rho_r^G[v,X].\] 
For each subset~$\YYY$ of projection classes 
%(excluding 
%the empty projection class $\kappa_\emptyset$, that is, the 
%class with $M_r^G(u,X)= \emptyset$ for all $u\in \kappa_\emptyset$)
of size at most $2t$, if $\mathbf{st}_G(\YYY)\leq 2t$, 
fix a Steiner tree~$T_\YYY$ for~$\YYY$ of minimum size.
For such a tree $T_\YYY$ call a vertex $u\in \kappa\cap V(T_\YYY)$ with $\kappa\in \YYY$ a \emph{terminal} of~$T_\YYY$. 
We let $C=\{u\in V(G) : u$ is a terminal of some
$T_\YYY\}$. 

Let~$G'$ be a subgraph of $G$ which contains 
$X$, all $T_\YYY$ as above, and a set of vertices and 
edges such that $\rho_r^G[u,X]=\rho_r^{G'}[u,X]$
for all $u\in C$. 
\end{definition}

\begin{lemma}\label{lem:computeG_X}
There exist functions $f$ and 
$g$ such that 
for every $G\in\Cc$, $X\subseteq V(G)$ and $\epsilon>0$ 
we can compute 
a graph $G'$ as described above of size $f(r,\epsilon)\cdot
|X|^{2t(1+\epsilon)}$ in time $g(r,t,\epsilon)\cdot
|X|^{2t(1+\epsilon)}$. 
\end{lemma}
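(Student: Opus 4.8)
The plan is to carry out the construction of Definition~\ref{def:GX} literally, using Lemma~\ref{lem:projection-complexity} to bound the number of {\sc Group Steiner Tree} instances that have to be solved. First I would compute, for every $u\in V(G)\setminus X$, the $r$-projection profile $\rho_r^G[u,X]$ by running a depth-$r$ breadth-first search from $u$ in the graph obtained from $G$ by forbidding the \emph{internal} use of vertices of $X$; along the way the search records, for each $v\in X$ reached at distance at most $r$, one shortest $X$-avoiding $u$--$v$ path of length at most $r$. Grouping $V(G)\setminus X$ by equality of profiles yields the classes of $\sim_{X,r}$, and by Lemma~\ref{lem:projection-complexity} their number is $N:=\projprof_r(G,X)\leq\fproj(r,\epsilon)\cdot|X|^{1+\epsilon}$.

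Next I would enumerate all subsets $\YYY$ of the set of projection classes with $|\YYY|\leq 2t$; there are at most $(2t+1)N^{2t}\leq(2t+1)\fproj(r,\epsilon)^{2t}\cdot|X|^{2t(1+\epsilon)}$ of them. For each such $\YYY$, treating the classes as groups, I would run the {\sc Group Steiner Tree} algorithm of Misra et al.~\cite{MisraPRSS10}, which on at most $2t$ groups runs in time $2^{\Oof(t)}\cdot n^{\Oof(1)}$, to test whether $\stt_G(\YYY)\leq 2t$ and, if so, to extract a minimum group Steiner tree $T_\YYY$. Taking, in each group of each such $\YYY$, one vertex of $T_\YYY$ lying in it produces the terminal set $C$; since $|C|$ is at most the total number of vertices on the trees $T_\YYY$, we get $|C|\leq 2t(2t+1)\fproj(r,\epsilon)^{2t}\cdot|X|^{2t(1+\epsilon)}$.

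Finally I would let $G'$ be the subgraph of $G$ formed by $X$, all the trees $T_\YYY$, and, for every $u\in C$ and every $v\in X$ with $\rho_r^G[u,X](v)\leq r$, the shortest $X$-avoiding $u$--$v$ path recorded in the first step. The only nontrivial point is that $\rho_r^{G'}[u,X]=\rho_r^G[u,X]$ for every $u\in C$: the finite values of the profile are witnessed by the inserted paths, while the value $\infty$ survives automatically, because passing to a subgraph can only lengthen $X$-avoiding paths, so no new short connection to $X$ is created; hence $G'$ has exactly the properties required in Definition~\ref{def:GX}. The size bound is then a routine count, $|V(G')|\leq|X|+2t(2t+1)\fproj(r,\epsilon)^{2t}|X|^{2t(1+\epsilon)}+(r+1)|X|\cdot|C|$, which after rescaling $\epsilon$ and absorbing lower-order terms is at most $f(r,\epsilon)\cdot|X|^{2t(1+\epsilon)}$, and the running time is dominated by solving $\Oof(|X|^{2t(1+\epsilon)})$ group Steiner tree instances of cost $2^{\Oof(t)}n^{\Oof(1)}$ each, together with the polynomial-time profile computation, matching the claimed bound $g(r,t,\epsilon)\cdot|X|^{2t(1+\epsilon)}$ up to the unavoidable polynomial factor in $n$. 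The main obstacle, I expect, is the bookkeeping in this last step: confirming that the witnessing paths genuinely reinstate the \emph{full} profiles of the terminals without shortening any $X$-avoiding distance (which is exactly what subgraph monotonicity rules out) while keeping the blow-up within the target $|X|^{2t(1+\epsilon)}$. This coarse graph $G'$ is only a first stage; the weak-coloring-number machinery developed in the remainder of Section~\ref{sec:tree-closure} is what will later compress it to the $|V(G')|\leq f(r,t,\epsilon)\cdot|X|^{2+\epsilon}$ bound of Lemma~\ref{lem:tree-closure}.
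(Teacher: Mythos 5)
Your construction is essentially the one the paper uses: compute the projection profiles, enumerate the at most $\Oof(|X|^{2t(1+\epsilon)})$ subsets $\YYY$ of projection classes, attach an optimal (group) Steiner tree for each feasible $\YYY$, and then restore the profiles of the terminals by adding shortest $X$-avoiding paths, with correctness of item 2) of Definition~\ref{def:GX} following from subgraph monotonicity exactly as you argue; the size count (including the extra factor of $r|X|$ that both you and the paper sweep into ``rescaling $\epsilon$'') is the same. The one genuine difference is the subroutine for the Steiner trees: you invoke the $\Oh(2^{\Oh(t)}\cdot n^{\Oh(1)})$ {\sc Group Steiner Tree} algorithm of Misra et al.\ on the projection classes as groups, whereas the paper observes that each instance can be phrased as an existential first-order sentence on a colored graph and solved in time $h(t,\epsilon)\cdot n^{1+\epsilon}$ on a nowhere dense class (and similarly computes profiles by $|X|$ truncated BFS runs rather than $n$ of them, exploiting that nowhere dense graphs have $\Oof(n^{1+\epsilon})$ edges). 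Your route is more elementary and suffices for the polynomial kernel-size claim, which is all that is ultimately needed; the paper's route is what actually delivers the almost-linear $n^{1+\epsilon}$ dependence promised in the running-time bound of Lemma~\ref{lem:tree-closure}, so you should be aware that with Misra et al.'s algorithm you only get ``polynomial in $n$'' rather than $n^{1+\epsilon}$. (As you note, the literal time bound in the lemma statement omits the $n$-factor entirely; that is an artifact of the paper's phrasing, not a defect of your argument.)
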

\begin{proof}
According to Lemma~\ref{lem:projection-complexity} there is a function
$\fproj$ such that for every $G\in \Cc$, vertex subset 
$A\subseteq V(G)$, and $\epsilon>0$ we have $\projprof_r(G,A)\leq \fproj(r,\epsilon)\cdot |A|^{1+\epsilon}$. We now apply the lemma
to $A=X$. 

We compute for each $v\in X$ 
the first $r$ levels of a breadth-first search 
(which terminates whenever 
another vertex of $X$ is encountered, as to compute $X$-avoiding
paths). For each visited vertex $w\in V(G)$ we remember the
distance to $v$. In this manner, we compute in time
$\Oof(|X|\cdot n^{1+\epsilon})$ the projection profile
of every vertex $w\in V(G)$. Observe that Lemma~\ref{lem:wcolbound}
applied to $r=1$ implies that an 
$n$-vertex graph $G\in \Cc$ is $n^\epsilon$-degenerate 
and in particular has only $\Oof(n^{1+\epsilon})$ many edges. 
Hence a breadth-first search can be computed in time
$\Oof(n^{1+\epsilon})$. 

We now decide for each subset $\YYY$ of at most $2t$ 
projection classes whether $\mathbf{st}_G(\YYY)\leq 2t$. 
If this is the case, we also compute a 
Steiner tree $T_\YYY$ of minimum size in time 
$h(t,\epsilon)\cdot n^{1+\epsilon}$ for some
function $h$. To see that this
is possible, observe that the problem is equivalent to testing 
whether an existential 
first-order sentence holds in a colored graph, which is possible
in the desired time on nowhere dense classes~\cite{grohe2014deciding, sparsity}.

Finally, for each sub-tree $T_\YYY$ and each $\kappa\in 
\YYY$ fix some terminal $u\in \kappa\cap V(T_\YYY)$. Compute the 
first~$r$ levels of an $X$-avoiding breadth-first search with
root $u$ and add the vertices and edges of the bfs-tree 
to ensure
that $\rho_r^G[u,X]=\rho_r^{G'}[u,X]$. Observe that
by adding these vertices 
we add at most $|X|\cdot r$ vertices for each vertex $u$. 

As we have  $\Oof\left(\left(|X|^{(1+\epsilon)}\right)^{2t}\right)=\Oof\left(|X|^{2t(1+\epsilon)}\right)$ many subsets
of projection classes of size at most $2t$, we can conclude by defining $f$ and $g$ appropriately. 
\end{proof}

It remains to argue that the graph $G'$ is in fact much smaller than 
our initial estimation in Lemma~\ref{lem:computeG_X}. First, as 
outlined earlier, we do not care about polynomial blow-ups when 
bounding the weak coloring numbers. 

\begin{lemma}\label{lem:wcolGX}
There is a function $h$ such that
for all $s\in \N$ and $\epsilon>0$
we have \[\wcol_{s}(G')\leq h(r,s,t,\epsilon)\cdot |X|^\epsilon.\]
\end{lemma}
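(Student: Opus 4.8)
The plan is to obtain Lemma~\ref{lem:wcolGX} as an essentially immediate consequence of Zhu's theorem (Theorem~\ref{lem:wcolbound}), combined with the crude size bound on $G'$ already established in Lemma~\ref{lem:computeG_X}. This is precisely an instance of the ``pull polynomial blow-ups into $\fwcol$'' observation recorded right after Theorem~\ref{lem:wcolbound}: the only nontrivial input is that $|V(G')|$ is bounded by a \emph{polynomial} in $|X|$ whose degree does not depend on $\epsilon$.

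First I would recall that, by Definition~\ref{def:GX}, the graph $G'$ is a subgraph of $G$, and $G\in\Cc$; hence Theorem~\ref{lem:wcolbound} applies to $G'$, giving for every $s\in\N$ and every $\delta>0$ the bound $\wcol_s(G')\leq \fwcol(s,\delta)\cdot|V(G')|^{\delta}$. Next, from Lemma~\ref{lem:computeG_X} we have $|V(G')|\leq f(r,\epsilon_0)\cdot|X|^{2t(1+\epsilon_0)}$ for every $\epsilon_0>0$; fixing $\epsilon_0=1$ yields the clean bound $|V(G')|\leq f(r,1)\cdot|X|^{4t}$, a polynomial in $|X|$ of degree $4t$ whose coefficient depends only on $r$.

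Given the target exponent $\epsilon>0$ in the statement, I would then apply the displayed inequality with $\delta:=\epsilon/(4t)$, obtaining
\[
\wcol_s(G')\ \leq\ \fwcol\!\left(s,\frac{\epsilon}{4t}\right)\cdot\bigl(f(r,1)\cdot|X|^{4t}\bigr)^{\epsilon/(4t)}\ =\ \fwcol\!\left(s,\frac{\epsilon}{4t}\right)\cdot f(r,1)^{\epsilon/(4t)}\cdot|X|^{\epsilon}.
\]
Setting $h(r,s,t,\epsilon):=\fwcol\!\left(s,\frac{\epsilon}{4t}\right)\cdot f(r,1)^{\epsilon/(4t)}$, which is a function of $r,s,t,\epsilon$ only, gives exactly $\wcol_s(G')\leq h(r,s,t,\epsilon)\cdot|X|^\epsilon$, as claimed.

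There is essentially no obstacle in this particular lemma; the only point requiring minor care is that the degree $4t$ of the size bound on $G'$ is a parameter of the statement rather than a free small quantity, so the rescaling $\epsilon\mapsto\epsilon/(4t)$ is legitimate and the resulting constant remains a function of $r,s,t,\epsilon$. The genuine difficulty lies not here but in the subsequent step: using precisely this weak-coloring-number bound for $G'$ to improve the crude $|X|^{2t(1+\epsilon)}$ estimate of Lemma~\ref{lem:computeG_X} down to the $|X|^{2+\epsilon}$ bound promised in item \textit{4)} of Lemma~\ref{lem:tree-closure}, via a charging/shallow-minor argument analogous to the bounded-expansion case.
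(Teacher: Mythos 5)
Your proof is correct and follows essentially the same route as the paper's: apply Zhu's theorem (Theorem~\ref{lem:wcolbound}) to the subgraph $G'$ with a rescaled exponent, using the polynomial size bound of Lemma~\ref{lem:computeG_X} with a fixed choice of its $\epsilon$-parameter. The only (immaterial) difference is that the paper instantiates Lemma~\ref{lem:computeG_X} with $\epsilon=1/2$, obtaining degree $3t$ and rescaling to $\epsilon/(3t)$, whereas you use $\epsilon_0=1$, degree $4t$, and $\epsilon/(4t)$.
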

\begin{proof}
Choose $\epsilon':= \epsilon/(3t)$. According to  Lemma~\ref{lem:computeG_X}, $G'$ has size at most 
$f(r,1/2)\cdot |X|^{3t}$ (apply the lemma with $\epsilon=1/2$). 
According to Lemma~\ref{lem:wcolbound}, we have 
\[\wcol_{s}(G')\leq \fwcol(s,\epsilon')\cdot \left(f(r,1/2)\cdot |X|^{3t}\right)^{\epsilon'}.\] Conclude by defining $h(r,s,t,\epsilon)=
\fwcol(s,\epsilon')\cdot f(r,1/2)^{\epsilon'}$. 
\end{proof}

Our next aim is to decompose the group Steiner trees
into single paths which are then analyzed with the help of
the weak coloring numbers. We need a few more
auxiliary lemmas. 

%\begin{definition}
%The \emph{lexicographic product} $G\bullet H$ of two graphs $G$ and
%$H$ is defined by $V(G\bullet H)= V(G)\times V(H)$ and $E(G\bullet
%H)=\big\{\{(x,y), (x',y')\} : \{x,x'\}\in E(G)$ or $\big(x=x'$ and
%$\{y,y'\}\in E(H)\big)\big\}$. 
%\end{definition}

\medskip
The following two lemmas are easy consequences of 
the definitions. 

\begin{lemma}\label{lem:wcollex}
Let $G,H$ be graphs and let $s\in \N$. Then 
$\wcol_{s}(G\odot H)\leq |V(H)|\cdot \wcol_{s}(G)$. 
\end{lemma}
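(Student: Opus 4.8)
The plan is to exhibit, for every linear order $L_G$ of $V(G)$, a linear order $L$ of $V(G\odot H)$ whose every weak $s$-reachability set has size at most $|V(H)|\cdot\bigl|\WReach_s[G,L_G,\cdot]\bigr|$, and then take the minimum over $L_G$. Recall that the vertex set of $G\odot H$ is $V(G)\times V(H)$, with $(u,a)$ and $(v,b)$ adjacent iff $uv\in E(G)$, or $u=v$ and $ab\in E(H)$. So each $u\in V(G)$ is "blown up" into a copy of $H$ (a \emph{bag} $B_u=\{u\}\times V(H)$), and all edges between two distinct bags $B_u,B_v$ are present exactly when $uv\in E(G)$.

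First I would fix an optimal order $L_G$ witnessing $\wcol_s(G)$, fix an arbitrary order $L_H$ of $V(H)$, and define $L$ on $V(G\odot H)$ by ordering bags according to $L_G$ and breaking ties inside a bag by $L_H$: that is, $(u,a)<_L(v,b)$ iff $u<_{L_G}v$, or $u=v$ and $a<_{L_H}b$. Next I would take any vertex $(v,b)$ and any path $P$ in $G\odot H$ of length at most $s$ from $(v,b)$ to a vertex $(u,a)$ that is $L$-minimal on $P$; I must bound the number of possible endpoints $(u,a)$. The key observation is that projecting $P$ to $G$ (replacing each $(w,c)$ by $w$ and deleting repetitions coming from steps inside a single bag) yields a walk $P'$ in $G$ from $v$ to $u$ of length at most $s$, and — crucially — since $(u,a)$ is $L$-minimal on $P$ and bags are ordered by $L_G$, the vertex $u$ is $L_G$-minimal among the $G$-vertices appearing on $P'$. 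Extracting a path from $P'$ preserves this (any sub-path still has $u$ as its $L_G$-minimum, because $u$ was the global minimum of the vertex set of $P'$). Hence $u\in\WReach_s[G,L_G,v]$.

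So the set of $G$-coordinates of vertices in $\WReach_s[G\odot H,L,(v,b)]$ is contained in $\WReach_s[G,L_G,v]$, which has at most $\wcol_s(G)$ elements; and for each such $G$-coordinate $u$ there are at most $|V(H)|$ choices of $H$-coordinate $a$. Therefore $\bigl|\WReach_s[G\odot H,L,(v,b)]\bigr|\le |V(H)|\cdot\wcol_s(G)$ for every $(v,b)$, and taking the maximum over vertices and then the minimum over $L$ (using this particular $L$ as an upper bound) gives $\wcol_s(G\odot H)\le |V(H)|\cdot\wcol_s(G)$, as claimed. The only mildly delicate point — the main obstacle, such as it is — is the bookkeeping in passing from the walk $P'$ to an actual path while certifying that $u$ remains the minimum vertex; this is where one should be careful, but it follows because $u$ is the minimum of the \emph{entire} vertex set of $P'$, a property inherited by every sub-path, so no further argument about which path is extracted is needed. $\square$
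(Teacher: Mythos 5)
Your proof is correct, and it is the standard argument the authors have in mind: the paper itself omits the proof entirely (it only remarks that the lemma is "an easy consequence of the definitions"), and the intended argument is exactly your bag-ordering of $V(G)\times V(H)$ by $L_G$ with ties broken arbitrarily, followed by projecting a witnessing path to a walk in $G$ and extracting a path. Your attention to the walk-to-path extraction (the $L_G$-minimality of $u$ over the whole vertex set of the walk being inherited by any sub-path) is precisely the right detail to check, and it goes through as you say.
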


%Recall that
%$G_s$ denotes the $s$-subdivision of $G$. 

\begin{lemma}\label{lem:wcolsubdiv}
Let $G$ be a graph and let $r',s\in\N$. 
Let $H$ be any graph obtained by replacing some
edges of $G$ by paths of length $r$. 
Then $\wcol_{r'}(H)\leq s+\wcol_{r'}(H)$. 
\end{lemma}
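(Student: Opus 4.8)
The plan is to prove the (evidently intended) statement $\wcol_{r'}(H)\le \wcol_{r'}(G)+s$, where $H$ is obtained from $G$ by replacing some edges with internally vertex-disjoint paths of length at most $s$ (the second ``$H$'' on the right-hand side of the displayed inequality should read ``$G$'', and the subdivision length should be $s$). First I would fix a linear order $L\in\Pi(G)$ attaining $\wcol_{r'}(G)$, i.e.\ with $|\WReach_{r'}[G,L,v]|\le\wcol_{r'}(G)$ for all $v\in V(G)$, and extend it to a linear order $L'$ on $V(H)$ by keeping the original vertices in the order of $L$ and appending all newly introduced (internal) subdivision vertices afterwards, so that every subdivision vertex is larger in $L'$ than every original vertex; within the appended block the order may be taken arbitrarily. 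It then remains to bound $|\WReach_{r'}[H,L',v]|$ for each $v\in V(H)$, distinguishing original vertices from subdivision vertices.

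For an original vertex $v\in V(G)$, every vertex weakly $r'$-reachable from $v$ in $H$ is again original: a witnessing path $P$ contains $v$, which is below every subdivision vertex in $L'$, so the $L'$-minimum of $P$ is not a subdivision vertex. Contracting each maximal run of subdivision vertices traversed by $P$ (which necessarily lies on a single replacement path and is flanked by its two original endpoints) back to the corresponding edge of $G$ turns $P$ into a simple path in $G$ with the same endpoints, no larger length, and the same $L$-minimum; hence the target vertex lies in $\WReach_{r'}[G,L,v]$, and $|\WReach_{r'}[H,L',v]|\le\wcol_{r'}(G)$.

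The main work, and the expected main obstacle, is the case of a subdivision vertex $w$, say lying on the path $P_e$ that replaces an edge $e=xy$ of $G$. Here I would split the set of vertices weakly $r'$-reachable from $w$ into (i) subdivision vertices and (ii) original vertices. For (i): if the $L'$-minimum of a witnessing path is a subdivision vertex, then all its vertices are subdivision vertices (originals being smaller), and since distinct replacement paths are internally disjoint and meet only at original vertices, such a path stays inside $P_e$; as $P_e$ has at most $s+1$ vertices this contributes at most $s$ vertices. For (ii): a path from $w$ witnessing reachability of an original $u$ must leave $P_e$ within at most $s$ steps through $x$ or through $y$, and its suffix from that endpoint onwards witnesses (after contracting subdivision runs as above) that $u\in\WReach_{r'}[G,L,x]\cup\WReach_{r'}[G,L,y]$. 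Combining, $|\WReach_{r'}[H,L',w]|\le s+2\wcol_{r'}(G)$. The factor $2$ is an artifact of placing all subdivision vertices last and is immaterial for every application in this paper, since $\wcol_{r'}(G)$ is only ever used up to a multiplicative function factor via Theorem~\ref{lem:wcolbound}; one may therefore record the bound as $\wcol_{r'}(H)\le s+\wcol_{r'}(G)$. Taking the maximum over all $v\in V(H)$ of the two cases yields the claim; the only genuinely non-routine point is the control of a subdivision vertex's weak reach in part (i) (which uses internal disjointness of the replacement paths) and in part (ii) (which uses that a middle subdivision vertex escapes only through the two endpoints of its own replacement path).
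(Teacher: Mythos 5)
The paper offers no proof to compare against: Lemma~\ref{lem:wcolsubdiv} is dispatched with the phrase ``easy consequences of the definitions,'' and its printed statement is garbled (``paths of length $r$'' and ``$\wcol_{r'}(H)\leq s+\wcol_{r'}(H)$''). You correctly reconstruct the intended claim, and your argument is sound: extending an optimal order of $G$ by appending the subdivision vertices, the case analysis for original vertices (contracting maximal runs of internal vertices, each of which must traverse a whole replacement path, preserving endpoints, minima, and not increasing length) and for subdivision vertices (splitting the weak reach into the at most $s$ internal vertices of the same replacement path and the originals reachable through one of its two endpoints) is exactly the standard argument, and each step you single out as needing care (internal disjointness of the replacement paths, escape only through the two endpoints) is handled correctly.

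The one substantive deviation is that your ordering yields $\wcol_{r'}(H)\leq s+2\wcol_{r'}(G)$ rather than $s+\wcol_{r'}(G)$, since an internal vertex of the path replacing $e=xy$ may weakly reach vertices of both $\WReach_{r'}[G,L,x]$ and $\WReach_{r'}[G,L,y]$. You are right that this is immaterial: the lemma is used only once, in Lemma~\ref{lem:classgraph}, where $\wcol_{r'}(\dot{G})$ is bounded by a product of unspecified functions times $|X|^{1+\epsilon}$ plus an additive term, so a factor $2$ is absorbed. Two small remarks. First, since you only establish the weaker inequality, it would be cleaner to state and use that inequality rather than to ``record'' the unproved sharper one. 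Second, the additive term actually invoked in Lemma~\ref{lem:classgraph} is $r'$, not the subdivision length; your part (i) already delivers this if you order the interior of each replacement path monotonically from one endpoint to the other, so that an internal vertex weakly reaches at most $\min(r',s)+1$ internal vertices rather than $s$ of them --- a one-line refinement that reconciles your bound with the way the lemma is applied.
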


To estimate the size of $G'$ we reduce the 
group Steiner tree problems to simple Steiner tree problems
in a super-graph $\dot{G}$ of $G'$. 

\begin{definition}
See Figure~\ref{fig:dotG} for an illustration of the following 
construction. Let $G'$ with distinguished terminal vertices $C$ be as 
described
above. 
For each equivalence class $\kappa$ represented in $C$, 
fix some vertex $x_\kappa\in M_r^G(u,X)$
for $u\in \kappa$ which is of minimum distance to $u$ 
among all such choices (for our purpose we may assume
that the empty class with 
$M_r^G(u,X)=\emptyset$ is not realized in $G$). 

\begin{figure}
\begin{center}
  \begin{tikzpicture}[circle dotted/.style={dash pattern=on .05mm 
  off 2pt, line cap=round}]
  
  \node at (-0.2, -3.3) {\textbf{a)}};
  \fill[black] (0,0) circle (2pt); 
\fill[black] (1,0) circle (2pt); 
\fill[black] (2,0) circle (2pt); 
\fill[black] (3,0) circle (2pt); 
\draw[rounded corners=10] (-0.5,-0.5) rectangle (3.5,0.5);  
\node at (0,0.25) {$x_1$};
\node at (1,0.25) {$x_2$};
\node at (2,0.25) {$x_3$};
\node at (3,0.25) {$x_4$};

\fill[black] (0.25,-1) circle (2pt); 
\fill[black] (0.5,-2) circle (2pt); 
\fill[black] (0.85,-3) circle (2pt); 
\fill[black] (1.25,-2) circle (2pt); 
\draw[-] (0.85,-3) -- (0.5,-2) -- (0.25, -1) -- (0,0); 
\draw[-] (0.5,-2) -- (1,0);
\draw[-] (1.25,-2) -- (2,0);
\draw[-] (0.85,-3) -- (1.25,-2);

\draw[-] (1.25,-3) -- (0.85,-2) -- (0.5, -1) -- (0,0); 
\draw[-] (0.85,-2) -- (1,0);
\draw[-] (0.85,-2) -- (2,0);
\draw[-] (1.75,-2) -- (2,0);
\draw[-] (1.65,-3) -- (0.85,-2);
\draw[-] (1.65,-3) -- (1.75,-2);
\fill[black] (0.5,-1) circle (2pt); 
\fill[black] (0.85,-2) circle (2pt); 
\fill[black] (1.25,-3) circle (2pt); 
\fill[black] (1.75,-2) circle (2pt); 
\fill[black] (1.65,-3) circle (2pt); 

\node at (0.85,-3.3) {$u_1$};
\node at (1.35,-3.3) {$u_2$};
\node at (1.85,-3.3) {$u_3$};

\draw[dashed] (4,1) -- (4,-4);

\begin{scope}[xshift=5cm]
  \node at (-0.2, -3.3) {\textbf{b)}};

\draw[rounded corners=10] (-0.5,-0.5) rectangle (3.5,0.5);  
\node at (0,0.25) {$x_1$};
\node at (1,0.25) {$x_2$};
\node at (2,0.25) {$x_3$};
\node at (3,0.25) {$x_4$};

\draw[-,ultra thick,red] (0.85,-3) -- (0.5,-2);
\draw[-] (0.5,-2) -- (0.25, -1) -- (0,0); 
\draw[-,ultra thick,red] (0.5,-2) -- (1,0);
\draw[-] (1.25,-2) -- (2,0);
\draw[-] (0.85,-3) -- (1.25,-2);

\draw[-,ultra thick,red] (1.25,-3) -- (0.85,-2);
\draw[-] (0.85,-2) -- (0.5, -1) -- (0,0); 
\draw[-,ultra thick,red] (0.85,-2) -- (1,0);
\draw[-] (0.85,-2) -- (2,0);
\draw[-] (1.75,-2) -- (2,0);
\draw[-,ultra thick,red] (1.65,-3) -- (0.85,-2);
\draw[-] (1.65,-3) -- (1.75,-2);

  \fill[black] (0,0) circle (2pt); 
\fill[black] (1,0) circle (2pt); 
\fill[black] (2,0) circle (2pt); 
\fill[black] (3,0) circle (2pt); 
\fill[black] (0.25,-1) circle (2pt); 
\fill[black] (0.5,-2) circle (2pt); 
\fill[black] (0.85,-3) circle (2pt); 
\fill[black] (1.25,-2) circle (2pt); 

\fill[black] (0.5,-1) circle (2pt); 
\fill[black] (0.85,-2) circle (2pt); 
\fill[black] (1.25,-3) circle (2pt); 
\fill[black] (1.75,-2) circle (2pt); 
\fill[black] (1.65,-3) circle (2pt); 

\node at (0.85,-3.3) {$u_1$};
\node at (1.35,-3.3) {$u_2$};
\node at (1.85,-3.3) {$u_3$};
\end{scope}

\draw[dashed] (9,1) -- (9,-4);

\begin{scope}[xshift=10cm]
  \node at (-0.2, -3.3) {\textbf{c)}};

\draw[rounded corners=10] (-0.5,-0.5) rectangle (3.5,0.5);  
\node at (0,0.25) {$x_1$};
\node at (1,0.25) {$x_2$};
\node at (2,0.25) {$x_3$};
\node at (3,0.25) {$x_4$};

\draw[-] (0.85,-3) -- (0.5,-2);
\draw[-] (0.5,-2) -- (0.25, -1) -- (0,0); 
\draw[-] (0.5,-2) -- (1,0);
\draw[-] (1.25,-2) -- (2,0);
\draw[-] (0.85,-3) -- (1.25,-2);

\draw[-] (1.25,-3) -- (0.85,-2);
\draw[-] (0.85,-2) -- (0.5, -1) -- (0,0); 
\draw[-] (0.85,-2) -- (1,0);
\draw[-] (0.85,-2) -- (2,0);
\draw[-] (1.75,-2) -- (2,0);
\draw[-] (1.65,-3) -- (0.85,-2);
\draw[-] (1.65,-3) -- (1.75,-2);

  \fill[black] (0,0) circle (2pt); 
\fill[black] (1,0) circle (2pt); 
\fill[black] (2,0) circle (2pt); 
\fill[black] (3,0) circle (2pt); 
\fill[black] (0.25,-1) circle (2pt); 
\fill[black] (0.5,-2) circle (2pt); 
\fill[black] (0.85,-3) circle (2pt); 
\fill[black] (1.25,-2) circle (2pt); 

\fill[black] (0.5,-1) circle (2pt); 
\fill[black] (0.85,-2) circle (2pt); 
\fill[black] (1.25,-3) circle (2pt); 
\fill[black] (1.75,-2) circle (2pt); 
\fill[black] (1.65,-3) circle (2pt); 

%\node at (0.85,-3.3) {$u_1$};
%\node at (1.35,-3.3) {$u_2$};
%\node at (1.85,-3.3) {$u_3$};

\draw[line width = 1pt,circle dotted] (1.05,-3.5) -- (1.25,-4) -- (1.45,-3.5);
\draw[line width = 1pt,circle dotted] (1.05, -3.5) -- (0.85,-3);
\draw[line width = 1pt,circle dotted] (1.25, -3) -- (1.45,-3.5) -- (1.65,-3);
\fill[black!30!white, draw=black] (1.05,-3.5) circle (1.5pt); 
\fill[black!30!white, draw=black] (1.45,-3.5) circle (1.5pt); 
\fill[black!30!white, draw=black] (1.25,-4) circle (1.5pt); 
\end{scope}

  \end{tikzpicture}
\end{center}
\caption{(a) The vertices $u_1,u_2,u_3$ realize the same projection
profile $\rho_r^G[u_1,X]=(3,2,2,\infty)$. (b) We have chosen 
$x_2$ as $x_\kappa$, which results in the indicated tree $T_\kappa$. (c) A subdivided copy of $T_\kappa$ is added to $\dot{G}$.}
\label{fig:dotG} 
\end{figure}
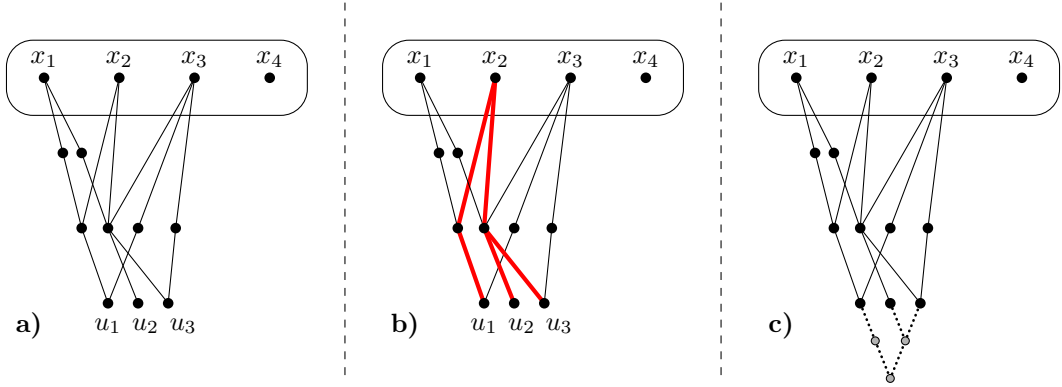

Let $T_\kappa$ be a tree which contains
for each $u\in \kappa\cap C$ an $X$-avoiding path of minimum 
length between $u$ and $x_\kappa$ (e.g.\ obtained by  
an $X$-avoiding breadth-first search with root~$x_\kappa$). 
Note that the vertices
of $\kappa\cap C$ appear as leaves of $T_\kappa$ and all 
leaves have the same distance from the root $x_\kappa$.
To see this, note that if a vertex $u$ of
$\kappa\cap C$ lies on a shortest path from~$x_\kappa$ to another
vertex $v$ of~$\kappa\cap C$, then the $X$-avoiding 
distance between $u$ and $x_\kappa$
is smaller than the $X$-avoiding distance between 
$v$ and $x_\kappa$, contradicting that all vertices of 
$\kappa\cap C$ have the same projection profile. Recall that 
by construction projection profiles are preserved for each 
vertex of $\kappa\cap C$.

Let $\dot{G}$ be the graph obtained by adding to $G'$ 
for each equivalence class $\kappa\cap C$ a 
copy of~$T_\kappa$, with each
each edge subdivided~$2r$
times. Then identify the leaves of this copy $T_\kappa$
with the respective vertices of $\kappa$.  
\end{definition}

\begin{lemma}\label{lem:classgraph}
There exists a function $f_\bullet$ such that for all 
$r'\in\N$ and all $\epsilon>0$ 
we have $\wcol_{r'}(\dot{G})\leq f_\bullet(r',t, \epsilon)
\cdot |X|^{1+\epsilon}$. 
\end{lemma}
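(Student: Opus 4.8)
The plan is to transfer the good weak colouring bound for $G'$ (Lemma~\ref{lem:wcolGX}) through the only genuinely new structure present in $\dot G$, namely the $2r$-subdivided trees $T_\kappa$, using the lexicographic-product bound (Lemma~\ref{lem:wcollex}), the subdivision bound (Lemma~\ref{lem:wcolsubdiv}), and the bound on the number of projection classes (Lemma~\ref{lem:projection-complexity}). Fix the target depth $r'$. Recall that $\dot G$ consists of $G'$ together with, for each projection class $\kappa$ meeting $C$, a $2r$-subdivided copy of the depth-$\le r$ tree $T_\kappa$, glued to $G'$ only at its root $x_\kappa\in X$ and at its leaves $\kappa\cap C\subseteq V(G')$; moreover $\dot G[V(G')]=G'$, since no edge inside $V(G')$ is added. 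The first observation is that any path of length at most $r'$ in $\dot G$ decomposes into maximal segments inside $G'$ and maximal \emph{excursions} into the subdivision vertices of a single $T_\kappa$, where each excursion has its two endpoints in $\{x_\kappa\}\cup(\kappa\cap C)$ and every excursion that actually ``passes through'' a tree costs at least $2r+1$ edges; hence such a path enters only $\Oof_r(r')$ trees and its segments have endpoints among $\Oof_r(r')$ vertices of $V(G')$.

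Concretely, I would fix $s$ large enough in terms of $r$ and $r'$ only, fix a linear order $L'$ of $V(G')$ witnessing $\wcol_s(G')\le h(r,s,t,\epsilon)\cdot|X|^\epsilon$ (Lemma~\ref{lem:wcolGX}), and extend $L'$ to a linear order $L$ of $V(\dot G)$ by appending, after all of $V(G')$, the subdivision vertices of the trees, ordered so that within each $T_\kappa$ a vertex closer to the root precedes one farther from the root. Then I bound $\WReach_{r'}[\dot G,L,v]$ for every $v$. For $v\in V(G')$ no subdivision vertex can be the minimum of a path starting at $v$, so $\WReach_{r'}[\dot G,L,v]\subseteq V(G')$; decomposing as above gives $\WReach_{r'}[\dot G,L,v]\subseteq\{v\}\cup\bigcup_i \WReach_s[G',L',q_i]$ over the $\Oof_r(r')$ segment endpoints $q_i\in V(G')$, so $|\WReach_{r'}[\dot G,L,v]|=\Oof_r(r')\cdot h(r,s,t,\epsilon)\cdot|X|^\epsilon$. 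For a subdivision vertex $v$ of some $T_\kappa$ the analysis is identical once the path leaves $T_\kappa$; while the path stays inside $T_\kappa$ its $L$-minimum is one of the at most $s+1$ subdivision vertices of $T_\kappa$ lying on the branch from $v$ towards the root; and the only new contribution comes from paths that reach $x_\kappa$ and then re-enter trees $T_{\kappa'}$ with $x_{\kappa'}=x_\kappa$, of which there are at most $\projprof_r(G,X)\le\fproj(r,\epsilon)\cdot|X|^{1+\epsilon}$ by Lemma~\ref{lem:projection-complexity}. Collecting these estimates yields $|\WReach_{r'}[\dot G,L,v]|\le f_\bullet(r',t,\epsilon)\cdot|X|^{1+\epsilon}$ for a suitable $f_\bullet$ (which also absorbs the dependence of $\fproj$, of $h$, and of $s$ on the fixed parameter $r$).

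The step I expect to be the main obstacle is exactly the bookkeeping around vertices of $X$ that serve as the root $x_\kappa$ of many trees: this is where the extra factor $|X|$, rather than the $|X|^\epsilon$ of Lemma~\ref{lem:wcolGX}, is genuinely forced, and one must verify that the $2r$-subdivision prevents any worse blow-up — that a length-$\le r'$ path cannot thread through more than $\Oof_r(r')$ trees, that through each of its $\Oof_r(r')$ excursions it picks up at most one weakly reachable representative per class, and that the (possibly numerous) leaves $\kappa\cap C$ of a wide tree are never simultaneously minima of short paths, because every path joining two of them, or one of them to the rest of $G'$, passes through their common ancestor in $T_\kappa$ or through $x_\kappa$. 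An equivalent and arguably cleaner route is to exhibit $\dot G$ as a subgraph of a $2r$-subdivision of $\hat G\odot K_c$, where $c=c(r,t)$ is a constant and $\hat G$ is $G'$ with one extra vertex per class (adjacent to $x_\kappa$ and to $\kappa\cap C$); one then checks $\wcol_s(\hat G)=\Oof(|X|^{1+\epsilon})$ using Lemmas~\ref{lem:wcolGX} and~\ref{lem:projection-complexity}, and Lemmas~\ref{lem:wcollex} and~\ref{lem:wcolsubdiv} finish the bound.
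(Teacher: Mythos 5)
Your high-level plan (transfer the bound for $G'$ through the subdivided trees via Lemmas~\ref{lem:wcolGX}, \ref{lem:wcollex}, \ref{lem:wcolsubdiv} and \ref{lem:projection-complexity}) is the right one, and it is what the paper does: since there are at most $x=\fproj(r,\epsilon/2)\cdot|X|^{1+\epsilon/2}$ projection classes, the trees $T_\kappa$ can be realized as pairwise disjoint subgraphs of $G'\odot K_x$, so $\dot G$ is a subgraph of a $2r$-subdivision of $G'\odot K_x$, and the three lemmas give $\wcol_{r'}(\dot G)\le x\cdot\wcol_{r'}(G')+O(r')=O(|X|^{1+\epsilon})$. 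But your main, hands-on argument has a genuine gap in the choice of linear order. You append all tree-copy and subdivision vertices \emph{after} $V(G')$. Then for an internal vertex $v$ of the subdivided copy of $T_\kappa$ sitting near a branching vertex $w$ of $T_\kappa$, every leaf $u_i\in\kappa\cap C$ that is a child of $w$ is reachable from $v$ by a path of length at most $4r+1$ whose only $V(G')$-vertex is $u_i$; with your order $u_i$ is then the minimum of that path, so \emph{all} such leaves are weakly $r'$-reachable from $v$ once $r'\ge 4r+1$. The set $C$ collects one terminal per class per group Steiner instance $T_\YYY$, and a single class $\kappa$ lies in $\Theta(x^{2t-1})$ such instances, so $|\kappa\cap C|$ can be of order $|X|^{(2t-1)(1+\epsilon)}$, which exceeds the claimed $|X|^{1+\epsilon}$ whenever $t>1$. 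The fix is precisely to let the tree-copy vertices inherit the order of their originals in $L'$ (which is what ordering $G'\odot K_x$ lexicographically and then handling the subdivision does): then a leaf is weakly reachable from $v$ only if it is $L'$-minimal on the corresponding path of $T_\kappa\subseteq G'$, and the count collapses to $\wcol_{r'}(G')$ per copy.

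Two further corrections. First, you misread the gluing: the root of the subdivided copy of $T_\kappa$ is a \emph{new} vertex $v_\kappa$ (a copy of $x_\kappa$), not identified with $x_\kappa$, and distinct classes have disjoint leaf sets, so the trees of $\dot G$ are attached to $G'$ only at pairwise disjoint sets of leaves; the ``many trees sharing the root $x_\kappa$'' bookkeeping is about the wrong phenomenon, and the true source of the extra factor $|X|^{1+\epsilon}$ is the \emph{number of classes} $x$ entering through the lexicographic product, not root multiplicity. Second, your ``cleaner route'' is almost the paper's proof but with two errors: the blow-up factor must be $K_x$ with $x=\Theta(|X|^{1+\epsilon})$ rather than a constant $K_c$ (distinct trees $T_\kappa$ may share internal vertices of $G'$, and there are $x$ of them), and replacing each $T_\kappa$ by a star on $\{x_\kappa\}\cup(\kappa\cap C)$ changes the graph: a $2r$-subdivided star does not contain the $2r$-subdivision of a depth-$d_\kappa$ branching tree as a subgraph, so you would be bounding the weak coloring number of a graph other than $\dot G$.
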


\begin{proof}
Let $\epsilon':= \epsilon/2$. 
According to Lemma~\ref{lem:projection-complexity}, there is a
function $\fproj$ such that there are at most 
$\fproj(r,\epsilon')\cdot|X|^{1+\epsilon'}=: x$ distinct 
projection profiles. When constructing the graph~$\dot{G}$, we hence create at most so many
trees $T_\kappa$. These can be found as disjoint 
subgraphs in $G'\odot K_x$. Hence, $\dot{G}$ is a 
subgraph of a $2r$-subdivision of
$G'\odot K_x$. 
According to Lemmata~\ref{lem:wcolGX}, \ref{lem:wcollex} 
and \ref{lem:wcolsubdiv}
we have $\wcol_{r'}(\dot{G})\leq 
h(r,r',t,\epsilon')\cdot |X|^{\epsilon'}\cdot \fproj(r,\epsilon')\cdot 
|X|^{1+\epsilon'}+r'$, where $h$ is the function from 
Lemma~\ref{lem:wcolGX}. Assuming that each of these terms
is at least $1$, we can define $f_\bullet(r',t,\epsilon):=
r'\cdot h(r',t,\epsilon')\cdot \fproj(r,\epsilon')$.
\end{proof}

\begin{lemma}\label{lem:translateSteiner}
With each group Steiner tree problem for $\YYY$, we associate
the Steiner tree problem for the set~$Y$ which contains exactly the 
roots of the subdivided trees $T_\kappa$ for each 
$\kappa\in \YYY$. Denote this root by $v_\kappa$
(it is a copy of $x_\kappa$). 
Denote by $d_\kappa$ the distance from $v_\kappa$ 
to $x_\kappa$. Then every group Steiner tree $T_\YYY$ for 
$\YYY$
of size $s\leq 2t$ in $G$ gives rise to a Steiner 
tree for $Y$ of size $s+\sum_{\kappa\in \YYY} d_\kappa$ in 
$\dot{G}$. Vice versa, every Steiner tree for a set~$Y$ of
the above form
of size $s+\sum_{\kappa\in \YYY} d_\kappa$ in $\dot{G}$ 
gives rise to a group Steiner tree of size $s$ for $\YYY$
in $G$. 
\end{lemma}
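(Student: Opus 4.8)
Our target is the size-preserving correspondence, up to the fixed additive term $\sum_{\kappa\in\YYY}d_\kappa$, between group Steiner trees for the family $\YYY$ of projection classes in $G$ and ordinary Steiner trees for the vertex set $Y=\{v_\kappa:\kappa\in\YYY\}$ in $\dot G$; once it is established one reads off $\mathbf{st}_{\dot G}(Y)=\mathbf{st}_G(\YYY)+\sum_{\kappa\in\YYY}d_\kappa$, which is all the subsequent size analysis needs. Here $d_\kappa$ is the common length of a root-to-leaf path in the $2r$-subdivided copy of $T_\kappa$ glued into $\dot G$ (all leaves of that copy, being the vertices of $\kappa\cap C$, are equidistant from its root $v_\kappa$ since they share one projection profile). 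The plan is to prove the two directions separately: the forward one is essentially by inspection, and the backward one carries the weight.

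For the forward direction, suppose there is a group Steiner tree for $\YYY$ of size $s\le 2t$ in $G$. Then $\mathbf{st}_G(\YYY)\le s\le 2t$, so the minimum group Steiner tree $T_\YYY$ fixed in Definition~\ref{def:GX} is available: it lies in $G'\subseteq\dot G$, has at most $s$ vertices, and for each $\kappa\in\YYY$ it contains a terminal $u_\kappa\in\kappa\cap C$. In $\dot G$ each such $u_\kappa$ is a leaf of the subdivided copy of $T_\kappa$ whose root is $v_\kappa$; let $P_\kappa$ be the unique path inside that copy from $v_\kappa$ to $u_\kappa$, which has exactly $d_\kappa$ edges. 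Since the subdivided copies meet one another, and meet $G'$, only along their leaves, the union $T_\YYY\cup\bigcup_{\kappa\in\YYY}P_\kappa$ is again a tree; it contains every $v_\kappa$, hence is a Steiner tree for $Y$, of size at most $s+\sum_{\kappa\in\YYY}d_\kappa$.

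For the backward direction, let $S$ be a Steiner tree for $Y$ in $\dot G$ with $s+\sum_{\kappa\in\YYY}d_\kappa$ vertices; it suffices to treat the case where $S$ has minimum possible size, since from a smaller Steiner tree for $Y$ we only obtain a smaller group Steiner tree. For each $\kappa\in\YYY$, the root $v_\kappa$ together with all non-leaf vertices of the subdivided copy of $T_\kappa$ lies outside $V(G')$, and the copy is attached to the rest of $\dot G$ only along its leaves (which are the vertices of $\kappa\cap C\subseteq V(G')$); hence $S$, being connected and containing $v_\kappa$, must contain an entire $v_\kappa$--leaf path of that copy, and so at least $d_\kappa$ vertices outside $V(G')$. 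Minimality then forces $S$ to use nothing else inside the copy: a path of $S$ entering the copy at one leaf and leaving at another could be rerouted, because the two leaves both belong to $\kappa\cap C$ and are therefore joined to $x_\kappa$ by an $X$-avoiding path of length at most $r$ that survives in $G'$, so this detour through the copy (of length at least $4r+2$) can be replaced by a route of length at most $2r$ through $x_\kappa$ in $G'$. Thus each copy contributes exactly the $d_\kappa$ private vertices of one $v_\kappa$--$u_\kappa$ path, these private sets are pairwise disjoint, $|V(S)\setminus V(G')|=\sum_{\kappa\in\YYY}d_\kappa$, and deleting all of them (keeping the leaves $u_\kappa$) leaves a connected subgraph of $G'\subseteq G$ on at most $s$ vertices that meets every class $\kappa\in\YYY$ through $u_\kappa$; a spanning tree of it is the desired group Steiner tree.

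The step I expect to be the real obstacle is the minimality/rerouting argument in the backward direction, namely verifying that $S$ gains nothing by threading through a subdivided copy rather than staying in $G'$. This is exactly what the $2r$-fold subdivision is designed to guarantee, and making it airtight requires a short bookkeeping computation that weighs the cost of a detour through a $2r$-subdivided copy against the cost of rerouting through the common vertex $x_\kappa$ in $G'$ (including the case of a copy touched in several of its leaves); everything else follows directly from the definitions of $\dot G$, $G'$, $T_\kappa$, and the set $C$, and from Lemma~\ref{lem:wcollex} and Lemma~\ref{lem:wcolsubdiv} being invoked only afterwards.
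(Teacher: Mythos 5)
Your proposal is correct and follows essentially the same route as the paper's proof: the forward direction is handled by attaching the root-to-leaf paths of the subdivided copies to a minimum group Steiner tree sitting in $G'$, and the backward direction rests on the same exchange argument — a minimum Steiner tree for $Y$ cannot use more than one root-to-leaf path of (or thread through) any subdivided copy of $T_\kappa$, because the $2r$-fold subdivision makes such a detour cost at least $2r$ extra vertices while the two leaves can instead be joined through $x_\kappa$ in $G'$ by a path of length at most $2r$. Your write-up is merely more explicit than the paper's (the reduction to minimum trees, the pendant-path deletion, and the disjointness of the private vertex sets are all spelled out), but no new idea is involved.
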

\begin{proof}
The forward direction is clear. Conversely, 
let $T_Y$ be a Steiner tree for a set~$Y$ which contains
only roots of subdivided trees $T_\kappa$
of size 
$s+\sum_\kappa d_\kappa$ in $\dot{G}$. 
We claim that $T_Y$ uses exactly $d_\kappa$
vertices of $T_\kappa$, more precisely, $T_Y$ 
connects exactly one vertex $u\in \kappa$ 
with~$v_\kappa$. Assume $T_Y$ contains two paths $P_1,P_2$
between $v_\kappa$ and vertices $u_1,u_2$ from $\kappa$. 
Because we work with a $2r$-subdivision of $T_\kappa$, 
we have $|V(P_1)\cup V(P_2)|\geq d\kappa+2r$. However, 
there is a path between $u_1$ and $u_2$ via $x_\kappa$
of length at most $2r$ (which uses only $2r-1$ vertices) in 
$\dot{G}$, contradicting the fact that $T_Y$ uses
a minimum number of vertices. 
\end{proof}

\begin{lemma}
There is a function $f$ such that for every $\epsilon>0$
the graph $\dot{G}$ contains at most $f(r,t,\epsilon)\cdot 
|X|^{2+\epsilon}$ vertices. 
\end{lemma}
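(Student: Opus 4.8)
The plan is to bound $|V(\dot G)|$; since $G'\subseteq\dot G$, this simultaneously yields the size bound on $G'$ that finishes the proof of Lemma~\ref{lem:tree-closure}. First I would account for the vertices of $\dot G$ lying outside $G'$: these are exactly the internal vertices of the $2r$-subdivided copies of the trees $T_\kappa$. Because $T_\kappa$ is an $X$-avoiding breadth-first-search tree of depth at most $r$ whose leaves are precisely the vertices of $\kappa\cap C$, it has at most $r\,|\kappa\cap C|+1$ vertices and at most $r\,|\kappa\cap C|$ edges, so subdividing it $2r$ times contributes $\Oof(r^2\,|\kappa\cap C|)$ new vertices plus its root. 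Summing over all projection classes $\kappa$, using $\sum_\kappa|\kappa\cap C|=|C|$ and the fact that the number of classes equals $\projprof_r(G,X)\le\fproj(r,\epsilon)\,|X|^{1+\epsilon}$ by Lemma~\ref{lem:projection-complexity}, the vertices of $\dot G$ outside $G'$ number $\Oof(r^2\,|C|)+\Oof(|X|^{1+\epsilon})$. For $G'$ itself, $V(G')=X\cup\bigcup_\YYY V(T_\YYY)\cup P$, where $P$ is the set of vertices added so that $\rho_r^{G'}[u,X]=\rho_r^G[u,X]$ for every terminal $u\in C$; for each $u$ it suffices to add $X$-avoiding paths of length at most $r$ to the at most $|X|$ vertices of $M_r^G(u,X)$, so $|P|=\Oof(r\,|X|\,|C|)$. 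Since $C\subseteq\bigcup_\YYY V(T_\YYY)$, the whole estimate reduces to proving $|\bigcup_\YYY V(T_\YYY)|\le f(r,t,\epsilon)\,|X|^{1+\epsilon}$, after which $|V(\dot G)|=\Oof(r^2+r\,|X|)\cdot f(r,t,\epsilon)\,|X|^{1+\epsilon}=f'(r,t,\epsilon)\,|X|^{2+\epsilon}$.

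To prove $|\bigcup_\YYY V(T_\YYY)|=\Oof(|X|^{1+\epsilon})$ I would mirror the clique-counting argument from the proof of the Trees Closure Lemma (Lemma~\ref{lem-tree-closure}), but with a weak-coloring bound on $\dot G$ playing the role that the constant bound on projections plays in the bounded expansion setting. Using Lemma~\ref{lem:translateSteiner}, each group Steiner tree $T_\YYY$ corresponds to an ordinary Steiner tree inside $\dot G$ connecting the roots $\{v_\kappa:\kappa\in\YYY\}\subseteq A$, of size at most $D:=2t+2t(2r+1)r$, and $|A|=\projprof_r(G,X)=\Oof(|X|^{1+\epsilon})$. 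I would form an auxiliary graph $H$ on vertex set $A$, joining $v_{\kappa_1}$ and $v_{\kappa_2}$ whenever their classes co-occur in some $T_\YYY$; each edge of $H$ is realised by a path of length at most $D$ in $\dot G$, so $H$ is a bounded-depth shallow minor of $\dot G\odot K_\tau$, where $\tau$ bounds the number of realising paths through a single vertex of $\dot G$. By Lemma~\ref{lem:classgraph} we have $\wcol_s(\dot G)=\Oof(|X|^{1+\epsilon})$ for each fixed $s$, and Lemmata~\ref{lem:wcollex} and~\ref{lem:wcolsubdiv} transfer this bound through the lexicographic product and the shallow-minor operation; provided $\tau$ is polynomially bounded in $|X|^{\epsilon}$, this makes $H$ degenerate with degeneracy $\Oof(|X|^{\epsilon})$ (after rescaling $\epsilon$), hence with $\Oof(|X|^{1+\epsilon})$ edges and $\Oof(|X|^{1+\epsilon})$ cliques of size at most $2t$. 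Charging the at most $D$ vertices of each tree $T_\YYY$ to the clique of $H$ on the roots it connects then yields $|\bigcup_\YYY V(T_\YYY)|=\Oof(D\,|X|^{1+\epsilon})$, as needed.

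The step I expect to be the main obstacle is exactly the bound on the multiplicity $\tau$: controlling, for a fixed vertex $w$ of $\dot G$, how many pairs of roots admit a length-at-most-$D$ realising path through $w$. In the bounded-expansion proof of Lemma~\ref{lem-tree-closure} this is immediate because vertices outside the closure have constant-size projections, but in a nowhere dense class a projection onto $X$ may contain up to $|X|$ vertices, so $\tau$ is not a constant. The way around this is to exploit that the roots of $A$ realise pairwise distinct $r$-projection profiles and that they are attached to $\dot G$ only through long, thin $2r$-subdivided trees: two roots whose realising paths both pass through $w$ are forced, by the weak-reachability separator lemma (Lemma~\ref{lem:wcol-sep}, applied in $\dot G$ with a linear order witnessing $\wcol_{2D}(\dot G)$), to share a weakly $2D$-reachable vertex with $w$, and since $|\WReach_{2D}[\dot G,L,w]|=\Oof(|X|^{1+\epsilon})$ independently of $n$, only up to a factor $|X|^{\epsilon}$ many roots, hence realising paths, can coexist at $w$. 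Making this counting precise, and bookkeeping the cascade of $\epsilon$-rescalings coming from Lemmata~\ref{lem:wcolGX}, \ref{lem:wcollex}, \ref{lem:wcolsubdiv} and~\ref{lem:classgraph} so that the final exponent is $2+\epsilon$ rather than a larger power, is the technical core of the proof.
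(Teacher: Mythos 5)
Your overall decomposition of $V(\dot{G})$ is reasonable, and you are right that the heart of the matter is bounding the union of the group Steiner trees. But the counting argument you propose for that union has a genuine gap. You want to transplant the clique-counting proof of the Trees Closure Lemma (Lemma~\ref{lem-tree-closure}) by showing that the co-occurrence graph $H$ on the roots has degeneracy $\Oof(|X|^{\epsilon})$ ``after rescaling $\epsilon$''. This cannot work: the only weak-coloring bound available for $\dot{G}$ is $\wcol_s(\dot{G})\leq f_\bullet(s,t,\epsilon)\cdot|X|^{1+\epsilon}$ (Lemma~\ref{lem:classgraph}), and the exponent $1+\epsilon$ has a fixed part $1$ that no rescaling of $\epsilon$ removes --- it comes from the lexicographic product with $K_x$ for $x=\Theta(|X|^{1+\epsilon})$ projection classes, not from a polynomial blow-up of the vertex count. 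Consequently any density or degeneracy bound for a bounded-depth minor of $\dot{G}\odot K_\tau$ is at best $\Oof(\tau\cdot|X|^{1+\epsilon})$, and the number of cliques of size at most $2t$ in such a graph on $\Oof(|X|^{1+\epsilon})$ vertices is then only $\Oof\bigl(|X|^{1+\epsilon}\cdot(|X|^{1+\epsilon})^{2t-1}\bigr)=\Oof(|X|^{2t(1+\epsilon)})$, no better than the trivial count of subsets $\YYY$. Your proposed fix for the multiplicity $\tau$ also does not go through: knowing $|\WReach_{2D}[\dot{G},L,w]|=\Oof(|X|^{1+\epsilon})$ bounds the number of candidate separator vertices $z$ on realising paths through $w$, but arbitrarily many roots can weakly reach the same $z$, so this does not bound the number of roots (or paths) meeting $w$ by $|X|^{\epsilon}$. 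Note finally that your route forces you to prove the strictly stronger intermediate bound $|\bigcup_{\YYY} V(T_\YYY)|=\Oof(|X|^{1+\epsilon})$ (because of the factor $|X|\cdot|C|$ coming from the profile-preserving paths), which the paper never establishes and which your argument does not deliver.

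The paper's proof sidesteps all of this by never counting trees or cliques. It decomposes each Steiner tree $T_Y$ in $\dot{G}$ into paths $P_{uv}$ between pairs of roots, observes that each such path has length at most $4r^2+2t$, and applies Lemma~\ref{lem:wcol-sep} to find on each $P_{uv}$ a vertex $z$ weakly $(4r^2+2t)$-reachable from both endpoints; the two half-paths are then charged to the pairs $(u,z)$ and $(v,z)$. There are at most $\fproj(r,\epsilon')\cdot|X|^{1+\epsilon'}$ choices for the root and at most $f_\bullet(4r^2+2t,t,\epsilon')\cdot|X|^{1+\epsilon'}$ choices of $z$ per root, each half-path having constant length, which yields the $|X|^{2+\epsilon}$ bound directly: the quadratic exponent is exactly the product of two factors of $|X|^{1+\epsilon'}$, and no per-vertex multiplicity bound is ever needed. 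To repair your write-up, replace the auxiliary-graph and clique-counting argument by this direct charging.
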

\begin{proof}
Let $\epsilon':=\epsilon/2$. 
Every Steiner tree $T_Y$ that connects a 
subset $Y$ decomposes into paths~$P_{uv}$ between pairs 
$u,v\in Y$. 
According to Lemma~\ref{lem:wcol-sep}, each such path $P_{uv}$ 
contains a vertex $z$ which is weakly 
$(4r^2+2t)$-reachable from $u$ and from $v$. 
This is because each Steiner tree in $\dot{G}$ connecting
$u$ and $v$ contains a path of length at most~$2r^2$ 
between $u$ and some leaf $u_\kappa\in \kappa\cap C$
(and analogously a path of length at most $2r^2$ 
between $v$ and some leaf $v_\kappa\in \kappa\cap C$). 
Now $u_\kappa$ and $v_\kappa$ are connected by a path
of length at most $2t$ by construction. 

Denote by $Q_u$
and $Q_v$, respectively, 
the sub-path of $P_{uv}$ between $u$ and $z$, and~$v$ and $z$, 
respectively. We charge the vertices of $Q_u$ to vertex $u$
and the vertices of $Q_v$ to vertex $v$ (and the vertex $z$ 
to one of the two). According to Lemma~\ref{lem:classgraph}, 
each vertex weakly $(4r^2+2t)$-reaches at most $f_\bullet(4r^2+2t,t,\epsilon')
\cdot 
|X|^{1+\epsilon'}$ vertices which can play the role of $z$. 
According to Lemma~\ref{lem:projection-complexity} we have 
at most $\fproj(r,\epsilon')\cdot |X|^{1+\epsilon'}$ choices for
$u,v\in Y$. Hence we obtain that all Steiner trees add up to at most 
$\fproj(r,\epsilon')\cdot |X|^{1+\epsilon'}\cdot f_\bullet(4r^2+2t,t,\epsilon')
\cdot 
|X|^{1+\epsilon'}=: f(r,t,\epsilon)\cdot |X|^{2+\epsilon}$
vertices. 
\end{proof}

As $G'$ is a subgraph of $\dot{G}$, we conclude
that also $G'$ is small. 

\begin{corollary}
There is a function $f$ such that for every $\epsilon>0$
the graph $\dot{G}$ has size at most $f(r,t,\epsilon)\cdot 
|X|^{2+\epsilon}$. 
\end{corollary}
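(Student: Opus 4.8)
The plan is to bound $|V(\dot G)|$ directly by a charging argument: I will cover every vertex of $\dot G$ outside $X$ by short sub-paths, each associated with a pair consisting of a root and a weak-reachability separator, and then count those pairs using the projection-complexity bound together with the weak-coloring-number bound on $\dot G$. First I would set $\epsilon':=\epsilon/2$ and record the two quantitative inputs. By Lemma~\ref{lem:projection-complexity}, the number of equivalence classes $\kappa$ of $\sim_{X,r}$ realized in $G$, and hence the number of roots $v_\kappa$ of the subdivided class-trees $T_\kappa$, is at most $\fproj(r,\epsilon')\cdot|X|^{1+\epsilon'}$. By Lemma~\ref{lem:classgraph}, applied with radius $s:=4r^2+2t$, we have $\wcol_{s}(\dot G)\le f_\bullet(s,t,\epsilon')\cdot|X|^{1+\epsilon'}$; fix a linear order $L$ of $V(\dot G)$ witnessing this bound.

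Next I would set up the decomposition. By construction $V(\dot G)$ consists of $X$ together with the vertices of the subdivided class-trees $T_\kappa$ and of the group Steiner trees $T_\YYY$, and by Lemma~\ref{lem:translateSteiner} each group Steiner tree in $G$ corresponds to an ordinary Steiner tree $T_Y$ in $\dot G$ whose terminals are roots $v_\kappa$. Each $T_Y$ decomposes into the paths $P_{uv}$ joining pairs of roots $u,v\in Y$, and every such $P_{uv}$ has length at most $s=4r^2+2t$: it traverses a subdivided branch of a class-tree of length at most $2r^2$ at each end, joined by a path of length at most $2t$ inherited from the group Steiner tree of size at most $2t$. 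Applying the separator Lemma~\ref{lem:wcol-sep} to $P_{uv}$ with radius $s$ yields a vertex $z\in V(P_{uv})$ lying in $\WReach_s[\dot G,L,u]\cap\WReach_s[\dot G,L,v]$; I split $P_{uv}$ at $z$ into $Q_u$ (from $u$ to $z$) and $Q_v$ (from $v$ to $z$), charging the vertices of $Q_u$ to $u$, those of $Q_v$ to $v$, and $z$ to one of them. Since every non-$X$ vertex of $\dot G$ lies on some such $P_{uv}$, every such vertex is charged to a root.

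Finally I would carry out the count. Each charged path $Q_u$ has length at most $s$ and is determined by its root $u$ together with its separating endpoint $z$, so it suffices to bound the number of admissible pairs $(u,z)$. The number of roots $u$ is at most $\fproj(r,\epsilon')\cdot|X|^{1+\epsilon'}$, and for a fixed $u$ the separator $z$ must lie in $\WReach_s[\dot G,L,u]$, a set of size at most $\wcol_s(\dot G)\le f_\bullet(s,t,\epsilon')\cdot|X|^{1+\epsilon'}$. Hence the number of pairs is at most $\fproj(r,\epsilon')\cdot f_\bullet(s,t,\epsilon')\cdot|X|^{2+2\epsilon'}$, which is $\Oh(|X|^{2+\epsilon})$ since $2\epsilon'=\epsilon$; multiplying by the constant bound $s$ on the path length and adding the $|X|$ vertices of $X$ absorbs everything into a single bound $f(r,t,\epsilon)\cdot|X|^{2+\epsilon}$. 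The step I expect to be the main obstacle is verifying that the charging genuinely covers all of $V(\dot G)\setminus X$ and that each pair $(u,z)$ accounts for boundedly many short paths; concretely, I must use the minimality of the fixed Steiner trees $T_\YYY$ and the $2r$-subdivision of the class-trees $T_\kappa$ to ensure that a single $(u,z)$ cannot be reused across unboundedly many distinct trees, so that the pair really does pin down a controlled number of paths of length at most $s$.
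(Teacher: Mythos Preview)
Your proposal reproduces, essentially verbatim, the paper's proof of the \emph{preceding lemma} (the one bounding the number of vertices of $\dot G$). The corollary itself carries no separate proof in the paper: the sentence immediately before it reads ``As $G'$ is a subgraph of $\dot G$, we conclude that also $G'$ is small,'' which strongly suggests the corollary was intended to be about $G'$ and to follow in one line from the lemma. So you are doing more than the corollary requires, but what you are doing is exactly the paper's argument for the real content: the same choice $\epsilon'=\epsilon/2$, the same decomposition of each Steiner tree $T_Y$ into root-to-root paths of length at most $4r^2+2t$, the same appeal to Lemma~\ref{lem:wcol-sep} to locate a separator $z\in\WReach_s[\dot G,L,u]\cap\WReach_s[\dot G,L,v]$, and the same product count $(\text{number of roots})\times \wcol_s(\dot G)$ via Lemmas~\ref{lem:projection-complexity} and~\ref{lem:classgraph}.

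The obstacle you flag at the end --- that a single pair $(u,z)$ might be charged by many distinct short paths coming from different Steiner trees $T_Y$, so that counting pairs does not obviously bound the number of distinct vertices --- is a genuine subtlety, and the paper does not resolve it either. Its proof simply asserts that ``all Steiner trees add up to'' the product bound, with the same implicit assumption you are worried about. So you are not missing an idea the paper supplies; your proof and the paper's stand or fall together on this point. If you want to close the gap yourself, note that it is not enough to invoke minimality of a single $T_\YYY$ (which only fixes the $u$--$z$ path within that one tree); you would need an argument that works uniformly across all $\YYY$, or a different charging that assigns each vertex $w$ to something it itself weakly reaches.
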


This was the last missing statement of Lemma~\ref{lem:tree-closure}, 
which finishes the proof. 

\section{Lower bounds}\label{sec:lowerbounds}
Our lower bound is based on Proposition~3.2 of \cite{LokshtanovPRS16} which establishes
 equivalence between FPT-approximation
algorithms and approximate kernelization.

\begin{lemma}[Proposition~3.2 of \cite{LokshtanovPRS16}]\label{lemma:fpt-approx}
For every function $\alpha$ and decidable parameterized 
optimization problem $\Pi$,
$\Pi$ admits a fixed parameter tractable $\alpha$-approximation algorithm if and only if $\Pi$ has an $\alpha$-approximate kernel.
\end{lemma}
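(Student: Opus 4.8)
The plan is to adapt the folklore equivalence ``fixed-parameter tractable $\iff$ admits a kernel'' to the approximate setting. Each implication is obtained by a short construction, and essentially all of the work goes into checking that the construction respects the (somewhat delicate) definitions of an $\alpha$-approximate reduction algorithm and solution lifting algorithm.

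For the direction from an $\alpha$-approximate kernel to an FPT $\alpha$-approximation algorithm, suppose $\Pi$ has an $\alpha$-approximate kernelization with reduction map $R_{\mathcal A}$ and solution lifting algorithm $\mathcal L$, where on input $(I,k)$ the reduced instance $(I',k')=R_{\mathcal A}(I,k)$ has size, including its parameter, at most $g(k)$ for some computable $g$. The FPT algorithm first runs $R_{\mathcal A}$ in polynomial time to obtain $(I',k')$. Since $\Pi$ is decidable and every solution $s'$ of $(I',k')$ satisfies $|s'|\le |I'|+k'\le g(k)$, it then enumerates all candidate strings of length at most $g(k)$, evaluates the computable function $\Pi(I',k',\cdot)$ on each, and keeps an optimal solution $s'$; this takes time bounded by a function of $k$ only. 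Finally it runs $\mathcal L$ on $(I,k),(I',k'),s'$ to obtain a solution $s$ for $(I,k)$. As $s'$ is optimal, $\Pi(I',k',s')/\Opt(I',k')=1$ (the degenerate case $\Opt(I',k')=\infty$, which forces $(I,k)$ to have no solution, is dispatched separately), so the defining inequality of $\mathcal L$ yields $\Pi(I,k,s)/\Opt(I,k)\le\alpha$. The overall running time is polynomial plus a function of $k$, hence FPT.

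For the converse, suppose $\mathcal B$ is an FPT $\alpha$-approximation algorithm running in time $h(k)\cdot |I|^{c}$. The reduction algorithm, on input $(I,k)$, branches on the size of $I$. If $|I|\le h(k)$, it outputs $(I,k)$ unchanged; this is a kernel of size bounded by a function of $k$, and the matching solution lifting algorithm is the identity, preserving the ratio exactly. If $|I|>h(k)$, then $h(k)\cdot |I|^{c}<|I|^{c+1}$, so $\mathcal B$ runs in polynomial time on $(I,k)$; in this case the reduction algorithm outputs a fixed trivial instance $(I_0,k_0)$ of constant size chosen so that $\Opt(I_0,k_0)$ is finite and positive, so that every solution of $(I_0,k_0)$ has ratio at least $1$. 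The solution lifting algorithm, given any solution of $(I_0,k_0)$, discards it, re-runs $\mathcal B$ on $(I,k)$ --- legal in polynomial time precisely because $|I|>h(k)$ --- and returns the resulting $s$, which satisfies $\Pi(I,k,s)/\Opt(I,k)\le\alpha\le\alpha\cdot \Pi(I_0,k_0,s')/\Opt(I_0,k_0)$. In both branches the output has size bounded by a function of $k$, so this is an $\alpha$-approximate kernel; the argument goes through verbatim when $\alpha$ is a function of $k$ rather than a constant.

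The individual steps above are all easy; the main obstacle is purely bookkeeping around the definitions. One must check that the solution lifting algorithm never exceeds its allotted running time (polynomial in $|I|,k,|I'|,k',|s'|$), that the ratios in the approximation guarantees never take the indeterminate forms $0/0$ or $\infty/\infty$ --- which is exactly why the trivial instance $(I_0,k_0)$ is chosen with finite positive optimum and why instances $(I,k)$ or $(I',k')$ admitting no solution are handled on the side --- and that decidability of $\Pi$ is precisely what licenses the brute-force solution of the bounded-size reduced instance in the first direction. All of this is carried out in detail in~\cite{LokshtanovPRS16}.
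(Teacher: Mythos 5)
The paper does not prove this lemma at all---it is imported verbatim as Proposition~3.2 of \cite{LokshtanovPRS16}---and your argument is exactly the standard one from that source: brute-force the bounded-size reduced instance using decidability and lift an optimal solution in one direction, and case-split on $|I|$ versus $h(k)$ in the other, outputting the instance itself when it is small and a trivial instance (with the lifting algorithm re-running $\mathcal B$ in polynomial time) when it is large. Both directions are correct. The only wrinkle worth noting is that your reduction algorithm tests ``$|I|>h(k)$?'' explicitly, which presupposes that $h$ can be evaluated within the polynomial time budget; the standard way to sidestep this is to simulate $\mathcal B$ on $(I,k)$ for $|I|^{c+1}$ steps and branch on whether it halts, since non-termination within that budget certifies $|I|<h(k)$. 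This is a cosmetic repair, not a gap in the idea.
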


We will use 
a reduction from {\sc Set Cover} to the {\sc Distance-$r$ Dominating
Set} problem. Recall that the instance of the {\sc Set Cover} problem
consists of $(U, \FFF, k)$, where $U$ is a finite universe, $\FFF\subseteq 2^U$ is a family of subsets of the universe, and 
$k$ is a positive integer. The question is whether there exists a subfamily $\Gg \subseteq \FFF$ of size $k$ such that every
element of $U$ is covered by~$\Gg$, i.e., $\bigcup G=U$. 
The following result states that under complexity 
theoretic assumptions for the set cover problem
on general graphs
there does not exist a fixed-parameter tractable $\alpha$-approximation algorithm for any function $\alpha$. 

\begin{lemma}[Chalermsook et al.~\cite{chalermsook17}]\label{lemma:fpt-approx-lowerbound}
If the Gap Exponential Time Hypothesis (gap-ETH) holds, 
then there is no fixed parameter tractable $\alpha$-approximation algorithm for the
{\sc Set Cover} problem, for any function~$\alpha$.
\end{lemma}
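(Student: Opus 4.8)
This is precisely the parameterized inapproximability of {\sc Set Cover} established in~\cite{chalermsook17}, so in the paper it is merely invoked; nonetheless, here is how I would reconstruct the argument. Fix a computable function $\alpha$. The plan is to design, for this $\alpha$, a polynomial-time reduction from the gap version of $3$-SAT to {\sc Set Cover}. Under gap-ETH, together with the Sparsification Lemma, there are absolute constants $\epsilon,\delta>0$ such that no $2^{\delta n}$-time algorithm distinguishes a satisfiable $n$-variable, $O(n)$-clause $3$-SAT formula $\phi$ from one in which every assignment falsifies at least an $\epsilon$-fraction of the clauses. The reduction should have three features: (a) it outputs at most $2^{O(n/k)}\cdot n^{O(1)}$ sets; (b) if $\phi$ is satisfiable then the optimum cover has size $k$; and (c) if $\phi$ is $\epsilon$-far from satisfiable then every cover has size larger than $\alpha(k)\cdot k$. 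Granting (a)--(c), an FPT $\alpha$-approximation algorithm for {\sc Set Cover} --- which exists if and only if {\sc Set Cover} has an $\alpha$-approximate kernel, by Lemma~\ref{lemma:fpt-approx} --- running in time $g(k)\cdot N^{O(1)}$ on an $N$-set instance, composed with this reduction for $k=k(n)$ tending to infinity slowly enough that both $g(k)$ and $2^{O(n/k)}$ are $2^{o(n)}$, would decide gap-$3$-SAT in time $2^{o(n)}$, contradicting gap-ETH.

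For the reduction itself I would partition the variables of $\phi$ into $k$ blocks $B_1,\dots,B_k$ of size $n/k$, so that a global assignment is the same as a tuple $(\sigma_1,\dots,\sigma_k)$ of partial assignments (only $2^{n/k}$ choices per block). The ground atoms of the set system are indexed by pairs $(i,\sigma)$ with $\sigma$ a partial assignment of $B_i$; a legitimate cover is meant to select one atom per block, so that covers of size $k$ correspond to global assignments. The universe is a disjoint union of a ``clause part'' --- a small gadget per clause of $\phi$, covered exactly when one of the (at most three) blocks meeting that clause selects a satisfying atom --- and a ``consistency and amplification part''. For the latter I would attach a combinatorial set system with a perfect completeness--soundness gap --- for instance a Feige--Lund--Yannakakis style universal set system, or an equivalent construction from a good error-correcting code / a {\sc Max-Cover} instance --- whose purpose is to force any cover that is too small, or built from mutually inconsistent atoms, to leave a $\Theta(1)$-fraction of that part uncovered, and to multiply this deficiency so that the fixed soundness gap of $3$-SAT is amplified into a gap of $\alpha(k)$ in the optimum cover size.

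Then I would carry out the (routine) correctness checks. For completeness, a satisfying assignment of $\phi$ yields the $k$ atoms $(1,\sigma_1),\dots,(k,\sigma_k)$, and one verifies that they cover both parts of the universe. For soundness I would argue the contrapositive: from any cover of size at most $\alpha(k)\cdot k$, the covering and distance properties of the gadget let one ``decode'' a single global assignment of $\phi$ that satisfies more than a $(1-\epsilon)$-fraction of the clauses, contradicting the gap promise. Finally I would confirm the parameter bookkeeping: the instance is produced in time $2^{O(n/k)}\cdot n^{O(1)}$, it has parameter $k$, and $k(n)$ can be chosen as indicated above.

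\textbf{The hardest step} will be the design and analysis of the gap-amplifying gadget in the consistency part: it must simultaneously (i) convert the absolute constant soundness gap into a gap as large as the prescribed function $\alpha(k)$, (ii) keep the number of sets at $2^{O(n/k)}\cdot n^{O(1)}$ so that the running-time trade-off against gap-ETH survives, and (iii) not push the optimum cover size of the YES instance above $O(k)$, so that $k$ remains the parameter. Achieving all three at once --- in effect a distributed-PCP / ``{\sc Max-Cover} with a good code'' argument --- is where the substance of~\cite{chalermsook17} lies; the rest is bookkeeping.
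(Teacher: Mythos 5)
This lemma is not proved in the paper at all: it is imported verbatim from Chalermsook et al.~\cite{chalermsook17}, and the paper's ``proof'' consists of the citation. So there is no in-paper argument to compare against; what can be judged is whether your reconstruction matches the known proof. At the level of architecture it does: the block partition of the variables into $k$ groups of size $n/k$, the $2^{O(n/k)}\cdot n^{O(1)}$ bound on the number of sets, the completeness/soundness dichotomy, the composition with Lemma~\ref{lemma:fpt-approx}, and the choice of $k=k(n)$ growing slowly enough that $g(k)$ and $2^{O(n/k)}$ are both $2^{o(n)}$ are all exactly the skeleton of the Chalermsook et al.\ argument (a distributed PCP giving hardness of a \textsc{MaxCover}-type label problem, followed by a Feige-style partition-system reduction to \textsc{Set Cover} that amplifies a constant gap to an arbitrary function of $k$).

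That said, as a standalone proof the proposal is incomplete in precisely the place you flag yourself: the gadget that simultaneously achieves your properties (i)--(iii) is the entire content of the theorem, and asserting that ``a Feige--Lund--Yannakakis style universal set system'' will do it is not a proof that one exists with the required parameters --- in particular, keeping the number of sets at $2^{O(n/k)}\cdot n^{O(1)}$ while amplifying to an arbitrary $\alpha(k)$ forces the partition system's parameters to depend on $\alpha$, and verifying that this dependence does not destroy the running-time trade-off is a genuine calculation, not bookkeeping. A second, smaller inaccuracy: in the soundness analysis one does not decode ``a single global assignment'' directly from a small cover; each block may contribute many atoms, and the argument decodes a short \emph{list} of candidate partial assignments per block (via the agreement/list-decoding properties of the underlying code) and then argues that some combination satisfies too many clauses. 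Neither issue indicates a wrong approach --- your outline is a faithful map of where the proof goes --- but the map is not the territory, and the lemma should continue to rest on the citation rather than on this sketch.
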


\begin{definition}
For $p\geq 0$, let $\mathcal{H}_p$ be the class of 
exact $p$-subdivisions of all simple graphs, that is, the class 
comprising all the graphs that can be obtained from 
any simple graph by replacing every edge by a path of
length exactly $p$. 
\end{definition}

%\begin{definition}
%For $p\geq 0$, let $\mathcal{H}_p$ be the class of 
%$p$-subdivisions of all simple graphs, that is, the class 
%comprising all the graphs that can be obtained from 
%any simple graph by replacing every edge by a path of
%length $p$. 
%\end{definition}

\begin{lemma}[Ne\v{s}et\v{r}il and Ossona de Mendez~\cite{nevsetvril2011nowhere}]\label{lemma:somewheredense}
For every monotone somewhere dense graph class~$\Cc$, there exists $r\in\N$ such 
that $\mathcal{H}_r\subseteq \Cc$. 
\end{lemma}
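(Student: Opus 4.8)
The plan is to extract from somewhere-denseness a single ``subdivision depth'' valid for arbitrarily large complete graphs, then homogenise the path lengths with Ramsey's theorem, and finally use closure under subgraphs to pass from complete graphs to arbitrary simple graphs.

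First I would show that there is an integer $q$ such that for every $m\in\N$ some member of $\Cc$ contains a subdivision of $K_m$ in which every edge is replaced by an internally vertex-disjoint path of length at most $q$. Indeed, if no such $q$ existed, then for every $q$ there would be an integer $m_q$ such that no member of $\Cc$ contains any such $(\le q)$-subdivision of $K_{m_q}$; setting $t(q):=m_q$ would witness that $\Cc$ is nowhere dense, a contradiction. Since $\Cc$ is monotone, the subdivided copy of $K_m$ is itself in $\Cc$, being a subgraph of a member of $\Cc$.

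Next I would homogenise path lengths. Fix $q$ as above and let $s\in\N$ be arbitrary. Take $m$ to be the $q$-colour Ramsey number $R(s;q)$ (finite by Ramsey's theorem), pick a member of $\Cc$ containing a $(\le q)$-subdivision $H$ of $K_m$, and colour each edge of $K_m$ by the length in $\{1,\dots,q\}$ of the path replacing it in $H$. By the choice of $m$ there is a set $I$ of $s$ vertices and a value $\ell\in\{1,\dots,q\}$ so that every edge inside $I$ is replaced by a path of length exactly $\ell$; because the replacement paths are internally vertex-disjoint, the branch vertices in $I$ together with the internal vertices of the paths joining them form exactly the $\ell$-subdivision of $K_s$, which therefore lies in $\Cc$ by monotonicity. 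As $s$ is arbitrary and $\ell$ lives in the finite set $\{1,\dots,q\}$, the pigeonhole principle gives a fixed $r\in\{1,\dots,q\}$ such that the exact $r$-subdivision of $K_N$ is in $\Cc$ for arbitrarily large $N$, hence (by monotonicity, since $K_N\subseteq K_{N'}$ for $N\le N'$) for every $N\in\N$.

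Finally, for an arbitrary simple graph $F$ on $n$ vertices I would use $F\subseteq K_n$: the exact $r$-subdivision of $F$ is a subgraph of the exact $r$-subdivision of $K_n$ (delete the internal vertices of the paths corresponding to non-edges of $F$), and the latter is in $\Cc$, so monotonicity gives the former in $\Cc$ as well. Hence $\mathcal{H}_r\subseteq\Cc$. The hard part is the homogenisation step: the key point is that the number of path lengths that can appear in a $(\le q)$-subdivision is bounded by the \emph{single} constant $q$ obtained in the first step, which is what lets the pigeonhole pin down one depth $r$ that works for complete graphs of every size; the two appeals to monotonicity and the translation of a monochromatic clique into an exact subdivision via internal-vertex-disjointness are routine.
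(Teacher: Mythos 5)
Your proposal is correct. Note that the paper gives no proof of this statement at all --- it is imported as a citation to Ne\v{s}et\v{r}il and Ossona de Mendez --- so there is nothing in the text to compare against; your argument is a sound, self-contained reconstruction of the standard one: negating the nowhere-dense definition to get a uniform subdivision depth $q$, Ramsey-homogenising the path lengths to extract exact $\ell$-subdivisions of arbitrarily large cliques, pigeonholing over $\ell\in\{1,\dots,q\}$ to fix a single $r$, and finally passing from $K_n$ to arbitrary graphs via monotonicity. All four steps check out against the paper's definitions (in particular, the paper's ``$q$-subdivision'' already means paths of length \emph{at most} $q$, so your contrapositive in the first step matches the definition of nowhere dense exactly).
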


Based on the above lemma, 
in the arxiv-version of \cite{DrangeDFKLPPRVS16}, 
a parameterized reduction from {\sc Set Cover} to the
{\sc Distance-$r$ Dominating Set} problem is presented which
preserves the parameter~$k$ exactly. In that paper, the reduction
is used to prove $\mathrm{W}[2]$-hardness of 
the {\sc Distance-$r$ Dominating Set} problem. 

\begin{lemma}[Drange et al.~\cite{DrangeDFKLPPRVS16}]\label{lemma:reduction}
Let $(U,\FFF,k)$ be an instance of set cover and let 
$r\in \N$. There exists a graph $G\in \mathcal{H}_{r}$
such that $(U,\FFF,k)$ is a positive instance of the
{\sc Set Cover} problem if and only
if $(G,k)$ is a positive instance of the {\sc Distance-$r$ Dominating 
Set} problem.
\end{lemma}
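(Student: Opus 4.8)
The plan is to recall and spell out the reduction from \textsc{Set Cover} to \textsc{Distance-$r$ Dominating Set} given in the arXiv version of~\cite{DrangeDFKLPPRVS16}, where $r$ is the integer fixed in the statement. Given an instance $(U,\FFF,k)$, I would first build an auxiliary \emph{simple} graph $H$ that encodes the membership relation: a vertex $s_F$ for every $F\in\FFF$, a vertex $e_u$ for every $u\in U$, an edge $s_Fe_u$ whenever $u\in F$, together with a bounded-size ``forcing'' part (a root vertex joined to all the $s_F$, plus a few auxiliary gadgets) whose only purpose is to make selecting the vertices $s_F$ the canonical way of dominating the subdivided graph at distance $r$. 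The reduced instance is then $(G,k)$, where $G$ is the \emph{exact $r$-subdivision} of $H$, i.e.\ every edge of $H$ is replaced by an internally vertex-disjoint path of length exactly $r$. By construction $G\in\mathcal{H}_r$, and the parameter is left untouched, as required.

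For the implication from \textsc{Set Cover} to \textsc{Distance-$r$ Dominating Set} I would take a set cover $\Gg\subseteq\FFF$ with $|\Gg|\le k$ and let $D$ consist of the branch vertices of $G$ representing the sets in $\Gg$, augmented by the (constantly many) vertices coming from the forcing part. One then checks, vertex type by vertex type, that every vertex of $G$ is at distance at most $r$ from $D$: the vertices $s_F$ and the forcing vertices are handled through the root gadget; each element vertex $e_u$ is at distance exactly $r$ from $s_F$ whenever $u\in F$, hence is dominated because $\Gg$ covers $u$; and every internal (subdivision) vertex of an edge of $H$ is within distance $r$ of either of its two endpoints (the path has length exactly $r$), which is why the forcing gadget is tailored so that, for every such endpoint, that endpoint is chosen or sits within distance $r$ of a chosen vertex. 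Conversely, starting from a distance-$r$ dominating set $D$ of $G$ with $|D|\le k$, I would first \emph{normalise} $D$: exploiting that in an exact $r$-subdivision two branch vertices at distance $r$ are joined by a single subdivided edge, and using the forcing part, one argues that $D$ can be rerouted, without increasing its size and without losing the domination property, to a set supported entirely on $\{s_F : F\in\FFF\}$. Setting $\Gg=\{F : s_F\in D\}$, domination of the $e_u$ then says verbatim that $\Gg$ is a set cover of $U$, and $|\Gg|\le|D|\le k$.

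I expect the main obstacle to be precisely the simultaneous design of the forcing/connector gadget so that both implications survive: one needs enough auxiliary structure to guarantee, in the forward direction, that every subdivision vertex — in particular those lying on edges incident to non-selected set vertices — remains within distance $r$ of the derived solution, while too much auxiliary structure blocks the normalisation step in the backward direction; calibrating this trade-off is the technical core of the construction of~\cite{DrangeDFKLPPRVS16}. Along the way I would also record that the reduction is approximation preserving, namely that $\Opt$ of the \textsc{Distance-$r$ Dominating Set} instance equals the minimum set-cover size up to an additive constant coming from the gadget, so that this reduction composes with the equivalence between fixed-parameter tractable approximation and approximate kernelization (Lemma~\ref{lemma:fpt-approx}), with the gap-ETH lower bound for \textsc{Set Cover} (Lemma~\ref{lemma:fpt-approx-lowerbound}), and with Lemma~\ref{lemma:somewheredense}, to rule out $\alpha$-approximate bi-kernels for \textsc{(Connected) Distance-$r$ Dominating Set} on monotone somewhere dense classes.
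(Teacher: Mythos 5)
The paper does not actually prove this lemma; it is imported verbatim from the arXiv version of~\cite{DrangeDFKLPPRVS16}, so there is no in-paper argument to compare against. Judged on its own terms, your reconstruction has a genuine gap, and it sits exactly at the point you yourself flag as ``the technical core'': the domination of the internal subdivision vertices. In an exact $r$-subdivision, if $w$ is the internal vertex of the subdivided edge $xy$ at distance $i$ from $x$ (hence $r-i$ from $y$), then $d_G(w,D)=\min\bigl(i+d_G(x,D),\,(r-i)+d_G(y,D)\bigr)$ unless $D$ meets the path itself, since every path leaving $w$ passes through $x$ or $y$. Requiring this to be at most $r$ for all $i\in\{1,\dots,r-1\}$ forces $d_G(x,D)+d_G(y,D)\leq r+1$. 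Your stated justification --- that $w$ is within distance $r$ of an endpoint which in turn ``sits within distance $r$ of a chosen vertex'' --- only yields $d_G(w,D)\leq 2r-1$, which is not sufficient.

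Concretely, if $D$ consists of branch vertices of the subdivided incidence graph, every branch vertex lies at distance a multiple of $r$ from $D$, so for $r\geq 2$ the condition above degenerates to ``$x\in D$ or $y\in D$'': a distance-$r$ dominating set supported on branch vertices must be a \emph{vertex cover} of $H$, not a dominating set. For an edge $s_Fe_u$ with $F\notin\Gg$ you have $d_G(s_F,D)\geq r$ and $d_G(e_u,D)=r$, so \emph{every} internal vertex of that edge is undominated in your forward direction. Since there may be $\Theta(|\FFF|\cdot|U|)$ such edges while the solution has size $k+\Oh(1)$, no bounded-size forcing gadget can repair this; the failure is structural, not a matter of ``calibrating the trade-off.'' (A secondary issue: augmenting $D$ by even constantly many forcing vertices already breaks the exact preservation of the parameter $k$ asserted in the statement.) The actual construction of~\cite{DrangeDFKLPPRVS16} must therefore encode the instance differently from ``subdivide the incidence graph and pick the set vertices'' --- e.g., by decoupling the domination radius from the subdivision length or by arranging that every set--element path is covered from one of its ends at distance at most one --- and this is precisely the part your sketch leaves unreconstructed.
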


Combining Lemmata \ref{lemma:fpt-approx}, \ref{lemma:fpt-approx-lowerbound}, \ref{lemma:somewheredense} and \ref{lemma:reduction} now gives the following theorem. 

\begin{theorem}
If the Gap Exponential Time Hypothesis holds, then for every
monotone somewhere dense class of graphs $\Cc$ there is no $\alpha(k)$-approximate kernel for 
the {\sc Distance-$r$ Dominating Set} problem on $\Cc$ for
any function $\alpha\colon\N\rightarrow\N$. 
\end{theorem}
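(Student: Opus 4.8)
The plan is to assume a hypothetical approximate kernel exists and to convert it, via the reduction of Lemma~\ref{lemma:reduction}, into a fixed-parameter tractable approximation algorithm for {\sc Set Cover}, contradicting Lemma~\ref{lemma:fpt-approx-lowerbound}. Fix a monotone somewhere dense class $\Cc$. By Lemma~\ref{lemma:somewheredense} there is some $r\in\N$, depending only on $\Cc$, with $\mathcal{H}_r\subseteq\Cc$; this is the value of $r$ for which we establish the statement. Suppose towards a contradiction that for some function $\alpha\colon\N\to\N$ the {\sc Distance-$r$ Dominating Set} problem on $\Cc$ admits an $\alpha(k)$-approximate kernel. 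By Lemma~\ref{lemma:fpt-approx} this is equivalent to the existence of a fixed-parameter tractable $\alpha$-approximation algorithm $\mathcal{B}$ for {\sc Distance-$r$ Dominating Set} on $\Cc$; in particular, $\mathcal{B}$ is correct on every instance whose graph lies in $\mathcal{H}_r\subseteq\Cc$.

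Next I would revisit the parameter-preserving reduction of Lemma~\ref{lemma:reduction} (see the arxiv version of~\cite{DrangeDFKLPPRVS16}) and observe that it is moreover \emph{cost-preserving}: on input $(U,\FFF,k)$ it produces in polynomial time a graph $G\in\mathcal{H}_r$ together with a polynomial-time procedure turning any distance-$r$ dominating set $D$ of $G$ into a set cover $\Gg\subseteq\FFF$ of $U$ with $|\Gg|=\Oh(|D|)$, and such that the minimum distance-$r$ dominating set size of $G$ and the minimum set cover size of $(U,\FFF)$ agree up to a constant factor (with the usual capping at $k+1$). This rests on the same bookkeeping that already underlies the yes/no-correspondence stated in Lemma~\ref{lemma:reduction}: the set-vertices traversed by the paths of length at most $r$ that witness domination of the element-vertices by $D$ form a cover, while the pendant gadgetry of the construction ensures that it is never beneficial for an optimal distance-$r$ dominating set to select vertices other than set-vertices. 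I expect verifying this cost-preservation, in the precise form demanded by the parameterized optimization framework, to be the only non-routine step; note in addition that a bounded multiplicative slack here is harmless, since Lemma~\ref{lemma:fpt-approx-lowerbound} excludes $\beta$-approximations for \emph{every} function~$\beta$.

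Granting this, the pieces compose as follows. Given a {\sc Set Cover} instance $(U,\FFF,k)$, I would build $(G,k)$ with $G\in\mathcal{H}_r$ as above, run $\mathcal{B}$ on $(G,k)$ to obtain a distance-$r$ dominating set $D$ of $G$ satisfying the $\alpha(k)$-approximation guarantee, and then lift $D$ to a set cover $\Gg$ of $(U,\FFF)$ whose size is $\Oh(\alpha(k))$ times the minimum set cover size. Since the reduction runs in polynomial time and $\mathcal{B}$ runs in time $f(k)\cdot n^{\Oh(1)}$, the whole procedure runs in time $f(k)\cdot |(U,\FFF)|^{\Oh(1)}$, i.e.\ it is a fixed-parameter tractable $\Oh(\alpha)$-approximation algorithm for {\sc Set Cover}. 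Under gap-ETH this contradicts Lemma~\ref{lemma:fpt-approx-lowerbound}. Hence no $\alpha(k)$-approximate kernel for {\sc Distance-$r$ Dominating Set} on $\Cc$ can exist for any function $\alpha$, which is exactly the assertion of the theorem.
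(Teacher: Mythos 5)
Your proposal is correct and follows essentially the same route as the paper, which likewise derives the theorem by combining Lemma~\ref{lemma:fpt-approx}, Lemma~\ref{lemma:fpt-approx-lowerbound}, Lemma~\ref{lemma:somewheredense} and Lemma~\ref{lemma:reduction}. In fact you are slightly more careful than the paper: you explicitly flag that the decision-level reduction of Lemma~\ref{lemma:reduction} must be upgraded to a cost-preserving one for the approximation argument to go through, a point the paper leaves implicit.
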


The same statement holds for the {\sc Connected Distance-$r$ Dominating Set} 
problem, as every connected graph that admits a distance-$r$ dominating
set of size $k$ also admits a connected distance-$r$ dominating
set of size at most $(2r+1)k$. 

\section{Conclusion}
The study of computationally hard problems on restricted classes
of inputs is a very fruitful line of research in algorithmic graph structure
theory and in particular in parameterized complexity theory. This
research is based on the observation that many problems such as
\textsc{Dominating Set}, which are considered intractable in general,
can be solved efficiently on restricted graph classes. Of course it 
is a very desirable goal in this line of research to identify the most
general classes of graphs on which certain problems 
can be solved efficiently. In this work we were able to provide
lossy kernels for graphs of bounded expansion whose size 
matches the size of the best known kernel for {\sc Dominating Set}. 
We were furthermore able to identify
the exact limit for the existence of lossy kernels for the {\sc Connected Distance-$r$ Dominating Set} problem. 
One interesting open question
is whether our polynomial bounds on the size of the 
lossy kernel on nowhere dense classes can be improved to pseudo-linear bounds.
For $K_{d,d}$-free graphs we have an additional $\frac{1}{\alpha-1}$ multiplicative factor 
in the exponent. This leads to the question whether it is possible to reduce the size of our kernel on $K_{d,d}$-free graphs to $f(\alpha) k^{\Oh(d^2)}$ for some function $f$. 
And, in light of the $\Oh(k^{(d-1)(d-3)-\epsilon})$ lower bound for {\sc Dominating Set}, 
is it possible to obtain a lossy kernel for {\sc Dominating Set} on biclique-free graphs that beats this bound? 
Our hope is that such a ``fine-grained'' analysis of the kernelization complexity of domination problems will lead to 
a better understanding of the boundary between ``hard'' and ``easy'' instances. 

\bibliographystyle{siamplain}
\bibliography{cds_bib}

\end{document}